\newtheorem{theorem}{Theorem}
\newtheorem{example}[theorem]{Example}
\newtheorem{lemma}[theorem]{Lemma}
\newtheorem{claim}{Claim}
\newtheorem{remark}[theorem]{Remark}
\newif\ifappendix
\newif\ifshort
\newcommand{\Dom}{\Delta}
\newcommand{\NRDC}{NRDC}
\newcommand{\true}{\textbf{True}\xspace}%
\newcommand{\QKsat}[1][\varphi,N]{\textsc{QKsat}_{#1}}%
\newcommand{\avec}[1]{\boldsymbol{#1}}
\newcommand{\extN}{\mathbb{N}_\infty}
\newcommand{\p}{\varphi}
\newcommand{\sub}[2][x]{\mathit{sub}_{#1}(#2)}%
\newcommand{\subs}[2][x]{\mathit{sub}^*_{#1}(#2)}%
\newcommand{\Pred}{\ensuremath{\mathsf{Pred}}\xspace}
\newcommand{\Ind}{\ensuremath{\mathsf{Con}}\xspace}
\newcommand{\Var}{\ensuremath{\mathsf{Var}}\xspace}
\newcommand{\defdes}{\ensuremath{\smash{\iota}}\xspace}
\newcommand{\Next}{{\ensuremath{\raisebox{0.25ex}{\text{\scriptsize{$\bigcirc$}}}}}}
\newcommand{\D}{\Diamond}
\DeclareRobustCommand*{\monodic}{\ooalign{\kern-0.15ex$\Box$ \cr \kern0.45ex \raisebox{0.4ex}{\scalebox{0.5}{$1$}}}\rule{0pt}{1.5ex} \kern-0.85ex}
\newcommand{\Boxuni}{\ensuremath{\mathop{\ooalign{$\Box$ \cr \kern0.57ex \raisebox{0.2ex}{\scalebox{0.55}{$u$}}}\rule{0pt}{1.5ex} \kern-0.7ex}}\xspace}
\newcommand{\rpath}{\mathsf{rp}}
\newcommand{\md}{\ensuremath{\mathit{d}}}
\newcommand{\length}{\textit{length}}
\newcommand{\K}{\ensuremath{\mathbf{K}}}
\newcommand{\Kfn}{\ensuremath{\K\!\boldsymbol{f}\!_{\ast n}}}
\newcommand{\FO}{\ensuremath{\mathsf{FO}}}
\newcommand{\FOO}{\ensuremath{\mathsf{FO^1}}}
\newcommand{\CT}{\ensuremath{\mathsf{C^2}}}
\newcommand{\Sfive}{\ensuremath{\mathbf{S5}}}
\newcommand{\Sfiven}{\ensuremath{\mathbf{S5}_{n}}}
\newcommand{\QML}{\ensuremath{\smash{\mathsf{Q}^{=}\mathsf{ML}_{\defdes}}}\xspace} 
\newcommand{\QMLc}{\ensuremath{\smash{\mathsf{Q}^{=}\mathsf{ML}_c}}\xspace}
\newcommand{\QMLplusc}{\ensuremath{\smash{\mathsf{Q}^{=}\mathsf{ML}_c}}\xspace}
\newcommand{\QmonML}{\ensuremath{\smash{\mathsf{Q}^{=}_{\monodic}\mathsf{ML}_{\defdes}}}\xspace} 
\newcommand{\QmonMLc}{\ensuremath{\smash{\mathsf{Q}^{=}_{\monodic}\mathsf{ML}_c}}\xspace} 
\newcommand{\QmonMLplusc}{\ensuremath{\smash{\mathsf{Q}^{=}_{\monodic}\mathsf{ML}_c}}\xspace}
\newcommand{\QGuardmonMLb}{\ensuremath{\smash{\mathsf{GF}^{=}_{\monodic}\mathsf{ML}_{\defdes}}}\xspace} 
\newcommand{\QGuardmonMLc}{\ensuremath{\smash{\mathsf{GF}^{=}_{\monodic}\mathsf{ML}_{c}}}\xspace}
\newcommand{\QGuardmonMLplusc}{\ensuremath{\smash{\mathsf{GF}^{=}_{\monodic}\mathsf{ML}_{c}}}\xspace}
\newcommand{\QoneML}{\ensuremath{\smash{\mathsf{Q}^{1=}\mathsf{ML}_{\defdes}}}\xspace} 
\newcommand{\QoneMLc}{\ensuremath{\smash{\mathsf{Q}^{1=}\mathsf{ML}_{c}}}\xspace} 
\newcommand{\CtwomonML}{\ensuremath{\smash{\mathsf{C}^{2}_{\monodic}\mathsf{ML}_{\defdes}}}\xspace} 
\newcommand{\CtwomonMLc}{\ensuremath{\smash{\mathsf{C}^{2}_{\monodic}\mathsf{ML}_{c}}}\xspace} 
\newcommand{\CtwomonMLplusc}{\ensuremath{\smash{\mathsf{C}^{2}_{\monodic}\mathsf{ML}_{c}}}\xspace}
\newcommand{\MLDiff}{\ensuremath{\smash{\mathsf{Q}^{1\neq}\mathsf{ML}}}\xspace}	
\newcommand{\LTL}{\mathbf{LTL}}
\newcommand{\LTLf}{\mathbf{LTL}\!\boldsymbol{f}}
\newcommand{\LTLd}{\mathbf{LTL}^{\!\Diamond}}
\newcommand{\LTLfd}{\mathbf{LTL}\!\boldsymbol{f}^{\Diamond}}
\newcommand{\QGuardmonLTLb}{\ensuremath{\smash{\mathsf{GF}^{=}_{\monodic}\mathsf{LTL}_{\defdes}}}\xspace}
\newcommand{\QGuardmonLTLdb}{\ensuremath{\smash{\mathsf{GF}^{=}_{\monodic}\mathsf{LTL}^{\!\Diamond}_{\defdes}}}\xspace}
\newcommand{\QoneLTL}{\mathsf{Q}^{1=}\mathsf{LTL}_{\defdes}}
\newcommand{\QoneLTLd}{\ensuremath{\smash{\mathsf{Q}^{1=}\mathsf{LTL}^{\!\smash{\Diamond}}_{\defdes}}}\xspace}
\newcommand{\CtwomonLTL}{\ensuremath{\smash{\mathsf{C}^{2}_{\monodic}\mathsf{LTL}_{\defdes}}}\xspace}
\newcommand{\CtwomonLTLd}{\ensuremath{\smash{\mathsf{C}^{2}_{\monodic}\mathsf{LTL}^{\!\smash{\Diamond}}_{\defdes}}}\xspace}
\newcommand{\quasimod}{\ensuremath{\Qmf}\xspace}
\newcommand{\runs}{\ensuremath{\Rmf}\xspace}
\newcommand{\funcand}{\ensuremath{\boldsymbol{q}}\xspace}
\newcommand{\contp}{t}
\newcommand{\assign}{\ensuremath{\mathfrak{a}}\xspace}
\newcommand{\tvalue}[3]{#1^{\Mmf({#2}), {#3}}}
\newcommand{\Tree}[1]{\mathbf{Tree}_n^{#1}}
\newcommand{\Kn}{\K_n}
\newcommand{\SfTree}[1]{\mathbf{S5Tree}_n^{#1}}
\newcommand{\gen}[1]{\tilde{#1}}	
\newcommand{\rep}[1]{\tau_{#1}}
\newcommand{\prfun}[1]{\mathfrak{p}(#1)}	
\newcommand{\prset}{\mathfrak{p}}
\newcommand{\finrun}[1]{\mathfrak{fin}(#1)}
\newcommand{\nb}[1]{}
\journal{\ldots}
\def\ps@pprintTitle{%
  \let\@oddhead\@empty
  \let\@evenhead\@empty
  \def\@oddfoot{\reset@font\hfil\thepage\hfil}
  \let\@evenfoot\@oddfoot
}
\begin{document}

\begin{frontmatter}



\title{Decidability in First-Order Modal Logic\\
	with Non-Rigid Constants and Definite Descriptions} 


\author[label1]{Alessandro Artale} 
\affiliation[label1]{organization={
Free University of Bozen-Bolzano},
            country={Italy}
            }         
           
\author[label2]{Christopher Hampson} 
\affiliation[label2]{organization={
King's College London},
            country={UK}
            }

\author[label3]{Roman Kontchakov} 
\affiliation[label3]{organization={
Birkbeck, University of London},
            country={UK}
            }
            
\author[label1]{Andrea Mazzullo} 
            
\author[label4]{Frank Wolter} 
\affiliation[label4]{organization={
University of Liverpool},
            country={UK}
            }

\begin{abstract}
While modal extensions of decidable fragments of first-order logic
are usually
undecidable, their monodic counterparts, in which formulas in the scope of modal operators have at most one free variable, are typically decidable.
This only holds, however, under the provision that non-rigid constants, definite descriptions and non-trivial counting are not admitted.
Indeed, several
monodic fragments having at least one of these features are known to be undecidable. 
We investigate these features systematically and show that fundamental monodic fragments such as the two-variable fragment with counting and the guarded fragment of standard first-order modal logics $\mathbf{K}_{n}$ and $\mathbf{S5}_{n}$ are decidable. 
Tight complexity bounds are established as well. Under the expanding-domain semantics, we show decidability of the basic modal logic extended with the  transitive closure  operator on finite acyclic frames; this logic, however, is Ackermann-hard.
\end{abstract}



\begin{keyword}
First-Order Modal Logic \sep First-Order Temporal Logic \sep Monodicity \sep Decidability \sep Definite Descriptions \sep Non-Rigid Designators



\end{keyword}

\end{frontmatter}





\section{Introduction}
First-order modal logics are notorious for their bad computational behaviour, with modal extensions of decidable fragments of first-order logic ($\FO$) 
being typically undecidable.
A well-known example is the monadic fragment of $\FO$ which admits only unary predicates and no function symbols. While the monadic fragment of $\FO$ is known to be decidable~\cite{BoeEtAl97}, the monadic fragment of almost any standard first-order modal logic is undecidable. This applies, in particular, to the basic modal logic $\Kn$ of all Kripke frames  with $n$ accessibility relations and the basic epistemic modal logic $\Sfiven$ determined by the class of Kripke frames with $n$ equivalence relations, for all $n\geq 1$~\cite{kripke1962undecidability,DBLP:journals/sLogica/RybakovS25}. In fact, over the past fifty years, numerous powerful undecidability results have been obtained, reflecting the fact that the two-dimensional shape of combinations of Kripke frames with first-order domains often provides enough structure to encode Turing machines or other models of computation with undecidable halting problem~\cite{GabEtAl03,BraGhi07,DBLP:journals/sLogica/RybakovS19,DBLP:journals/igpl/Rybakov24}. While new families of decidable fragments of non-modal $\FO$ have been discovered on a regular basis (for instance, variants of the two-variable, the guarded, and the fluted fragments~\cite{pratt2023fragments}), progress has been slower in first-order modal logic.

In the early 2000s the \emph{monodic} fragment of first-order modal and temporal logic was introduced, in which modal operators are only applied to formulas with at most one free variable. This fragment turned out to behave much better in the sense that very often the monodic fragment of a modal extension of a decidable fragment of $\FO$ is decidable again. Prominent examples include the monadic monodic, two-variable monodic, and guarded monodic fragments of first-order modal logics $\Kn$ and $\Sfiven$, even if extended with modal operators for transitive closure~\cite{DBLP:journals/jsyml/WolterZ01}. Other examples are various monodic fragments of first-order
temporal logics, including linear temporal logic \textsf{LTL}~\cite{HodEtAl03,DBLP:journals/apal/Hodkinson06}.

These positive\nb{removed quotation marks} results hold independently from whether expanding or constant domains are assumed, but they come with a crucial provision: the language should not admit \emph{non-rigid designators}, in the form of constants or definite descriptions denoting different
individuals
in different worlds, or non-trivial \emph{counting},
expressing, for instance, that a unary predicate has at most $n$ elements.
In the sequel, we say that a language has \emph{\NRDC{} features} if it contains non-rigid designators and/or non-trivial counting.
Extended with \NRDC{} features, even the one-variable fragment of first-order \textsf{LTL} with constant or expanding domains becomes undecidable~\cite{DBLP:journals/sLogica/DegtyarevFL02,HamKur15}, and so does 
the one-variable fragment of first-order $\Kn$ with the universal modality on constant-domain models~\cite{DBLP:conf/aiml/HampsonK12}.


\NRDC{} features are, however, often the main reason for using first-order modal logic
and thus are fundamental for many applications.
In philosophy, they
have been instrumental in shaping
the analysis of
\emph{referential opacity} in modal contexts, related to the failure of the substitutivity principle for equality in the presence of modal operators, witnessed, for instance, by `the number of planets is necessarily greater than $7$' as opposed
to `$8$ is necessarily greater than~$7$'~\cite{Lin71, Lap12}.
Furthermore,
the development of an adequate syntactic and semantic treatment of non-rigid definite descriptions, such as
`the Evening Star',
in contrast to rigid names, such as
`Venus',
has
provided a formal background to
the debate between \emph{descriptivist} theories, on the one hand, and
\emph{direct reference} theories of proper names, on the other~\cite{Mar21}.
%
Recently, definite descriptions have received a renewed attention in free logic~\cite{Kur22, Ind23, Pet24}, description logic~\cite{ArtEtAl20b, NeuEtAl20, ArtEtAl21a}, as well as modal and hybrid logic literature~\cite{Ind20, Orl21, WalZaw23, Wal24}.
Under the \emph{epistemic} reading of modalities, non-rigid terms have been deployed\nb{comma removed} as a device for denoting distinct individuals in alternative scenarios considered conceivable by an agent~\cite{Hin62, LomCol96, BelLom09, DBLP:journals/sLogica/Wolter00}.\nb{M: added one ref}
Similar applications have also been found  in the context of quantified \emph{temporal} logics, with formalisms involving
individual symbols
that are allowed to change their denotation over time~\cite{KroMer08, IndZaw23, GeaEtAl22}.

\smallskip

As discussed, despite these applications, mostly negative\nb{removed quotation marks} results have been obtained so far regarding the computational behaviour of first-order modal logics with \NRDC{} features.
\emph{The aim of this paper is to initiate a systematic investigation
	of  decidability and complexity of monodic fragments of first-order modal logics extended with \NRDC{} features. In particular, we show that such extensions of many fundamental monodic fragments remain decidable.} 

\smallskip

\begin{table}[t]

\newcommand{\coNExp}{co\textsc{NExp}}
\newcommand{\twoExp}{\textsc{2Exp}}
\newcommand{\cdom}{$=$}
\newcommand{\expd}{$\subseteq$}
\newcommand{\whr}[1]{\tiny [#1]}
\newcommand{\compresult}[2]{\renewcommand{\tabcolsep}{0pt}\begin{tabular}{c}#1\\[-4pt]\whr{#2}\end{tabular}}
\newcommand{\thm}{Th.}

\noindent\centering{\tabcolsep=3.1pt\begin{tabular}{cccccccc}\toprule 
\multirow{2}{*}{frames $\Cmc$} & \multirow{2}{*}{dom.} & \multicolumn{3}{c}{$\Cmc$-validity} & \multicolumn{3}{c}{global $\Cmc$-consequence}\\\cmidrule(lr){3-5}\cmidrule(lr){6-8}
& & $\QoneML$ & $\CtwomonML$ & $\QGuardmonMLb$ & $\QoneML$ & $\CtwomonML$ & $\QGuardmonMLb$   \\\midrule
$\Sfive$ & \cdom &  \compresult{\coNExp}{\thm~\ref{thm:complexity:onevar}}  & \compresult{\coNExp}{\thm~\ref{thm:c2:conexptime}} & \compresult{\twoExp}{\thm~\ref{thm:guarded:complexity}} &\compresult{\coNExp}{Rem.~\ref{rem:s5:global}}  & \compresult{\coNExp}{Rem.~\ref{rem:s5:global}} & \compresult{\twoExp}{Rem.~\ref{rem:s5:global}} \\
$\Sfiven$, \footnotesize $n \geq 2$ & \cdom &   \compresult{\coNExp}{\thm~\ref{thm:complexity:onevar}}  & \compresult{\coNExp}{\thm~\ref{thm:c2:conexptime}} & \compresult{\twoExp}{\thm~\ref{thm:guarded:complexity}} & \multicolumn{3}{c}{undecidable \whr{\thm~\ref{thm:undecglobal}}} \\
\multirow{2}{*}{$\Kn$} & \cdom &  \compresult{\coNExp}{\thm~\ref{thm:complexity:onevar}} & \compresult{\coNExp}{\thm~\ref{thm:c2:conexptime}}  & \compresult{\twoExp}{\thm~\ref{thm:guarded:complexity}} & \multicolumn{3}{c}{undecidable \whr{\thm~\ref{thm:undecglobal}}}\\
& \expd & \compresult{\PSpace}{\thm~\ref{thm:kn:exp}} & \compresult{\coNExp}{\thm~\ref{thm:c2:conexptime}} & \compresult{\twoExp}{\thm~\ref{thm:guarded:complexity}} & \multicolumn{3}{c}{?}\\\midrule
\multirow{2}{*}{\hspace*{-0.25em}$\K_{\ast n}$, $\LTL^{\smash{(\Diamond)}}$\hspace*{-1em}} & \cdom & \multicolumn{6}{c}{$\Sigma_1^1$~\whr{L.~\ref{lem:restemp} + \thm~\ref{thm:temp}~(1)}} \\
& \expd & \multicolumn{6}{c}{undecidable~\whr{L.~\ref{lem:restemp} + \thm~\ref{thm:temp}~(2)}} \\[6pt]
\multirow{2}{*}{\hspace*{-0.25em}$\Kfn$, $\LTLf^{\smash{(\Diamond)}}$\hspace*{-1em}} & \cdom & \multicolumn{6}{c}{undecidable~\whr{L.~\ref{lem:restemp} + \thm~\ref{thm:temp}~(1)}} \\ 
& \expd & \multicolumn{6}{c}{decidable, Ackermann-hard~\whr{L.~\ref{lem:restemp} + \thm~\ref{thm:exp}, \ref{thm:temp}~(2)}}\\
\bottomrule
\end{tabular}}
	\caption{$\Cmc$-validity and global $\Cmc$-consequence in fragments of $\QmonML$ (complexity bounds are tight and, unless otherwise stated, $n \geq 1$).}\label{table:complexity}
\end{table}
We now discuss our results and techniques in detail. We consider standard Kripke semantics for first-order modal logic with expanding or constant domains (and use the fact that reasoning in the former can be reduced in polynomial time to reasoning in the latter). Our most general language admits equality, non-rigid and possibly non-denoting constants, along with definite descriptions. We observe that neither non-denoting constants nor definite descriptions increase the complexity of reasoning, and so show our main technical results for the language with non-rigid constants and equality. We focus on the standard modal logics $\Kn$ and $\Sfiven$ and the extension 
$\K_{\ast n}$ of $\Kn$ with a modal operator interpreted by the transitive closure of the remaining $n$ accessibility relations. Of particular interest for mapping out the border between decidability and undecidability is the modal logic $\Kfn$, obtained from $\K_{\ast n}$ by 
considering only Kripke frames without infinite ascending chains. 

We investigate decidability and complexity of monodic fragments of these first-order modal logics restricted to decidable fragments of \textsf{FO}.
While our techniques can in principle be applied to any decidable fragment of \textsf{FO},
the focus here is on
the following three fragments, a minimal one and two maximal ones.
\begin{itemize}
	\item The one-variable fragment of first-order modal logic, $\QoneML$, is underpinned by the one-variable fragment $\textsf{FO}^1$ of \textsf{FO} with equality and constants. Validity in $\textsf{FO}^1$ is coNP-complete, so it behaves exactly as propositional logic. 
    \item The monodic two-variable fragment with counting, $\CtwomonML$, is underpinned by the two-variable fragment $\textsf{C}^2$ of \textsf{FO} with counting quantifiers, which  
    has received considerable attention as a decidable fragment of \textsf{FO} supporting quantitative reasoning~\cite{DBLP:conf/lics/GradelOR97,DBLP:journals/jolli/Pratt-Hartmann05,DBLP:journals/iandc/Pratt-Hartmann09}. Validity in $\textsf{C}^2$ is co\NExpTime-complete.
    %
    \item The monodic guarded fragment, $\QGuardmonMLb$, is underpinned by the guarded fragment \textsf{GF} of \textsf{FO}, which generalises standard modal and description logics~\cite{ANvB98,DBLP:journals/jsyml/Gradel99}. Validity in \textsf{GF} is 2\ExpTime-complete, but it is incompatible with $\textsf{C}^{2}$ in the sense that \textsf{GF} with counting quantifiers is undecidable. 
\end{itemize} 
Our main results are shown in Table~\ref{table:complexity}.
In this table, the frame classes
are listed in the first column, with $\LTLf$ denoting the class of \emph{all finite strict linear orders} and $\LTL$ denoting $\{(\mathbb{N},<)\}$. The superscript $(\Diamond)$ indicates that the results are the same for the modal languages with both $\Diamond$ and $\Next$ and with $\Diamond$ only. 
The~``$=$'' and ``$\subseteq$'' in the second column indicate constant or expanding domain, respectively.
One can see that decidability does not depend on which fragment is considered, while, not so surprisingly, the computational complexity does depend on the fragment. We note that, without the \NRDC{} features, all these logics are decidable in at most 2\ExpTime~\cite{DBLP:journals/jsyml/WolterZ01, GabEtAl03}, so  \NRDC{}  features have a significant impact.
There are, however, positive results to emphasise:
we obtain elementary decidability for the fundamental logics $\Kn$ and $\Sfive$; moreover, on finite acyclic expanding-domain models, even transitive closure can be added while still retaining decidability. 

The main steps of our proofs are as follows. We adapt the machinery developed for monodic
fragments without the \NRDC{} features. The basic abstraction needed is \emph{quasimodels}: Kripke models in which first-order domains are replaced by quasistates (sets of types), while sets of runs (functions from worlds to quasistates) are used to represent individual domain elements. To deal with the \NRDC{} features, quasistates now become \emph{multisets} of types and, similarly, sets of runs become multisets to reflect the need to count domain elements. Our first basic result shows that, when dealing with monodic fragments, Kripke models can be replaced by such `quantitative' quasimodels. In contrast to the case without the \NRDC{} features, quasimodels are typically not yet well-behaved. For instance, if the first-order fragment does not have the finite model property, one can easily enforce infinite branching of Kripke frames. Hence, we replace quasimodels by \emph{weak} quasimodels in which the $\Diamond$-saturation conditions on runs are weakened. We show that one can reconstruct quasimodels from weak quasimodels, but only by expanding the underpinning Kripke frame. In particular, we show that, on~$\Kn$ and $\Sfiven$ frames, monodic fragments are determined by weak quasimodels containing only exponentially many worlds. This approach underpins our decidability results for $\Kn$ and $\Sfiven$. To obtain tight complexity bounds, we employ techniques and results developed for $\textsf{GF}$~\cite{DBLP:journals/jsyml/Gradel99,DBLP:journals/corr/BaranyGO13} and $\textsf{C}^2$~\cite{pratt2023fragments}. The decidability proof for expanding-domain models with transitive closure also relies on 
weak quasimodels and shows, using Dickson's Lemma, a (non-primitive recursive) bound on their size. The lower bounds are all shown by reduction from known results for related fragments~\cite{GabEtAl03,HamKur15}.     

This article significantly extends ideas first developed by the authors in the context of modal and temporal description logics~\cite{AKMW:KR24, AKMW:DL24}.



\section{Preliminaries}

In this section we first introduce the language $\QML$ of first-order modal logic with equality, constants and definite descriptions and then define its semantics, where constants and definite descriptions are interpreted non-rigidly. We then define a number of monodic fragments of $\QML$ and formulate the main decision problems for the fragments.

\subsection{Syntax and Semantics}\label{sec:syntax}


Let $A$ be a  finite set of \emph{modalities}. 
The alphabet of the \emph{first-order modal language with equality and definite descriptions},
$\QML$,
consists of countably infinite and pairwise disjoint sets of
predicate symbols $\Pred$
(each of a fixed non-negative arity),
constants $\Ind$
and variables $\Var$,
equality $=$,
Boolean connectives~$\lnot$ and~$\land$,
the existential quantifier~$\exists$,
the definite description operator~$\defdes$,
and modal operators~$\Diamond_{a}$ (diamond), for $a \in A$.
%
\emph{Terms} $\tau$ and \emph{formulas} $\p$ of $\QML$ are defined by mutual induction:
%
\begin{align*}
\tau ::= & \ x \mid c \mid \defdes x . \p,\\
\p ::= & \
P(\tau_{1}, \ldots, \tau_{m})
\mid
\tau_{1} = \tau_{2} \mid 
\neg \varphi \mid (\varphi_1 \land \varphi_2) \mid 
 \exists x\, \varphi \mid
\Diamond_{a} \varphi, 
\end{align*}
where $x \in \Var$, $c \in \Ind$,  $P \in \Pred$
($m$-ary, for $m \geq 0$) and $a\in A$.
Formulas of the form
$P(\tau_{1}, \ldots, \tau_{m})$ and $\tau_{1} = \tau_{2}$ are called \emph{atomic}.
Other standard syntactic abbreviations are assumed:
in particular,
$\varphi_1 \lor \varphi_2$ stands for~$\neg (\neg \varphi_1 \land \neg \varphi_2)$,
$\varphi_1 \to \varphi_2$ for $\lnot \varphi_1 \lor \varphi_2$,
$\varphi_1 \leftrightarrow \varphi_2$ for $(\varphi_1 \to \varphi_2) \land (\varphi_2 \to \varphi_1)$;
$\forall x\,\varphi$ abbreviates $\lnot \exists x\, \lnot \varphi$, and
$\Box_{a} \varphi$ abbreviates $\lnot \Diamond_{a} \lnot \varphi$.

The \emph{free variables}
 in terms and formulas
are defined in the standard way by mutual induction~(cf.~\cite{FitMen12}):
in particular, the free variables
of $P(\tau_{1}, \ldots, \tau_{m})$ are those of $\tau_{1}, \ldots, \tau_{m}$, while the free variables of
$\defdes x. \varphi$ and $\exists x\,\varphi$ are the free variables
of $\varphi$, with the exception of
$x$.
%
%
A
\emph{$\QML$ sentence} is a $\QML$ formula without free variables, and a \emph{$\QML$ theory} is a finite set of $\QML$ sentences.
%
%
The set of constants occurring in a formula $\varphi$ is denoted by $\Ind(\varphi)$, and for a theory $\Gamma$ we set $\Ind(\Gamma) = \bigcup_{\varphi \in \Gamma} \Ind(\varphi)$.

The \emph{size} of terms and formulas is introduced by mutual induction, setting
%
$|x| = |c| = 1$,
$|\defdes x. \varphi| = 1 + |\varphi|$,
$
|P(\tau_{1}, \ldots, \tau_{m}) | = 1 + | \tau_{1} | + \ldots + | \tau_{m} |
$,
and the other cases given in the natural way (we treat the Boolean connectives other than $\lnot$ and $\land$ as abbreviations).
The sets of \emph{subformulas} of terms and formulas are also defined by mutual induction, with
%
$\sub[]{x} = \sub[]{c} = \emptyset$,
$\sub[]{\defdes x. \varphi} = \sub[]{\varphi}$,
$\sub[]{P(\tau_{1}, \ldots, \tau_{m})} = \{ P(\tau_{1}, \ldots, \tau_{m}) \} \cup \sub[]{\tau_{1}} \cup \cdots \cup \sub[]{\tau_{m}}$,
and the remaining cases as usual.
%
Finally, the \emph{modal depth} of terms and formulas is the maximum number of nested modal operators,
defined again by mutual induction:
%
$\md(x) = \md(c) = 0$, $\md(\defdes x. \varphi) = \md(\varphi)$,
$\md(P(\tau_{1}, \ldots, \tau_{m})) = \max \{ \md(\tau_{1}), \ldots, \md(\tau_{m}) \}$,
$\md(\Diamond_{a} \varphi) = \md(\varphi) + 1$, and with the other cases given in the obvious way.

A \emph{partial interpretation with expanding domains} is a structure
$\Mmf = (\Fmf, \Dom, \cdot)$, where
\begin{itemize}
\item $\Fmf = (W, \{ R_{a} \}_{a \in A})$ is a \emph{frame}, with $W$
  being a non-empty set of \emph{worlds} and
  $R_{a} \subseteq W \times W$ being an \emph{accessibility relation}
  on $W$, for each {modality} $a \in A$ (we say that $\Mmf$ is
  \emph{based on $\Fmf$});
\item $\Dom$
is a function associating with every $w \in W$ a
  non-empty set, $\Dom_{w}$, called the \emph{domain of $w$
    in~$\Mmf$}, such that $\Dom_{w} \subseteq \Dom_{v}$, whenever
  $w R_{a} v$, for some $a \in A$;
\item $\cdot$ is a function associating with each $w \in W$ a
  \emph{partial} first-order interpretation $\Mmf({w})$ with domain
  $\Dom_w$ so that $P^{\Mmf({w})} \subseteq \Dom_{w}^{m}$, for each predicate
  $P \in \Pred$ of arity $m$, and
  $c^{\Mmf(w)} \in \Dom_{w}$, for each constant symbol $c$ in \emph{some subset} of $\Ind$.
%
\end{itemize}

Hence, every $\cdot^{\Mmf(w)}$ is a total function on~$\Pred$ but a \emph{partial} function on~$\Ind$.
If $\Mmf(w)$ is defined on $c \in \Ind$, then we say
that $c$ \emph{designates at~$w$}.
If every \mbox{$c \in \Ind$} designates at $w \in W$, then $\Mmf(w)$ is called \emph{total}.
We say that $\Mmf = ( \Fmf, \Dom, \cdot)$ is a \emph{total} interpretation if every $\Mmf(w)$, for $w \in W$, is a \emph{total} interpretation.
In the remainder of this work, we refer to partial interpretations as `interpretations' and add explicitly the adjective `total' when this is the case.

%

An \emph{interpretation with constant domains} is a special case of an interpretation with expanding domains where the function $\Dom$ is such that $\Dom_{w} = \Dom_{v}$, for every $w, v \in W$. With an abuse of notation, we denote the common domain by $\Delta$ and
call it the \emph{domain of $\Mmf$}.

%
Given an interpretation $\Mmf = (\Fmf, \Dom, \cdot)$,
an \emph{assignment at $w$} is a function~$\assign$ from $\Var$ to $\Dom_{w}$.
An \emph{$x$-variant} of an assignment $\assign$ at $w$ is an assignment $\assign'$ at $w$ that can differ from $\assign$ only on $x$.
Observe that, if $w R_{a} v$ for some $a \in A$, then an assignment at $w$ is also an assignment at $v$.
The definitions of the \emph{value} $\tvalue{\tau}{w}{\assign}$
of a term $\tau$ under assignment $\assign$ at world~$w$ of $\Mmf$, and of \emph{satisfaction} $\Mmf, w \models^\assign \varphi$ of a formula $\varphi$ at world $w$ of~$\Mmf$ under assignment $\assign$ are defined by mutual induction. We set
\begin{equation*}
\tvalue{\tau}{w}{\assign} =
\begin{cases}
\assign(x), & \text{ if } \tau \text{ is } x \in \Var; \\
c^{\Mmf(w)}, & \text{ if } \tau \text{ is } c \in \Ind \text{ and $c^{\Mmf(w)}$ is defined};\\
\assign'(x), & \text{ if } \tau  \text{ is } \defdes x . \p \text{ and } \Mmf, w \models^{\assign'} \p, \text{ for \emph{exactly one}  }
\\
&
\hspace*{3em}\text{ $x$-variant $\assign'$ of $\assign$ at $w$};\\
\text{undefined}, & \text{ otherwise};
\end{cases}
\end{equation*}
and define
%
\[%
{\renewcommand{\arraystretch}{1.25}%
\renewcommand{\arraycolsep}{5pt}%
\begin{array}{lcl}
		\Mmf, w \models^{\assign} P(\tau_{1}, \ldots, \tau_{m}) & \text{iff} & 
		\tvalue{\tau_{1}}{w}{\assign},\dots,\tvalue{\tau_{m}}{w}{\assign} \ \text{are defined and} \\
	  & &
	  (\tvalue{\tau_{1}}{w}{\assign}, \ldots, \tvalue{\tau_{m}}{w}{\assign}) \in P^{\Mmf(w)}; \\
		\Mmf, w \models^{\assign}  \tau_{1} =  \tau_{2} & \text{iff}   & \text{both } \tvalue{\tau_{i}}{w}{\assign}  \text{ are defined and} \
		\tvalue{\tau_{1}}{w}{\assign} = \tvalue{\tau_{2}}{w}{\assign};\\
		\Mmf, w  \models^{\assign} \neg \varphi & \text{iff}  &  \Mmf, w  \not\models^{\assign} \varphi; \\
		\Mmf, w  \models^{\assign} \varphi_1 \land \varphi_2 & \text{iff}   & \Mmf, w  \models^{\assign} \varphi_1 \text{ and } \Mmf, w  \models^{\assign} \varphi_2; \\
		\Mmf, w  \models^{\assign} \exists x\, \varphi & \text{iff} 
		& \parbox[t][][t]{\linewidth}{$\Mmf, w \models^{\assign'} \varphi$, for some $x$-variant $\assign'$ of $\assign$ at $w$;}\\
		\Mmf, w  \models^{\assign} \Diamond_{a} \varphi & \text{iff} 
		& \parbox[t][][t]{\linewidth}{$\Mmf, v \models^{\assign} \varphi$, for some $v \in W$ such that $wR_{a}v$.}\\
\end{array}
}%
\]
If $\tvalue{\tau}{w}{\assign}$ is defined, then we say that $\tau$ \emph{designates under $\assign$ at $w$} in $\Mmf$.
Observe that variables always designate under an assignment $\assign$ at $w$, and that assignments play no role in constants' designation (hence, we will simply say that a constant \emph{designates at $w$ in $\Mmf$}).
To simplify the notation, when there is no risk of confusion, we will often write $P^{w}$, $c^{w}$ and $\tau^{w, \assign}$ in place, respectively, of $P^{\Mmf(w)}$, $c^{\Mmf(w)}$ and $\tvalue{\tau}{w}{\assign}$.

We say that a formula $\varphi$ is \emph{true in $\Mmf$}, written $\Mmf \models \varphi$, iff $\varphi$ is satisfied under every assignment $\assign$ at every world $w$ of $\Mmf$.
Dually, we say that $\varphi$ is \emph{satisfied in $\Mmf$} iff $\varphi$ is satisfied under some assignment $\assign$ at some world $w$ of~$\Mmf$.
%
A
theory~$\Gamma$ is \emph{true in $\Mmf$}, written $\Mmf \models \Gamma$, if every sentence in $\Gamma$ is true in $\Mmf$.

Given a class of frames $\Cmc$, we say that $\varphi$ is \emph{valid on $\Cmc$} (or \emph{$\Cmc$-valid}) if $\varphi$ is true in every interpretation $\Mmf$ based on a  frame $\Fmf \in \Cmc$.
Dually, we say that $\varphi$ is \emph{satisfiable in $\Cmc$} (or \emph{$\Cmc$-satisfiable}) if there exists an interpretation $\Mmf$ based on a frame $\Fmf\in \Cmc$ such that $\varphi$ is satisfied in~$\Mmf$.
A formula $\varphi$ is said to be
a \emph{global $\Cmc$-consequence of a theory~$\Gamma$}
if 
$\varphi$ is true in any interpretation $\Mmf$ based on a frame in $\Cmc$ such that $\Mmf \models \Gamma$; recall that theories have no free variables.

In the following,\nb{we do define the classes, at the start of the relevant sections (and Kf is missing here anyway)} the class of all frames with $n$ accessibility relations, for $n \geq 1$, is denoted by $\Kn$. The class of frames with $n$ equivalence relations is $\Sfiven$, and we write $\Sfive$ in place of $\Sfive_{1}$.

\begin{example}[Vulcan and Venus]\em
\label{ex:defdes}
Let us consider some examples of $\QML$ formulas:
%
%
first, ``it is conceivable \emph{that} Vulcan is the planet orbiting between the Sun and Mercury'' is represented as
\[
\Diamond \big( {\sf vulcan} = \defdes z. {\sf OrbitsBetween} (z, {\sf sun}, {\sf mercury}) \big),
\]
while
``even though
such a planet
does not exist''
can be written as
\[
\lnot \exists x \,\big(x =  {\sf vulcan} \big)
\land 
\lnot \exists x \,\big(x =  \defdes z. {\sf OrbitsBetween} (z, {\sf sun}, {\sf mercury}) \big).
\]
Here, neither the constant ${\sf vulcan}$ nor the definite description $\defdes z. {\sf OrbitsBetween} (z, \linebreak {\sf sun}, {\sf mercury})$ designate in the current world.

Second, ``it is known \emph{of} the planet orbiting between Mercury and Earth that it is Venus'' can be rendered as 
\[
\exists x \,\big(x = \defdes z. {\sf OrbitsBetween} (z, {\sf mercury}, {\sf earth}) \ \land \
\Box (x = {\sf venus} ) \big).
\]
Note that, in $\Sfive$ frames, the above formula implies that
${\sf venus}$
is a rigid designator; see~\cite[Proposition 10.2.5]{FitMen12}.
\end{example}

\subsection{Monodic Fragments and Decision Problems}\label{sec:monodic:fragments}

We next introduce various fragments of $\QML$. A $\QML$ formula $\varphi$ is called \emph{monodic} if every subformula of $\varphi$ of the form $\Diamond_{a} \psi$ has at most one free variable. 
We denote by $\QmonML$ the set of monodic $\QML$ formulas, and call it the \emph{monodic fragment} of $\QML$. The monodic fragment contains full first-order logic and so is undecidable. 
To obtain potentially decidable fragments of $\QmonML$, we restrict its first-order component. 

A minimal language we consider in this paper extends the \emph{one-variable fragment} of first-order logic. Denote by $\QoneML$ the set of all formulas in $\QML$ that use a single variable and only predicates of arity at most one (note that equality, constants, and definite descriptions are allowed). Clearly, all $\QoneML$ formulas are monodic by definition. Variants of this fragment have been investigated extensively as product modal logics (see, e.g.,~\cite{DBLP:journals/logcom/Marx99,GabEtAl03,DBLP:journals/apal/GabelaiaKWZ06,DBLP:conf/aiml/HampsonK12}), and we point out specific relevant results below when discussing $\QoneML$.


At the other end of the spectrum, we consider two maximal languages. First, we extend
$\mathsf{C}^{2}$, the \emph{two-variable fragment of first-order logic with counting quantifiers $\exists^{\geq k}x$, $k\geq 0$}. Denote by $\CtwomonML$ the set of all formulas in $\QmonML$ extended by $\exists^{\geq k}$
and constructed using only \emph{two variables} 
and predicates of arity at most two. We also have equality, constants and definite descriptions.
The semantics of the counting quantifiers is given in the obvious way by modifying the quantifier case as follows:
\begin{multline*}
		\Mmf, w  \models^{\assign} \exists^{\geq k} x\, \varphi \quad \text{iff}  \quad
		\Mmf, w \models^{\assign'} \varphi, \ \text{for at least $k$ distinct} \\\text{$x$-variants $\assign'$ of $\assign$ at $w$.} 
\end{multline*}
It should be clear that counting can be expressed in first-order logic with equality, but $\exists^{\geq k}$ would require $k$ variables. We use standard abbreviations such as $\exists^{\leq k} x\,\varphi = \neg \exists^{\geq k + 1}x\,\neg\varphi$ and $\exists^{=k} x\,\varphi = \exists^{\leq k}x\,\varphi\land  \exists^{\geq k}x\,\varphi$. 
%
\begin{example}[The number of planets]\em
We illustrate the interaction between modal
operators and counting quantifiers. Let $\varphi_{1}= \Diamond \exists^{\leq 9} x \, {\sf Planet(x)}$ and $\varphi_{2}=\exists^{\leq 9} x \, \Diamond {\sf Planet(x)}$. Then, on arbitrary frames, $\varphi_{1}$ does not entail $\varphi_{2}$ under expanding or constant domains and $\varphi_{2}$ entails $\varphi_{1}$ under constant domains but not under expanding domains.  
 
\end{example}

Observe that definite descriptions add expressive power to the non-modal two-variable fragment without counting and equality.
For instance, it can be seen that $\forall x\,F(x, \defdes y. F(x, y))$ ensures that binary predicate $F$ is interpreted as a partial function (that is, the formula is equivalent to $\forall x\exists^{\leq 1}y\, F(x,y)$).
%
%

The second maximal language extends $\mathsf{GF}$, the \emph{guarded fragment of} first-order logic. Define 
the fragment $\QGuardmonMLb$ of $\QmonML$ by defining its terms and formulas by mutual induction as follows.
Terms in $\QGuardmonMLb$ are limited to variables, constants and \emph{closed} definite descriptions, which are of the form $\defdes x.\chi(x)$ for a $\QGuardmonMLb$ formula $\chi(x)$ with a single free variable $x$. Formulas in $\QGuardmonMLb$ are constructed as in $\QmonML$ except that quantifiers are required to be guarded:
\begin{equation*}
	\exists x_{1}\cdots \exists x_{k}\,(\alpha\wedge \varphi), 
\end{equation*}
where $\alpha$ is a predicate or equality atom that contains all free variables of~$\varphi$.
Note that if $\varphi$ has only one free variable, $x$, then $\exists x\,\varphi(x)$ 
 can be considered guarded: it is equivalent to $\exists x\,((x = x)\land \varphi(x))$, 
 with $x = x$ as its guard. 
%
%
Observe that in the definition of $\QGuardmonMLb$ the restriction to closed definite descriptions is necessary to guarantee decidability of the logic, even without modal operators.
We have seen that the formula $\forall x\,((x=x) \rightarrow F(x, \defdes y. F(x, y))$ ensures that $F$ is a graph of a partial function, but the non-modal guarded fragment with such functionality statements is undecidable~\cite[Theorem~5.8]{DBLP:journals/jsyml/Gradel99}. 

We also consider fragments that do not contain definite descriptions or constants: if in a fragment name we replace the $\defdes$ subscript  by $c$, then we refer to its restriction that admits constant symbols but not definite descriptions, and if we drop the subscript altogether, then we refer to the restriction without definite descriptions and constant symbols (i.e., having only variables as terms of the language).

Let $\Lmc$ be a fragment of $\QML$ with $n$ modalities 
and $\Cmc$ a class of frames
with $n$ accessibility relations.
We consider the following decision problems.
\begin{description}
\item[$\Cmc$-Validity in $\Lmc$:] Given an $\Lmc$-formula $\varphi$,
is $\varphi$ valid on $\Cmc$?
\item[Global $\Cmc$-Consequence in $\Lmc$:]
Given an $\Lmc$-formula $\varphi$ and an $\Lmc$-theory $\Gamma$,
is $\varphi$ a global $\Cmc$-consequence of $\Gamma$?
\end{description}

%

In the sequel, we can add qualifiers to the problems above
when we intend to refer to the corresponding problem restricted to certain classes of interpretations:
for instance, ``total $\Cmc$-validity in $\Lmc$'' is the problem of validity of $\Lmc$-formulas in total interpretations based on frames from $\Cmc$.

Our main concern is decidability and complexity of the above reasoning problems for the standard classes of frames. 

The \emph{signature} $\textit{sig}(\varphi)$ of a formula $\varphi$ is the set of predicate symbols and constants in~$\varphi$; the signature $\textit{sig}(\Gamma)$ of a theory~$\Gamma$ is defined similarly. Let $\Cmc$ be a class of frames. For sentences $\varphi$ and $\varphi'$, we write $\varphi\leq_{\Cmc}\varphi'$ if  every interpretation based on a frame from $\Cmc$ and satisfying $\varphi$ can be made to satisfy~$\varphi'$ by modifying the interpretation of symbols in $\textit{sig}(\varphi')\setminus\textit{sig}(\varphi)$, while preserving the interpretation of symbols in $\textit{sig}(\varphi)$.
A sentence $\varphi'$ is called a
\emph{$\Cmc$-model-conservative extension} of a sentence $\varphi$ if $\varphi\leq_{\Cmc}\varphi'$ and every interpretation based on a frame from $\Cmc$ and satisfying $\varphi'$ also satisfies $\varphi$. It should be clear that any polynomial-time transformation $\tau$ of formulas $\varphi$ such that $\tau(\varphi)$ is a $\Cmc$-conservative extension of $\varphi$ is a polytime-reduction of $\Cmc$-validity of formulas of the form~$\neg\varphi$ to formulas of the form~$\neg\tau(\varphi)$.

We also define the notion of $\Cmc$-model-conservative extensions for pairs of the form $(\varphi, \Gamma)$, for a sentence $\varphi$ and a theory $\Gamma$:  we write $(\varphi,\Gamma)\leq_\Cmc (\varphi,\Gamma')$ if, for any interpretation $\Mmf$ based on a frame in $\Cmc$ such that $\Mmf\models\Gamma$ and $\Mmf, w\models\varphi$, for some $w\in W$, there is an extension $\Mmf'$ that coincides with $\Mmf$ on $\textit{sig}(\varphi)\cup\textit{sig}(\Gamma)$ such that $\Mmf'\models\Gamma'$ and $\Mmf', w\models\varphi'$.
%
Then $ (\varphi,\Gamma')$ is a \emph{$\Cmc$-conservative extension} of $(\varphi,\Gamma)$ if $(\varphi,\Gamma)\leq_\Cmc (\varphi,\Gamma')$ and, 
for any interpretation $\Mmf'$ based on a frame from $\Cmc$ such that $\Mmf'\models\Gamma'$ and $\Mmf', w\models\varphi'$, for some $w\in W$, we have $\Mmf'\models\Gamma$ and $\Mmf', w\models\varphi$.

\section{Reductions and Related Formalisms}
\label{sec:reductions}

In this section,
we first observe that reasoning in $\QoneMLc$ and a variant in which constants are replaced by an ``elsewhere'' quantifier are mutually polytime-reducible. This enables us to transfer existing undecidability results to $\QoneMLc$. We also establish polytime reductions that allow us to eliminate definite descriptions and partial designators. Finally, we remind the reader of a reduction of reasoning in expanding-domain models to reasoning in constant-domain models.


\subsection{Non-Rigid Constants and Difference Operator}


We start by introducing $\MLDiff$, the language obtained from $\QoneMLc$ by replacing constants by the ``elsewhere'' quantifier. It was introduced and investigated in~\cite{DBLP:conf/aiml/HampsonK12,HamKur15,DBLP:conf/aiml/Hampson16}. Formulas in $\MLDiff$ are defined by the following grammar:
%
\begin{align*}
\p ::= & \
\begin{aligned}[t]
P(x)
\mid
\neg \varphi \mid (\varphi \land \varphi) \mid & \ \exists x\, \varphi \mid 
\exists^{\neq} x\, \varphi
\mid \Diamond_{a} \varphi \\
\end{aligned}
\end{align*}
for unary
$P \in \Pred$
and a single fixed $x \in \Var$, where $\exists^{\neq}$ is called an \emph{elsewhere} or \emph{difference} \emph{quantifier}.
The semantics is given as in Section~\ref{sec:syntax}, with  the following additional clause:
\begin{multline*}
		\Mmf, w  \models^{\assign} \exists^{\neq} x\, \varphi \quad \text{iff}  \quad
		\Mmf, w \models^{\assign'} \varphi, \ \text{for some $x$-variant $\assign'$ of $\assign$ at $w$}\\\text{\emph{different} from $\assign$.} 
\end{multline*}
We
emphasise that, unlike for $\exists x\,\varphi$,  the truth value of $\exists^{\neq} x\, \varphi$ depends on $\assign(x)$ and so $x$ is `free' in $\exists^{\ne}x\, \varphi$.
%
%
%

%
\begin{theorem}
\label{th:diff}
Let $\Cmc$ be any class of frames.
$\Cmc$-validity in $\QoneMLc$ and $\MLDiff$ are mutually polytime-reducible, with both constant and expanding domains.
The same applies to global $\Cmc$-consequence.
%
\end{theorem}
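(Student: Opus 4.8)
The plan is to establish the two reductions separately, in each direction by a compositional translation together with a matching pair of model transformations, and to exploit the observation that $\Cmc$-validity of $\varphi$ is exactly global $\Cmc$-consequence of $\varphi$ from the empty theory. All auxiliary ``well-definedness'' constraints will be handled in one of two ways: for the global-consequence reductions I place them in the theory, which is true at every world of any model; for the validity-to-validity reductions I instead fold the same constraints into the translated formula along its modal structure, after pushing negations inward and treating $\Diamond_{a}$ and $\Box_{a}$ as primitives, inserting one extra conjunct per modal operator. This keeps the translation polynomial (rather than nesting boxes up to the modal depth, which would blow up exponentially) while still forcing the constraints at every world visited during evaluation. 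Since only unary predicates, constants and the single variable are ever added or removed, and frames and domain functions are untouched, the arguments are uniform for constant and expanding domains and the inclusion condition $\Dom_w\subseteq\Dom_v$ is preserved automatically. I may assume constants are total, or equivalently treat partial constants directly by using the ``at most one'' constraint below.

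\emph{Reduction from $\QoneMLc$ to $\MLDiff$.} For each constant $c$ I introduce a fresh unary predicate $P_c$, intended to denote the singleton $\{c^w\}$ (the empty set when $c$ is undefined at $w$). Atoms are translated by $P(x)\mapsto P(x)$, $P(c)\mapsto \exists x\,(P_c(x)\land P(x))$, $(x=c)\mapsto P_c(x)$, $(c=c')\mapsto \exists x\,(P_c(x)\land P_{c'}(x))$, and $(x=x)$ to a fixed tautology such as $P(x)\lor\neg P(x)$; the translation commutes with the Booleans, with $\exists x$, and with each $\Diamond_{a}$ and $\Box_{a}$. ``At most one $P_c$'' is expressed by $\mathsf{sing}_c := \neg\exists x\,(P_c(x)\land \exists^{\neq}x\,P_c(x))$, and $\mathsf{sing}$ denotes the conjunction over all $c$ occurring in the formula. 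In the forward transformation I take a $\QoneMLc$ model and put $P_c^w=\{c^w\}$, so $\mathsf{sing}$ holds everywhere; in the backward transformation I read off $c^w$ as the unique $P_c$-element (undefined when $P_c^w=\emptyset$), and one checks by induction that the atom translations are faithful precisely at worlds where $\mathsf{sing}$ holds. Conjoining $\mathsf{sing}$ under every $\Diamond_{a}$ and $\Box_{a}$ of the translation (and at the root) guarantees $\mathsf{sing}$ at every visited world.

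\emph{Reduction from $\MLDiff$ to $\QoneMLc$.} Here the main obstacle is that $\exists^{\neq}x\,\varphi$ is anchor-dependent: it asserts the existence of a domain element distinct from the current value of $x$ satisfying $\varphi$, which, after case-splitting on whether the anchor itself satisfies $\varphi$, amounts to an ``at least two'' counting condition. I capture this with one fresh non-rigid constant $c_\psi$ per occurrence $\psi=\exists^{\neq}x\,\varphi$, translating $\psi$ as $(\varphi^\dagger\land\varphi^\dagger[c_\psi/x]\land\exists x\,(\varphi^\dagger\land\neg(x=c_\psi)))\lor(\neg\varphi^\dagger\land\exists x\,\varphi^\dagger)$, with the remaining cases homomorphic. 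To force $c_\psi$ to be a genuine witness I use $\mathsf{wit}_\psi := \exists x\,\varphi^\dagger\to\varphi^\dagger[c_\psi/x]$; since inner occurrences are structurally smaller, the constants are defined inside-out and the $\mathsf{wit}_\psi$ are non-circular. In the forward transformation I set $c_\psi^w$ to any element satisfying $\varphi^\dagger$ at $w$ when one exists, so every $\mathsf{wit}_\psi$ holds everywhere; in the backward transformation I simply forget the $c_\psi$. Assuming $\mathsf{wit}_\psi$, the equivalence of $\psi^\dagger$ with $\exists^{\neq}x\,\varphi$ follows by the same case analysis, so enforcing all $\mathsf{wit}_\psi$ (in the theory, or interleaved along the modal skeleton) yields the reduction.

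The genuinely delicate points, and the bulk of the verification, are therefore: (i) the anchor-sensitive simulation of $\exists^{\neq}$ and its witness constraint in the $\MLDiff\to\QoneMLc$ direction; and (ii) ensuring that $\mathsf{sing}_c$ and $\mathsf{wit}_\psi$ hold at every world relevant to evaluation while keeping the translation polynomial. For global consequence, point (ii) is immediate, since the constraints can be added to the globally-true theory $\Gamma'$; for validity-to-validity, the interleaving of a single extra conjunct per modal operator in negation normal form is exactly what avoids an exponential blow-up. The two inductive truth-preservation lemmas, and the observation that constant- and expanding-domain models are preserved because neither frames nor domains are altered, are then routine.
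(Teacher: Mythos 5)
Your overall strategy is the same as the paper's: fresh unary predicates standing for the (at most singleton) extensions of constants in one direction, a fresh non-rigid constant acting as a canonical witness for the body of each difference quantifier in the other, with the auxiliary constraints placed in the theory for global consequence and threaded through the modal structure for validity. The one genuine gap is in the $\MLDiff\to\QoneMLc$ direction: your translation of $\psi=\exists^{\neq}x\,\varphi$ inlines the recursively translated body $\varphi^{\dagger}$ five times, and $\mathsf{wit}_{\psi}$ contains two further copies. Since $\MLDiff$ permits nested difference quantifiers, a formula such as $\exists^{\neq}x\,\exists^{\neq}x\cdots\exists^{\neq}x\,P(x)$ of size $O(k)$ is mapped to a formula of size exponential in $k$, so the map is not polynomial-time and the theorem's claim is not established by your construction. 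The paper avoids this with a sharing device you are missing: for each occurrence $\exists^{\neq}x\,\psi$ it introduces a fresh unary \emph{surrogate} predicate $P_{\psi}$, replaces the quantified formula by $\exists x\,P_{\psi}(x)\land\bigl(x=c_{\psi}\to\exists x\,(\lnot(x=c_{\psi})\land P_{\psi}(x))\bigr)$ --- which mentions only $P_{\psi}$ and $c_{\psi}$, never the body --- and states the link between $P_{\psi}$ and $\psi$ once, in a side sentence $\textsf{singl}_{\psi}$, processing occurrences bottom-up so that each side sentence stays of linear size. You need this (or an equivalent) for the polytime bound; the rest of your case analysis for simulating the anchor-dependence of $\exists^{\neq}$ is then fine.

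Two smaller points. First, $\varphi^{\dagger}[c_{\psi}/x]$ for a non-rigid $c_{\psi}$ and a body that may contain modal operators does not say ``the element currently denoted by $c_{\psi}$ satisfies $\varphi^{\dagger}$'': occurrences of $c_{\psi}$ that land under a $\Diamond_{a}$ are re-evaluated at the successor world. You should use the de re form $\exists x\,\bigl((x=c_{\psi})\land\varphi^{\dagger}\bigr)$ in both the translation and $\mathsf{wit}_{\psi}$, otherwise the first disjunct of your case split is not sound. Second, in the $\QoneMLc\to\MLDiff$ direction your global constraint $\mathsf{sing}_{c}$ is correct but unnecessary: the paper translates $x=c$ directly to $Q_{c}(x)\land\lnot\exists^{\neq}x\,Q_{c}(x)$, which checks uniqueness in place at every use, so no side constraints and no interleaving along the modal skeleton are needed in that direction at all.
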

\begin{proof}
The proof lifts to our setting the results by
Gargov and Goranko~\cite[Section~4.1]{DBLP:journals/jphil/GargovG93} for propositional modal logic,
which show that the difference modality has the same expressive power as nominals with the universal modality.

\paragraph{Reasoning in $\QoneMLc$ is polytime-reducible to reasoning in $\MLDiff$}
Without loss of generality, we assume constants occur only in atoms of the form~$x = c$: replacing each $P(c)$  by $\exists x\,\bigl(P(x)\land (x = c)\bigr)$, each $c = d$ by $\exists x\,\bigl((x = c) \land (x = d)\bigr)$ and each $c = x$ by $x = c$ results in an equivalent formula of linear size.
We also consider only sentences, as validity of formulas coincides with validity of their universal closures.

Given a $\QoneMLc$ sentence $\varphi$, we take a fresh unary predicate $Q_{c}$, for every $c \in \Ind(\varphi)$, and define $\varphi^{\dagger}$ as the $\MLDiff$ sentence obtained by 
replacing every 
%
%
$x = c$  by
$Q_{c}(x) \land \lnot \exists^{\neq} x \, Q_{c}(x)$.
For a $\QoneMLc$ theory $\Gamma$, we
set 
$\Gamma^{\dagger} =  \{ \gamma^{\dagger} \mid \gamma \in \Gamma \}$.
%
%
We
now show that $(\varphi, \Gamma)\leq_{\Cmc}(\varphi^\dagger, \Gamma^\dagger)$ and $(\varphi^\dagger, \Gamma^\dagger)\leq_{\Cmc}(\varphi, \Gamma)$.

  $(\Rightarrow)$ Given $\Mmf$,
  we extend it to $\Mmf^\dagger$
  that
  additionally interprets, for every~$w$,
  \begin{equation*}
    Q_{c}^{\Mmf^\dagger(w)} =
    \begin{cases}
      \{ c^{\Mmf(w)} \}, & \text{if $c^{\Mmf(w)}$ is defined}, \\
      \emptyset, & \text{otherwise}.
    \end{cases}
  \end{equation*}
It
is enough to show that, for any $\QoneMLc$ formula $\psi$, we have $\Mmf, w \models^{\assign} \psi$ iff
  $\Mmf^\dagger, w \models^{\assign} \psi^{\dagger}$, for every assignment~$\assign$ at every world $w$. We proceed  by induction on the structure of $\psi$.   For the base case of $x = c$, we have
  $\Mmf, w \models^{\assign} x = c$ iff $c^{\Mmf(w)}$ is defined and
  $\assign(x) = c^{\Mmf(w)}$.
  By the definition of $\Mmf^\dagger$,
  this is equivalent to
  $\Mmf^\dagger, w \models^{\assign} Q_{c}(x) \land \lnot \exists^{\neq} x \,
  Q_{c}(x)$, meaning $\Mmf^\dagger, w \models^{\assign} (x = c)^\dagger$.
%
  The remaining base case of $P(x)$ is trivial, and the inductive
  cases follow by a straightforward application of the induction
  hypothesis.  

  $(\Leftarrow)$ Given $\Mmf^\dagger$,
  we define $\Mmf$ as an extension of $\Mmf^\dagger$ such that, for
  every~$w$,
  \begin{equation*}
    c^{\Mmf(w)} \ \text{is} \
    \begin{cases}
      d , & \text{if $Q_{c}^{\Mmf^\dagger(w)} = \{ d \}$, for some $d \in \Delta$}, \\
      \text{undefined}, & \text{otherwise}.
    \end{cases}
  \end{equation*}
  As above, for any $\QoneMLc$ formula $\psi$, it can be seen that
  $\Mmf^\dagger, w \models^{\assign} \psi^{\dagger}$ iff
  $\Mmf,w \models^{\assign} \psi$, for every assignment $\assign$ at
  every world $w$.

\paragraph{Global $\Cmc$-consequence in $\MLDiff$ is polytime-reducible to that in $\QoneMLc$}
Given a $\MLDiff$ sentence
$\varphi$, we recursively replace every subformula of the form $\exists^{\neq} x\, \psi$, where $\psi$ does not contain other occurrences of $\exists^{\neq}x$, by $(\exists^{\neq} x\, \psi)^\ddagger$ defined by
%
\begin{equation*}
(\exists^{\neq} x\, \psi)^\ddagger \ \ = \ \ \exists x \, P_\psi(x) \ \land \ \bigl(x = c_{\psi} \to \exists x\, \bigl(\lnot (x = c_{ \psi}) \land  P_\psi(x)\bigr)\bigr),
\end{equation*}
where $c_{\psi}$ is a fresh constant and $P_\psi$ is a fresh unary predicate symbol
(which plays the role of a surrogate for $\psi$ and is used to avoid an exponential blow-up in the transformation), and denote the resulting $\QoneMLc$ sentence
by $\varphi^{\ddagger}$.
Moreover, we denote by
$\Gamma^{\ddagger}_{\varphi}$
the 
set of $\QoneMLc$ sentences $\textsf{singl}_\psi$ defined by
\begin{equation*}
\textsf{singl}_\psi \ \ = \ \ \forall x\,\bigl(\psi(x) \to P_\psi(x)\bigr) \ \ \land \ \ \forall x\,\bigl(P_\psi(x) \to \psi(x) \land \psi(c_\psi)\bigr),
\end{equation*}
for each subformula $\exists^{\neq} x \, \psi$ replaced in the
transformation of $\varphi$.


\begin{claim}
\label{cla:difftoonecon}
For any $\MLDiff$ sentence $\chi$ and any $\Mmf^\ddagger$ such that $\Mmf^\ddagger \models \Gamma^{\ddagger}_{\chi}$, we have $\Mmf^\ddagger, w \models \chi$ iff $\Mmf^\ddagger, w \models \chi^{\ddagger}$, for 
any world $w$ of~$\Mmf^\ddagger$.
\end{claim}
\begin{proof}
The proof is by induction on the structure of $\chi$: we show only the case of $\exists^{\neq} x\, \psi$, where $\psi$ does not contain other occurrences of $\exists^{\neq} x$, since the others follow by straightforward applications of the inductive hypothesis.

($\Rightarrow$)
Suppose $\Mmf^\ddagger, w \models^{\assign} \exists^{\neq} x\, \psi$. Then 
$\Mmf^\ddagger, w \models^{\assign'} \psi$, for some $x$-variant $\assign'$ of~$\assign$ at~$w$ different from $\assign$. Thus, $\Mmf^\ddagger, w \models^{\assign} \exists x \, \psi$. It remains to satisfy the second conjunct of $(\exists^{\ne} x\,\psi)^\ddagger$.
As $\Mmf^\ddagger$ satisfies~$\textsf{singl}_\psi$, we have $\Mmf^\ddagger, w \models^{\assign''} \psi$,  for an $x$-variant~$\assign''$ of $\assign$ and $\assign'$ at $w$ with $\assign''(x) =  c_{\psi}^{\Mmf^\ddagger(w)}$.
We distinguish two cases.
If $\assign(x) \neq \assign''(x)$,
then it is satisfied vacuously.
Otherwise, $\Mmf^\ddagger, w \models^{\assign'} \lnot (x = c_{ \psi}) \land
\psi$ and, by the first conjunct in~$\textsf{singl}_\psi$,
$\Mmf^\ddagger, w \models^{\assign'} \lnot (x = c_{ \psi}) \land  P_\psi(x)$, as required.

($\Leftarrow$) The converse direction is shown similarly.
\end{proof}

Now,
given a $\MLDiff$ theory $\Gamma$, we set 
$\Gamma^{\ddagger} = \{ \gamma^{\ddagger} \mid \gamma \in \Gamma \} \cup \bigcup_{\gamma \in \Gamma} \Gamma_{\gamma}^{\ddagger}$ and
show that $(\varphi^\ddagger, \Gamma_\varphi^\ddagger\cup\Gamma^\ddagger)$ is $\Cmc$-model-conservative extension of $(\varphi,\Gamma)$.

$(\Rightarrow)$ 
We extend $\Mmf$ to $\Mmf^\ddagger$ 
that interprets the constants $c_{\psi}$, for subformulas $\exists^{\neq} x \, \psi$ of $\varphi$ and $\Gamma$, as follows, for every~$w$: we take any $x$-variant $\assign'$ of $\assign$ at $w$  such that $\Mmf, w \models^{\assign'} \psi$ (if it exists) and set $c_{\psi}^{\Mmf^\ddagger(w)} = \assign'(x)$; otherwise, if such an $x$-variant does not exist, we leave $c_{\psi}^{\Mmf^\ddagger(w)}$ undefined.
%
%
Let  $\Mmf, v \models \varphi$ and $\Mmf \models \Gamma$.
By the definition of $\Mmf^\ddagger$, we have $\Mmf^\ddagger \models \Gamma^{\ddagger}_{\varphi}\cup \Gamma_\gamma^\ddagger$, whence, by Claim~\ref{cla:difftoonecon}, $\Mmf^\ddagger, v \models \varphi^{\ddagger}$ and $\Mmf^\ddagger \models \gamma^\ddagger$, for each $\gamma\in\Gamma$.
Thus, $\Mmf^\ddagger \models \Gamma^{\ddagger}$.

$(\Leftarrow)$ 
Let 
$\Mmf^\ddagger, v \models \varphi^{\ddagger}$ and
$\Mmf^\ddagger \models\Gamma^{\ddagger}_{\varphi}\cup  \Gamma^{\ddagger}$.
By the definition of $\Gamma^{\ddagger}$ and
Claim~\ref{cla:difftoonecon},
we have $\Mmf^\ddagger \models \Gamma$. Finally, since $\Mmf^\ddagger \models \Gamma^{\ddagger}_{\varphi}$, by Claim~\ref{cla:difftoonecon}, we obtain
$\Mmf^\ddagger, v \models \varphi$.

\paragraph{$\Cmc$-validity in $\MLDiff$ is polytime-reducible to that in $\QoneMLc$}
We
modify the proof above by carefully selecting sequences of boxes to avoid an exponential blowup of the formula.
To this end, for a sentence $\varphi$ and a subformula $\psi$ of~$\varphi$, we define the set of \emph{$\psi$-relevant paths in $\varphi$}, denoted by $\rpath(\varphi, \psi)$, as the set of sequences $a_1,\dots, a_n$ of the modalities under which $\psi$ occurs in $\varphi$.
For instance, for $\varphi = \Diamond_{1} \lnot P(c) \land \Diamond_{2}\Diamond_{3} P(c)$, we have
$\rpath(\varphi, P(c)) = \{\langle1\rangle, \langle2, 3\rangle \}$ and $\rpath(\varphi, \lnot P(c)) = \{\langle 1\rangle \}$.
Clearly, the maximum length of a path in $\rpath(\varphi, \psi)$ is $\md(\varphi)$. Moreover,
for any path $\pi$,
we recursively define
$
	\Box^{\epsilon} \chi  = \chi  \text{ and } \Box^{a \cdot \pi} \chi =  \Box_{a} \Box^{\pi} \chi,
	$
where $\epsilon$ denotes the empty path.
Then, we can again replace any occurrence of 
$\exists^{\neq} x \, \psi$ in~$\varphi$ not containing another $\exists^{\neq}$
by~$(\exists^{\neq} x \, \psi)^\ddagger$,
while adding to $\varphi$ the conjunct
$\Box^{\pi} \textsf{singl}_\psi$
for every $\pi \in \rpath(\varphi, \exists^{\neq} x\,\psi)$.
As above, by repeatedly applying this procedure in a bottom-up way, we obtain in polynomial time a sentence
$\varphi^{\ddagger}$ that is a $\Cmc$-model-conservative extension of $\varphi$.
\end{proof}


We use Theorem~\ref{th:diff} to show how our first undecidability 
result follows in a straightforward way from earlier work on products of modal logics~\cite{DBLP:conf/aiml/HampsonK12}.

\begin{theorem}\label{thm:undecglobal}
	For constant-domain models, global $\Kn$-consequence  with $n\geq 1$ and global $\Sfiven$-consequence with $n\geq 2$ in $\QoneMLc$ are undecidable.
\end{theorem}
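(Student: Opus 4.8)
The plan is to transfer the undecidability of the one-variable fragment with the universal modality established by Hampson and Kontchakov~\cite{DBLP:conf/aiml/HampsonK12} to the global-consequence problem, using Theorem~\ref{th:diff} to pass between constants and the difference quantifier. Consider the extension of $\MLDiff$ by a universal modality $\Boxuni$ over worlds, with $\Mmf, w \models^{\assign} \Boxuni\psi$ iff $\Mmf, v \models^{\assign}\psi$ for every world $v$. The result of~\cite{DBLP:conf/aiml/HampsonK12} is that satisfiability in this extension, over constant-domain models based on $\Kn$-frames ($n\geq 1$) and on $\Sfiven$-frames ($n\geq 2$), is undecidable; moreover, the reductions there produce formulas of the shape $\Boxuni\alpha\wedge\beta$, where $\alpha$ and $\beta$ are $\MLDiff$-formulas free of $\Boxuni$, the conjunct $\Boxuni\alpha$ encoding the global constraints of the simulated computation and $\beta$ its initialisation.

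The key observation is that a top-level universal modality is exactly what global consequence provides. Indeed, for any class $\Cmc$ of frames and any $\MLDiff$-formulas $\alpha,\beta$ (taking universal closures where needed to make $\alpha$ a sentence), the formula $\Boxuni\alpha\wedge\beta$ is satisfiable in a constant-domain model based on a frame from $\Cmc$ if and only if there is such a model in which $\alpha$ is true at every world while $\beta$ is satisfied at some world; and this holds precisely when $\neg\beta$ \emph{fails} to be a global $\Cmc$-consequence of $\{\alpha\}$. Hence the (undecidable) satisfiability problem for these nice-shape formulas reduces in linear time to the complement of global $\Cmc$-consequence in $\MLDiff$, yielding undecidability of the latter for $\Cmc=\Kn$ ($n\geq 1$) and $\Cmc=\Sfiven$ ($n\geq 2$). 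For formulas in which $\Boxuni$ is nested one first normalises: since $\Boxuni\psi$ has a world-independent truth value, each such subformula may be replaced by a fresh nullary predicate whose meaning is pinned down by sentences added to the global theory, using that $\Kn$ and $\Sfiven$ are closed under generated subframes and disjoint unions, so that it suffices to consider connected models.

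Finally, Theorem~\ref{th:diff} states that global $\Cmc$-consequence in $\QoneMLc$ and in $\MLDiff$ are mutually polytime-reducible for every class $\Cmc$, under both constant and expanding domains. Applying it with $\Cmc=\Kn$ and $\Cmc=\Sfiven$ transports the undecidability just obtained for $\MLDiff$ to $\QoneMLc$, which is the claim. The threshold $n\geq 2$ for $\Sfiven$ is inherited from~\cite{DBLP:conf/aiml/HampsonK12}: a single genuine $\mathbf{S5}$-relation together with the universal modality (that is, $\Sfive$ under global consequence) still lies within a decidable two-dimensional regime, consistent with the decidability of global $\Sfive$-consequence recorded in Table~\ref{table:complexity}, whereas two $\mathbf{S5}$-relations plus the universal modality suffice for the encoding.

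The main obstacle I expect is the faithful elimination of the universal modality in favour of the global theory. When $\Boxuni$ occurs only at the top level, as in the cited reductions, the correspondence ``$\Boxuni$-conjunct $\leftrightarrow$ theory member, remaining conjunct $\leftrightarrow$ consequence formula'' is immediate; the care is needed only to verify that the encodings of~\cite{DBLP:conf/aiml/HampsonK12} really are of this shape (or to carry out the surrogate normalisation for nested $\Boxuni$ without disturbing the intended global/local split), and to confirm that the universal modality there ranges over \emph{all} worlds, so that it coincides with truth throughout the constant-domain model rather than over some $\Kn$- or $\Sfiven$-reachable fragment.
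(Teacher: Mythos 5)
Your proposal is correct and follows essentially the same route as the paper: reduce global $\Cmc$-consequence in $\QoneMLc$ to global $\Cmc$-consequence in $\MLDiff$ via Theorem~\ref{th:diff}, and then observe that the top-level universal modality in the undecidability reduction of~\cite{DBLP:conf/aiml/HampsonK12} can be replaced by membership in the global theory. The only divergence is the $\Sfiven$ case ($n\geq 2$): the paper derives it from the $\K_1$ case by the standard embedding of $\K$-frames into $\mathbf{S5}_2$-frames~\cite[Theorem~2.37]{GabEtAl03} rather than reading it off directly from the cited product-logic result as you do, so you should verify that~\cite{DBLP:conf/aiml/HampsonK12} indeed covers $\Sfiven$-frames or else add that reduction step.
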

\begin{proof}
(sketch) We give the proof for $\Kn$, $n\geq 1$. Then undecidability for
$\Sfiven$, $n\geq 2$, can be shown using standard reductions of modal logic on $\Kn$ frames for $n=1$ to modal logic on $\Sfiven$ frames 
for $n=2$~\cite[Theorem 2.37]{GabEtAl03}. To show undecidability of
the global $\Kn$-consequence in $\QoneMLc$ it suffices to show
undecidability of the global $\Kn$-consequence in $\MLDiff$. This is
shown in~\cite{DBLP:conf/aiml/HampsonK12}, where the result is
formulated in terms of a product modal logic that corresponds to the
extension of $\MLDiff$ with the universal modality on $\Kn$ frames with constant domain. It is straightforward to see that one can replace the universal modality by the global $\Kn$-consequence in the undecidability proof.
\end{proof}

\begin{remark}\label{rem:s5:global}\em
Note that Theorem~\ref{thm:undecglobal} does not hold for \Sfive. In fact, the decidability and complexity results we show below for \Sfive-validity also hold for global \Sfive-consequence since, for any relevant $\Gamma$ and $\varphi$, the following holds: $\varphi$ is a global \Sfive-consequence of $\Gamma$ iff $\Box\bigwedge\Gamma \rightarrow \varphi$ is \Sfive-valid. 
\end{remark}


\subsection{Simplifying the Problem Landscape}

We first observe that the distinction between partial and total interpretations and
the availability of definite descriptions does not affect the complexity of our decision problems for the fragments we are concerned with.
In addition, the standard reduction from expanding to constant domains still holds in our setting with partial interpretations.

\begin{theorem}
\label{thm:reductions}
Let $\Lmc$ be
$\QoneML$, $\CtwomonML$, $\QGuardmonMLb$ or $\QmonML$, and let $\Cmc$ be any class of frames.
\begin{enumerate}


\item[\textup{a)}] $\Cmc$-validity in $\Lmc$ is polytime-reducible to $\Cmc$-validity
in $\Lmc$ without definite descriptions, with both constant and expanding domains.


\item[\textup{b)}] In $\Lmc$, the problems of $\Cmc$-validity  in partial and total interpretations are mutually polytime-reducible,
with both constant and expanding domains. 


\item[\textup{c)}] In $\Lmc$, $\Cmc$-validity with expanding domains is polytime-reducible to constant-domain $\Cmc$-validity.  

\end{enumerate}
Items \textup{a)}--\textup{c)} also apply to global $\Cmc$-consequence in $\Lmc$.
\end{theorem}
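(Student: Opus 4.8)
The plan is to establish all three reductions uniformly through the notion of $\Cmc$-model-conservative extension introduced in the preliminaries: for each transformation I exhibit a polynomial-time map $\tau$ such that $\tau(\varphi)$ (resp.\ $(\varphi,\tau(\Gamma))$) is a $\Cmc$-model-conservative extension of $\varphi$ (resp.\ of $(\varphi,\Gamma)$), and then invoke the observation that any such $\tau$ is a polytime reduction of validity of $\neg\varphi$ to validity of $\neg\tau(\varphi)$. The key recurring device is the same as in Theorem~\ref{th:diff}: the auxiliary constraints must hold at every world at which an eliminated term or quantifier is actually evaluated, so for $\Cmc$-\emph{validity} I conjoin each constraint $\psi$, prefixed by boxes, along the relevant paths, i.e.\ I add $\Box^{\pi}\psi$ for every $\pi\in\rpath(\varphi,\cdot)$, whose length is bounded by $\md(\varphi)$, keeping the output polynomial. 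For \emph{global $\Cmc$-consequence} the situation is simpler: the constraints are placed directly into the theory $\Gamma'$ as axioms required to hold at all worlds, so no path bookkeeping is needed.

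For item~\textup{a)} I eliminate definite descriptions innermost-first. Given a description $t=\defdes x.\chi(x)$ with $\chi$ already description-free, I introduce a fresh \emph{non-rigid} constant $c_t$, replace $t$ by $c_t$, and add a constraint $\psi_t$ forcing, at each relevant world, that $c_t$ designates exactly when $\chi$ has a unique witness and, in that case, equals it. The essential point is that, because $c_t$ is non-rigid, it can track the world-dependent value of the description. Uniqueness is handled per fragment: in $\CtwomonML$ directly via $\exists^{=1}x\,\chi(x)$, and in $\QmonML$ via a second variable. In the weak fragments $\QoneML$ and $\QGuardmonMLb$, where ``exactly one witness'' is not expressible by a single sentence, I instead combine the single-variable, guarded conditions $\forall x\,(x=c_t\to\chi(x))$ and $\exists x\,(x=c_t)\to\forall x\,(\chi(x)\to x=c_t)$ with two further fresh constants $e_1,e_2$ certifying a second witness through
\[
\bigl(\exists x\,\chi(x)\land\lnot\exists x\,(x=c_t)\bigr)\ \to\ \bigl(\chi[e_1/x]\land\chi[e_2/x]\land\lnot(e_1=e_2)\bigr).
\]
One checks that the intended expansion (setting $c_t$ to the description's value and $e_1,e_2$ to two witnesses when present) satisfies these, and that conversely any model of them has $c_t$ undefined precisely when there are zero or at least two witnesses; all conjuncts use only a single guarded variable, equality and constants, so fragment membership is preserved.

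For item~\textup{b)}, total interpretations form a subclass of partial ones, so one direction is immediate; to force totality I add, for each constant $c$, the designation axiom $\exists x\,(x=c)$ at all relevant worlds. Conversely, I simulate non-designation inside a total interpretation by a marked ghost: a fresh unary predicate $\mathsf{G}$, one ghost element added to every domain (preserving $\subseteq$ in the expanding case), non-designating constants pointing to it, quantifiers (and counting quantifiers) relativised to $\lnot\mathsf{G}$, all original predicates forced to exclude ghosts, and every equality $s=t$ translated to $(s=t)\land\lnot\mathsf{G}(s)\land\lnot\mathsf{G}(t)$ so that comparisons involving a non-designating term are false, matching the partial semantics; nonemptiness of the real part is ensured by $\exists x\,\lnot\mathsf{G}(x)$. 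For item~\textup{c)} I use the standard existence-predicate relativisation: a fresh unary $E$ with the monotonicity axiom $\forall x\,(E(x)\to\Box_{a}E(x))$ for each $a$, quantifiers relativised to $E$, predicates and constants localised to $E$, and $\exists x\,E(x)$ for nonemptiness; relativisation respects guards and counting, so each fragment is preserved, and the monotonicity axioms reproduce domain expansion inside a constant-domain model.

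\textbf{Main obstacle.} The delicate step is item~\textup{a)} in $\QoneML$ and $\QGuardmonMLb$: the definedness condition of a definite description, ``exactly one witness'', is provably inexpressible by a single sentence there (one-variable logic cannot count to two), so a naive equivalence-based constraint would either forbid the multi-witness case outright or leave it uncontrolled. This is exactly what the multiplicity-certificate constants $e_1,e_2$ resolve, Skolemising the ``at least two witnesses'' case and crucially exploiting non-rigidity so that the certificates may differ world by world. The remaining care is purely bookkeeping—keeping the transformation polynomial via the relevant-path boxing for validity, and via theory axioms for global consequence—and the same constructions transfer verbatim to pairs $(\varphi,\Gamma)$ using the $\Cmc$-conservative-extension-of-pairs notion, which yields the global-consequence addendum.
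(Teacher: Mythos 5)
Your overall strategy matches the paper's: present each item as a $\Cmc$-model-conservative extension, box the auxiliary constraints along relevant paths for validity while placing them in the theory for global consequence, and use the standard existence-predicate relativisation for item~c). Item~c) and the global-consequence addendum are fine as you describe them. However, item~a) has a concrete gap: your elimination presupposes that every definite description is \emph{closed}, since you replace $\defdes x.\chi(x)$ by a single non-rigid constant $c_t$. That is sound for $\QoneML$ and $\QGuardmonMLb$, where descriptions are closed by definition, but in $\QmonML$ and $\CtwomonML$ a description may carry a free parameter, $\defdes x.\chi(x,\avec{y})$, and its value then depends on the assignment to $\avec{y}$; no world-indexed constant can track that. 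The paper handles these fragments by replacing the whole \emph{atom} $\alpha(\defdes x.Q(x,\avec{y}),\avec{\tau})$ with $\exists x\,\bigl(\alpha(x,\avec{\tau})\land Q(x,\avec{y})\land\forall x'\,(Q(x',\avec{y})\to x'=x)\bigr)$ (or the $\exists^{=1}$ variant), reserving the Skolem-constant trick for the closed descriptions of $\QoneML$ and $\QGuardmonMLb$, and it first normalises description bodies to fresh atoms $P_\psi(x,\avec{y})$ to control size. Incidentally, your $e_1,e_2$ certificates are unnecessary even in the weak fragments: with a constant available, ``exactly one witness, equal to $c$'' is expressible with one variable as $Q(c)\land\forall x\,(Q(x)\to x=c)$, and since the failure of this conjunction already renders the translated atom false in both the zero-witness and the many-witness case, there is nothing left to certify.

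Item~b) also has a missing step in the partial-to-total direction. You extract the partial interpretation from a total one by restricting each domain to $\lnot\mathsf{G}$, but $\mathsf{G}$ is a flexible predicate, so without an axiom $\forall x\,(\lnot\mathsf{G}(x)\to\Box_a\lnot\mathsf{G}(x))$ (boxed along relevant paths, and strengthened to rigidity of $\mathsf{G}$ in the constant-domain case) the sets $\{d\mid d\notin\mathsf{G}^{\Mmf(w)}\}$ need not be expanding, or constant, along the accessibility relations, and the extracted structure is not an interpretation at all; keeping the ghost in the domain instead does not work either, since then unrelativised quantifiers in $\varphi$ would range over it. This is patchable, but the paper's construction avoids the issue entirely: it introduces a $0$-ary designation flag $p_c$ per constant and guards each atom by the conjunction of the relevant $p_{c_i}$, never modifying the domains.
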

\begin{proof}
Without loss of generality, we consider the problems for sentences.

\paragraph{Item \textup{a)}}
\label{point:rednodefdes}
First, we can assume that definite descriptions contain only atomic
formulas. This is achieved by the following normalisation procedure.

For the case of global $\Cmc$-consequence, given an $\Lmc$-sentence
$\varphi$ and an $\Lmc$-theory~$\Gamma$, let
$\defdes x.\psi(x, \avec{y})$ be a definite description occurring in
$\varphi$ or $\Gamma$, where $\psi(x, \avec{y})$ is a formula with
free variables $x$ and $\avec{y}$.  We replace any occurrence of
$\defdes x.\psi(x, \avec{y})$ in~$\varphi$ or $\Gamma$
with
$\defdes x.P_{\psi}(x, \avec{y})$, where $P_{\psi}$ is a fresh
$(1 + |\avec{y}|)$-ary predicate symbol,
and add  to the
resulting theory
sentence
$\textsf{norm}_{\psi}$ defined by taking
\[
\textsf{norm}_{\psi} \ \ = \ \ \forall x \forall \avec{y}\, \bigl(P_{\psi}(x, \avec{y}) \leftrightarrow \psi(x, \avec{y})\bigr).
\]
Observe that the definite description is of the
form~$\defdes x.\psi(x)$ if $\Lmc$ is either $\QoneML$ or
$\QGuardmonMLb$. It follows that in these cases $\avec{y}$ is empty
and so the additional conjunct belongs to $\Lmc$.
%
By repeated applications of this procedure, always starting from innermost definite descriptions (that is, definite descriptions that do not contain other definite
descriptions),
it can be seen that we obtain in polynomial time
a $\Cmc$-model-conservative extension 
of $(\varphi, \Gamma)$.


For the case of $\Cmc$-validity,
we adjust the normalisation just described by selecting relevant paths of modalities, as done in the proof of Theorem~\ref{th:diff} above: we replace any occurrence of 
$\defdes x.\psi(x, \avec{y})$ in~$\varphi$
with
$\defdes x.P_{\psi}(x, \avec{y})$,
while adding to the conjunct
\(
\Box^{\pi} \textsf{norm}_{\psi}
\)
for every $\pi \in \rpath(\varphi, \psi)$ to the formula.
By repeatedly applying this procedure in a bottom-up way, we obtain in polynomial time a model-conservative extension 
of $\varphi$. 

Now, using the normalisation procedures described above, we can assume that definite descriptions occurring in an $\Lmc$-sentence $\varphi$ are of the form $\defdes x. Q(x, \avec{y})$, where $\avec{y}$ are variables and $Q$ is a predicate symbol.
%
To eliminate definite descriptions from $\varphi$,
we replace atoms of the form
$\alpha(\defdes x.Q(x, \avec{y}), \avec{\tau})$, 
by the following:
%
\begin{itemize}
\item $\exists x \, \bigl(\alpha(x, \avec{\tau}) \land Q(x, \avec{y}) \land
\forall x' \, (Q(x', \avec{y}) \to x' = x)\bigr)$, in $\Lmc = \QmonML$;
\item
$\exists x \, \bigl(\alpha(x, \avec{\tau}) \land Q(x, \avec{y})\bigr) \land
\exists^{= 1} x \, Q(x, \avec{y})$,
in $\Lmc = \CtwomonML$;
\item $\alpha(c_{\defdes x.{Q}(x)}, \avec{\tau}) \land
{Q}(c_{\defdes x.{Q}(x)})
\land 
\forall x\, \bigl(Q(x) \to x = c_{\defdes x.Q(x)} \bigr)$,
where
$c_{\defdes x.{Q}(x)}$ is a fresh constant symbol, in $\Lmc = \QoneML$ or $\Lmc = \QGuardmonMLb$.
%
\end{itemize}
Note that the first two formulas are equivalent, while the third is obtained by introducing a Skolem constant, $c_{\defdes x.{Q}(x)}$,  for the existential quantifier in the first formula (this, however, relies on $\avec{y}$ being empty).
%
Denote by $\varphi^{\ast}$ the result of applying the above transformations to $\varphi$.
It can be seen that $\varphi^{\ast}$, which is polynomial in the size of $\varphi$, is a model-conservative extension of $\varphi$. Similarly, $(\varphi^\ast,\Gamma^\ast)$ is a $\Cmc$-model-conservative extension of $(\varphi,\Gamma)$. 

\paragraph{Item~\textup{b)}}
\label{point:redpartialtotal}
%
Due to Item~a), we can assume that every term $\tau$ occurring in a formula $\varphi$ is either a variable or a constant.

\paragraph{From partial to total interpretations}

  Let $\varphi$ be an $\Lmc$-sentence
  and $\Gamma$ an $\Lmc$-theory.
%
For a constant $c$, take  a fresh 0-ary predicate $p_{c}$ (i.e., a propositional letter): it represents the statement that $c$ designates at the world. We define $\varphi'$ by replacing each $P(\tau_1, \ldots, \tau_m)$ and each $\tau_1 = \tau_2$ in $\varphi$ with
\begin{equation*}
\bigwedge_{c_{i} \in \{ \tau_{1}, \ldots, \tau_{m} \}\cap \Ind} \hspace*{-2em}p_{c_{i}} \ \ \land \ \ P(\tau_1, \ldots, \tau_m) \qquad \text{ and }\ \ 
\bigwedge_{c_{i} \in \{ \tau_{1}, \tau_{2} \}\cap \Ind} \hspace*{-1em}p_{c_{i}} \ \ \land \ \ (\tau_1= \tau_2), 
\end{equation*}
respectively.  Let $\Gamma' = \{\gamma'\mid \gamma\in\Gamma\}$. 
%
We show that $\varphi$ is satisfied in $\Mmf$ based on~$\Fmf$ (provided that
$\Mmf \models \Gamma$)
iff $\varphi'$ is satisfied in a total interpretation $\Mmf'$
based on~$\Fmf$ (respectively, provided that $\Mmf' \models \Gamma'$); moreover, $\Mmf$ and $\Mmf'$ share the same domains.
The claim
for both problems follows.

$(\Rightarrow)$
Given an interpretation $\Mmf$, 
define a total interpretation $\Mmf'$ as $\Mmf$, with the addition of the following, for every $c_{i} \in \Ind(\varphi) \cup \Ind(\Gamma)$
and every world~$w$:
\begin{equation*}
p_{c_{i}}^{\Mmf'(w)} \ \text{is} \
\begin{cases}
\text{true},
& \ \text{if $c_{i}$ designates at $w$ in $\Mmf$}; \\
\text{false},
& \ \text{otherwise}.
\end{cases}
\end{equation*}
%
%
%
%
%
We show, by induction on a formula $\psi$, that $\Mmf, w \models^{\assign} \psi$ iff $\Mmf', w  \models^{\assign} \psi'$, for every assignment $\assign$ at every world $w$.
%
For the base case $P(\tau_{1}, \ldots, \tau_{m})$, we have $\Mmf, w \models^{\assign} P(\tau_{1}, \ldots, \tau_{m})$ iff each $c_{i} \in \{ \tau_{1}, \ldots, \tau_{m} \} \cap \Ind$ designates
at $w$ in $\Mmf$ and $(\tvalue{\tau_{1}}{w}{\assign}, \ldots, \tvalue{\tau_{m}}{w}{\assign}) \in P^{\Mmf(w)}$.
By construction of $\Mmf'$, this is equivalent to
$p_{c_{i}}^{\Mmf'(w)}$ being true,
for each $c_{i} \in \{ \tau_{1}, \ldots, \tau_{m} \} \cap \Ind$, and
$(\tau_{1}^{\Mmf'(w), \assign}, \ldots, \tau_{m}^{\Mmf'(w), \assign}) \in P^{\Mmf'(w)}$,
meaning that
$\Mmf', w \models^{\assign} \bigwedge_{c_{i} \in \{ \tau_{1}, \ldots, \tau_{m} \}\cap \Ind} p_{c_{i}} \land P(\tau_{1}, \ldots, \tau_{m})$.
The base case $\tau_1=\tau_2$ is similar, and the inductive cases 
follow by a straightforward application of the inductive hypothesis.
It follows that $\Mmf \models \Gamma$ and
$\Mmf, v \models \varphi$
imply that $\Mmf' \models \Gamma'$ and
$\Mmf', v \models \varphi'$.

$(\Leftarrow)$
%
Given a  total interpretation $\Mmf'$, 
we define $\Mmf$ that coincides with $\Mmf'$, except for the following, for every $c_{i} \in \Ind(\varphi) \cup \Ind(\Gamma)$ and every world $w$:
\[
c_{i}^{\Mmf(w)} \ \text{is} \
\begin{cases}
c_{i}^{\Mmf'(w)}, & \ \text{if $\ p_{c_{i}}^{\Mmf'(w)}$ is true};
\\
\text{undefined}, & \ \text{otherwise};
\end{cases}
\]
%
We show, 
by induction on a formula $\psi$, that $\Mmf', w  \models^{\assign} \psi'$ iff $\Mmf, w \models^{\assign} \psi$, for every assignment $\assign$ at every world $w$.
%
For the base case $P(\tau_{1}, \ldots, \tau_{m})$, we have $\Mmf', w \models^{\assign} \bigwedge_{c_{i} \in \{ \tau_{1}, \ldots, \tau_{m} \}\cap \Ind} p_{c_{i}}$ $\land$ $P(\tau_{1}, \ldots, \tau_{m})$ iff $p_{c_{i}}^{\Mmf'(w)}$ is true,
 for every $c_{i} \in \{ \tau_{1}, \ldots, \tau_{m} \}\cap \Ind$, and $(\tvalue{\tau_{1}}{w}{\assign},$ $\ldots,$ $\tvalue{\tau_{m}}{w}{\assign})$ $\in P^{\Mmf'(w)}$.
By construction of $\Mmf$, this is equivalent to
$c_{i}^{\Mmf(w)}$ being defined, with $c_{i}^{\Mmf(w)} = c_{i}^{\Mmf'(w)}$, for every $c_{i} \in \{ \tau_{1}, \ldots, \tau_{m} \}\cap \Ind$,
and  $(\tvalue{\tau_{1}}{w}{\assign},$ $\ldots,$ $\tvalue{\tau_{m}}{w}{\assign}) \in P^{\Mmf(w)}$.
By definition, it means that
$\Mmf, w \models^{\assign}P(\tau_{1}, \ldots, \tau_{m})$.
The base case $\tau_1=\tau_2$ is similar, and 
the inductive cases 
follow again by a straightforward application of the inductive hypothesis.
In conclusion, 
$\Mmf' \models \Gamma'$ and
$\Mmf', v \models \varphi'$ imply
$\Mmf \models \Gamma$ and
$\Mmf, v \models \varphi$.

\paragraph{From total to partial interpretations}
%
First, consider global $\Cmc$-consequence.
Let
$\varphi$
be an $\Lmc$-sentence and $\Gamma$ an $\Lmc$-theory. Define $\Gamma'$ by adding to $\Gamma$ the sentences
\begin{equation*}\label{eq:designate}
\exists x \, (x = c),
\end{equation*}
for every constant symbol $c$ occurring in $\Gamma$ or $\varphi$.
We first show that $\varphi$ is
  satisfied in a total interpretation $\Mmf$
  based on $\Fmf$ with   $\Mmf \models \Gamma$
  iff
  $\varphi$ is satisfied in $\Mmf'$
  based on $\Fmf$ with $\Mmf' \models \Gamma'$.
Since 
the additional sentences in $\Gamma'$ are trivially satisfied in any total interpretation, if $\Mmf$ is total and such that $\Mmf \models \Gamma$ and
$\Mmf, v \models \varphi$,
then clearly $\Mmf \models \Gamma'$ and
$\Mmf, v \models \varphi$.
Conversely, every interpretation $\Mmf'$ for which $\Mmf' \models \Gamma'$ and
$\Mmf', v \models \varphi$
can be extended to a total interpretation $\Mmf$ such that $\Mmf \models \Gamma'$, hence $\Mmf \models \Gamma$, and
$\Mmf, v \models \varphi$,
as it only remains to arbitrarily fix the interpretation of constant symbols not occurring in $\varphi$ and $\Gamma$.

Now,
to show that 
$\Cmc$-satisfiability
in total interpretations is polytime-reducible to
$\Cmc$-satisfiability,
we modify the reduction above by taking, for an $\Lmc$-sentence~$\varphi$,
the conjunction $\varphi'$ of $\varphi$ together with all the sentences
%
$\Box^{\pi} \exists x\,(x = c)$,
%
where
$\rpath(\varphi, \psi)$, for an atomic $\psi$ such that $c \in \Ind(\psi)$.
It can be seen that
$\varphi$ is satisfied in a total interpretation  $\Mmf$ based on~$\Fmf$ iff $\varphi'$ is satisfied in an interpretation $\Mmf'$ based on $\Fmf$; moreover, $\Mmf$
and $\Mmf'$ share the same domains.

%

\paragraph{Item~\textup{c)}}
By Items~a) and~b), it is sufficient to reduce, in $\Lmc$ \emph{without} definite descriptions, \emph{total}
$\Cmc$-validity
with expanding domains to
\emph{total}
$\Cmc$-validity
with constant domain.
The reduction is analogous to the one given in~\cite[Proposition~3.20~(ii)]{GabEtAl03}.
The same applies to the case of global $\Cmc$-consequence.
\end{proof}

\paragraph{Infinite branching models}
As the following example shows, we can easily enforce infinite branching using equality or counting quantifiers. 
\begin{example}\label{ex:running-example:1}\em%
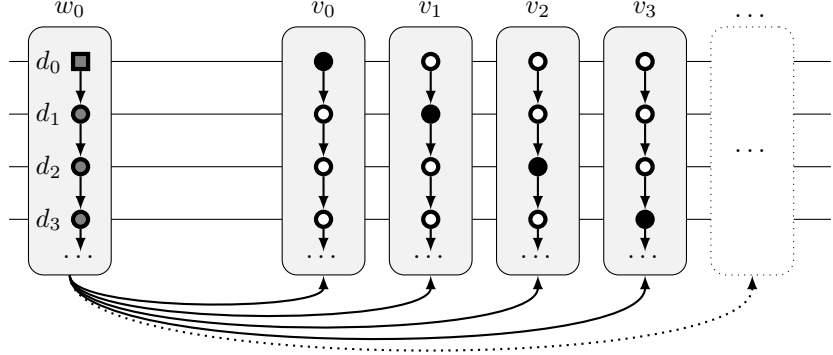
\begin{figure}[t]
\centering%
\begin{tikzpicture}[>=latex, xscale=0.95, yscale=0.7,
 	nd/.style={ultra thick,circle,draw,minimum size=2mm,inner sep=0pt},
	wn/.style={rectangle,rounded corners=2mm,draw,minimum width=11mm,minimum height=33mm}]
\draw (-2,0) -- +(11.5,0);
\draw (-2,-1) -- +(11.5,0);
\draw (-2,-2) -- +(11.5,0);
\draw (-2,-3) -- +(11.5,0);
%
\node[wn, fill=gray!10,label=above:{$w_0$}] (w0) at (-1.15,-1.7) {};	
%
\node[nd,fill=gray,rectangle,label=left:{$d_0$}] (c0) at (-1,0) {};
\node[nd,fill=gray,label=left:{$d_1$}] (c1) at (-1,-1) {};
\node[nd,fill=gray,label=left:{$d_2$}] (c2) at (-1,-2) {};
\node[nd,fill=gray,label=left:{$d_3$}] (c3) at (-1,-3) {};
\node[inner sep=1pt]  (c4) at (-1,-3.7) {\dots};
\begin{scope}[thick]
\draw[->] (c0) -- (c1);
\draw[->] (c1) -- (c2);
\draw[->] (c2) -- (c3);
\draw[->] (c3) -- (c4);
\end{scope}
%
\node[wn, fill=gray!10,label=above:{$v_0$}] (wp) at (2.4,-1.7) {};	
\node[wn, fill=gray!10,label=above:{$v_1$}] (w1) at (3.9,-1.7) {};	
\node[wn, fill=gray!10,label=above:{$v_2$}] (w2) at (5.4,-1.7) {};	
\node[wn, fill=gray!10,label=above:{$v_3$}] (w3) at (6.9,-1.7) {};	
\node[wn,thin,fill=white,dotted,label=above:{$\dots$}] (w4) at (8.4,-1.7) {\dots};
\begin{scope}[out=-90,in=-90,looseness=0.5,thick]
\draw[->] (w0.south) to (wp.south);
\draw[->] (w0.south) to (w1.south);
\draw[->] (w0.south) to (w2.south);
\draw[->] (w0.south) to (w3.south);
\draw[->,dotted] (w0.south) to (w4.south);
\end{scope}
%
\node[nd,fill=black] (c0-0) at (2.4,0) {};
\node[nd,fill=white] (c1-0) at (2.4,-1) {};
\node[nd,fill=white] (c2-0) at (2.4,-2) {};
\node[nd,fill=white] (c3-0) at (2.4,-3) {};
\node[inner sep=1pt]  (c4-0) at (2.4,-3.7) {\dots};
\begin{scope}[thick]
\draw[->] (c0-0) -- (c1-0);
\draw[->] (c1-0) -- (c2-0);
\draw[->] (c2-0) -- (c3-0);
\draw[->] (c3-0) -- (c4-0);
\end{scope}
%
\node[nd,fill=white] (c0-1) at (3.9,0) {};
\node[nd,fill=black] (c1-1) at (3.9,-1) {};
\node[nd,fill=white] (c2-1) at (3.9,-2) {};
\node[nd,fill=white] (c3-1) at (3.9,-3) {};
\node[inner sep=1pt]  (c4-1) at (3.9,-3.7) {\dots};
\begin{scope}[thick]
\draw[->] (c0-1) -- (c1-1);
\draw[->] (c1-1) -- (c2-1);
\draw[->] (c2-1) -- (c3-1);
\draw[->] (c3-1) -- (c4-1);
\end{scope}
%
\node[nd,fill=white] (c0-2) at (5.4,0) {};
\node[nd,fill=white] (c1-2) at (5.4,-1) {};
\node[nd,fill=black] (c2-2) at (5.4,-2) {};
\node[nd,fill=white] (c3-2) at (5.4,-3) {};
\node[inner sep=1pt]  (c4-2) at (5.4,-3.7) {\dots};
\begin{scope}[thick]
\draw[->] (c0-2) -- (c1-2);
\draw[->] (c1-2) -- (c2-2);
\draw[->] (c2-2) -- (c3-2);
\draw[->] (c3-2) -- (c4-2);
\end{scope}
%
\node[nd,fill=white] (c0-3) at (6.9,0) {};
\node[nd,fill=white] (c1-3) at (6.9,-1) {};
\node[nd,fill=white] (c2-3) at (6.9,-2) {};
\node[nd,fill=black] (c3-3) at (6.9,-3) {};
\node[inner sep=1pt]  (c4-3) at (6.9,-3.7) {\dots};
\begin{scope}[thick]
\draw[->] (c0-3) -- (c1-3);
\draw[->] (c1-3) -- (c2-3);
\draw[->] (c2-3) -- (c3-3);
\draw[->] (c3-3) -- (c4-3);
\end{scope}
\end{tikzpicture}
\caption{An interpretation satisfying $\varphi$ in Example~\ref{ex:running-example:1}.}\label{fig:running-example:1}
\end{figure}
Let $\varphi_0$ be a $\CT$-sentence that  has only infinite models,
e.g.,
%
\begin{equation*}
\forall x\exists^{=1}y\,P(x,y) \ \ \land \ \ \forall x\exists^{\leq 1}z\,P(z,x) \ \ \land \ \ \exists x\, \lnot \exists z\,P(z,x). 
\end{equation*}
Then consider the following $\CtwomonML$-sentence:
\begin{equation*}
\varphi \quad = \quad \varphi_0 \ \ \wedge \ \ \forall x\, \Diamond_a A(x) \ \ \wedge \ \ \Box_a \exists^{\leq 1}x\, A(x).
\end{equation*}
Figure~\ref{fig:running-example:1} illustrates an interpretation
satisfying $\varphi$: in world $w_0$,  sentence~$\varphi_0$ forces an
infinite chain $d_0,d_1,d_2,\dots$ of domain elements connected
by~$P$: every element in the chain has a unique $P$-successor, and every element but $d_0$ has a
unique $P$-predecessor, while $d_0$ has no $P$-predecessors. By the remaining two conjuncts of $\varphi$,
for each $d_i$, $i\in\mathbb{N}$, world~$w_0$ has an $R_a$-successor
world $v_i$ that contains one and only one element in~$A$, thus
resulting in infinite branching.
\end{example}








\section{Quasimodels for $\QmonMLc$}\label{sec:quasimodels}

In this section we present a straightforward generalisation of the standard quasimodel technique~(see, e.g., \cite{GabEtAl03}) to the languages with constants, equality and/or counting quantifiers. 

For every monodic formula of the form $\Diamond_a\psi(x)$ with one free variable $x$,
we reserve a unary predicate $R_{\Diamond_a\psi}(x)$, and, for every monodic sentence
of the form $\Diamond_a\psi$, a propositional variable $p_{\Diamond_a\psi}$.
Symbols $R_{\Diamond_a\psi}(x)$ and $p_{\Diamond_a\psi}$ are called the \emph{surrogates}
of~$\Diamond_a\psi(x)$ and~$\Diamond_a\psi$, respectively.  For clarity of
presentation, we assume that these surrogates are {\em not} in the
original signature.

Given a  $\QmonMLc$-formula $\varphi$, we denote by
$\overline{\varphi}$ 
the result of replacing all its
subformulas of the form $\Diamond_a\psi(x)$ or~$\Diamond_a\psi$, which are not in the scope of another modal operator,
by their surrogates. Thus,
$\overline{\varphi}$ contains no occurrences of modal operators at
all.  Observe that, for all
monodic formulas $\psi_1$ and $\psi_2$, we have
\begin{equation*}
	\overline{\psi_1\wedge\psi_2}=\overline{\psi_1}\wedge\overline{\psi_2},
	\qquad
	\overline{\neg\psi_1}=\neg\overline{\psi_1} \quad \mbox{ and }\quad
	\overline{\forall x\,\psi_1}=\forall x\,\overline{\psi_1}.
\end{equation*}
%
%
Let $x$ be a variable not occurring in $\varphi$. Put\nb{ok? or add sentences explicitly? F: fine as is}
\begin{equation*}
\sub[x]{\varphi} = \bigl\{ \psi\{x/y\}, \neg \psi\{x/y\} \mid \psi(y)\in \sub[]{\varphi}\bigr\},
\end{equation*}
where, as usual, we write $\psi(y)$ to indicate that $\psi$ has at most one free variable,~$y$, and $\psi\{x/y\}$ denotes the result of replacing all free occurrences of $y$ in $\psi$ with~$x$.
Clearly, $|\sub[x]{\varphi}| \leq 2|\varphi|$. 
By a \emph{type}
for  a $\QmonMLc$-sentence $\varphi$ we mean any Boolean-saturated subset $\contp$ of $\sub[x]{\varphi}$, that is,
\begin{itemize}
	\item $\psi_1\wedge\psi_2\in \contp$ iff
	$\psi_1\in \contp$ and $\psi_2\in \contp$, for every
	$\psi_1\wedge\psi_2\in \sub[x]{\varphi}$;
	\item $\neg\psi\in \contp$ iff $\psi\notin \contp$,
	for every $\neg\psi\in \sub[x]{\varphi}$.
\end{itemize}
For a type $\contp$ for $\varphi$, we write $\overline{\contp}$ to denote the set $\{\overline{\psi} \mid \psi \in\contp\}$. We will often identify such a set with the conjunction of formulas in it.

We use quasistates  to describe worlds in interpretations. With the equality in the language, the description of each world needs to specify not only the types realised in that world, but also the number of domain elements that realise the type. To deal with this issue, multisets of types are useful. 

A multiset $\avec{X}$ is a pair, $(X, \textit{card}_X)$,
consisting of an underlying set, $X$, and a function,
$\textit{card}_X$ that assigns to each element its multiplicity. To
represent infinite domains of interpretations, the multiplicities need
to range over the set $\extN=\mathbb{N}\cup\{\aleph_0\}$, with the
usual assumptions that $\aleph_0 + m = m + \aleph_0 = \aleph_0$, for
each $m \in \mathbb{N}$, and $\aleph_0 + \aleph_0 = \aleph_0$. Note
that we assume $0 \in\mathbb{N}$, and so $\textit{card}_X(x)$ can in
principle be $0$, indicating that $x$ does \emph{not} belong to the multiset
$\avec{X}$. Equally, we can always take a larger underlying set and set $\textit{card}_X(x) = 0$ for any element $x$ outside the original underlying set. In the sequel, we often omit the underlying set, assuming that it is clear from the context, and slightly abuse notation by identifying a multiset~$\avec{X}$ with its multiplicity function $\textit{card}_X$ and write $\avec{X}(x)$ for $\textit{card}_X(x)$. As a shortcut, we also write $x\in \avec{X}$ whenever $\avec{X}(x) > 0$ and so that, for example, a multiset~$\avec{X}$ is non-empty if there is $x\in \avec{X}$. Given a multiset $\avec{X}$, we denote by $|\avec{X}|$ its cardinality, that is, $\sum_{x\in X} \avec{X}(x)$.

Formally, a \emph{quasistate candidate for $\varphi$} is a non-empty multiset $\avec{n}$ of types for~$\varphi$. We say that $\avec{n}$ is \emph{realised in} a  first-order (non-modal) structure
$\Bmf$ if
\begin{equation*}
\avec{n}(\contp)= |\{ b\in \Bmf \mid \Bmf\models \overline{\contp}[b]\}|, \text{ for all types } \contp \text{ for } \varphi.
\end{equation*}
If $\Bmf\models \overline{\contp}[b]$, then we say that $\contp$ is \emph{realised by $b$ in $\Bmf$}, and if $\avec{n}$ is realised in some~$\Bmf$, it is also said to be \emph{realisable} (without referring to~$\Bmf$), and  a realisable quasistate candidate is simply called a \emph{quasistate for $\varphi$}.
Note that the realisability condition allows us to deal with inconsistency and constants in a uniform way: for example, if a type $\contp$ contains $\forall y\,A(y)$ and $\neg A(x)$, then $\avec{n}(\contp) = 0$, for any quasistate $\avec{n}$; also, 
if $x = c$, for $c \in \Ind$, is a subformula of~$\varphi$, then, any
quasistate $\avec{n}$ will contain a type~$\contp$ with $(x = c) \in
\contp$ and $\avec{n}(\contp) = 1$ because~$\forall x\,(x
\ne c)$ is unsatisfiable (we assume first-order structures
interpret all constant symbols) but no other type $\contp'$ with $\avec{n}(\contp') > 0$ can contain $x = c$  because only one element, $c^{\Bmf}$, can satisfy $x = c$ in any $\Bmf$.

Let $\Fmf=(W,\{R_a\}_{a\in A})$ be a frame.
A \emph{basic structure for $\varphi$ based on~$\Fmf$} is a pair $(\mathfrak{F}, \funcand)$, where
$\funcand$ is a function associating with
every $w \in W$ a quasistate~$\funcand(w)$ for~$\varphi$.
%
In the sequel, we write $\funcand(w, \contp)$  for the multiplicity $\funcand(w)(\contp)$ of type $\contp$ in~$\funcand(w)$. 

We say that $W'\subseteq W$ is \emph{upward-closed} if $W'$ contains all $v$ with $wR_av$ for $w\in W'$ and $a \in A$. In any expanding-domains interpretation $(\Fmf,\Delta,\cdot)$, the set of worlds where a domain element $d$ exists, $\{w\in W\mid d\in \Delta_w \}$, is an upward-closed set. Runs in quasimodels correspond to domain elements, which motivates the following definition. 
A \emph{run through $(\Fmf, \funcand)$} is a function $\rho$ mapping each world~$w$ in an upward-closed subset $W'$ of $W$ to a type $\rho(w)\in \funcand(w)$
satisfying
the following coherence and saturation conditions for every $w\in W'$:
\begin{enumerate}
	[label=\textbf{(r-coh)},leftmargin=*,series=run]
	\item $\Diamond_{a} \psi \in \rho(w)$ if there exists $v \in W$ with
	$wR_{a}v$ and $\psi \in \rho(v)$, for every $\Diamond_{a} \psi \in \sub[x]{\varphi}$;
	\label{rn:modal}
\end{enumerate}
\begin{enumerate}
	[label=\textbf{(r-sat)},leftmargin=*,series=run]
	\item if $\Diamond_{a} \psi \in \rho(w)$, then there exists $v \in W$ with
	$wR_{a}v$ and $\psi \in \rho(v)$, for every $\Diamond_{a} \psi \in \sub[x]{\varphi}$.
	\label{rn:modal2}
\end{enumerate}
The domain, $W'$, of $\rho$ is denoted by $\dom \rho$. We say that a run $\rho$ is \emph{full} if $\dom \rho = W$.

In our quasimodels we will need to count not only the number of times a type is realised at a certain world in an interpretation, but also the number of domain elements that give rise to the same run. 
In other words, instead of sets of runs in standard quasimodels, we use \emph{multisets} of runs. The following notation will be useful in the sequel. Let $\runs$ be a multiset  of runs through a basic structure $(\Fmf, \funcand)$. For a world $w\in W$ in $\Fmf$, we denote by $\runs_{w}$ the \emph{$w$-slice of $\runs$}, that is, the multiset of runs $\rho$ in $\runs$ such that $w\in\dom\rho$, with each run in the $w$-slice having the same multiplicity as in $\runs$: formally, 
\begin{equation*}
\runs_{w}(\rho) = \begin{cases}\runs(\rho), & \text{ if } w\in\dom\rho, \\ 0, &\text{ otherwise.}\end{cases}
\end{equation*}
Similarly, for a world $w\in W$ and a type $\contp$ for $\varphi$, we denote by $\runs_{w, \contp}$  the \emph{$(w, \contp)$-slice of $\runs$}, the multiset of runs $\rho$ in $\runs$ such that $w\in\dom\rho$  and $\rho(w) = \contp$:
\begin{equation*}
\runs_{w,\contp}(\rho) = \begin{cases}\runs(\rho), & \text{ if } w\in\dom\rho \text{  and } \rho(w) = \contp, \\ 0, &\text{ otherwise.}\end{cases}
\end{equation*}
%
An \emph{\textup{(}expanding-domains\textup{)} quasimodel for $\varphi$ based on $\Fmf$} is a triple $\quasimod =
(\Fmf, \funcand, \runs)$, where $(\Fmf, \funcand)$ is a basic structure for
$\varphi$ based on $\Fmf$ and $\runs$ is a multiset of runs through $(\Fmf, \funcand)$ such that
\begin{enumerate}
	[label=\textbf{(card)},leftmargin=*,series=run]
	\item\label{run:exists} $\funcand(w,\contp)= |\runs_{w,\contp}|$, for every $w \in W$ and every type $\contp$ for $\varphi$.
%
\end{enumerate}
We say that $\quasimod$ is a \emph{constant-domain quasimodel} if $\runs$  consists of full runs.  
A quasimodel $\quasimod =
(\Fmf, \funcand, \runs)$ is said to \emph{satisfy $\varphi$} if 
\begin{enumerate}[label=\textbf{(b)},leftmargin=*,series=run]
\item $\varphi \in \contp$ for some $w_0\in W$ and some type $\contp\in\funcand(w_0)$.
	\label{b1}
\end{enumerate}  
%
%
%

In the sequel it will often be  convenient to represent a multiset $\runs$ of runs  as a set $\hat{\runs}$ of \emph{indexed runs}:
\begin{equation*}
\hat{\runs} = \{ (\rho,\ell) \in \runs \times \mathbb{N} \mid 0 \leq \ell < \runs(\rho) \}; 
\end{equation*}
note that if $\rho$ has infinite  multiplicity ($\aleph_0$) in $\runs$, then by definition $\hat{\runs}$ contains all pairs of the form $(\rho, \ell)$, for $\ell \in \mathbb{N}$. 

\begin{example}\label{ex:running:quasimodel}\em
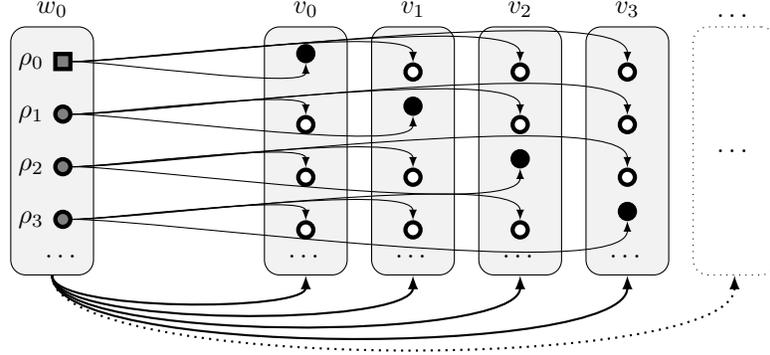
\begin{figure}[t]
\centering%
\begin{tikzpicture}[>=latex, xscale=0.95, yscale=0.7,
 	nd/.style={ultra thick,circle,draw,minimum size=2mm,inner sep=0pt},
	wn/.style={rectangle,rounded corners=2mm,draw,minimum width=11mm,minimum height=33mm}]
%
%
%
\node[wn, fill=gray!10,label=above:{$w_0$}] (w0) at (-1.15,-1.7) {};	
%
\node[nd,fill=gray,rectangle,label=left:{$\rho_0$}] (c0) at (-1,0) {};
\node[nd,fill=gray,label=left:{$\rho_1$}] (c1) at (-1,-1) {};
\node[nd,fill=gray,label=left:{$\rho_2$}] (c2) at (-1,-2) {};
\node[nd,fill=gray,label=left:{$\rho_3$}] (c3) at (-1,-3) {};
\node (c4) at (-1,-3.7) {\dots};
\node[wn, fill=gray!10,label=above:{$v_0$}] (v0) at (2.4,-1.7) {};	
\node[wn, fill=gray!10,label=above:{$v_1$}] (v1) at (3.9,-1.7) {};	
\node[wn, fill=gray!10,label=above:{$v_2$}] (v2) at (5.4,-1.7) {};	
\node[wn, fill=gray!10,label=above:{$v_3$}] (v3) at (6.9,-1.7) {};	
\node[wn,thin,fill=white,dotted,label=above:{$\dots$}] (v4) at (8.4,-1.7) {\dots};
\begin{scope}[out=-90,in=-90,looseness=0.5,thick]
\draw[->] (w0.south) to (v0.south);
\draw[->] (w0.south) to (v1.south);
\draw[->] (w0.south) to (v2.south);
\draw[->] (w0.south) to (v3.south);
\draw[->,dotted] (w0.south) to (v4.south);
\end{scope}
%
\node[nd,fill=black] (c0-0) at (2.4,0.15) {};
\node[nd,fill=white] (c1-0) at (2.4,-1.2) {};
\node[nd,fill=white] (c2-0) at (2.4,-2.2) {};
\node[nd,fill=white] (c3-0) at (2.4,-3.2) {};
\node (c4-1) at (2.4,-3.7) {\dots};
%
\node[nd,fill=white] (c0-1) at (3.9,-0.2) {};
\node[nd,fill=black] (c1-1) at (3.9,-0.85) {};
\node[nd,fill=white] (c2-1) at (3.9,-2.2) {};
\node[nd,fill=white] (c3-1) at (3.9,-3.2) {};
\node (c4-1) at (3.9,-3.7) {\dots};
%
\node[nd,fill=white] (c0-2) at (5.4,-0.2) {};
\node[nd,fill=white] (c1-2) at (5.4,-1.2) {};
\node[nd,fill=black] (c2-2) at (5.4,-1.85) {};
\node[nd,fill=white] (c3-2) at (5.4,-3.2) {};
\node (c4-2) at (5.4,-3.7) {\dots};
%
\node[nd,fill=white] (c0-3) at (6.9,-0.2) {};
\node[nd,fill=white] (c1-3) at (6.9,-1.2) {};
\node[nd,fill=white] (c2-3) at (6.9,-2.2) {};
\node[nd,fill=black] (c3-3) at (6.9,-2.85) {};
\node (c4-3) at (6.9,-3.7) {\dots};
\begin{scope}[looseness=0.4]
\draw[->,out=0,in=-90] (c0) to (c0-0.south);
\draw[->,out=0,in=90] (c0) to (c0-1.north);
\draw[->,out=0,in=90] (c0) to (c0-2.north);
\draw[->,out=0,in=90] (c0) to (c0-3.north);
\draw[->,out=0,in=90] (c1) to (c1-0.north);
\draw[->,out=0,in=-90] (c1) to (c1-1.south);
\draw[->,out=0,in=90] (c1) to (c1-2.north);
\draw[->,out=0,in=90] (c1) to (c1-3.north);
\draw[->,out=0,in=90] (c2) to (c2-0.north);
\draw[->,out=0,in=90] (c2) to (c2-1.north);
\draw[->,out=0,in=-90] (c2) to (c2-2.south);
\draw[->,out=0,in=90] (c2) to (c2-3.north);
\draw[->,out=0,in=90] (c3) to (c3-0.north);
\draw[->,out=0,in=90] (c3) to (c3-1.north);
\draw[->,out=0,in=90] (c3) to (c3-2.north);
\draw[->,out=0,in=-90] (c3) to (c3-3.south);
\end{scope}
\end{tikzpicture}%
\caption{A quasimodel for the interpretation in Fig.~\ref{fig:running-example:1}.}\label{fig:running:quasimodel}
\end{figure}
We continue with our Example~\ref{ex:running-example:1} and consider a quasimodel depicted in Fig.~\ref{fig:running:quasimodel}.
There are two types for $\varphi$ at $w_0$: $\contp_0$ contains $\lnot \exists z\, P(z,x)$ and is shown by
a grey rectangle, while $\contp_0'$ contains $\exists z\, P(z,x)$ and is shown by grey circles; both types also include
$\exists^{=1} y\,P(x,y)$,
$\exists^{\leq 1} z\, P(z,x)$, $\Diamond_a A(x)$, $\Box_a \exists^{\leq 1}
x\,A(x)$ and their universal closures and suitable conjunctions, but no other positive formula from $\sub[x]{\psi}$.
%
Then there are two types for $\varphi$ at each of the $v_i$, for $i\in\mathbb{N}$: $\contp_1$, depicted by black circles, contains $A(x)$ and $\exists^{\leq 1} x\,A(x)$, while $\contp_2$, depicted by white circles, contains $\neg A(x)$ and $\exists^{\leq 1} x\,A(x)$.

In the quasimodel in Fig.~\ref{fig:running:quasimodel}, quasistate $\funcand(w_0)$ contains one $\contp_0$ and $\aleph_0$-many~$\contp_0'$, while quasistate $\funcand(v_i)$, for $i\in\mathbb{N}$, contains one $\contp_1$ and $\aleph_0$-many~$\contp_2$. Each run $\rho_i$, for $i\in\mathbb{N}$, in the multiset $\runs$ has multiplicity~1 and is depicted as a tree with types in  quasistates being its vertices and the edges matching the accessibility relation of the frame: for each $i\in\mathbb{N}$, we have 
\begin{equation*}
\rho_i(w_0) = \begin{cases}\contp_0, & \text{if } i  = 0,\\  \contp_0', &\text{otherwise},\end{cases}\quad \rho_i(v_i) = \contp_1 \quad\text{ and }\quad \rho_i(v_j) = \contp_2, \text{ for } j \in \mathbb{N}\setminus \{ i\}.
\end{equation*} 
\end{example}

The next lemma provides a link between quasimodels and interpretations.

\begin{lemma}\label{lemma:quasimodel}
For both constant and expanding domains, a $\QmonMLc$-sentence~$\varphi$ is satisfiable in  an interpretation based on a frame~$\Fmf$ iff there is a quasimodel satisfying $\varphi$ and based on $\Fmf$.
\end{lemma}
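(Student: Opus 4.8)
The plan is to prove the two directions by relating, at each world, the modal interpretation to a purely first-order structure that realises the quasistate of that world, using the surrogate abstraction $\overline{(\cdot)}$ as the bridge between the modal and the first-order levels. In both directions the crux is a \emph{truth lemma} connecting satisfaction of a monodic formula at a domain element (resp.\ a run) with membership of that formula in the associated type, where the first-order part is transferred through the realising structure and the modal part through the coherence and saturation conditions on runs.

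For the direction from interpretations to quasimodels, let $\Mmf$ be based on $\Fmf$ with $\Mmf, w_0 \models \varphi$. For each world $w$ I would expand the first-order structure $\Mmf(w)$ to a structure $\Bmf_w$ over the signature enriched with surrogates by setting $\Bmf_w \models R_{\Diamond_a\psi}[d]$ iff $\Mmf, w \models^{[x\mapsto d]}\Diamond_a\psi$, and making $p_{\Diamond_a\psi}$ true in $\Bmf_w$ iff $\Mmf, w \models \Diamond_a\psi$. A straightforward induction on $\psi \in \sub[x]{\varphi}$, using monodicity so that each $\Diamond_a\chi$ outside the scope of a further modal operator has at most one free variable and is faithfully represented by a unary or nullary surrogate, yields $\Mmf, w \models^{[x\mapsto d]}\psi$ iff $\Bmf_w \models \overline\psi[d]$. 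Setting $t_w(d) = \{\psi \in \sub[x]{\varphi} \mid \Bmf_w \models \overline\psi[d]\}$, I let $\funcand(w)$ be the multiset counting, for each type $t$, the elements $d \in \Delta_w$ with $t_w(d)=t$; this is a quasistate, being realised in $\Bmf_w$. Each $d$ induces a run $\rho_d$, with $\rho_d(w)=t_w(d)$, whose domain $\{w \mid d \in \Delta_w\}$ is upward-closed; (r-coh) and (r-sat) are immediate from the semantics of $\Diamond_a$ together with $\Delta_w \subseteq \Delta_v$ along $R_a$-edges. Finally I take $\runs$ to assign to each run the number of elements inducing it, which makes (card) hold, while $\varphi \in t_{w_0}(d)$ gives (b); in the constant-domain case every $\rho_d$ is full.

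For the converse, given a quasimodel $\quasimod = (\Fmf, \funcand, \runs)$ satisfying $\varphi$, I would build $\Mmf$ over the original signature by taking the indexed runs as domain elements, $\Delta_w = \{(\rho,\ell) \in \hat\runs \mid w \in \dom\rho\}$, which is non-empty and expands along $R_a$ by upward-closedness of run domains (and is constant when all runs are full). For each $w$ I fix a first-order structure $\Bmf_w$ realising $\funcand(w)$; by (card) the number of indexed runs with $\rho(w)=t$ equals the number of elements of $\Bmf_w$ realising $t$, so there is a bijection $g_w$ from $\Delta_w$ onto the domain of $\Bmf_w$ sending $(\rho,\ell)$ to an element realising $\rho(w)$. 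Defining $\Mmf(w)$ by pulling the interpretation of predicates and constants back along $g_w$, I would prove by induction on $\psi \in \sub[x]{\varphi}$ that $\Mmf, w \models^{[x\mapsto(\rho,\ell)]}\psi$ iff $\psi \in \rho(w)$: atoms, Booleans, quantifiers and counting quantifiers are handled by $g_w$ and the fact that $g_w(\rho,\ell)$ realises $\rho(w)$ in $\Bmf_w$, and the case $\Diamond_a\psi$ by (r-sat) (which supplies an $R_a$-successor $v \in \dom\rho$ with $\psi \in \rho(v)$, so the same run witnesses the diamond at $w$) and (r-coh) (which records in $\rho(w)$ every diamond made true by a successor). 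Condition (b) then yields $\Mmf, w_0 \models \varphi$.

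The main obstacle is the counting bookkeeping that motivates the passage to \emph{multisets}: condition (card) must be exactly strong enough to guarantee the type-preserving bijection $g_w$ between the runs through $w$ and the elements realising each type in $\Bmf_w$, including multiplicities in $\extN$, so that an infinitely-realised type matches an $\aleph_0$-valued run count. Aligned with this is the need to keep the two readings of each modal subformula $\Diamond_a\psi$ consistent throughout the induction: its truth at an element, established via runs and (r-coh)/(r-sat), must agree with the value that the realising structure $\Bmf_w$ assigns to its surrogate when $\Diamond_a\psi$ occurs inside a quantifier. Threading this agreement through the inductive truth lemma, together with the careful treatment of equality and of (possibly non-designating) constants inside types, is where the real work lies.
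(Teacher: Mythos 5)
Your proposal is correct and follows essentially the same route as the paper's proof: types are read off from satisfaction at domain elements and elements induce runs in one direction, while in the other direction indexed runs become domain elements linked by a type-preserving bijection $f_w$ to a structure $\Bmf_w$ realising each quasistate, with the modal cases of the truth lemma discharged by \textbf{(r-coh)} and \textbf{(r-sat)} and the cardinality bookkeeping by \textbf{(card)}. The only refinement worth noting is that the paper states the inductive truth lemma in the two-sorted form $\Bmf_w\models^{f_w\circ\mathfrak{b}}\overline{\psi}$ iff $\Mmf,w\models^{\mathfrak{b}}\psi$ for \emph{all} subformulas and arbitrary assignments $\mathfrak{b}$ (not only the one-free-variable formulas in $\sub[x]{\varphi}$), which is precisely the ``threading'' of the surrogate/run agreement that you correctly identify as the real work.
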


\begin{proof}
	$(\Rightarrow)$
	Let $\Mmf = (\Fmf, \Delta, \cdot)$ be an interpretation based on $\Fmf = (W, \{R_a\}_{a\in A})$ with $\Mmf, w_0\models \varphi$ for some $w_0\in W$. Let
\begin{equation*}
\contp^{\Mmf(w)}(d) = \bigl\{ \psi \in \sub[x]{\varphi} \mid \Mmf, w\models \psi[d] \bigr\}, 
\end{equation*}
for every every $w \in W$ and $d \in \Delta_w$. Clearly,
	$\contp^{\Mmf(w)}(d)$ is Boolean-saturated and so is a type for $\varphi$. We now define a
	triple $\Qmf = (\Fmf, \funcand, \runs)$, where
	\begin{itemize}
		\item $\funcand$ is the function from $W$ to the set of
		quasistate candidates for $\varphi$ defined, for
		every $w\in W$, by setting
		$\funcand(w,\contp) = \min(\aleph_0, | \{ d \in \Delta_w \mid \contp^{\Mmf(w)}(d) = \contp \}|)$,  for every type $\contp$ for $\varphi$;
		\item $\runs$ is the multiset of functions $\rho_{d}$, 
		$d \in \bigcup_{w\in W}\Delta_w$, from upward-closed subsets of $W$ to
		the set of types for~$\varphi$ defined by
		$\rho_{d}(w)= \contp^{\Mmf(w)}(d)$, for
                each $w\in W$ with $d\in \Delta_w$. Note that if $k$ distinct $d_0,d_2,\dots,d_{k-1}$ give rise to the same function $\rho$, that is, $\rho=\rho_{d_i}$, for $0\leq i < k$, then $\runs(\rho)=k$; if  infinitely many distinct $d_0,d_1,\dots$ give rise to the same function~$\rho$, that is, $\rho=\rho_{d_i}$, for all $i\in\mathbb{N}$, then $\runs(\rho)=\aleph_0$.
%
%
	\end{itemize}
	It is easy to see that 	each $\funcand(w)$ is realisable. 
	The elements of~$\runs$, by construction,
	satisfy~\mbox{\ref{rn:modal}} and~\ref{rn:modal2}. By the definition of $\funcand$ and~$\runs$, the triple~$\Qmf$ satisfies~\ref{run:exists}. Finally, $\Qmf$  satisfies $\varphi$
	because $\varphi\in \contp$, for all (some) types $\contp\in \funcand(w_0)$. 

\smallskip
	
\noindent $(\Leftarrow)$ Suppose there is a quasimodel
$\Qmf = (\Fmf, \funcand, \runs)$ satisfying $\varphi$. 
For each
$w\in W$, we consider
the set $\hat{\runs}_w$ of indexed runs in the $w$-slice of $\runs$.  
Since the
domains of runs are upward-closed, we have
$\hat{\runs}_w \subseteq \hat{\runs}_v$, for each $w,v\in W$ with $wR_av$, for
some $a\in A$.  By~\ref{run:exists}, we have
$|\hat{\runs}_w| = \sum_{\contp} \funcand(w,\contp)$, for each $w\in W$;
moreover, by the definition of quasistate,
$|\hat{\runs}_w| > 0$. So, each $\hat{\runs}_w$ is either countably infinite or
non-empty and finite, and $|\hat{\runs}_w| \leq |\hat{\runs}_v|$, for each
$w,v\in W$ with $wR_av$, for some $a\in A$. It then follows that, for
each $w\in W$, we can take a first-order structure~$\Bmf_w$ realising $\funcand(w)$
with the domain of cardinality $|\hat{\runs}_w|$. So,
for each $w\in W$, there is a bijection $f_w$ between the set $\hat{\runs}_w$ and the domain of~$\Bmf_w$ satisfying
\begin{equation}\label{eq:Bw}
\Bmf_w\models \overline{ \rho(w)}[f_w(\rho,\ell)], \text{ for every } w\in W \text{ and every } (\rho,\ell)\in\hat{\runs}_w.
\end{equation} 	
Define a total interpretation $\Mmf = (\Fmf, \Delta, \cdot)$, by setting
$\Delta_w = \hat{\runs}_w$, 
$c^{\Mmf(w)} = f^{-1}_w(c^{\Bmf_w})$, 
for any constant $c$ in $\varphi$ (recall that the $\Bmf_w$ interpret all constant symbols),  
$P^{\Mmf(w)} = f^{-1}_w(P^{\Bmf_w})$,
for any $k$-ary
predicate symbol $P$ in $\varphi$, where $f^{-1}_w$
is extended to sets of $k$-tuples ($k \geq 0$) of indexed runs in a component-wise
way.
	%
We show that, for each subformula $\psi$ of $\varphi$, we have
\begin{equation}\label{eq:quasimodel:induction}
\Bmf_w\models^{f_w \circ \mathfrak{b}} \overline{\psi} \quad \text{ iff }\quad
\Mmf, w\models^{\mathfrak{b}} \psi,
\end{equation}
for all assignments~$\mathfrak{b}$ and all $w\in W$, where
$f_w \circ \mathfrak{b}\colon x \mapsto f_w(\mathfrak{b}(x))$, for all
variables~$x$. The basis of induction, the case of atomic formulas, including equalities, is immediate from the definition (for equalities, we use the fact that $f_w$ is a bijection).
The cases of the Boolean connectives $\land$ and $\neg$ and the
quantifier $\exists$ easily follow from IH. 

So, it remains to consider the case of subformulas of the form $\Diamond_a\psi$.
First, note that $\psi$ contains at most one free variable, say, $x$
(for a sentence, pick any variable). Consider
$(\rho,\ell) = \mathfrak{b}(x)$.  Suppose first
$\Bmf_w\models^{f_w\circ \mathfrak{b}} \overline{\Diamond_a\psi}$,
that is, $\Bmf_w \models \overline{\Diamond_a\psi}[f_w(\rho,\ell)]$.
By~\eqref{eq:Bw}, we have $\Diamond_a\psi\in\rho(w)$, whence,
by~\ref{rn:modal}, $\psi\in\rho(v)$, for some $v\in W$ with $wR_a
v$.
By~\eqref{eq:Bw}, we have $\Bmf_v\models
\overline{\rho(v)}[f_v(\rho,\ell)]$ and thus
$\Bmf_v\models \overline{\psi}[f_v(\rho,\ell)]$.  By IH,
$\Mmf, v\models \psi[(\rho,\ell)]$, whence
$\Mmf, w\models \Diamond_a\psi[(\rho,\ell)]$.
Conversely, suppose $\Mmf, w\models \Diamond_a\psi[(\rho,\ell)]$. Then we have 
$\Mmf, v\models \psi[(\rho,\ell)]$, for some $v\in W$ with $wR_av$. By IH, we
obtain $\Bmf_v\models \overline{\psi}[f_v(\rho,\ell)]$.
By~\eqref{eq:Bw}, we have $\psi\in\rho(v)$, whence, by~\ref{rn:modal2},
$\Diamond_a\psi\in\rho(w)$.
By~\eqref{eq:Bw}, we obtain
$\Bmf_w\models \overline{\Diamond_a\psi}[f_w(\rho,\ell)]$, as required.
	
	Now we can easily finish the proof of Lemma~\ref{lemma:quasimodel} for expanding domains by observing 
	that, by~\ref{b1}, there is a world $w_0\in W$ and a type
	$\contp$ with  $\funcand(w_0, \contp) > 0$ such that $\varphi\in \contp$. Thus,
	by~\ref{run:exists}, there exists $(\rho_0,\ell_0) \in \hat{\runs}_{w_0}$ with
	$\rho_0(w_0) = \contp$ and $\Bmf_{w_0}\models \overline{\varphi}[f_{w_0}(\rho_0,\ell_0)]$, whence, 
	by~\eqref{eq:quasimodel:induction}, we obtain
	$\Mmf, w_0 \models \varphi$.
	
	For the case of constant domains, observe that ($i$) constant domains mean that the runs are full and ($ii$) full runs give rise to constant domains.
\end{proof}

\begin{example}\label{ex:running-example:quasimodel:2}\em
It can be seen that any quasimodel satisfying $\varphi$ from our running Example~\ref{ex:running-example:1} will have infinite branching. Indeed, it should be clear that a type containing the same positive formulas as~$\contp'_0$
in Example~\ref{ex:running:quasimodel} has infinite multiplicity~($\aleph_0$) in the quasistate, $w_0$, required by~\ref{b1}. Then, by~\ref{run:exists}, the multiset $\runs$ of runs  contains $\aleph_0$ runs that go through $\contp'_0$ at $w_0$. By~\ref{rn:modal2}, there is an $a$-successor, $v_1$, of $w_0$ with a type, say $\contp_1$, containing $A(x)$. But since, by~\ref{run:exists}, only one run can go through~$\contp_1$ at $v_1$, each of the $\aleph_0$ runs in $\runs$ requires a separate $a$-successor, as depicted in Fig.~\ref{fig:running:quasimodel}. 
\end{example}


 
\section{Weak Quasimodels  for $\QmonMLc$ over $\Kn$ Frames}\label{sec:kn:weak:quasimodels}

As we have seen in Section~\ref{sec:quasimodels}, standard quasimodels can require infinite frames in the presence of equality. We now define a weaker notion of quasimodel that will allow us to consider only finite frames. To this end, we first need to define a class of frames with simpler structure.

A $\Kn$ frame is  called \emph{tree-shaped} if there exists a $w_{0}\in W$, called the \emph{root of~$\Fmf$},
such that the domain $W$ of~$\Fmf$ is a prefix-closed set of 
words of the form 
\begin{equation}\label{eq:world:k}
	\avec{w}=w_0a_0w_1a_1\cdots a_{m-1}w_{m},
\end{equation}
where  $a_j\in A$, for $0 \leq j < m$, and each $R_a$, $a\in A$, is 
the smallest relation containing all pairs of the form $(\avec{w},\avec{w}aw)\in W\times W$. For a word $\avec{w}\in W$, its last component, $w_m$,  is denoted by $\textit{tail}(\avec{w})$. Also, we denote $m$ by~$d(\avec{w})$ and call it the \emph{depth of~$\avec{w}$}.
The \emph{depth of $\Fmf$} as the maximum depth $d(\avec{w})$ of~$\avec{w}\in W$. 
Let $\Tree{d}$ denote the set of all tree-shaped $\Kn$ frames of depth bounded by~$d$.  The following result is shown by using the standard modal logic unfolding technique lifted to the first-order modal language with constants and equality. 

\begin{lemma}\label{lem:k:tree-shaped}
For both constant and expanding domains, 
every $\Kn$-satisfiable $\QMLc$-formula $\varphi$  is also $\Tree{d(\varphi)}$-satisfiable.
\end{lemma}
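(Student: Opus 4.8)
The plan is to apply the standard modal unravelling (tree-unfolding) construction, adapted to partial first-order interpretations with expanding domains, constants and equality, exploiting the fact that the modal depth $\md(\varphi)$ bounds how far from the point of evaluation the truth of $\varphi$ can depend. First I would fix a witnessing interpretation $\Mmf = (\Fmf, \Delta, \cdot)$ based on $\Fmf = (W, \{R_a\}_{a\in A})$, together with a world $w_0$ and an assignment $\assign_0$ with $\Mmf, w_0 \models^{\assign_0}\varphi$. From $\Mmf$ I would build a tree-shaped interpretation $\Mmf' = (\Fmf', \Delta', \cdot')$ whose worlds are the words $\avec{w} = w_0a_0w_1\cdots a_{m-1}w_m$ of length $m \leq \md(\varphi)$ with $w_jR_{a_j}w_{j+1}$ in $\Fmf$ for all $j$, with root $w_0$ and accessibility relations generated by the pairs $(\avec{w}, \avec{w}av)$ as in \eqref{eq:world:k}; by construction $\Fmf' \in \Tree{\md(\varphi)}$. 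At each word $\avec{w}$ the local data is copied from its tail: I would set $\Delta'_{\avec{w}} = \Delta_{\textit{tail}(\avec{w})}$, $P^{\Mmf'(\avec{w})} = P^{\Mmf(\textit{tail}(\avec{w}))}$ for every predicate $P$, and let each constant $c$ designate at $\avec{w}$ exactly when it designates at $\textit{tail}(\avec{w})$, with the same value.

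Before the main induction I would check that $\Mmf'$ is a legitimate expanding-domains interpretation: whenever $\avec{w} R_a \avec{w}av$ we have $\textit{tail}(\avec{w}) R_a v$ and $\textit{tail}(\avec{w}av) = v$, so $\Delta'_{\avec{w}} = \Delta_{\textit{tail}(\avec{w})} \subseteq \Delta_v = \Delta'_{\avec{w}av}$ by the expanding-domain condition on $\Mmf$; for constant domains all domains coincide and the inclusion is immediate.

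The core is a truth-preservation statement that I would prove by induction on formulas: for every word $\avec{w}$ of $\Fmf'$, every subformula $\psi$ of $\varphi$ with $d(\avec{w}) + \md(\psi) \leq \md(\varphi)$, and every assignment $\assign$ at $\textit{tail}(\avec{w})$ (which, as $\Delta'_{\avec{w}} = \Delta_{\textit{tail}(\avec{w})}$, is also an assignment at $\avec{w}$), we have $\Mmf, \textit{tail}(\avec{w}) \models^{\assign}\psi$ iff $\Mmf', \avec{w} \models^{\assign}\psi$. The atomic and equality cases are immediate, since the terms occurring there are variables or constants: a variable is evaluated by the shared assignment, a constant designates at $\avec{w}$ exactly as at $\textit{tail}(\avec{w})$ with the same value, and $P^{\Mmf'(\avec{w})} = P^{\Mmf(\textit{tail}(\avec{w}))}$; equality is sound because the two domains coincide. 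The Boolean and quantifier cases follow directly from the induction hypothesis (the quantifier case using $\Delta'_{\avec{w}} = \Delta_{\textit{tail}(\avec{w})}$). The modal case $\psi = \Diamond_a\chi$ is where the depth bound is used: since $\md(\Diamond_a\chi) \geq 1$, the hypothesis $d(\avec{w}) + \md(\Diamond_a\chi) \leq \md(\varphi)$ gives $d(\avec{w}) + 1 \leq \md(\varphi)$, so for every $a$-successor $v$ of $\textit{tail}(\avec{w})$ in $\Fmf$ the word $\avec{w}av$ is present in the truncated tree, and conversely every $a$-successor of $\avec{w}$ in $\Fmf'$ has this shape. As $\md(\chi) = \md(\Diamond_a\chi) - 1$, the induction hypothesis applies to $\chi$ at $\avec{w}av$, and the expanding-domain condition ensures $\assign$ is still a legal assignment at the successor. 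Applying the claim to $\psi = \varphi$ at the root $w_0$ (where $d(w_0) = 0$) finally yields $\Mmf', w_0 \models^{\assign_0}\varphi$, so $\varphi$ is $\Tree{\md(\varphi)}$-satisfiable.

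I expect the main obstacle to be nothing conceptually deep but rather the careful depth accounting in the modal step — verifying that truncating the unfolding at depth $\md(\varphi)$ never discards a successor needed to witness a $\Diamond_a$ occurring within the permitted scope — together with the bookkeeping that assignments and partial constant designation transfer correctly across the expanding domains. Aside from this, the argument is a routine lift of the propositional unravelling to the first-order modal language with equality and constants.
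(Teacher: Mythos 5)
Your proposal is correct and follows essentially the same route as the paper: the same truncated unravelling with worlds copied from their tails, the same invariant $d(\avec{w})+\md(\psi)\leq \md(\varphi)$ proved by induction, and the same depth accounting in the $\Diamond_a$ case. The only (harmless) addition is your explicit check that the unfolded structure satisfies the expanding-domain condition, which the paper leaves implicit.
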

\begin{proof}
Let $\Mmf = (\Fmf, \Delta, \cdot)$ with $\Fmf = (W, \{R_a\}_{a\in A})$ and  $\Mmf, w_0 \models^{\mathfrak{a}} \varphi$, for some $\mathfrak{a}$, be given. Unfold $\Fmf$ into a tree-shaped  $\Kn$ frame
$\Fmf^* = (W^*, \{R_a^*\}_{a\in A})$,
where $W^*$ is the set of all words $\avec{w}$ of the form~\eqref{eq:world:k} with $d(\avec{w}) \leq d(\varphi)$ and $w_jR_{a_j} w_{j+1}$, for each $0 \leq j < d(\avec{w})$, and $R_a^* = \{ (\avec{w}, \avec{w}aw)\in W^*\times W^*\}$, for each $a\in A$.  Clearly, $d(\Fmf^*)\leq d(\varphi)$.  Define an  interpretation $\Mmf^* = (\Fmf^*, \Delta^*, \cdot^*)$ by taking $\Delta^*_{\avec{w}} = \Delta_{\textit{tail}(\avec{w})}$, $c^{\Mmf^*(\avec{w})} = c^{\Mmf(\textit{tail}(\avec{w}))}$ and $P^{\Mmf^*(\avec{w})} = P^{\Mmf(\textit{tail}(\avec{w}))}$ for every constant $c$, predicate~$P$ and $\avec{w}\in W^*$. Observe that $\Mmf^*$ has constant domains whenever $\Mmf$ also has constant domains. By induction on the structure of $\varphi$, one can show that 
\begin{equation}\label{eq:unfolding:k:induction}
\Mmf^*, \avec{w} \models^{\mathfrak{b}} \psi \quad\text{ iff }\quad \Mmf, \textit{tail}(\avec{w}) \models^{\mathfrak{b}}\psi,
\end{equation}
for all assignments~$\mathfrak{b}$, all subformulas $\psi$ of
$\varphi$ and all words $\avec{w}\in W^*$ such that $d(\avec{w}) + d(\psi)
\leq d(\varphi)$. The claim of the lemma will then be immediate. So,
we proceed with the proof of~\eqref{eq:unfolding:k:induction}. 

The base case of atomic formulas, including equalities, which are
subformulas of modal depth 0, is immediate from the definition
of~$\Mmf^*$. The cases of the Boolean connectives $\land$ and $\neg$
and the quantifier~$\exists$ are standard.  It remains to consider the
case of $\Diamond_a \psi$. First, note that
$d(\psi) = d(\Diamond_{a}\psi) - 1$. Suppose
$\Mmf^*, \avec{w} \models^{\mathfrak{b}} \Diamond_{a}\psi$. Then there
is $\avec{v}\in W^*$ with $\avec{w}R^*_a\avec{v}$ and
\mbox{$\Mmf^*, \avec{v} \models^{\mathfrak{b}} \psi$}. Since
$d(\avec{v}) = d(\avec{w}) + 1$, by IH, we have
$\Mmf, \textit{tail}(\avec{v}) \models^{\mathfrak{b}} \psi$, whence,
due to $\textit{tail}(\avec{w})R_a\textit{tail}(\avec{v})$, we obtain
$\Mmf, \textit{tail}(\avec{w})
\models^{\mathfrak{b}}\Diamond_{a}\psi$. Conversely, suppose that
\mbox{$\Mmf, \textit{tail}(\avec{w})
\models^{\mathfrak{b}}\Diamond_{a}\psi$}.
Then there is $v\in W$ with $\textit{tail}(\avec{w})R_a v$ and
\mbox{$\Mmf, v \models^{\mathfrak{b}}\psi$}. Consider $\avec{v} = \avec{w}av\in W^*$. Since $d(\avec{v}) = d(\avec{w}) + 1$, we have, by IH, $\Mmf^*, \avec{v}\models^{\mathfrak{b}}\psi$ and so $\Mmf^*, \avec{w} \models^{\mathfrak{b}}\Diamond_{a}\psi$.
%
\end{proof}

Lemmas~\ref{lem:k:tree-shaped} and~\ref{lemma:quasimodel} provide a bound on the depth of frames in quasimodels, but such frames can still have infinite branching; see Example~\ref{ex:running-example:quasimodel:2}.
We now define weak quasimodels, which will allow us to deal with this issue. 
Let $\varphi$ be a $\QmonMLc$-sentence and $\Fmf = (W, \{R_a\}_{a\in A})$ a tree-shaped $\Kn$   frame.
%
We call a function $\rho$ from an upward-closed subset of $W$ to the set of types for~$\varphi$ a \emph{weak run} if, for all $a\in A$,
\begin{enumerate}
	[label={\bf ($a$-r-coh)},leftmargin=*,series=run]	
	\item $\rho(\avec{w}) \rightarrow_a \rho(\avec{v})$,\label{a-r-coh} for every  $\avec{w},\avec{v}\in \dom \rho$ with $\avec{w}R_a\avec{v}$,
	where $\contp\rightarrow_a\contp'$ denotes the relation on types $\contp,\contp'$ for $\varphi$ defined as follows: 
\begin{equation*}
\psi \in \contp'  \quad\text{ implies }\quad \Diamond_a\psi\in \contp, \text{ for all } \Diamond_a \psi \in \sub[x]{\varphi}.
\end{equation*}
\end{enumerate}
Note that these conditions, for $a\in A$, taken together coincide with the run coherence condition~\ref{rn:modal},
but weak runs are not required to be saturated in general. A weak run $\rho$ is said to be \emph{$a$-saturated at $\avec{w}\in W$}, for $a\in A$, if 
\begin{enumerate}
	[label={\bf($a$-$\avec{w}$-r-sat)},leftmargin=*,series=run]	
\item for every $\Diamond_a \psi \in  \rho(\avec{w})$, 
	there is $\avec{v} \in W$ with
	$\avec{w}R_a\avec{v}$ and $\psi \in \rho(\avec{v})$.\label{run:wsat}
\end{enumerate}
A \emph{weak $\Kn$ quasimodel for~$\varphi$} is a quadruple $\quasimod =
(\Fmf, \funcand, \runs, \prset{})$, where $(\Fmf, \funcand)$ is a basic structure for
$\varphi$ based on a tree-shaped $\Kn$ frame,  $\runs$ is a multiset of weak runs through $(\Fmf, \funcand)$ such that~\ref{run:exists} holds for weak runs in $\runs$ and $\prset{}$ is a \emph{prototype function} satisfying 
the following:
\begin{enumerate}
	[label=\textbf{(wq-sat)},leftmargin=*,series=run]
      \item $\prfun{\avec{w}, \contp}$ is defined for every world $\avec{w}\in W$ and every
        type $\contp\in\funcand(\avec{w})$ so that 
        $\prfun{\avec{w}, \contp}\in \runs_{\avec{w},\contp}$ is  
        $a$-saturated at $\avec{w}$, for each
        $a\in A$.\label{approx:R2}
\end{enumerate}
As before, we say that $\Qmf$ is \emph{constant-domain} if its every weak run is full. A weak quasimodel $\Qmf$ \emph{satisfies} $\varphi$ if~\ref{b1} holds.

It follows that every quasimodel satisfying $\varphi$  trivially gives rise to a weak quaismodel satisfying $\varphi$; moreover, as Lemma~\ref{lem:weakprequasi:k} below shows, from any quasimodel satisfying $\varphi$ we can extract a weak quasimodel satisfying $\varphi$ based on a frame of small (exponential) size. Conversely, Lemma~\ref{lemma:k:finite-approx} below shows that any weak quasimodel can be ``saturated'' to obtain a quasimodel.

\begin{example}\label{ex:running:weak-quasimodel}\rm%
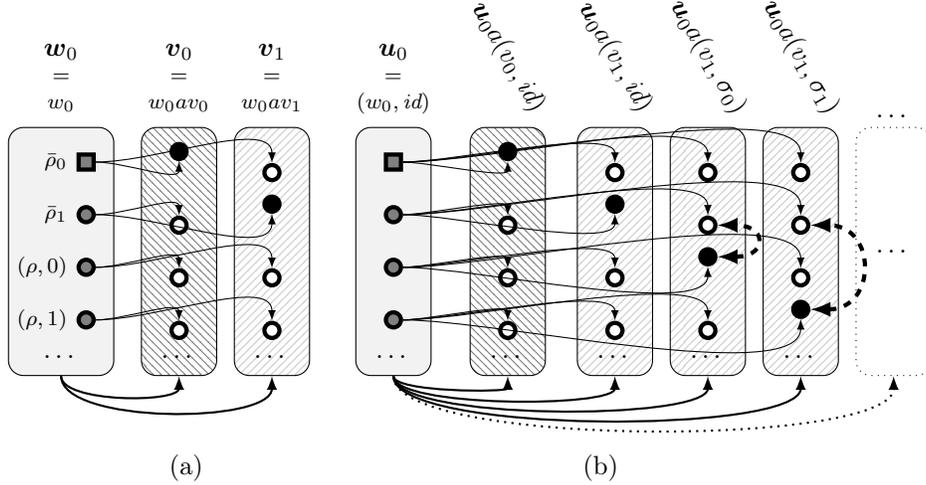
\begin{figure}[t]
\centering%
\begin{tikzpicture}[>=latex, xscale=0.95, yscale=0.7,
 	nd/.style={ultra thick,circle,draw,minimum size=2mm,inner sep=0pt},
	wn/.style={rectangle,rounded corners=2mm,draw,minimum width=10mm,minimum height=33mm}]
\node at (1.4, -5.8) {(a)};
\node[wn, minimum width=14mm, fill=gray!10,label=above:{\begin{tabular}{c}$\avec{w}_0$\\[-2pt]\footnotesize$=$\\[-2pt]\footnotesize $w_0$\end{tabular}}] (w0) at (-0.35,-1.7) {};	
%
\node[nd,fill=gray,rectangle,label=left:{\footnotesize $\bar{\rho}_0$}] (c0) at (0,0) {};
\node[nd,fill=gray,label=left:{\footnotesize$\bar{\rho}_1$}] (c1) at (0,-1) {};
\node[nd,fill=gray,label=left:{\footnotesize$(\rho,0)$}] (c2) at (0,-2) {};
\node[nd,fill=gray,label=left:{\footnotesize$(\rho,1)$}] (c3) at (0,-3) {};
\node (c4) at (-0.35,-3.7) {\dots};
\node[wn, pattern=north west lines, pattern color=gray,label=above:{\begin{tabular}{c}$\avec{v}_0$\\[-2pt]\footnotesize$=$\\[-2pt]\footnotesize $w_0av_0$\end{tabular}}] (wz) at (1.3,-1.7) {};	
\node[wn, pattern=north east lines, pattern color=gray!40,label=above:{\begin{tabular}{c}$\avec{v}_1$\\[-2pt]\footnotesize$=$\\[-2pt]\footnotesize $w_0av_1$\end{tabular}}] (w1) at (2.6,-1.7) {};	
\begin{scope}[out=-90,in=-90,looseness=0.8,thick]
\draw[->] (w0.south) to (wz.south);
\draw[->] (w0.south) to (w1.south);
\end{scope}
%
\node[nd,fill=black] (c0-0) at (1.3,0.2) {};
\node[nd,fill=white] (c1-0) at (1.3,-1.2) {};
\node[nd,fill=white] (c2-0) at (1.3,-2.2) {};
\node[nd,fill=white] (c3-0) at (1.3,-3.2) {};
\node (c4-0) at (1.3,-3.7) {\dots};
\node[nd,fill=white] (c0-1) at (2.6,-0.2) {};
\node[nd,fill=black] (c1-1) at (2.6,-0.8) {};
\node[nd,fill=white] (c2-1) at (2.6,-2.2) {};
\node[nd,fill=white] (c3-1) at (2.6,-3.2) {};
\node (c4-1) at (2.6,-3.7) {\dots};
\begin{scope}[looseness=0.9]
\draw[->,out=0,in=-90] (c0) to (c0-0.south);
\draw[->,out=0,in=90] (c0) to (c0-1.north);
\draw[->,out=0,in=90] (c1) to (c1-0.north);
\draw[->,out=0,in=-90] (c1) to (c1-1.south);
\draw[->,out=0,in=90] (c2) to (c2-0.north);
\draw[->,out=0,in=90] (c2) to (c2-1.north);
\draw[->,out=0,in=90] (c3) to (c3-0.north);
\draw[->,out=0,in=90] (c3) to (c3-1.north);
\end{scope}
\begin{scope}[xshift=29mm]
\node at (4.3, -5.8) {(b)};
\node[wn, fill=gray!10,label=above:{\begin{tabular}{c}$\avec{u}_0$\\[-2pt]\footnotesize$=$\\[-2pt]\footnotesize$(w_0,\textit{id})$\end{tabular}}] (w0) at (1.4,-1.7) {};	
%
\node[nd,fill=gray,rectangle] (c0) at (1.4,0) {}; 
\node[nd,fill=gray,] (c1) at (1.4,-1) {}; 
\node[nd,fill=gray] (c2) at (1.4,-2) {}; 
\node[nd,fill=gray] (c3) at (1.4,-3) {}; 
\node (c4) at (1.4,-3.7) {\dots};
\node[wn, pattern=north west lines, pattern color=gray] (wp) at (3,-1.7) {}; \node[rotate=-60] at (3, 2) {$\avec{u}_0a(v_0,\textit{id})$};
\node[wn, pattern=north east lines, pattern color=gray!40] (w1) at (4.5,-1.7) {};	 \node[rotate=-60] at (4.5, 2) {$\avec{u}_0a(v_1,\textit{id})$};
\node[wn, pattern=north east lines, pattern color=gray!40] (w2) at (5.8,-1.7) {}; \node[rotate=-60] at (5.8, 2) {$\avec{u}_0a(v_1,\sigma_0)$};
\node[wn,pattern=north east lines, pattern color=gray!40] (w3) at (7.1,-1.7) {};	 \node[rotate=-60] at (7.1, 2) {$\avec{u}_0a(v_1,\sigma_1)$};
\node[wn,thin,fill=white,dotted,label=above:{$\dots$}] (w4) at (8.4,-1.7) {\dots};
\begin{scope}[out=-90,in=-90,looseness=0.5,thick]
\draw[->] (w0.south) to (wp.south);
\draw[->] (w0.south) to (w1.south);
\draw[->] (w0.south) to (w2.south);
\draw[->] (w0.south) to (w3.south);
\draw[->,dotted] (w0.south) to (w4.south);
\end{scope}
\node[nd,fill=black] (c0-0) at (3,0.2) {};
\node[nd,fill=white] (c1-0) at (3,-1.2) {};
\node[nd,fill=white] (c2-0) at (3,-2.2) {};
\node[nd,fill=white] (c3-0) at (3,-3.2) {};
\node (c4-0) at (3,-3.7) {\dots};
%
\node[nd,fill=white] (c0-1) at (4.5,-0.2) {};
\node[nd,fill=black] (c1-1) at (4.5,-0.8) {};
\node[nd,fill=white] (c2-1) at (4.5,-2.2) {};
\node[nd,fill=white] (c3-1) at (4.5,-3.2) {};
\node (c4-1) at (4.5,-3.7) {\dots};
%
\node[nd,fill=white] (c0-2) at (5.8,-0.2) {};
\node[nd,fill=white] (c1-2) at (5.8,-1.2) {};
\node[nd,fill=black] (c2-2) at (5.8,-1.8) {};
\node[nd,fill=white] (c3-2) at (5.8,-3.2) {};
\node (c4-2) at (5.8,-3.7) {\dots};
%
\node[nd,fill=white] (c0-3) at (7.1,-0.2) {};
\node[nd,fill=white] (c1-3) at (7.1,-1.2) {};
\node[nd,fill=white] (c2-3) at (7.1,-2.2) {};
\node[nd,fill=black] (c3-3) at (7.1,-2.8) {};
\node (c4-3) at (7.1,-3.7) {\dots};
\begin{scope}[looseness=0.6]
\draw[->,out=0,in=-90] (c0) to (c0-0.south);
\draw[->,out=0,in=90] (c0) to (c0-1.north);
\draw[->,out=0,in=90] (c0) to (c0-2.north);
\draw[->,out=0,in=90] (c0) to (c0-3.north);
\draw[->,out=0,in=90] (c1) to (c1-0.north);
\draw[->,out=0,in=-90] (c1) to (c1-1.south);
\draw[->,out=0,in=90] (c1) to (c1-2.north);
\draw[->,out=0,in=90] (c1) to (c1-3.north);
\draw[->,out=0,in=90] (c2) to (c2-0.north);
\draw[->,out=0,in=90] (c2) to (c2-1.north);
\draw[->,out=0,in=-90] (c2) to (c2-2.south);
\draw[->,out=0,in=90] (c2) to (c2-3.north);
\draw[->,out=0,in=90] (c3) to (c3-0.north);
\draw[->,out=0,in=90] (c3) to (c3-1.north);
\draw[->,out=0,in=90] (c3) to (c3-2.north);
\draw[->,out=0,in=-90] (c3) to (c3-3.south);
\end{scope}
\begin{scope}[ultra thick,dashed]
\draw[<->,out=0,in=0,looseness=1.5] (c3-3) to (c1-3);
\draw[<->,out=0,in=0,looseness=3] (c2-2) to (c1-2);
\end{scope}
\end{scope}
\end{tikzpicture}
\caption{(a)  A weak quasimodel  in Example~\ref{ex:running:weak-quasimodel} and (b) its saturated quasimodel constructed in the proof of Lemma~\ref{lem:weakprequasi:k}, where $\sigma_i$ swaps $(\rho,i)$ with the only copy of $\bar{\rho}_1$ and preserves all other indexed runs.}\label{fig:running:weak-quasimodel}
\end{figure}
In our running
Example~\ref{ex:running:quasimodel}, the frame is already tree-shaped and we use $\avec{w}_0$ and $\avec{v}_i$ to refer to its worlds:  let $\avec{w}_0 = w_0$ and $\avec{v}_i = w_0av_i$, for~$i\in\mathbb{N}$. First, we pick, say, $\rho_0$ from $\runs_{\avec{w}_0,\contp_0}$ and $\rho_1$ from $\runs_{\avec{w}_0,\contp'_0}$, with~$\avec{v}_0$ and~$\avec{v}_1$ being the $\Diamond_aA(x)$-witness at $\avec{w}_0$ for $\rho_0$ and $\rho_1$, respectively (note that $\rho_0$ is uniquely determined, while $\rho_1$ could be any of the $\rho_i$, for $i \geq 1$).
Then the set of worlds~$W'$ comprises~$\avec{w}_0$, $\avec{v}_0$
and~$\avec{v}_1$, and the multiset $\runs'$  of weak runs consists of
the restrictions~$\bar{\rho}_0$ and ~$\bar{\rho}_1$ of the chosen
$\rho_0$ and $\rho_1$ to $W'$ together with $\aleph_0$ copies of the
``unsaturated'' weak run~$\rho$ that maps $\avec{w}_0$ to
$\contp'_0$ and both $\avec{v}_0$ and $\avec{v}_1$ to~$\contp_2$ (these unsaturated weak runs are the restrictions of the~$\rho_i$, for~$i \geq 2$, to $W'$). 
For our prototype weak runs, $\prfun{\avec{w}_0,\contp_0}$ and $\prfun{\avec{w}_0,\contp_0'}$, we take $\bar{\rho}_0$ and $\bar{\rho}_1$, respectively; for $\prfun{\avec{v}_i,\contp_1}$ and $\prfun{\avec{v}_i,\contp_2}$, with $i = 0, 1$, we take any of the weak runs---the choice is irrelevant as $\contp_1$ and $\contp_2$ have no positive $\Diamond_a$-formulas.  
As Lemma~\ref{lem:weakprequasi:k} below shows, this is a finitely representable weak quasimodel satisfying the formula; see Fig.~\ref{fig:running:weak-quasimodel}~(a). Note, however, that the weak quasimodel is not finite as, for example, $\contp_0'$ still has infinite multiplicity in $\avec{w}_0$.

On the other hand, given this small weak quasimodel, we can restore the original quasimodel by creating a separate $a$-successor of $\avec{w}_0$ for each of the copies of the unsaturated weak run $\rho$ and using the prototype weak run~$\bar{\rho}_1$ as a template for witnessing $\Diamond_a A(x)$ in that successor; more details are given in the proof of Lemma~\ref{lemma:k:finite-approx} below.
\end{example}

\begin{lemma}\label{lem:weakprequasi:k}
Let  $\varphi$ be a $\QmonMLc$-sentence.
For both constant and expanding domains, 
for every quasimodel satisfying $\varphi$ based on a $\Tree{d}$ frame, there is a weak quasimodel satisfying $\varphi$  based on a $\Tree{d}$ frame  of size  $2^{O(d\cdot |\varphi|)}$.
\end{lemma}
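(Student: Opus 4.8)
The plan is to obtain the weak quasimodel by \emph{pruning} the given quasimodel $\quasimod = (\Fmf, \funcand, \runs)$ to a tree-shaped subframe of bounded branching, retaining at each world only those successors needed to saturate one chosen representative run per world--type pair. Since $\quasimod$ is a quasimodel, every $\rho \in \runs$ satisfies the saturation condition~\ref{rn:modal2} at each world of its domain, and by~\ref{run:exists} we have $\funcand(\avec{w},\contp) = |\runs_{\avec{w},\contp}| > 0$ for every $\avec{w} \in W$ and every $\contp \in \funcand(\avec{w})$. Hence for each such pair I may fix a run $\rho_{\avec{w},\contp} \in \runs_{\avec{w},\contp}$, which is then saturated at $\avec{w}$. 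Using Lemma~\ref{lem:k:tree-shaped} (equivalently, by passing to the subtree rooted at the world named in~\ref{b1}) I may assume that the type $\contp$ with $\varphi \in \contp$ witnessing~\ref{b1} sits at the root $\avec{w}_0$ of~$\Fmf$.

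Next I would construct the pruned world set $W' \subseteq W$ top-down. Put $\avec{w}_0 \in W'$; and whenever a world $\avec{w}$ is placed in $W'$, then for every $\contp \in \funcand(\avec{w})$, every $a \in A$ and every $\Diamond_a \psi \in \contp$, invoke the saturation of $\rho_{\avec{w},\contp}$ at $\avec{w}$ to choose one successor $\avec{v}$ with $\avec{w} R_a \avec{v}$ and $\psi \in \rho_{\avec{w},\contp}(\avec{v})$, and add this $\avec{v}$ to $W'$. The resulting $W'$ is prefix-closed and so induces a tree-shaped subframe $\Fmf' \in \Tree{d}$. Since the number of types is at most $2^{|\sub[x]{\varphi}|}$, i.e.\ $2^{O(|\varphi|)}$, and the number of formulas $\Diamond_a \psi \in \sub[x]{\varphi}$ is $O(|\varphi|)$, each world acquires at most $2^{O(|\varphi|)}$ successors, so $|W'| \le 2^{O(d \cdot |\varphi|)}$ as required. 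The recursion terminates because $\Fmf$ has depth $\le d$: at a depth-$d$ world no type in the quasistate can contain a positive $\Diamond_a$-formula (its realising runs are saturated yet have no successor), so no further witnesses are demanded.

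I would then set $\funcand'$ to be $\funcand$ restricted to $W'$, take $\runs'$ to be the image multiset of $\runs$ under restriction $\rho \mapsto \rho|_{W'}$ (so $\runs'(\sigma) = \sum_{\rho|_{W'} = \sigma} \runs(\rho)$, allowing the value $\aleph_0$), and put $\prfun{\avec{w},\contp} = \rho_{\avec{w},\contp}|_{W'}$. The verification is then routine. Restricting a run to $W'$ keeps its domain upward-closed in $\Fmf'$ and preserves~\ref{a-r-coh}, since coherence quantifies universally over edges and we only delete edges; thus each $\rho|_{W'}$ is a weak run. Condition~\ref{run:exists} survives because restriction does not alter a run's value at a retained world: for $\avec{w} \in W'$ the runs counted in $\runs_{\avec{w},\contp}$ are exactly those whose restrictions are counted in $\runs'_{\avec{w},\contp}$, with matching multiplicities, so $|\runs'_{\avec{w},\contp}| = |\runs_{\avec{w},\contp}| = \funcand'(\avec{w},\contp)$. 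Condition~\ref{approx:R2} holds by design, as the witnesses $a$-saturating $\rho_{\avec{w},\contp}$ at $\avec{w}$ were explicitly kept in $W'$ and restriction preserves the run's values there; and~\ref{b1} is inherited at $\avec{w}_0$. Finally, if $\quasimod$ is constant-domain, then every run is full, whence $\dom(\rho|_{W'}) = W'$ and the pruned object is constant-domain as well.

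The step I expect to need the most care is the cardinality bookkeeping for~\ref{run:exists}: the restriction map is many-to-one, since distinct runs of $\runs$ may collapse onto a single weak run of $\runs'$, and this must cohere with the infinite ($\aleph_0$) multiplicities. The resolution is that any such collapse can only occur away from a fixed retained world $\avec{w}$ (two runs agreeing on $W' \ni \avec{w}$ already agree at $\avec{w}$), so the cardinality of each $(\avec{w},\contp)$-slice is left untouched by the pushforward.
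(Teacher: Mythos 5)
Your proof is correct and follows essentially the same route as the paper's: a top-down pruning that keeps, for each world--type pair, one saturated representative run and its $\Diamond_a\psi$-witness successors, yielding a prefix-closed subframe of size $2^{O(d\cdot|\varphi|)}$ on which the restricted structure is a weak quasimodel. You merely spell out details the paper leaves as ``it can be seen'' (notably that restriction preserves the multiplicities in each $(\avec{w},\contp)$-slice, and that the root can be taken to be the world witnessing~\ref{b1}), which is fine.
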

\begin{proof}
    Assume a quasimodel $\Qmf = (\Fmf,\funcand,\runs,\prset{})$ satisfying $\varphi$ with  tree-shaped $\Fmf=(W,\{R_{a}\}_{a\in A})$ of finite depth is given.
	We inductively define a sequence $W_{0},W_{1},\cdots W_{d(\Fmf)}\subseteq W$ of sets of worlds with $d(\avec{w})=i$ for all $\avec{w}\in W_{i}$ as follows. Set $W_{0}=\{w_{0}\}$, where  $w_{0}$ is the root of~$\Fmf$, and assume $W_{i}$ has been defined. For each $\avec{w}\in W_{i}$ and each $\contp\in \funcand(\avec{w})$,
	we fix a run \mbox{$\rho = \prfun{\avec{w},\contp} \in \runs_{\avec{w},\contp}$}.
By~\ref{rn:modal2}, for every $\Diamond_a \psi \in  \rho(\avec{w})$, there exists $\avec{v}\in W$ such that $\avec{w}R_{a}\avec{v}$ and $\psi\in \rho(\avec{v})$, so we add $\avec{v}$ to $W_{i+1}$. Thus, $|W_{i+1}| = |W_i| \cdot 2^{|\varphi|} \cdot |\varphi|$. Let $\Qmf'=(\Fmf',\funcand',\runs',\prset{}')$ be the restriction of $\Qmf$ to $W'=\bigcup_{i = 0}^{d(\Fmf)}W_{i}$. It can be seen that $\Qmf'$ is a weak quasimodel required by the lemma: in particular, by construction, each $\prset'(\avec{w},\contp)$, which is the restriction of $\prfun{\avec{w},\contp}$ to $W'$, is $a$-saturated at $\avec{w}$, for each $a\in A$. Moreover, weak quasimodel $\Qmf'$ contains only full runs whenever quasimodel $\Qmf$ does. 
\end{proof}

\begin{lemma}\label{lemma:k:finite-approx}
Let  $\varphi$ be a $\QmonMLc$-sentence.
For both constant and expanding domains, 
for any  weak quasimodel $(\mathfrak{F},\funcand,\runs,\prset{})$ satisfying $\varphi$ based on a $\Tree{d}$ frame, 
there is a quasimodel satisfying $\varphi$ based on a $\Tree{d}$ frame of size  \mbox{$O(|\Fmf|\cdot |\runs|^d)$}. 
\end{lemma}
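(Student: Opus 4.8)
The plan is to saturate the weak quasimodel top-down, repairing the run-saturation condition~\ref{rn:modal2} for \emph{every} run by duplicating successor worlds and routing each unsaturated run through private copies. The key observation is that if two weak runs $\rho$ and $\rho'$ carry the same type at a world, $\rho(\avec{w}) = \rho'(\avec{w})$, then their continuations beyond $\avec{w}$ may be \emph{swapped} without violating coherence~\ref{a-r-coh}: along an edge $\avec{w} R_a \avec{v}$, coherence constrains only $\rho(\avec{w}) \rightarrow_a \rho(\avec{v})$, and the antecedent $\rho(\avec{w})$ is shared. Hence, to saturate a run that carries a type $\contp$ at $\avec{w}$, I would borrow the witnessing successors of the prototype $\prfun{\avec{w}, \contp}$, which is $a$-saturated at $\avec{w}$ for every $a$ by~\ref{approx:R2}, and swap that run with the prototype in fresh copies of those successors, so that it inherits the prototype's witnesses.

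Concretely, I would take the worlds of the new frame $\Fmf'$ to be the original worlds, each tagged with the sequence of swaps performed along the path from the root; this yields a tree-shaped frame that projects onto $\Fmf$ via $\avec{u} \mapsto \avec{w}$, preserving depth, so $\Fmf' \in \Tree{d}$. I set $\funcand'(\avec{u}) = \funcand(\avec{w})$ and let the indexed runs through $\avec{u}$ be $\hat{\runs}_{\avec{w}}$ re-indexed by the accumulated swap. Processing a world $\avec{u}$ over $\avec{w}$: for each type $\contp \in \funcand(\avec{w})$ with prototype $\pi = \prfun{\avec{w}, \contp}$, I retain the (finitely many) witnessing successors of $\pi$ as children of $\avec{u}$, and then, for each indexed run through $\avec{u}$ of type $\contp$ other than the image of $\pi$ and each such witness $\avec{v}$, I add a fresh child $(\avec{v}, \sigma)$, where $\sigma$ transposes that run with $\pi$. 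In $(\avec{v}, \sigma)$ the run inherits the type $\pi(\avec{v})$ and thus the corresponding witness. I then recurse into every child, using at $\avec{v}$ the prototypes $\prfun{\avec{v}, \cdot}$ for the next round of repairs; the recursion halts at depth $d$.

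It then remains to verify the quasimodel conditions. Coherence~\ref{rn:modal} for all new runs is inherited from the weak runs of $\Qmf$ via the swap observation. Saturation~\ref{rn:modal2} holds at $\avec{u}$ because every $\Diamond_a\psi$ in a run's type there is witnessed by $\pi$ at some $\avec{v}$ by~\ref{approx:R2}, and the dedicated copy $(\avec{v}, \sigma)$ transfers this witness; at the leaves the prototypes have no successors, so by~\ref{approx:R2} their types---hence all types realised at that leaf---contain no positive diamonds, making saturation vacuous. The counting condition~\ref{run:exists} holds since each swap is a bijection on the indexed runs, so the multiset of types realised at $\avec{u}$ equals $\funcand(\avec{w}) = \funcand'(\avec{u})$; and~\ref{b1} is inherited from the root. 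For the size bound, each original world at depth $i$ gets at most one copy per pair (copy of its parent, run being repaired), i.e.\ a factor of at most $|\runs|$ per level, giving at most $|\runs|^i \leq |\runs|^d$ copies and $O(|\Fmf| \cdot |\runs|^d)$ worlds overall. Finally, full runs are routed to full runs, so constant-domain weak quasimodels yield constant-domain quasimodels, while for expanding domains the upward-closed run domains are respected.

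The main obstacle is the bookkeeping needed to keep~\ref{run:exists} intact at \emph{every} new world while handing each run its own witnesses: the repairs must not disturb witnesses already secured for other runs nor inflate any quasistate, which is precisely why they are realised as type-preserving transpositions of equal-type runs and each run is sent into its own copies of the prototype's successors. Handling infinite multiplicities ($\aleph_0$) uniformly through indexed runs, and checking that the per-level blow-up is governed by $|\runs|$ rather than by the branching of $\Fmf$, are the details that demand the most care.
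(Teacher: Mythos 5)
Your proposal is correct and follows essentially the same route as the paper's proof: worlds of the new frame are the original worlds annotated with the sequence of type-preserving transpositions (each swapping an indexed run with the indexed prototype of its type) accumulated along the path from the root, the prototype's witnesses are copied so that every run inherits them in its own private successor copies, and \textbf{(card)} is preserved because each swap is a bijection on indexed runs. The per-level blow-up by a factor of $|\runs|$ and the treatment of leaves, infinite multiplicities and expanding domains all match the paper's argument.
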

\begin{proof}
	Let $\Qmf = (\mathfrak{F},\funcand,\runs,\prset{})$ be a  weak quasimodel satisfying $\varphi$ based on a tree-shaped  frame $\Fmf=(W,\{R_a\}_{a\in A})$ of depth $d$. 
	We construct a new frame using copies of the worlds and extend the existing weak runs to the world copies in such a way that each run becomes saturated, that is, satisfying~\mbox{\ref{rn:modal2}}. More precisely, by~\ref{approx:R2}, for each $\avec{w}\in W$ and each $\contp\in \funcand(\avec{w})$, 
	we take the prototype weak
	 run $\prfun{\avec{w},\contp} \in \runs_{\avec{w},\contp}$, which is $a$-saturated at~$\avec{w}$, for all $a\in A$: for every $\Diamond_a \psi\in\contp$, there is  a $\Diamond_a \psi$-witness world $\avec{v}_{\avec{w},\contp,\Diamond_a\psi}\in W$ such that $\avec{w}R_a\avec{v}_{\avec{w},\contp,\Diamond_a\psi}$ and $\psi\in\prfun{\avec{w},\contp}(\avec{v}_{\avec{w},\contp,\Diamond_a\psi})$. Then, for each indexed weak run $(\rho,\ell)$ that coincides with $\prfun{\avec{w},\contp}$ on $\avec{w}$, we create a copy $\avec{v}_{\rho,\ell}$ of the witness world $\avec{v}_{\avec{w},\contp,\Diamond_a\psi}$ as an $a$-successor for $\avec{w}$ and extend $(\rho,\ell)$ to this world so that it coincides with the prototype~$\prfun{\avec{w},\contp}$ on the $\Diamond_a \psi$-witness world, thus making the extended~$(\rho,\ell)$ saturated for $\Diamond_a \psi$ at $\avec{w}$; all other runs are extended to~$\avec{v}_{\rho,\ell}$ in exactly the same way as they are defined on $\avec{v}_{\avec{w},\contp,\Diamond_a\psi}$. In order to preserve the multiplicity constraints, we use bijections that `swap' the types of $(\rho,\ell)$ and $(\prfun{\avec{w},\contp},0)$ on $\avec{v}_{\rho,\ell}$, while preserving all other types.   
	 
	 Formally, 
	we consider the set $\hat{\runs}$ of indexed weak runs $(\rho,\ell)$ associated with the multiset~$\runs$ and denote the set of all bijections on~$\hat{\Rmf}$ by~$B(\hat{\Rmf})$.
	Then, for each $\avec{w}\in W$, we define the \emph{$\avec{w}$-repair set} $\textrm{Rep}(\avec{w})$ by taking 
\begin{equation*}	
\textrm{Rep}(\avec{w}) = \bigl\{ \sigma_{\avec{w},\rho, \ell}\in B(\hat{\Rmf}) \mid (\rho, \ell) \in \hat{\Rmf}_{\avec{w}}\bigr\}, 
\end{equation*}
where  
	$\sigma_{\avec{w},\rho, \ell}$ is the bijection on $\hat{\Rmf}$ that  swaps $(\rho, \ell)$ with
	the indexed prototype weak run $(\prfun{\avec{w},\contp}, 0)$ for $\avec{w}$ and $\contp=\rho(\avec{w})$ and maps the remaining indexed weak runs to themselves.  Observe that, for any $\sigma\in \textrm{Rep}(\avec{w})$ and any indexed run $(\rho,\ell)\in\runs$, either both the domain of $\rho$ and the domain of its image $\sigma(\rho,\ell)$ contain $\avec{w}$ or neither contains $\avec{w}$.	
	Note that, as each $\hat{\Rmf}_{\avec{w}}$ is non-empty, the set $\textrm{Rep}(\avec{w})$ is also non-empty and contains 
	the identity function $\textit{id}$ on $\hat{\runs}$, which swaps \emph{any} indexed prototype weak run with itself.
	
Now, for each word 
\begin{equation*}
\avec{w} = w_0a_0w_1a_1\cdots a_{m-1}w_{m} \in W, 
\end{equation*}
we construct words of the form 
\begin{equation}\label{eqveru:k}
		\avec{u}=(w_{0},\sigma_0)a_{0}(w_{1},\sigma_1)a_{1} \cdots a_{m-1}(w_{m},\sigma_m),
\end{equation}
where $\sigma_0= \textit{id}$ and $\sigma_{j+1}\in \textrm{Rep}(\avec{w}_j)$ with $\avec{w}_j = w_0a_0w_1a_1\cdots a_{j-1}w_{j}$, for each $0\leq j<m$. We denote $\avec{w}$ by $\gen{\avec{u}}$ and $\sigma_m \circ \cdots \circ \sigma_1\in B(\hat{\runs})$ by $\rep{\avec{u}}$.  
Let~$W'$ be the set of all words of the form~\eqref{eqveru:k} for all $\avec{w}\in W$.
Consider a tree-shaped frame $\Fmf' = (W', \{R_a'\}_{a\in A})$, where $R_a' = \{ (\avec{u},\avec{u}a(w,\sigma)) \in W'\times W' \}$, for $a\in A$. Clearly, it is as required by the lemma.
	Define~$\funcand'$ by setting $\funcand'(\avec{u})= \funcand(\gen{\avec{u}})$ for any $\avec{u}\in W'$.
	
We now define a multiset $\Rmf'$ of runs through $\Fmf'$. 
Denote by $\textit{rep}$ the function mapping any $(\rho,\ell)\in \hat{\Rmf}$ to a weak run $\rho'$ 
on $W'$ defined by setting 
\begin{equation*}
\rho'(\avec{u}) = \rep{\avec{u}}(\rho, \ell)(\gen{\avec{u}}), \text{ for any } \avec{u}\in W' \text{ with } \gen{\avec{u}} \in \dom \rho.
\end{equation*}
By induction on the depth of $\avec{u}\in W'$, we can show that $\rho'(\avec{u})$ is well-defined. Indeed, this is trivial for the only $\avec{u}$ of depth~0 as it is of the form $(w_0,\textit{id})$. Suppose $\rho'(\avec{u})$ is well-defined for all $\avec{u}\in W'$ of depth $m$. Consider any $\avec{v}\in W'$ of depth $m+1$, which is of the form $\avec{u}a(w,\sigma)$ for some $\avec{u}\in W'$, $w$ and $\sigma\in \textrm{Rep}(\gen{\avec{u}})$. By definition, we have $\gen{\avec{v}} = \gen{\avec{u}}aw$ and $\rep{\avec{v}} = \sigma \circ \rep{\avec{u}}$. If $\gen{\avec{u}}\in\dom\rho$, then, by IH, $\rep{\avec{u}}(\rho,\ell)$ is defined on $\gen{\avec{u}}$. As we observed above,  for any $\sigma\in \textrm{Rep}(\gen{\avec{u}})$, the run $\sigma(\rep{\avec{u}}(\rho,\ell))$ is defined on $\gen{\avec{u}}$ and so  on $\gen{\avec{v}}$ (as $\gen{\avec{u}}R_a\gen{\avec{v}}$).
If  $\gen{\avec{u}}\notin\dom \rho$, then  $\sigma = \textit{id}$ and in fact $ \rep{\avec{u}} = \textit{id}$, whence  $\rep{\avec{v}}(\rho,\ell)  = (\rho,\ell)$ and so, $\rep{\avec{v}}(\rho,\ell)(\gen{\avec{v}})$ is defined whenever $\gen{\avec{v}} \in \dom \rho$. 

Let $\Rmf'$ be the multiset of weak runs $\rho'$ such that 
\begin{equation*}
\Rmf'(\rho') = \bigl|\bigl\{ (\rho,\ell)\in\hat{\runs} \mid \textit{rep}(\rho,\ell) = \rho' \bigr\}\bigr|.
\end{equation*}
Due to this multiplicity condition, $\textit{rep}$ can be extended to a bijection from~$\hat{\Rmf}$ onto the set $\hat{\Rmf}'$ of indexed weak runs in $\Rmf'$ (note, however,  that the indexes of $\rho' = \textit{rep}(\rho,\ell)$ will in general be unrelated to $\ell$).
	
	We show that $\Qmf' = (\mathfrak{F}',\funcand', \Rmf')$  is a required quasimodel. Condition~\ref{b1} is straightforward.
	We prove~\ref{run:exists} for (weak) runs in $\runs'$ using~\ref{run:exists} for weak runs in $\runs$.
	To this end, we show that
	the composition $\rep{\avec{u}}\circ \textit{rep}^{-1}$ 
%
%
is a bijection from $\hat{\runs}'_{\avec{u},\contp}$ onto $\hat{\runs}_{\gen{\avec{u}},\contp}$, for any 
	$\avec{u}\in W'$ 
	and any $\contp\in\funcand'(\avec{u})$. First, for any $(\rho',\ell') \in \hat{\runs}'_{\avec{u},\contp}$, we have $\rep{\avec{u}}(\textit{rep}^{-1}(\rho',\ell'))\in \hat{\runs}_{\gen{\avec{u}},\contp}$: denote $(\rho,\ell) = \textit{rep}^{-1}(\rho',\ell')$; then $\rep{\avec{u}}(\textit{rep}^{-1}(\rho',\ell'))(\gen{\avec{u}}) = \rep{\avec{u}}(\rho,\ell)(\gen{\avec{u}}) = \rho'(\avec{u}) = \contp$, as required. The composition is injective as  both $\rep{\avec{u}}$ and $\textit{rep}$ are bijections. Finally, the composition is surjective: for any $(\rho,\ell)\in \hat{\runs}_{\gen{\avec{u}},\contp}$, consider $(\rho',\ell') = \textit{rep}(\rep{\avec{u}}^{-1}(\rho,\ell))$; we have $\rho'(\avec{u}) = \rep{\avec{u}}(\rep{\avec{u}}^{-1}(\rho,\ell))(\gen{\avec{u}}) = \rho(\gen{\avec{u}}) = \contp$, and so $(\rho',\ell')\in\hat{\runs}'_{\avec{u},\contp}$.

	We next show that the elements of $\Rmf'$ are indeed runs, that is, they satisfy~\mbox{\ref{rn:modal}} and~\ref{rn:modal2}. Let $\Diamond_a \psi \in \sub[x]{\varphi}$. Consider $\rho'\in \Rmf'$ and $\avec{u}\in W'$ of the form~\eqref{eqveru:k}. We have $\rho' = \textit{rep}(\rho,\ell)$ for some $(\rho,\ell)\in\hat{\runs}$.
	
	For~\ref{rn:modal}, assume there exists $\avec{v} \in W'$ such that $\avec{u}R_a'\avec{v}$ and $\psi \in \rho'(\avec{v})$. By construction, $\gen{\avec{u}}R_a \gen{\avec{v}}$.
		Since $\rho'(\avec{v}) = \rep{\avec{v}}(\rho,\ell)(\gen{\avec{v}})$, by coherence of the weak run $\rep{\avec{v}}(\rho, \ell)$, we obtain $\Diamond_a\psi\in \rep{\avec{v}}(\rho, \ell)(\gen{\avec{u}})$. 
		By definition, $\rep{\avec{v}} = \sigma(\rep{\avec{u}})$, for some $\sigma\in\textrm{Rep}(\gen{\avec{u}})$, and so $\rep{\avec{u}}(\rho,\ell)$ coincides with $\rep{\avec{v}}(\rho,\ell)$ on $\gen{\avec{u}}$. It follows that $\Diamond_a\psi\in \rep{\avec{u}}(\rho,\ell)(\gen{\avec{u}}) = \rho'(\avec{u})$.
	
	For~\ref{rn:modal2}, assume $\Diamond_a\psi \in \rho'(\avec{u})$. 
Let $\contp = \rep{\avec{u}}(\rho,\ell)(\gen{\avec{u}})$. We have \mbox{$\Diamond_a\psi\in \contp$}.
		 Condition~\ref{approx:R2} provides a prototype weak run $\prfun{\gen{\avec{u}}, \contp}$ for $\gen{\avec{u}}$ and $\contp$ and a witness  $\avec{w}\in W$ such that $\gen{\avec{u}} R_a \avec{w}$ and $\psi \in \prfun{\gen{\avec{u}}, \contp}(\avec{w})$.	By construction, we have $\avec{w} = \gen{\avec{u}}a w$, for some~$w$. 
Consider $\avec{v}= \avec{u}a(w,\sigma)$, where $\sigma \in\textrm{Rep}(\gen{\avec{u}})$ swaps $\rep{\avec{u}}(\rho,\ell)$ with $(\prfun{\gen{\avec{u}},\contp},0)$. By definition, we have $\gen{\avec{v}} = \avec{w}$,
		$\avec{u}R_a'\avec{v}$ and $\rho'(\avec{v})=\sigma(\rep{\avec{u}}(\rho,\ell))(\avec{w})=\prfun{\gen{\avec{u}},\contp}(\avec{w})$. So, $\psi \in\rho'(\avec{v})$, as required.
		
This completes the proof of Lemma~\ref{lemma:k:finite-approx}.	
\end{proof}

\section{Weak Quasimodels  for $\QmonMLc$ over $\Sfiven$ Frames}\label{sec:s5n:weak:quasimodels}

In this section, we consider interpretations constructed from $\Sfiven$ frames.
Recall that a frame $\Fmf=(W,\{R_a\}_{a\in A})$ is an  $\Sfiven$ frame if all $R_a$ are equivalence relations. Since the $R_a$ are symmetric, any interpretation over an $\Sfiven$ frame by definition has constant domains (so, we shall omit the constant-domain qualifier for the rest of the section). 

An $\Sfiven$ frame is 
called \emph{tree-shaped} if there exists a $w_{0}\in W$, the root of~$\Fmf$,
such that the domain $W$ of~$\Fmf$ is a prefix-closed set of 
words of the form 
\begin{equation}\label{eq:world}
	\avec{w}=w_0a_0w_1a_1\cdots a_{m-1}w_{m},
\end{equation}
where $a_j\in A$, $a_{j}\ne a_{j+1}$, and each $R_a$ is 
the smallest equivalence relation containing all pairs of the form $(\avec{w},\avec{w}aw)\in W\times W$. 
We define the immediate $a$-successor relation $\prec_a$, for $a\in A$, by taking $\avec{w} \prec_a \avec{w}'$ iff $\avec{w}'$ is of the form $\avec{w}aw$, for some $w$. We also write $\avec{w}\preceq_a \avec{w'}$ iff either $\avec{w} = \avec{w}'$ or $\avec{w}\prec_a \avec{w}'$.  
Observe that, by definition, such frames have no $\prec_a$-chains of length greater than one; in other words, each world has no $\prec_a$-predecessor or has no $\prec_a$-successors or has neither. It follows that
\begin{multline}\label{eq:s5n:frame}
\avec{w}R_a \avec{w}' \quad \text{ iff } \quad  \avec{w}\prec_a \avec{w}' \quad \text{ or } \quad \avec{w}' \prec_a \avec{w}  \quad\text{ or }\quad \avec{w} = \avec{w}'\\ \text{ or there is } \avec{v}\in W \text{ with } \avec{v}\prec_a \avec{w}  \ \text{ and } \avec{v}\prec_a \avec{w}'. 
\end{multline}
An example is shown in Fig.~\ref{fig:s5:frame}, where relations $\prec_a$ and $\prec_b$ are depicted by solid and dotted arrows, respectively.  Then $\avec{w}$, $\avec{w}aw_1$ and $\avec{w}aw_2$ are in the same $a$-equivalence class, with $\avec{w}$ having no $\prec_a$-predecessor and the other two having no $\prec_a$-successors, but $\avec{w}bw_3$, $\avec{w}bw_3aw_4$ and $\avec{w}bw_3aw_5$ form another $a$-equivalence class, with $\avec{w}bw_3$ having no $\prec_a$-predecessor and the other two no $\prec_a$-successors; $\avec{w}$ and $\avec{w}bw_3$ form a $b$-equivalence class. 
\begin{figure}[t]
\centering%
\begin{tikzpicture}[>=latex,
nd/.style={draw,circle,thick,inner sep=0pt,minimum size=2mm},
ndb/.style={fill=gray!20,circle,inner sep=0pt,minimum size=5mm}
]
\begin{scope}
\node[ndb] (bw0) at (0,0) {};
\node[ndb] (bw1) at (-2.4,1) {};
\node[ndb] (bw2) at (0,1) {};
\node[ndb] (bw3) at (2.4,1) {};
\node[ndb] (bw4) at (1.2,2) {};
\node[ndb] (bw5) at (3.6,2) {};
\fill[gray!20] (bw0.south) -- (bw1.south) -- (bw1.north) -- (bw2.north) -- (bw2.east) -- (bw0.east) -- cycle;
\fill[gray!20] (bw3.south west) -- (bw4.south west) -- (bw4.north) -- (bw5.north) -- (bw5.south east) -- (bw3.south east) -- cycle;
\end{scope}
\node[nd,fill=white,label=below:{$\avec{w}$}] (w0) at (bw0) {};
\node[nd,fill=white,label=left:{$\avec{w}aw_1$}] (w1) at (bw1) {};
\node[nd,fill=white,label=right:{$\avec{w}aw_2$}] (w2) at (bw2) {};
\node[nd,fill=white,label=below:{$\avec{w}bw_3$}] (w3) at (bw3) {};
\node[nd,fill=white,label=left:{$\avec{w}bw_3aw_4$}] (w4) at (bw4) {};
\node[nd,fill=white,label=right:{$\avec{w}bw_3aw_5$}] (w5) at (bw5) {};
\begin{scope}[thick]
\draw[->] (w0) -- (w1);
\draw[->] (w0) -- (w2);
\draw[->,dotted] (w0) -- (w3);
\draw[->] (w3) -- (w4);
\draw[->] (w3) -- (w5);
\end{scope}
\end{tikzpicture}
\caption{An example of a tree-shaped $\Sfiven$ frame.}\label{fig:s5:frame}
\end{figure}

Let $\SfTree{d}$ denote the set of all tree-shaped $\Sfiven$ frames of depth bounded by $d$, where the notion of depth is defined as in Section~\ref{sec:kn:weak:quasimodels}.
 We then have the following counterpart of Lemma~\ref{lem:k:tree-shaped}.


\begin{lemma}\label{lem:s5:tree-shaped}
Every $\Sfiven$-satisfiable  $\QMLc$-formula $\varphi$ is $\SfTree{d(\varphi)}$-satisfiable.
\end{lemma}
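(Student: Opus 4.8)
The plan is to mirror the unfolding construction used for Lemma~\ref{lem:k:tree-shaped}, adapting it to the equivalence-relation setting. Starting from an $\Sfiven$ interpretation $\Mmf = (\Fmf, \Delta, \cdot)$ with $\Fmf = (W, \{R_a\}_{a\in A})$ and $\Mmf, w_0 \models^{\mathfrak{a}} \varphi$, I would unfold $\Fmf$ into a tree-shaped $\Sfiven$ frame $\Fmf^* = (W^*, \{R_a^*\}_{a\in A})$ whose worlds are the words $\avec{w} = w_0 a_0 w_1 \cdots a_{m-1} w_m$ of the form~\eqref{eq:world} satisfying $d(\avec{w}) \le d(\varphi)$, $a_j \ne a_{j+1}$, and---crucially, so that the tail map projects $\Fmf^*$ onto $\Fmf$---$w_j \mathrel{R_{a_j}} w_{j+1}$ in $\Fmf$ for each $0 \le j < m$. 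Each $R_a^*$ is taken to be the smallest equivalence relation containing the immediate $a$-successor pairs, so $\Fmf^* \in \SfTree{d(\varphi)}$ by construction. The interpretation $\Mmf^* = (\Fmf^*, \Delta, \cdot^*)$ is defined through the tail, setting $c^{\Mmf^*(\avec{w})} = c^{\Mmf(\textit{tail}(\avec{w}))}$ and $P^{\Mmf^*(\avec{w})} = P^{\Mmf(\textit{tail}(\avec{w}))}$; note $\Mmf^*$ is a constant-domain interpretation and, by design of $R_a^*$, an $\Sfiven$ one.

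The first key step is a projection property: whenever $\avec{w} \mathrel{R_a^*} \avec{v}$, we have $\textit{tail}(\avec{w}) \mathrel{R_a} \textit{tail}(\avec{v})$. This follows by inspecting the four disjuncts of~\eqref{eq:s5n:frame} and using that $R_a$ is an equivalence relation in $\Fmf$: the two $\prec_a$-step cases and the equality case are immediate from the word-formation constraint together with symmetry and reflexivity, and the ``common $\prec_a$-predecessor'' case $\avec{w} = \avec{z} a u$, $\avec{v} = \avec{z} a u'$ is handled by combining $\textit{tail}(\avec{z}) \mathrel{R_a} u$ and $\textit{tail}(\avec{z}) \mathrel{R_a} u'$ via symmetry and transitivity.

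The core is then the truth-preservation induction
\begin{equation*}
\Mmf^*, \avec{w} \models^{\mathfrak{b}} \psi \quad \text{iff} \quad \Mmf, \textit{tail}(\avec{w}) \models^{\mathfrak{b}} \psi,
\end{equation*}
to be established for all $\mathfrak{b}$, all subformulas $\psi$ of $\varphi$, and all $\avec{w} \in W^*$ with $d(\avec{w}) + d(\psi) \le d(\varphi)$; applying it at the root $w_0$ with $\psi = \varphi$ then gives the lemma. The atomic, Boolean and $\exists$ cases are routine and identical to Lemma~\ref{lem:k:tree-shaped}. For $\Diamond_a\psi$, the left-to-right direction uses the projection property together with the induction hypothesis, noting that every $R_a^*$-successor of $\avec{w}$ has depth at most $d(\avec{w})+1$, so the depth budget is respected.

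I expect the right-to-left direction of the $\Diamond_a$ case to be the main obstacle, since in a tree-shaped $\Sfiven$ frame a $\Diamond_a$-witness need not be a child of $\avec{w}$: it may be $\avec{w}$ itself, a sibling, or the $\prec_a$-predecessor. Given $v \in W$ with $\textit{tail}(\avec{w}) \mathrel{R_a} v$ and $\Mmf, v \models^{\mathfrak{b}} \psi$, I must produce $\avec{v} \in W^*$ with $\textit{tail}(\avec{v}) = v$ and $\avec{w} \mathrel{R_a^*} \avec{v}$. If $\avec{w}$ does not end in an $a$-step (i.e.\ $a_{m-1} \ne a$, or $\avec{w}$ is the root), then $\avec{v} = \avec{w} a v$ is a legal word with $\avec{w} \prec_a \avec{v}$. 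If $\avec{w}$ does end in an $a$-step, say $\avec{w} = \avec{w}' a w_m$, I instead extend the $\prec_a$-predecessor, setting $\avec{v} = \avec{w}' a v$; here the word-formation constraint $\textit{tail}(\avec{w}') \mathrel{R_a} v$ holds by transitivity from $\textit{tail}(\avec{w}') \mathrel{R_a} w_m$ and $w_m = \textit{tail}(\avec{w}) \mathrel{R_a} v$, while $\avec{w}' \prec_a \avec{w}$ and $\avec{w}' \prec_a \avec{v}$ give $\avec{w} \mathrel{R_a^*} \avec{v}$ via the last disjunct of~\eqref{eq:s5n:frame}. In both cases $d(\avec{v}) \le d(\avec{w}) + 1$ keeps $\avec{v}$ within $\SfTree{d(\varphi)}$ and within the depth budget for $\psi$, so the induction hypothesis yields $\Mmf^*, \avec{v} \models^{\mathfrak{b}} \psi$ and hence $\Mmf^*, \avec{w} \models^{\mathfrak{b}} \Diamond_a \psi$.
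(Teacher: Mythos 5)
Your proposal is correct and follows essentially the same route as the paper: the same unfolding into words of bounded depth, the same tail-based interpretation, and the same truth lemma with the depth budget $d(\avec{w}) + d(\psi) \le d(\varphi)$, with the $\Diamond_a$ witness found either as a fresh child of $\avec{w}$ or as a sibling under its $\prec_a$-predecessor. The only (harmless) deviations are that you make the projection property $\avec{w}\mathrel{R_a^*}\avec{v} \Rightarrow \textit{tail}(\avec{w})\mathrel{R_a}\textit{tail}(\avec{v})$ explicit where the paper is terse, and that by omitting the paper's extra constraint $w_j \ne w_{j+1}$ on words you get a slightly larger (but still legitimate, depth-bounded) tree and can fold the paper's separate ``$v = \textit{tail}(\avec{w})$'' case into your two-case analysis.
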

\begin{proof}
Let $\Mmf = (\Fmf, \Delta, \cdot)$ with $\Fmf = (W, \{R_a\}_{a\in A})$ and  $\Mmf, w_0 \models^{\mathfrak{a}} \varphi$, for some $\mathfrak{a}$, be given. Unfold $\Fmf$ into 
$\Fmf^* = (W^*, \{R_a^*\}_{a\in A})$,
where $W^*$ is the set of all words $\avec{w}$ of the form~\eqref{eq:world} with $d(\avec{w}) \leq d(\varphi)$ and $(w_j, w_{j+1})\in R_{a_j}$, $w_j \ne w_{j+1}$ and $a_j \ne a_{j+1}$, for all $0 \leq j < m$, and $R_a^*$ is the smallest equivalence relation containing all pairs $(\avec{w}, \avec{w}aw)\in W^*\times W^*$.  Clearly, $\Fmf^*$ is a tree-shaped $\Sfiven$ frame of depth bounded by $d(\varphi)$. As in Lemma~\ref{lem:k:tree-shaped}, define $\Mmf^* = (\Fmf^*, \Delta^*, \cdot^*)$ by taking $\Delta^*_{\avec{w}} = \Delta_{\textit{tail}(\avec{w})}$, $c^{\Mmf^*(\avec{w})} = c^{\Mmf(\textit{tail}(\avec{w}))}$ and $P^{\Mmf^*(\avec{w})} = P^{\Mmf(\textit{tail}(\avec{w}))}$ for every constant $c$, predicate~$P$ and $\avec{w}\in W^*$.  By induction on the structure of $\varphi$, one can show that 
\begin{equation}\label{eq:unfolding:induction}
\Mmf^*, \avec{w} \models^{\mathfrak{b}} \psi \quad\text{ iff }\quad \Mmf, \textit{tail}(\avec{w}) \models^{\mathfrak{b}}\psi,
\end{equation}
for all assignments~$\mathfrak{b}$, subformulas $\psi$ of
$\varphi$ and words $\avec{w}\in W^*$ such that $d(\avec{w}) + d(\psi)
\leq d(\varphi)$. The claim of the lemma will then be immediate. 

In~\eqref{eq:unfolding:induction}, the base case and the cases of the Boolean connectives and the quantifier are as in Lemma~\ref{lem:k:tree-shaped}.
It remains to consider the case of $\Diamond_a \psi$. Suppose first $\Mmf^*, \avec{w} \models^{\mathfrak{b}} \Diamond_{a}\psi$. Then there is $\avec{v}\in W$ with $\avec{w}R^*_a\avec{v}$ and $\Mmf^*, \avec{v} \models^{\mathfrak{b}} \psi$. By~\eqref{eq:s5n:frame},  $d(\avec{v}) \leq d(\avec{w}) + 1$.
Then, as $d(\psi) = d(\Diamond_{a}\psi) - 1$, by IH, $\Mmf, \textit{tail}(\avec{v}) \models^{\mathfrak{b}} \psi$. Since $R_a$ is reflexive, we have $\textit{tail}(\avec{w})R_a\textit{tail}(\avec{v})$, whence  $\Mmf, \textit{tail}(\avec{w}) \models^{\mathfrak{b}}\Diamond_{a}\psi$. Conversely, assume that $\Mmf, \textit{tail}(\avec{w}) \models^{\mathfrak{b}}\Diamond_{a}\psi$. 
Then there is $v\in W$ such that $\textit{tail}(\avec{w})R_a v$ and
\mbox{$\Mmf, v \models^{\mathfrak{b}}\psi$}. If $\textit{tail}(\avec{w})=v$, then $\Mmf^*, \avec{w} \models^{\mathfrak{b}}\psi$ by IH and $\Mmf^*, \avec{w} \models^{\mathfrak{b}}\Diamond_{a}\psi$ due to reflexivity of $R_a^*$. 
Otherwise, there are two further cases to consider. If  $\avec{w}$ has no $\prec_a$-predecessor, then, as $\textit{tail}(\avec{w})\ne v$, we have $\avec{w}R_a^*\avec{w}av$, whence $\Mmf^*, \avec{w} \models^{\mathfrak{b}}\Diamond_{a}\psi$ follows again by IH as \mbox{$d(\psi) = d(\Diamond_{a}\psi) - 1$}. Finally, we have $\avec{w}'$ with $\avec{w}'\prec_a \avec{w}$ and $\textit{tail}(\avec{w})\ne v$. 
It follows that $\textit{tail}(\avec{w}') R_a \textit{tail}(\avec{w})$, which together with $\textit{tail}(\avec{w})R_a v$ by transitivity implies $\textit{tail}(\avec{w}')R_a v$.
Thus, we have either $\textit{tail}(\avec{w}')=v$ or $\avec{w}' R_a^* \avec{w}'av$. In the former case, by symmetry of
$R_{a}^{\ast}$, we have $\avec{w}R_a^*\avec{w}'$, whence, by IH, \mbox{$\Mmf^*, \avec{w}' \models^{\mathfrak{b}}\psi$} and thus $\Mmf^*, \avec{w} \models^{\mathfrak{b}}\Diamond_{a}\psi$.
In the latter case, by symmetry of $R^*_a$, we obtain $\avec{w} R_a^* \avec{w}'$, whence, by transitivity, $\avec{w}R_a^* \avec{w}' av$. Then, by IH, $\Mmf^*, \avec{w}' av\models^{\mathfrak{b}}\psi$ and so $\Mmf^*, \avec{w} \models^{\mathfrak{b}}\Diamond_{a}\psi$.
\end{proof}
 
 
We now define a notion of $\Sfiven$ weak quasimodels, which differ from $\Kn$ weak quasimodels.
Let $\varphi$ be a $\QmonMLc$-sentence and $\Fmf = (W, \{R_a\}_{a\in A})$ a tree-shaped $\Sfiven$   frame.
The notion of coherence between types reflects the fact that the $R_a$ are equivalence relations:  we write $\contp \leftrightarrow_a \contp'$ if, for all $\Diamond_a \psi \in \sub[x]{\varphi}$,
\begin{gather*}
\psi \in \contp\cup \contp'  \quad\text{ implies }\quad \Diamond_a\psi\in \contp,\contp',\\
\Diamond_a\psi\in \contp  \quad\text{ iff }\quad \Diamond_a\psi\in \contp'.
\end{gather*}
It follows that each $\leftrightarrow_a$ is an equivalence relation on types for $\varphi$.  
Thus, the run coherence condition~\ref{rn:modal} in $\Sfiven$ is equivalent to the following: for all $a\in A$,
\begin{enumerate}
	[label={\bf ($a$-r-coh-s5)},leftmargin=*,series=run]	
\item $\rho(\avec{w}) \leftrightarrow_a \rho(\avec{v})$,  for every $\avec{w},\avec{v}\in W$ with $\avec{w}R_a\avec{v}$.
\end{enumerate}
%

A function $\rho$ from $W$ to the set of types for $\varphi$ is called a \emph{weak $\Sfiven$ run} if $\rho(\avec{w}) \leftrightarrow_a \rho(\avec{v})$, for every $a\in A$ and every $\avec{w},\avec{v}\in W$ with  $\avec{w}R_a\avec{v}$.
A \emph{weak  $\Sfiven$ quasimodel for~$\varphi$} is a quadruple $\quasimod =
(\Fmf, \funcand, \runs,\prset)$, where $(\Fmf, \funcand)$ is a basic structure for
$\varphi$ based on a tree-shaped $\Sfiven$ frame,  $\runs$ is a multiset of weak $\Sfiven$ runs through $(\Fmf, \funcand)$ such that~\ref{run:exists} holds for weak runs in $\runs$ and $\prset$ is a prototype function satisfying 
the following:
\begin{enumerate}
	[label=\textbf{(wq-sat-s5)},leftmargin=*,series=run]
	\item for every $\avec{w}\in W$ and $\contp\in\funcand(\avec{w})$,
	there exists  $\prfun{\avec{w}, \contp}\in \runs_{\avec{w},\contp}$  which 
        is $a$-saturated at $\avec{w}$, for every $a\in A$ such that $\avec{w}$ has  no $\prec_a$-predecessor.\label{approx:R2:s5}
\end{enumerate}
When clear from the context, we will omit the $\Sfiven$ qualifier for these notions.
%
Note that~\ref{approx:R2:s5} 
does \emph{not} require witnesses for $\Diamond_a\psi$ at worlds $\avec{w}$ that have $\prec_a$-predecessors, which means that, due to the definition of $R_a$, each witness~$\avec{v}$ for a $\Diamond_a\psi$ satisfies $\avec{w}\preceq_a \avec{v}$ (rather than just $\avec{w}R_a\avec{v}$ as in $\Kn$ quasimodels, which due to symmetry of $R_a$ may mean $\avec{v}\prec_a\avec{w}$). As before, every quasimodel is by definition a weak quaismodel, but we can also extract a weak quasimodel of small (exponential) size. And, conversely, any weak quasimodel can be ``saturated'' to obtain a quasimodel.


\begin{lemma}\label{lem:weakprequasi:s5}
Let  $\varphi$ be a $\QmonMLc$-sentence.
For every quasimodel satisfying $\varphi$ based on an $\SfTree{d}$ frame, there is a weak quasimodel satisfying~$\varphi$ based on an $\SfTree{d}$ frame  of size 
$2^{O(d \cdot |\varphi|)}$.
\end{lemma}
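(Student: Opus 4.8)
The plan is to mirror the $\Kn$ argument of Lemma~\ref{lem:weakprequasi:k}: starting from the given quasimodel, I would carve out a bounded-size prefix-closed sub-frame by selecting, level by level, one prototype run for each type occurring at each world, together with just enough witnesses to saturate those prototypes, and then take the restriction of the quasimodel to this sub-frame. Since the input is a genuine quasimodel, every run in $\runs$ satisfies the full saturation condition~\ref{rn:modal2}, so for each world $\avec{w}$ and each type $\contp\in\funcand(\avec{w})$ condition~\ref{run:exists} together with $\funcand(\avec{w},\contp)>0$ guarantees a run $\prfun{\avec{w},\contp}\in\runs_{\avec{w},\contp}$, and this run is automatically $a$-saturated at $\avec{w}$ for every $a\in A$.

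Concretely, I would define sets $W_0\subseteq W_1\subseteq\cdots$ of worlds of increasing depth, starting with $W_0=\{w_0\}$ for the root $w_0$. Given $W_i$, for each $\avec{w}\in W_i$, each $\contp\in\funcand(\avec{w})$, each $a\in A$ \emph{such that $\avec{w}$ has no $\prec_a$-predecessor}, and each $\Diamond_a\psi\in\prfun{\avec{w},\contp}(\avec{w})$, I would pick a witness $\avec{v}$ with $\avec{w}R_a\avec{v}$ and $\psi\in\prfun{\avec{w},\contp}(\avec{v})$ and add it to $W_{i+1}$. The crucial structural observation (already recorded in the remark following~\ref{approx:R2:s5}) is that, because $\avec{w}$ has no $\prec_a$-predecessor, equation~\eqref{eq:s5n:frame} forces $\avec{w}\preceq_a\avec{v}$; hence such a witness is either $\avec{w}$ itself or an immediate $a$-successor $\avec{w}av'$ of depth $d(\avec{w})+1$. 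The branching is bounded by $2^{|\varphi|}\cdot|\varphi|$ (at most $2^{|\varphi|}$ types, each with at most $|\varphi|$ diamond formulas), so $|W_i|\le (2^{|\varphi|}\cdot|\varphi|)^{i}$ and $W'=\bigcup_{i=0}^{d}W_i$ has size $2^{O(d\cdot|\varphi|)}$.

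It then remains to check that the restriction $\Qmf'=(\Fmf',\funcand',\runs',\prset')$ of $\Qmf$ to $W'$ is a weak $\Sfiven$ quasimodel of the required size. First, $W'$ is prefix-closed and $\Fmf'$ is tree-shaped $\Sfiven$ of depth $\le d$: here I would stress that we only ever append an edge labelled $a$ to a world with no $\prec_a$-predecessor, i.e.\ one whose last edge is labelled differently, so the alternation constraint $a_j\ne a_{j+1}$ of~\eqref{eq:world} is preserved. The restricted runs (all full, since $\Sfiven$ gives constant domains) are weak $\Sfiven$ runs because $R_a'\subseteq R_a$, so the biconditional coherence $\leftrightarrow_a$ is inherited from $\runs$; condition~\ref{run:exists} survives because restriction leaves the type value at each retained world unchanged and preserves the cardinalities of the $(\avec{w},\contp)$-slices; and each $\prset'(\avec{w},\contp)$, the restriction of $\prfun{\avec{w},\contp}$ to $W'$, is $a$-saturated at $\avec{w}$ for every $a$ with no $\prec_a$-predecessor precisely because we retained those witnesses, giving~\ref{approx:R2:s5}. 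Condition~\ref{b1} is preserved as the root and its types are kept, and full runs remain full.

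The main point requiring care—and the only genuine departure from the $\Kn$ proof—is the interplay between the weakened saturation~\ref{approx:R2:s5} and the $\Sfiven$ tree discipline: I must verify that restricting to witnesses only at $\prec_a$-centres is both \emph{sufficient} (every $R_a$-witness demanded by a prototype at such a world is an immediate $a$-successor, by~\eqref{eq:s5n:frame}) and \emph{compatible} with keeping $\Fmf'$ a legal tree-shaped $\Sfiven$ frame (the alternation condition). The remaining verifications—coherence under restriction, condition~\ref{run:exists}, and the size bound—are routine and parallel to Lemma~\ref{lem:weakprequasi:k}.
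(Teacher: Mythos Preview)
Your proposal is correct and follows essentially the same approach as the paper's proof, which explicitly says the construction is ``nearly identical'' to Lemma~\ref{lem:weakprequasi:k} with the single modification that $\Diamond_a\psi$-witnesses are selected only at worlds without a $\prec_a$-predecessor. You have spelled out in more detail than the paper does why this restriction guarantees both the depth invariant (witnesses lie at depth $d(\avec{w})+1$ or are $\avec{w}$ itself, via~\eqref{eq:s5n:frame}) and the well-formedness of the restricted $\Sfiven$ tree, which is exactly the content the paper leaves implicit.
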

\begin{proof}
Let $\Qmf = (\mathfrak{F},\funcand,\runs)$, for a tree-shaped $\Sfiven$ frame $\Fmf=(W,\{R_{a}\}_{a\in A})$ of finite depth, be a quasimodel satisfying $\varphi$.
The construction is nearly identical to the proof of Lemma~\ref{lem:weakprequasi:k}: we construct a sequence $W_0,\dots,W_{d(\Fmf)}\subseteq W$ of sets of worlds by including witnesses $\avec{v}$ for subformulas of the form $\Diamond_a\psi$, except that a $\Diamond_a\psi$-witness is picked only for a $\avec{w}$ without a $\prec_a$-predecessor. It follows then that $d(\avec{w}) = i$ for all $\avec{w} \in W_i$, which is required to ensure that all relevant worlds are included after $d(\Fmf)$ steps.
\end{proof}

\begin{lemma}\label{lemma:s5:finite-approx}
Let  $\varphi$ be a $\QmonMLc$-sentence.
For every  weak quasimodel $(\mathfrak{F},\funcand,\runs,\prset)$ satisfying $\varphi$ based on an $\SfTree{d}$ frame, 
there is a quasimodel satisfying $\varphi$ based on an $\SfTree{d}$ frame of size  $O(|\Fmf|\cdot |\runs|^d)$. 
\end{lemma}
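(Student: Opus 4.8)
The plan is to mirror the copy-and-repair construction of Lemma~\ref{lemma:k:finite-approx}, adapting it to the equivalence-relation structure of $\Sfiven$ frames and to the weaker saturation requirement~\ref{approx:R2:s5}. Starting from the weak quasimodel $\Qmf = (\Fmf,\funcand,\runs,\prset)$ based on a tree-shaped $\Sfiven$ frame of depth $d$, I would again build a new frame whose worlds are copies $(w,\sigma)$ of the original ones, tagged by repair bijections $\sigma$ on the set $\hat{\runs}$ of indexed weak runs, assembled into words $\avec{u} = (w_0,\sigma_0)a_0(w_1,\sigma_1)\cdots a_{m-1}(w_m,\sigma_m)$ with $\sigma_0 = \textit{id}$. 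Each $\sigma_{j+1}$ is drawn from a repair set $\textrm{Rep}(\avec{w}_j)$ that swaps a given indexed run with the relevant indexed prototype, and, writing $\gen{\avec{u}}$ for the underlying original word and $\rep{\avec{u}} = \sigma_m\circ\cdots\circ\sigma_1$ for the accumulated relabelling, the lifted run $\textit{rep}(\rho,\ell)$ is defined by $\rho'(\avec{u}) = \rep{\avec{u}}(\rho,\ell)(\gen{\avec{u}})$. As in the $\Kn$ case, this keeps~\ref{run:exists} intact because $\rep{\avec{u}}\circ\textit{rep}^{-1}$ is a bijection between the $(\avec{u},\contp)$-slice of the new multiset and the $(\gen{\avec{u}},\contp)$-slice of $\runs$, and~\ref{b1} is immediate.

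The two substantive adaptations are dictated by~\ref{approx:R2:s5}. First, saturation is only required at a world $\avec{w}$ for those $a$ such that $\avec{w}$ has no $\prec_a$-predecessor, and there the witness $\avec{v}$ satisfies $\avec{w}\preceq_a\avec{v}$; hence I would only attach a fresh $a$-successor copy when the witness is a \emph{proper} immediate successor $\avec{w}av$, while in the reflexive case $\avec{v}=\avec{w}$ no new world is created and the run is already self-saturated for $\Diamond_a\psi$. Accordingly the repair sets $\textrm{Rep}(\avec{w})$ are built only from the relations $a$ for which $\avec{w}$ is $\prec_a$-predecessor-free, and the constructed words must respect the tree-shaped $\Sfiven$ constraints (prefix-closedness and the alternation $a_j\ne a_{j+1}$). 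Second, and crucially, because the $R_a$ are equivalence relations and every weak run satisfies $\leftrightarrow_a$-coherence, saturating a run at the unique $\prec_a$-predecessor-free representative $\avec{w}^\ast$ of an $a$-class propagates automatically to the whole class: any other member $\avec{w}'$ shares the same $\Diamond_a$-formulas with $\avec{w}^\ast$ and, by~\eqref{eq:s5n:frame} and transitivity, is $R_a$-related to the witness chosen for $\avec{w}^\ast$.

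Concretely, $\leftrightarrow_a$-coherence of the new runs (the $\Sfiven$ analogue of~\ref{rn:modal}) transfers directly from that of the original weak runs, since $\rho'(\avec{u})$ and $\rho'(\avec{v})$ are obtained by evaluating relabellings of the same indexed weak run that agree on the relevant original world; this is if anything easier than in $\Kn$ because $\leftrightarrow_a$ is symmetric. The size bound $O(|\Fmf|\cdot|\runs|^d)$ follows exactly as before, each original word spawning at most $|\runs|^d$ tagged copies. I expect the verification of saturation~\ref{rn:modal2} to be the main obstacle, for two reasons. When $\Diamond_a\psi\in\rho'(\avec{u})$ and $\avec{u}$ itself has a $\prec_a$-predecessor, I cannot repair at $\avec{u}$; instead I must locate the $\prec_a$-predecessor-free representative $\avec{u}^\ast$ of $\avec{u}$'s $a$-class in the new frame, argue that the repair performed there discharges the obligation, and transfer the witness via $\avec{u}^\ast R_a'\avec{v}$ together with $\avec{u}^\ast R_a'\avec{u}$ to obtain $\avec{u}R_a'\avec{v}$. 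The bookkeeping then has to reconcile the reflexive witnesses (witness $=$ world, handled by the identity swap) with the proper-successor swaps so that no saturation obligation is left undischarged and~\ref{run:exists} is not violated by double-counting; getting this interaction between the $a$-class structure and the swap bijections exactly right is the delicate step.
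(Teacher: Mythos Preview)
Your proposal is correct and follows the paper's approach closely. One minor clarification: the repair sets $\textrm{Rep}(\avec{w})$ are not restricted by modality but are defined exactly as in Lemma~\ref{lemma:k:finite-approx} (the alternation constraint $a_j\ne a_{j+1}$ on tagged words is inherited automatically from $W$, so no extra bookkeeping is needed there); the only genuine change from the $\Kn$ case is the verification of~\ref{rn:modal}, which the paper splits into the four cases of~\eqref{eq:s5n:frame}, and of~\ref{rn:modal2}, which is handled by exactly your two cases---predecessor-free (direct repair) versus has-predecessor (transfer $\Diamond_a\psi$ to the predecessor via $\leftrightarrow_a$-coherence, repair there, and return via transitivity of $R_a'$).
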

\begin{proof}
Let $\Qmf = (\Fmf, \funcand, \runs,\prset{})$ be a weak quasimodel satisfying $\varphi$ based on a tree-shaped $\Sfiven$ frame $\Fmf=(W,\{R_a\}_{a\in A})$ of depth $d$. The construction is essentially the same as in  Lemma~\ref{lemma:k:finite-approx}:  
%
let~$W'$ be the set of all words
	\begin{equation}\label{eqveru}
		\avec{u}=(w_0,\sigma_0)a_{0}(w_1,\sigma_1)a_1 \cdots a_{m-1}(w_m,\sigma_m),
	\end{equation}
	for $\avec{w} = w_0a_0w_1a_1\cdots a_{m-1}w_m\in W$, $\sigma_0= id$ and $\sigma_{j+1}\in \textrm{Rep}(\avec{w_j})$ with $\avec{w}_j$ denoting the $j$-prefix of $\avec{w}$, for each $0\leq j<m$. 
For $\avec{u}, \avec{u}'\in W'$, we write $\avec{u} \prec'_a \avec{u}'$  if $\avec{u}'$ is an immediate $a$-successor of $\avec{u}$, that is, if $\avec{u}'$ is of the form $\avec{u}a(\avec{w}, \sigma)$.  Consider $\Fmf' = (W', \{R_a'\}_{a\in A})$, where each $R_a'$ is the smallest equivalence relation containing $\prec_a'$, for $a\in A$. Clearly, it is as required by the lemma.
	Define~$\funcand'$ by setting $\funcand'(\avec{u})= \funcand(\gen{\avec{u}})$ for any $\avec{u}\in W'$.
	The multiset~$\Rmf'$ of \emph{full} runs through $\Fmf'$ is defined as in the proof of Lemma~\ref{lemma:k:finite-approx}:
	denote by $\textit{rep}$ the function mapping any $(\rho,\ell)\in \hat{\Rmf}$ to a weak run $\rho'$  
	on $W'$ defined by 
\begin{equation*}
\rho'(\avec{u}) = \rep{\avec{u}}(\rho, \ell)(\gen{\avec{u}}), \text{ for any } \avec{u}\in W'.
\end{equation*}
Let $\Rmf'$ be  such that  $\Rmf'(\rho') = |\{ (\rho,\ell)\in\hat{\runs} \mid \textit{rep}(\rho,\ell) = \rho' \}|$, for each $\rho'$.
	
	We show that $\Qmf' = (\mathfrak{F}',\funcand', \Rmf')$  is a required quasimodel. Conditions~\ref{b1} and~\ref{run:exists} are as in the proof of Lemma~\ref{lemma:k:finite-approx}.
	It remains to prove that the elements of $\Rmf'$ are indeed $\Sfiven$ runs, that is, they satisfy~\ref{rn:modal} and~\ref{rn:modal2}. Let $\Diamond_a \psi \in \sub[x]{\varphi}$. Consider $\rho'\in \Rmf'$ and $\avec{u}\in W'$ of the form~\eqref{eqveru}. We have $\rho' = \textit{rep}(\rho,\ell)$ for some $(\rho,\ell)\in\hat{\runs}$.
	
	For~\ref{rn:modal}, assume there exists $\avec{v} \in W'$ such that $\avec{u}R_a'\avec{v}$ and $\psi \in \rho'(\avec{v})$.
	We have to show that $\Diamond_a\psi \in \rho'(\avec{u})$.
	%
	We distinguish the following four cases.
	\begin{enumerate}
		\item If $\avec{v}=\avec{u}$, then we have $\Diamond_a\psi\in \rho'(\avec{u})$ since $\rho'(\avec{u}) \leftrightarrow_a \rho'(\avec{u})$.
		\item If $\avec{u}\prec'_a\avec{v}$, then $\gen{\avec{u}} \prec_a \gen{\avec{v}}$.
		Since $\rep{\avec{v}}(\rho, \ell)(\gen{\avec{u}})\leftrightarrow_a \rep{\avec{v}}(\rho,\ell)(\gen{\avec{v}}) = \rho'(\avec{v})$, we get $\Diamond_a\psi\in \rep{\avec{v}}(\rho, \ell)(\gen{\avec{u}})$.  As $\rep{\avec{v}} = \sigma\circ\rep{\avec{u}}$, for some $\sigma\in\textrm{Rep}(\gen{\avec{u}})$, weak runs $\rep{\avec{v}}(\rho, \ell)$ and $\rep{\avec{u}}(\rho,\ell)$ coincide on $\gen{\avec{u}}$. So, \mbox{$\Diamond_a\psi\in \rep{\avec{u}}(\rho,\ell)(\gen{\avec{u}}) = \rho'(\avec{u})$}.
		\item If $\avec{v}\prec'_a\avec{u}$, then $\gen{\avec{v}}\prec_a\gen{\avec{u}}$.
		As $\rep{\avec{u}} = \sigma\circ\rep{\avec{v}}$, for some $\sigma\in\textrm{Rep}(\gen{\avec{v}})$, weak runs $\rep{\avec{v}}(\rho,\ell)$ and $\rep{\avec{u}}(\rho,\ell)$ coincide on~$\gen{\avec{v}}$. So, we obtain $\psi\in \rep{\avec{u}}(\rho,\ell)(\gen{\avec{v}})$ from $\psi\in \rho'(\avec{v}) = \rep{\avec{v}}(\rho,\ell)(\gen{\avec{v}})$.
		Since $\rep{\avec{u}}(\rho,\ell)(\gen{\avec{v}}) \leftrightarrow_a \rep{\avec{u}}(\rho,\ell)(\gen{\avec{u}})$, we have $\Diamond_a\psi\in \rep{\avec{u}}(\rho,\ell)(\gen{\avec{u}}) = \rho'(\avec{u})$. 
		\item If $\avec{v}'\prec'_a\avec{v}$ and $\avec{v}'\prec'_a\avec{u}$, for some $\avec{v}'$, then we can use the argument in Point~2 to show that $\Diamond_a\psi\in \rho'(\avec{v}')$; next, as in Point~3, we can show that $\Diamond_a\psi\in \rho'(\avec{u})$.  
	\end{enumerate}
	
	For~\ref{rn:modal2}, assume $\Diamond_a\psi \in \rho'(\avec{u})$. We have to show that there exists $\avec{v} \in W'$ such that $\avec{u}R_a'\avec{v}$ and $\psi \in \rho'(\avec{v})$.
	We have $\rho'(\avec{u})= \rep{\avec{u}}(\rho,\ell)(\gen{\avec{u}})$ and distinguish the following two cases. 
	\begin{enumerate}
		%
		\item If $\gen{\avec{u}}$ has no $\prec_a$-predecessor, then, by~\ref{approx:R2}, there is  a prototype weak run $\prfun{\gen{\avec{u}}, t}$ for $\gen{\avec{u}}$ and $\contp = \rep{\avec{u}}(\rho,\ell)(\gen{\avec{u}})$ and a  witness 
		$\avec{w}\in W$ such that $\gen{\avec{u}} \preceq_a \avec{w}$ and $\psi \in \prfun{\gen{\avec{u}}, \contp}(\avec{w})$.
%
		Let $\avec{v}\in W'$ be such that $\gen{\avec{v}} = \avec{w}$ and $\rep{\avec{v}} = \sigma\circ\rep{\avec{u}}$ for 
		$\sigma\in\textrm{Rep}(\gen{\avec{u}})$ that swaps $\rep{\avec{u}}(\rho,\ell)$ with~$(\prfun{\gen{\avec{u}},\contp},0)$. We have 
		$\avec{u}R_a'\avec{v}$ and $\rep{\avec{v}}(\rho,\ell)(\gen{\avec{v}})=\prfun{\gen{\avec{u}}, \contp}(\avec{w})$, whence $\psi \in\rho'(\avec{v})$.
		\item If $\avec{w} \prec_a \gen{\avec{u}}$, for some $\avec{w}\in W$, then we consider $\avec{u}'\in W'$ with $\gen{\avec{u}'} = \avec{w}$. Since $\avec{u}' \prec_a' \avec{u}$, we have 
		 $\rep{\avec{u}}(\rho,\ell)(\gen{\avec{u}'}) \leftrightarrow_a \rep{\avec{u}}(\rho,\ell)(\gen{\avec{u}})$, whence  $\Diamond_a\psi\in \rep{\avec{u}}(\rho,\ell)(\gen{\avec{u}'})$. As $\rep{\avec{u}}(\rho,\ell)$   coincides with  $\rep{\avec{u}'}(\rho,\ell)$ on~$\gen{\avec{u}'}$, we then obtain $\Diamond_a\psi\in \rep{\avec{u'}}(\rho,\ell)(\gen{\avec{u}'})$.
		We now apply the argument in Point~1 to ~$\gen{\avec{u}'}$, which has no $\prec_a$-predecessor,  and obtain
		$\avec{v}\in W'$ with $\avec{u}'R'_a\avec{v}$ and $\psi\in \rho'(\avec{v})$.
		Since $R'_a$ is an equivalence relation, 
		$\avec{u}R_a'\avec{v}$, as required. 
%
%
%
%
	\end{enumerate}
This completes the proof of Lemma~\ref{lemma:s5:finite-approx}.	
\end{proof}

\section{Decidability for Monodic Fragments in Constant Domains}

We have reduced the problem of deciding the existence of a model satisfying formula 
$\varphi$ to the problem of deciding the existence of a weak quasimodel satisfying $\varphi$ 
based on a frame of exponential size. In general, the latter problem is not yet trivially decidable, however, since we have no bound on the size of quasistates (and so also no bound on the number of weak runs). In fact, as~$\CT$ does not have the finite model property, no finite bound exists for  $\CtwomonMLc$, for example. 

For some languages with limited counting, in particular the one-variable and guarded fragments, we show that one can directly 
use weak quasimodels to obtain tight complexity bounds. For the two-variable fragment with counting we develop further machinery. 

\subsection{One-Variable Fragment}

We first consider a $\QoneMLc$-sentence $\varphi$ and begin with a characterisation of quasistates. We assume that it contains a subformula $x=c$ for any constant $c\in\Ind$ that occurs in $\varphi$; this can trivially be achieved by adding conjuncts of the form $\exists x\,(x = c)$ to the sentence. We then say that a type~$\contp$ for $\varphi$ \emph{contains a constant} if it contains $x = c$, for some $c\in\Ind$; note that the same type can contain multiple $x = c_i$. Then quasistate candidates $\avec{n}$ and $\avec{n}'$ for $\varphi$ (not necessarily realisable) are \emph{simply compatible} (written $\avec{n} \sim_0 \avec{n}'$) if 
	\begin{itemize}
	\item  $\avec{n}'(\contp)=\avec{n}(\contp)=1$, for all $\contp$ containing a constant and 
	\item  $\avec{n}'(\contp)=0$ iff $\avec{n}(\contp)=0$, for all types $\contp$ for $\varphi$. 
	\end{itemize}
It can be easily seen that this condition defines a closure property for quasistates in $\QoneMLc$:
\begin{lemma}\label{lem:qs:closure:1}
 If $\avec{n}$ is a quasistate for a $\QoneMLc$-sentence $\varphi$, then any quasistate candidate~$\avec{n}'$ for $\varphi$ with $\avec{n}\sim_0\avec{n}'$ is realisable \textup{(}thus, a quasistate for $\varphi$\textup{)}. 
\end{lemma}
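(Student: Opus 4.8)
The plan is to realise $\avec{n}'$ by taking a structure $\Bmf$ that realises $\avec{n}$ and rebuilding it ``colour by colour'', duplicating or deleting domain elements so that each type $\contp$ is realised exactly $\avec{n}'(\contp)$ times, while keeping the interpretation of the constants and of all quantified sentences unchanged. Since $\varphi$ is a $\QoneMLc$-sentence, every formula in $\overline{\contp}$ is built from \emph{unary} predicates (including the surrogates), equalities $x=c$, $0$-ary atoms/surrogates and quantification over the single variable $x$; thus the ``local'' part of a type is just its atomic unary colour, and the only global information it carries is the truth value of the sentences it contains.

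Concretely, let $\Bmf$ realise $\avec{n}$ and write $B_\contp=\{b\in\Bmf\mid \Bmf\models\overline{\contp}[b]\}$, so that $|B_\contp|=\avec{n}(\contp)$ and the $B_\contp$ partition the domain of $\Bmf$. Let $T=\{\contp\mid \avec{n}(\contp)>0\}$; by the second clause of $\sim_0$ this equals $\{\contp\mid \avec{n}'(\contp)>0\}$. I would build a structure $\Bmf'$ with domain $\bigsqcup_{\contp\in T}C_\contp$, where $|C_\contp|=\avec{n}'(\contp)\in\extN$, together with a \emph{colour-preserving} map $g$ from the domain of $\Bmf'$ into the domain of $\Bmf$ sending $C_\contp$ into $B_\contp$ (possible since each $B_\contp$ with $\contp\in T$ is non-empty). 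Then $\Bmf'$ is defined by pullback along $g$: a unary predicate or surrogate holds of $d$ in $\Bmf'$ iff it holds of $g(d)$ in $\Bmf$, the $0$-ary predicates are copied verbatim from $\Bmf$, and each constant $c$ is interpreted by the unique element of $C_{\contp_c}$, where $\contp_c$ is the unique realised type containing $x=c$ (which exists with multiplicity $1$ by our assumption that $x=c$ is a subformula of $\varphi$). Here the \emph{first} clause of $\sim_0$ is essential: it forces $\avec{n}'(\contp_c)=\avec{n}(\contp_c)=1$, so that $C_{\contp_c}$ and $B_{\contp_c}$ are singletons, $c^{\Bmf'}$ is well defined and $g$ is a bijection on these classes; without it a constant would be asked to denote two elements and $\avec{n}'$ would fail to be realisable.

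The core step is the transfer claim, proved by induction on the structure of a subformula $\psi$ with $\psi\in\sub[x]{\varphi}$: for every $d$ in the domain of $\Bmf'$,
\[ \Bmf'\models\overline{\psi}[d] \quad\text{ iff }\quad \Bmf\models\overline{\psi}[g(d)]. \]
The atomic cases (unary predicates, $0$-ary atoms, and $x=c$) hold by construction of $\Bmf'$ and the singleton property of the constant classes, and the Boolean cases are immediate from the induction hypothesis together with the identities $\overline{\neg\psi}=\neg\overline{\psi}$ and $\overline{\psi_1\wedge\psi_2}=\overline{\psi_1}\wedge\overline{\psi_2}$. Applying the claim with $\psi=\contp$ then shows that each $d\in C_\contp$ realises exactly $\contp$ in $\Bmf'$, so $\Bmf'$ realises $\avec{n}'$ and $\avec{n}'$ is a quasistate.

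The main obstacle is the quantifier case $\overline{\exists x\,\psi}=\exists x\,\overline{\psi}$ (and dually $\forall$), since this is a sentence whose truth must be preserved even though the multiplicities change. This is exactly where the support-preservation clause of $\sim_0$ pays off: if $\Bmf\models\exists x\,\overline{\psi}$, a witness $b$ lies in some $B_\contp$ with $\contp\in T$, and because $\avec{n}'(\contp)>0$ the class $C_\contp$ is non-empty, so any $d\in C_\contp$ is a witness in $\Bmf'$ by the induction hypothesis (as $g(d)$ and $b$ share the colour $\contp$); conversely a $\Bmf'$-witness $d$ yields the $\Bmf$-witness $g(d)$. Hence deleting or multiplying elements of a colour never changes the truth of an existential, provided the set of realised colours stays fixed, which is precisely what $\sim_0$ guarantees.
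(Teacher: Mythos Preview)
Your argument is correct. The paper does not give a proof of this lemma (it merely says ``It can be easily seen''), so there is nothing to compare against directly; your pullback construction along a colour-preserving map $g$ is a sound way to fill the gap. Two small remarks. First, in the quantifier case your wording ``by the induction hypothesis (as $g(d)$ and $b$ share the colour $\contp$)'' conflates two steps: the IH gets you from $\Bmf'\models\overline{\psi}[d]$ to $\Bmf\models\overline{\psi}[g(d)]$, while the passage from the witness $b$ to $g(d)$ inside $\Bmf$ uses the separate (but immediate) fact that any two elements realising the same type $\contp$ agree on every $\overline{\psi}$ with $\psi\in\sub[x]{\varphi}$. Second, your list of atomic cases omits $P(c)$ and $c=c'$, but these follow from what you have already set up since $g(c^{\Bmf'})=c^{\Bmf}$. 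For comparison, the paper later (in the proof of Lemma~\ref{lem:squeeze:poly:fmp}) handles a closely related realisability claim by the more direct route of taking the types themselves as the domain and reading off the interpretation of predicates and constants from the types; that avoids the map $g$ altogether and makes the existential step trivial, but your approach is equally valid.
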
	
Using Lemma~\ref{lem:qs:closure:1}, we can obtain a tight complexity result for  $\QoneML$.
%
\begin{theorem}\label{thm:complexity:onevar}
 For constant domains, $\Kn$-validity and $\Sfiven$-validity in $\QoneMLc$ are \textup{co\NExpTime}-complete. In fact, every satisfiable sentence is satisfiable in a frame of exponential size.
 \end{theorem}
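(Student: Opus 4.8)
The plan is to run the quasimodel programme of the previous sections and then exploit the feature special to $\QoneMLc$ that the \emph{exact} multiplicities in a quasistate are irrelevant to realisability (Lemma~\ref{lem:qs:closure:1}); this is exactly what lets me bound the number of runs and hence produce a model of exponential size. I would treat $\Kn$ and $\Sfiven$ uniformly, invoking the $\Kn$-lemmas (Lemmas~\ref{lem:k:tree-shaped},~\ref{lem:weakprequasi:k},~\ref{lemma:k:finite-approx}) in the first case and their $\Sfiven$-counterparts (Lemmas~\ref{lem:s5:tree-shaped},~\ref{lem:weakprequasi:s5},~\ref{lemma:s5:finite-approx}) in the second.

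\textbf{Upper bound and small model property.} Let $\varphi$ be a satisfiable $\QoneMLc$-sentence. By Lemma~\ref{lemma:quasimodel} and Lemma~\ref{lem:k:tree-shaped} (resp.\ Lemma~\ref{lem:s5:tree-shaped}) there is a quasimodel satisfying $\varphi$ on a tree-shaped frame of depth at most $d(\varphi)$, and by Lemma~\ref{lem:weakprequasi:k} (resp.\ Lemma~\ref{lem:weakprequasi:s5}) a weak quasimodel $\Qmf=(\Fmf,\funcand,\runs,\prset)$ satisfying $\varphi$ whose frame has $N=2^{O(d(\varphi)\cdot|\varphi|)}$ worlds. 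The obstacle is that $\runs$ may still be infinite. I would resolve this by discarding every run except the prototypes: put $P=\{\prfun{\avec{w},\contp}\mid \avec{w}\in W,\ \contp\in\funcand(\avec{w})\}$, give each element of $P$ multiplicity $1$, and define $\funcand'$ from $P$ through~\ref{run:exists}. Since there are at most $2^{O(|\varphi|)}$ types, $|P|\le N\cdot 2^{O(|\varphi|)}=2^{O(d(\varphi)\cdot|\varphi|)}$. To see that $(\Fmf,\funcand',P,\prset)$ is again a weak quasimodel satisfying $\varphi$, I would verify $\funcand'(\avec{w})\sim_0\funcand(\avec{w})$ for each $\avec{w}$ and invoke Lemma~\ref{lem:qs:closure:1}: the supports coincide, because $\prfun{\avec{w},\contp}$ lies in $P$ for every $\contp\in\funcand(\avec{w})$, while conversely any run of $P$ passing through $(\avec{w},\contp)$ already belongs to $\runs_{\avec{w},\contp}$; and for a type $\contp$ containing some $x=c$ one has $\funcand(\avec{w},\contp)=1$, so $\runs_{\avec{w},\contp}$ is a single run and every prototype through $(\avec{w},\contp)$ must coincide with it, giving $\funcand'(\avec{w},\contp)=1$. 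Conditions~\ref{approx:R2} (resp.~\ref{approx:R2:s5}) and~\ref{b1} are inherited since $\prset$ still maps into $P$ and the supports are unchanged. Feeding this bounded weak quasimodel into Lemma~\ref{lemma:k:finite-approx} (resp.\ Lemma~\ref{lemma:s5:finite-approx}) yields a genuine constant-domain quasimodel of size $O(N\cdot|P|^{d(\varphi)})=2^{O(d(\varphi)^2\cdot|\varphi|)}=2^{O(|\varphi|^3)}$, and Lemma~\ref{lemma:quasimodel} converts it into a constant-domain model on a frame of exponential size, which is the small model property. For \NExpTime\ membership of satisfiability (hence co-\NExpTime\ for validity) I would guess a weak quasimodel of this bounded shape and verify all its conditions in exponential time; completeness and soundness of the guess follow from the two directions just described, and the only non-routine check, realisability of each $\funcand'(\avec{w})$, is a satisfiability question for the one-variable fragment \FOO\ and is absorbed into the nondeterministic exponential-time guess.

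\textbf{Lower bound.} For co-\NExpTime-hardness of validity, equivalently \NExpTime-hardness of satisfiability, I would reduce from the corresponding two-dimensional products of modal logics: under constant domains the single quantifier of $\QoneMLc$ acts as an $\mathbf{S5}$ dimension (both Barcan directions hold), so $\Sfive$-satisfiability already for $n=1$ embeds $\Sfivee$ and $\Kn$-satisfiability embeds $\mathbf{K}\times\mathbf{S5}$, both of which are \NExpTime-complete~\cite{GabEtAl03}; alternatively one may route through the polynomial inter-reducibility with $\MLDiff$ of Theorem~\ref{th:diff} together with the hardness of~\cite{HamKur15,DBLP:conf/aiml/HampsonK12}. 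Either reduction only \emph{adds} to the expressive power already present, so non-rigid constants are not needed for the lower bound.

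\textbf{Main obstacle.} The crux of the argument is the collapse to prototype runs: this is precisely the point where the one-variable restriction is indispensable, since Lemma~\ref{lem:qs:closure:1} guarantees that bounding the multiplicities preserves realisability. For richer fragments such as $\CtwomonMLc$, whose first-order part $\CT$ lacks the finite model property, no such bound exists and this step has no analogue, which is why that case requires the further machinery developed later.
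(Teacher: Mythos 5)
Your proposal is correct and follows essentially the same route as the paper: reduce to a weak quasimodel on an exponential-size tree-shaped frame via Lemmas~\ref{lem:k:tree-shaped}, \ref{lemma:quasimodel} and \ref{lem:weakprequasi:k} (resp.\ their $\Sfiven$ counterparts), prune the multiset of runs down to the prototypes, invoke Lemma~\ref{lem:qs:closure:1} to see that the resulting quasistate candidates remain realisable, and then recover an exponential-size model via Lemmas~\ref{lemma:k:finite-approx} and \ref{lemma:quasimodel}; the lower bound likewise comes from the \NExpTime-hardness of the one-variable (product-with-$\mathbf{S5}$) modal logics over constant domains. Your verification of $\funcand'\sim_0\funcand$ and the handling of constant-containing types is a correct elaboration of a step the paper leaves implicit.
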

\begin{proof}
We consider $\Kn$, the proof for $\Sfiven$ is similar. Assume $\varphi$ is satisfiable in some $\Kn$ frame. By Lemmas~\ref{lem:k:tree-shaped},~\ref{lemma:quasimodel}  and~\ref{lem:weakprequasi:k},
	there is a weak quasimodel $\Qmf = (\Fmf,\funcand,\runs,\prset)$ satisfying $\varphi$ based on a $\Tree{d(\varphi)}$ frame $\Fmf = (W, \{R_a\}_{a\in A})$ with~$|W| \leq 2^{O(d(\varphi)\cdot |\varphi|)}$.
	We prune $\Qmf$ to obtain an exponential-size weak quasimodel $\Qmf' = (\Fmf,\funcand',\runs',\prset{})$
	by picking for $\runs'$ a prototype weak run $\prfun{\avec{w},\contp}\in\runs$ for every $\avec{w}\in W$ and $\contp \in \funcand(\avec{w})$ and setting  $\funcand'(\avec{w},\contp)=|\runs'_{\avec{w},\contp}|$. By Lemma~\ref{lem:qs:closure:1}, the $\funcand'(\avec{w})$ are quasistates for~$\varphi$, and so $\Qmf'$ is a weak quasimodel satisfying~$\varphi$. We obtain an exponential-size interpretation satisfying~$\varphi$ by Lemmas~\ref{lemma:k:finite-approx} and~\ref{lemma:quasimodel}.
	The co\NExpTime upper bound for validity follows immediately. The matching lower bound follows from  \NExpTime-hardness of satisfiability of the one-variable modal logics $\K$ and $\Sfive$ in constant domains, even without equality and constants~\cite{DBLP:journals/logcom/Marx99}.
	\end{proof} 

We note that the \coNExpTime-hardness result~\cite{DBLP:journals/logcom/Marx99}, proved in the context of product modal logics, is contingent on the domains of the models being constant and does not translate to expanding domains. Indeed, we show in Section~\ref{sec:pspace} that $\QoneMLc$-validity on $\Kn$ frames in \emph{expanding} domains is \PSpace-complete. 

\subsection{Guarded Fragment}

We next show that $\Kn$- and $\Sfiven$-validity in $\QGuardmonMLc$  is in 2\ExpTime.
%
%
Let~$\varphi$ be a $\QGuardmonMLc$-sentence. We assume it contains a subformula $x=c$ for any constant $c\in\Ind$ that occurs in $\varphi$.
For quasistate candidates $\avec{n}$ and $\avec{n}'$ for~$\varphi$, define the product ordering as usual by setting $\avec{n}\leq \avec{n}'$ if $\avec{n}(\contp)\leq \avec{n}'(\contp)$ for all types $\contp$.
We set $\avec{n}\leq_{\textit{gf}}\avec{n}'$ if $\avec{n}\leq \avec{n}'$ and $\avec{n}\sim_0 \avec{n}'$.
We require the following properties of the guarded fragment~\cite{DBLP:journals/corr/BaranyGO13,DBLP:journals/jsyml/Gradel99,DBLP:journals/apal/Hodkinson06}.

%
\begin{lemma}[Guarded Fragment]\label{lem:guardedfragment} 
Let $\varphi$ be a $\QGuardmonMLc$-sentence.

	\textup{(1)} If $\avec{n}$ is a quasistate for $\varphi$, then any quasistate candidate $\avec{n}'$ with 
	$\avec{n}\leq_{\textit{gf}}\avec{n}'$ is also a quasistate for $\varphi$.
	
	\textup{(2)} If $\avec{n}$ is a quasistate for $\varphi$, then there is a quasistate $\avec{n}'$ for $\varphi$ with $\avec{n}(\contp)>0$ iff
	$\avec{n}'(\contp)>0$ for all types $\contp$ and such that 
	$\avec{n}'(\contp)\leq 2^{2^{|\varphi|^{O(1)}}}$, for all types~$\contp$. 
	
	\textup{(3)} Deciding whether a quasistate candidate $\avec{n}$ for $\varphi$ such that, for all types~$\contp$ for $\varphi$, either $\avec{n}(\contp)\leq 2^{|\varphi|^{O(1)}}$ or $\avec{n}(\contp)=\aleph_0$  is realisable is in \textup{2\ExpTime} \textup{(}in~$|\varphi|$\textup{)}. 
\end{lemma}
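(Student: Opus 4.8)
The plan is to treat the realisability of a type-multiset as a question about the guarded fragment and to import three standard properties of \textsf{GF}: invariance under \emph{guarded bisimulation}, the \emph{generalised tree-model property}, and the doubly-exponential finite-model property with its associated tree-automata decision procedure (as developed in \cite{DBLP:journals/jsyml/Gradel99,DBLP:journals/corr/BaranyGO13,DBLP:journals/apal/Hodkinson06}). Throughout I view each $\overline{\contp}$ as a guarded first-order formula over the extended signature (surrogates included), so that a first-order structure $\Bmf$ realises $\avec{n}$ exactly when $\Bmf$ models the guarded theory encoded in the types and its multiset of realised types matches $\avec{n}$. The first observation I would record is that the support of $\avec{n}$, i.e.\ the \emph{set} of realised types, together with the denotations of the constants, is invariant under guarded bisimulation: each member of $\sub[x]{\varphi}$ is a guarded formula in one free variable, so guarded-bisimilar elements satisfy the same such formulas and hence carry the same type.

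For part (1) I would argue that \textsf{GF} cannot impose upper bounds on type multiplicities, by a \emph{twinning} construction. Given $\Bmf$ realising $\avec{n}$ and a non-constant type $\contp$ with $\avec{n}(\contp)<\avec{n}'(\contp)$, pick an element $b$ of type $\contp$ and adjoin a fresh element $b'$ that is a twin of $b$: $b'$ satisfies the same unary atoms as $b$, and for every guarded tuple containing $b$ we add the analogous tuple with $b$ replaced by $b'$ (leaving the other components unchanged). Then $b$ and $b'$ are guarded-bisimilar, so $b'$ again has type $\contp$; moreover, since guarded bisimulation does not count neighbours, no other element changes its type and no new type is created. Thus the multiplicity of $\contp$ increases by one while the support and all constant-types (which are never twinned, as $b$ is non-constant) are preserved. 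Iterating, and taking $\aleph_0$ twins where required, yields a structure realising $\avec{n}'$, since $\avec{n}\leq_{\textit{gf}}\avec{n}'$ guarantees exactly that only non-constant multiplicities are raised while the support is fixed.

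For part (2) I would run the opposite, \emph{deflation}, direction using the tree-model property. Unfold $\Bmf$ into a guarded-bisimilar tree-decomposable model, whose support is still the support $S$ of $\avec{n}$. Because the bags have width bounded by the arity (hence by $|\varphi|$) and there are only doubly-exponentially many bag-types, the standard small-model construction for \textsf{GF} produces a model realising exactly the types in $S$—closing off all guarded existential requirements using only $S$-consistent bags—of total size $2^{2^{|\varphi|^{O(1)}}}$. Reading off the type spectrum of this model gives the desired $\avec{n}'$: its support equals $S$, and every multiplicity is at most the model size, hence at most $2^{2^{|\varphi|^{O(1)}}}$. The point requiring care is to retain a witness for \emph{every} type of $S$ while pruning, so that the support is preserved exactly and no spurious type is introduced; this is ensured by only ever using bags that already occur in the unfolding of $\Bmf$.

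For part (3) I would decide realisability by the guarded tree-automaton of \cite{DBLP:journals/corr/BaranyGO13,DBLP:journals/jsyml/Gradel99} augmented with bounded counting. First, an $\aleph_0$ demanded on a constant-type is rejected outright (constants have a unique denotation), and, by part (1), an $\aleph_0$ on a non-constant type can be met whenever that type is realised at all; so it suffices to decide realisability of the finitely-constrained multiset obtained by capping every count at a doubly-exponential threshold $T$. I then build the tree automaton running over guarded tree-decompositions that, in addition to checking the guarded theory of the types, accumulates for each type the number of its realisations, each counter saturating at $T$; it accepts iff there is a model whose capped spectrum equals the capped $\avec{n}$. Its state set is a vector of $2^{2|\varphi|}$ counters bounded by $T$, hence of doubly-exponential size, so the emptiness test runs in \textup{2\ExpTime}. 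The main obstacle is precisely this last step: \textsf{GF} cannot express exact (or threshold) cardinalities, so realisability is genuinely finer than mere support-coherence—equality together with guarded quantification can force \emph{lower} bounds on certain multiplicities—and the whole weight of the argument rests on designing the counting automaton so that these forced lower bounds are respected while the state space stays doubly-exponential. Parts (1) and (2), by contrast, are essentially the ``no upper bound'' and ``bounded witnessing'' phenomena of the guarded fragment, and follow from bisimulation invariance and the tree-model property.
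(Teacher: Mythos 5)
Your parts (1) and (2) follow essentially the same route as the paper. For (1) the paper uses exactly your twinning construction: add a fresh copy $d'$ of an element $d$ whose type contains no constant, duplicate every guarded tuple through $d$ with $d$ replaced by $d'$, and invoke guarded-bisimulation invariance. For (2) the paper also reduces to the doubly exponential finite model property of \textsf{GF}, but implements it more directly than your tree-unfolding sketch: it writes down the single guarded sentence $\textsf{only}_{\avec{n}}\wedge\textsf{exists}_{\avec{n}}$ (of exponential length but over the \emph{same} signature as $\varphi$), asserting that exactly the types in the support of $\avec{n}$ are realised, and cites the result of B\'ar\'any--Gottlob--Otto that such a sentence, if satisfiable, has a model of doubly exponential size in $|\varphi|$. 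Your concern about retaining a witness for every type while pruning is absorbed by the $\textsf{exists}_{\avec{n}}$ conjunct, so nothing further is needed there.

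Part (3) is where you genuinely diverge, and where your argument has a gap. The paper builds no counting automaton: it encodes the exact multiplicities into \textsf{GF} itself by introducing $2^{|\varphi|^{O(1)}}$ fresh constants and adding, for each type $\contp$ with $0<\avec{n}(\contp)<\aleph_0$, the guarded sentences $\textsf{atleast}_{\avec{n}}^{\contp}$ (the constants $c_{\contp,1},\dots,c_{\contp,\avec{n}(\contp)}$ are pairwise distinct and each satisfies $\bar{\contp}$) and $\textsf{atmost}_{\avec{n}}^{\contp}$, namely $\forall x\,\bigl(\bar{\contp}(x)\to\bigvee_{i}(x=c_{\contp,i})\bigr)$; the $\aleph_0$ entries are then handled via item (1), as you also propose. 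Satisfiability of the resulting sentence, with $l=2^{|\varphi|^{O(1)}}$ constants and a fixed number $k$ of variables, is decidable in time $2^{O(|\varphi|)\cdot(l+k)^{k}}$, which yields the 2\ExpTime{} bound. By contrast, your counting tree automaton is only asserted to exist: you do not explain how to count type realisations over a guarded tree decomposition without double-counting elements occurring in several bags, how constants are accommodated in the decomposition, or how the transition relation enforcing exact counts is constructed within the claimed state space --- and you yourself flag this as the step on which ``the whole weight of the argument rests''. As written, (3) is not a proof. A smaller point: your saturation threshold $T$ need only be singly exponential, since the hypothesis already bounds every finite entry of $\avec{n}$ by $2^{|\varphi|^{O(1)}}$; taking $T$ doubly exponential conflates the model-size bound of part (2) with the multiplicity bound relevant to part (3).
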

\begin{proof}
	(1) It suffices to show that for every $\contp$ not containing a constant and  with $\avec{n}(\contp)>0$, the quasistate candidate $\avec{n}'$ defined in the same way as~$\avec{n}$ except that $\avec{n}'(\contp)=\avec{n}(\contp)+1$ is also a quasistate for $\varphi$.  Consider a first-order structure $\Bmf$ realising $\avec{n}$ and, in particular, $\contp$ by domain element $d$. Define a new first-order structure $\Bmf'$ by adding a fresh domain element $d'$ to it and adding a tuple to any $R$ if it is obtained from a tuple in $R$ by replacing uniformly $d$ by $d'$. The first-order structures $\Bmf'$ and $\Bmf$ are guarded bisimilar~\cite{ANvB98}, and therefore, $\avec{n}'$ is realised in~$\Bmf'$.
	
	(2) Follows from the proof of the double exponential finite model property of the guarded fragment~\cite{DBLP:journals/corr/BaranyGO13}. More precisely,
	given a quasistate $\avec{n}$ for $\varphi$, we consider the following conjunction $\psi$ of guarded sentences
\begin{align}	
	\label{eq:gf:only}
	 \textsf{only}_{\avec{n}} \ \ & = \ \ \forall x\,\bigvee_{\avec{n}(\contp) > 0}\bar{\contp}(x),\\
	 \label{eq:gf:exists}
	 \textsf{exists}_{\avec{n}} \ \ &= \ \ \bigwedge_{\avec{n}(\contp) > 0} \exists x\,\bar{\contp}(x).
\end{align}	 
It should be clear that each first-order structure satisfying 	 $\psi$ gives rise to a quasistate $\avec{n}'$ for $\varphi$ such that $\avec{n}(\contp)>0$ iff $\avec{n}'(\contp)>0$, for all types $\contp$ for $\varphi$. Now observe that even though $|\psi|$ is exponential in $|\varphi|$, the sentence has the same signature as $\varphi$ and so, by~\cite[Theorem~1.2]{DBLP:journals/corr/BaranyGO13}, can be satisfied in a structure of double-exponential size in the size of $\varphi$.
		
	(3) We assume that no type $\contp$ with $\avec{n}(\contp) = \aleph_0$ contains constants (otherwise, such $\avec{n}$ would trivially be non-realisable). Then we consider the conjunction~$\psi$ of the following guarded sentences: $\textsf{only}_{\avec{n}}$ and $\textsf{exists}_{\avec{n}}$
	defined by~\eqref{eq:gf:only} and~\eqref{eq:gf:exists} together with  $\textsf{atleast}_{\avec{n}}^{\contp}$ and $\textsf{atmost}_{\avec{n}}^{\contp}$, for all types $\contp$ for $\varphi$ with $0 < \avec{n}(\contp) < \aleph_0$, 
where
	\begin{align}
	\label{eq:gf:atleast}
	\textsf{atleast}_{\avec{n}}^{\contp} \ \ & = \ \  \bigwedge_{1\leq i \leq \avec{n}(\contp)} \hspace*{-0.5em}\bar{\contp}(c_{\contp,i}) \ \land \hspace*{-0.5em}\bigwedge_{1 \leq i < j \leq \avec{n}(\contp)} \hspace*{-1em}(c_{\contp, i} \ne c_{\contp, j}),\\
	\label{eq:gf:atmost}
	\textsf{atmost}_{\avec{n}}^{\contp} \ \ &= \ \ \forall x\,\bigl(\bar{\contp}(x) \to \hspace*{-1em}\bigvee_{1 \leq i \leq \avec{n}(\contp)}\hspace*{-0.5em} (x = c_{\contp,i})\bigr),	 \end{align}
	 for fresh constant symbols $c_{\contp,i}$.
	 It should be clear that $\psi$ is satisfiable iff $\avec{n}$ is realisable; in particular, due to  Point (1) we can always ensure that types with $\avec{n}(\contp) = \aleph_0$ are realised by infinitely many domain elements.
	 Observe that $\psi$ contains $2^{|\varphi|^{O(1)}}$ constants, but the same variables and the same predicate symbols as~$\varphi$. It follows that its satisfiability can be decided in time $2^{O(|\varphi|) \cdot (l +k)^{k}}$, where $k$ is the number of variables and $l$ is the number of constants in $\psi$; see, e.g.,~\cite[Proposition~3.5]{DBLP:journals/apal/Hodkinson06}. Thus, we obtain the required complexity bound.
\end{proof}

%
%
%
%
\begin{theorem}\label{thm:guarded:complexity}
	For both constant and expanding domains, $\Kn$-validity and $\Sfive_{n}$-validity in $\QGuardmonMLc$ are \textup{2\ExpTime}-complete. 
\end{theorem}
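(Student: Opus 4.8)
The plan is to prove the upper and lower bounds separately, reusing the weak-quasimodel machinery of Sections~\ref{sec:kn:weak:quasimodels}--\ref{sec:s5n:weak:quasimodels} together with the guarded-fragment facts collected in Lemma~\ref{lem:guardedfragment}, following the same template as the one-variable case (Theorem~\ref{thm:complexity:onevar}). For the upper bound I would first invoke Theorem~\ref{thm:reductions}\,(c) to reduce expanding-domain validity to constant-domain validity, and pass to the dual satisfiability problem for a $\QGuardmonMLc$-sentence $\varphi$ (which is in particular a $\QmonMLc$-sentence, so the quasimodel lemmas apply). By the chain of Lemmas~\ref{lem:k:tree-shaped}, \ref{lemma:quasimodel} and~\ref{lem:weakprequasi:k} for $\Kn$ (and their $\Sfiven$ counterparts~\ref{lem:s5:tree-shaped}, \ref{lem:weakprequasi:s5}), $\varphi$ is satisfiable iff it is satisfied by a weak quasimodel $\quasimod=(\Fmf,\funcand,\runs,\prset)$ based on a tree-shaped frame with $|W|\le 2^{O(\md(\varphi)\cdot|\varphi|)}$; conversely, Lemma~\ref{lemma:k:finite-approx} (resp.~\ref{lemma:s5:finite-approx}) saturates any such weak quasimodel into a genuine quasimodel, and hence, by Lemma~\ref{lemma:quasimodel}, into an interpretation. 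So it remains to decide the existence of a satisfying weak quasimodel on an exponential-size frame in $2\ExpTime$.

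The crux is that quasistates may carry unboundedly large or infinite multiplicities, so I would first put the weak quasimodel into a canonical form. Using the upward closure of the guarded fragment, Lemma~\ref{lem:guardedfragment}\,(1), every non-constant type present in a quasistate may be reassigned multiplicity $\aleph_0$, while every present constant-containing type keeps multiplicity $1$ (forced by realisability): since $\funcand(\avec{w})\le_{\textit{gf}}\funcand^{*}(\avec{w})$ for the inflated quasistate $\funcand^{*}(\avec{w})$, Lemma~\ref{lem:guardedfragment}\,(1) guarantees $\funcand^{*}(\avec{w})$ is again a quasistate, and condition~\ref{run:exists} is re-established by taking $\aleph_0$ copies of each prototype run (keeping the unique constant runs single). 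In this canonical form the whole weak quasimodel is described by the frame, the support of each quasistate, and one saturated prototype run $\prfun{\avec{w},\contp}$ per present type --- an object of size $2^{O(\md(\varphi)\cdot|\varphi|)}$. The decision procedure guesses this description and verifies: weak-run coherence~\ref{rn:modal} between adjacent types along the prototype runs; saturation of each prototype run at its anchor world, i.e.~\ref{approx:R2}, with witnesses inside the guessed frame; condition~\ref{b1}, that some support contains $\varphi$; and, crucially, that each canonical quasistate is realisable. Because every multiplicity now lies in $\{0,1,\aleph_0\}$, realisability of each of the (exponentially many) quasistates is decided in $2\ExpTime$ by Lemma~\ref{lem:guardedfragment}\,(3), so the verification stays in $2\ExpTime$; as the guessed object has exponential size, exhaustive search over all candidate descriptions keeps the whole algorithm in $2\ExpTime$.

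For the matching lower bound I would reduce from validity in the non-modal guarded fragment $\textsf{GF}$, which is $2\ExpTime$-hard~\cite{DBLP:journals/jsyml/Gradel99}. A $\QGuardmonMLc$-sentence $\chi$ containing no modal operators is simply a $\textsf{GF}$-sentence (possibly with constants), and its truth in any interpretation depends only on the first-order structure at each individual world. Hence, for any nonempty frame class --- in particular $\Kn$ and $\Sfiven$, which contain single-world frames under both domain semantics --- $\chi$ is $\Cmc$-valid iff $\chi$ is $\textsf{FO}$-valid. Thus the $2\ExpTime$-hardness of $\textsf{GF}$-validity transfers verbatim to $\Kn$- and $\Sfiven$-validity in $\QGuardmonMLc$, matching the upper bound.

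The step I expect to be most delicate is the canonicalization of multiplicities together with the bookkeeping for constant-containing types: Lemma~\ref{lem:guardedfragment}\,(1) only closes quasistates \emph{upward}, so unlike the one-variable case (Lemma~\ref{lem:qs:closure:1}) I cannot prune runs downwards; instead I must inflate non-constant multiplicities to $\aleph_0$ while ensuring each constant is realised by a single, uniquely determined run, so that~\ref{run:exists} is not violated at constant types (a prototype run hitting a constant type must coincide with that constant's run). Matching the single-exponential multiplicity bound demanded by Lemma~\ref{lem:guardedfragment}\,(3) is then immediate, as the canonical multiplicities already lie in $\{0,1,\aleph_0\}$.
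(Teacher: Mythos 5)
Your overall architecture is the paper's: reduce expanding to constant domains via Theorem~\ref{thm:reductions}\,(c), pass to satisfiability, use Lemmas~\ref{lem:k:tree-shaped}, \ref{lemma:quasimodel}, \ref{lem:weakprequasi:k} and~\ref{lemma:k:finite-approx} (and the $\Sfiven$ analogues) to reduce to weak quasimodels on exponential-size tree-shaped frames, decide per-world realisability with guarded-fragment model theory, and get hardness from non-modal $\textsf{GF}$. The lower bound is fine. The gap is in your canonicalisation of multiplicities. You inflate \emph{every} present non-constant type to $\aleph_0$ and claim that condition~\ref{run:exists} is re-established by taking $\aleph_0$ copies of each prototype run while keeping the constant runs single. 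This is precisely where the argument breaks: the prototype run for a non-constant type $\contp$ at a world $\avec{v}$ may be forced by coherence to pass through the unique, multiplicity-one constant type at another world, so it cannot be duplicated, and no other coherent run through $(\avec{v},\contp)$ need exist. Since your verification never checks~\ref{run:exists}, the procedure is unsound. Concretely, on constant domains the $\QGuardmonMLc$-sentence
\[
\exists x\,(x=c)\ \land\ \forall x\,\bigl(\Diamond_a A(x)\to x=c\bigr)\ \land\ \Diamond_a\Bigl(\exists x\,A(x)\ \land\ \forall x\,\bigl(A(x)\to\exists y\,(E(x,y)\land A(y)\land \lnot(y=x))\bigr)\Bigr)
\]
is unsatisfiable (the $a$-successor needs at least two $A$-elements, yet each of them satisfies $\Diamond_a A$ at the root and hence must equal $c$ there), but your algorithm accepts it: the successor support with the $A$-type at multiplicity $\aleph_0$ is realisable in isolation, and the prototype run through it (via the $x=c$ type at the root) is coherent and saturated.

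The fix is the finer distinction the paper makes: inflate $\funcand(\avec{w},\contp)$ to $\aleph_0$ only when $\contp$ is hit at $\avec{w}$ by some run that never carries a constant at \emph{any} world; types hit only by constant-carrying runs keep their original multiplicities, which are bounded by $|W|\times|\varphi|$ since realisability forces each constant-containing type to have multiplicity one. This is also why Lemma~\ref{lem:guardedfragment}\,(3) is formulated for exponentially bounded finite multiplicities rather than for $\{0,1,\aleph_0\}$, and why upward closure (Lemma~\ref{lem:guardedfragment}\,(1)) alone does not suffice. You correctly identify this as the delicate step, but the resolution you sketch does not resolve it; with the paper's run-based distinction in place, the rest of your argument goes through.
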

\begin{proof}
	We consider validity on $\Kn$-frames with constant domains; the proof for $\Sfiven$ is similar. 2\ExpTime-hardness and the result for expanding domains (by Theorem~\ref{thm:reductions}~(c)) then follow from the complexity of~$\textsf{GF}$~\cite{DBLP:journals/jsyml/Gradel99}.
	
	Assume $\varphi$ is given. By Lemmas~\ref{lemma:quasimodel}--\ref{lem:weakprequasi:k}, 
	$\varphi$ is satisfiable in some $\Kn$ frame iff there is a weak quasimodel $\Qmf = (\Fmf,\funcand,\runs,\prset{})$ satisfying $\varphi$ based on a $\Tree{d(\varphi)}$ frame  of size~$2^{O(d(\varphi)\cdot |\varphi|)}$.
	Observe that the number of weak runs $\rho$ such that $\rho(\avec{w})$ contains some constant for some $\avec{w}\in W$ is bounded by $|W| \times |\varphi|$. Let $\runs_{c}$ be the multiset of these weak runs. By Lemma~\ref{lem:guardedfragment}~(1), the tuple $\Qmf' = (\Fmf,\funcand',\runs',\prset{})$ defined by setting 
%
	\begin{align*}
		\funcand'(\avec{w},\contp) & = \begin{cases}
			\aleph_{0}, & \text{ if $\rho(\avec{w})=\contp$ for some $\rho\in \runs\setminus \runs_{c}$},\\
			\funcand(\avec{w},\contp), & \text{ otherwise},
		\end{cases}
		&& \text{for } \avec{w}\in W \text{ and } \contp,\\
		\runs'(\rho) & = \begin{cases}
			\aleph_{0}, & \text{ if $\rho\in \runs\setminus \runs_{c}$},\\
			\runs(\rho), & \text{ otherwise},\\
		\end{cases}
		&& \text{for } \rho\in\runs,
	\end{align*}
	is again a weak quasimodel
	satisfying $\varphi$. In $\Qmf'$, we have, for any $\avec{w}\in W$ and type~$\contp$, either $\funcand'(\avec{w},\contp)\leq |W| \times |\varphi|$ or $\funcand'(\avec{w},\contp)=\aleph_{0}$. We can enumerate in double exponential time all such
	$\Qmf'$ if we admit $\funcand'(\avec{w})$ to be (not necessarily realisable) quasistate candidates for~$\varphi$. Next, we can use Lemma~\ref{lem:guardedfragment}~(3) to check in double exponential time for each $\avec{w}$ whether $\funcand'(\avec{w})$ is realisable and so $\Omf'$ is a quasimodel. Thus, we obtain a decision procedure in double exponential time.
	%
	%
	%
\end{proof}

\subsection{Two-Variable Fragment with Counting and Presburger Arithmetic}

To prove decidability and obtain tight complexity bounds for 
the two-variable fragment with counting, we now give a different description of constant-domain weak quasimodels which replaces the multiset of weak runs with a \emph{set} of  locally saturated weak runs, the number of which has only an \emph{upper bound} (in contrast to \ref{run:exists}) along with local constraints on the relationship between quasistates. This allows us to use known, or easily derived, characterisations of quasistates, which together with the local constraints on the relationship between quasistates 
can then be translated to constraints in decidable fragments of arithmetic over natural numbers extended with infinity $(\aleph_0)$.

\newcommand{\runsp}{\Rmf^{\mathfrak{p}}}

Let $\varphi$ be a $\QmonMLc$-sentence and $\Fmf = (W, \{R_a\}_{a\in A})$ a tree-shaped $\Kn$ frame. 
%
Consider a weak quasimodel   $(\Fmf,\funcand,\Rmf,\prset{})$ for $\varphi$ based on $\Fmf$. 
We begin by observing that the prototype function $\prset$ gives rise to a \emph{set} $\runsp$   of weak runs defined as follows: 
\begin{equation*}
\runsp(\rho) = \begin{cases}
1, & \text{if } \rho = \prfun{\avec{w},\contp}, \text{ for some } \avec{w}\in W \text{ and } \contp\in\funcand(\avec{w}),\\
0, & \text{otherwise;}
\end{cases}
\end{equation*}
note that $\runsp$ does not contain duplicates. It should be clear that $\runsp$ is included in $\runs$ as a multiset, and therefore satisfies the following weakening of~\ref{run:exists}: 
\begin{enumerate}
	[label=\textbf{\bfseries (card$'$)},leftmargin=*,series=run]
	\item\label{run:exists3} $\funcand(\avec{w},\contp)\geq |\runsp_{\avec{w},\contp}|$, for every $\avec{w} \in W$ and every type $\contp$ for $\varphi$,
	%
\end{enumerate}
In order to be able to recover $\runs$ from a  given $\runsp$, however, we will require some additional local constraints.

A \emph{link} between quasistates $\avec{n}_1$ and~$\avec{n}_2$
is a function $L$ that assigns a multiplicity  $L(\contp_1,\contp_{2})\in\extN$ to any pair $\contp_1,\contp_2$ of types for $\varphi$ such that
\begin{align*}
& \sum_{\contp}L(\contp,\contp_{2})=\avec{n}_2(\contp_2), && \text{ for all types } \contp_2,\\
& \sum_{\contp}L(\contp_{1},\contp)=\avec{n}_1(\contp_1), && \text{ for all types } \contp_1.
\end{align*} 
Links describe runs locally in the sense that $L(t_1,t_2)$-many weak runs simultaneously have type $t_1$ in $\avec{n}_1$  and type $t_2$ in $\avec{n}_2$.

For uniformity with $\Sfiven$ tree-shaped frames, in the sequel we shall use the $a$-successor relation $\prec_a$, for $a\in A$, defined by taking $\avec{w} \prec_a \avec{w}'$ iff $\avec{w}'$ is of the form $\avec{w}aw$, for some~$w$; clearly, $R_a$ coincides with $\prec_a$  for tree-shaped $\Kn$ frames; in $\Sfiven$ frames, each $R_a$ is the reflexive and transitive closure of~$\prec_a$.  

A \emph{$\Kn$ weak pre-quasimodel for~$\varphi$} is a quadruple $\quasimod =
(\Fmf, \funcand, \prset{}, \Lmf)$ such that $(\Fmf, \funcand)$ is a basic structure for $\varphi$,
$\prset$ is a prototype function satisfying~\ref{approx:R2} for the set $\runsp$ of weak runs and $\Lmf$ is a set of links $L_{\avec{w},\avec{v}}^{a}$ between quasistates $\funcand(\avec{w})$ and~$\funcand(\avec{v})$, for $a\in A$ and worlds $\avec{w},\avec{v}\in W$ with $\avec{w}\prec_{a} \avec{v}$,  such that
\begin{enumerate}
	[label=\textbf{\bfseries (link$_{\theenumi}$)},leftmargin=*,series=run]
	\item\label{eq:links-coh} $\contp_1\rightarrow_a\contp_2$ whenever $L_{\avec{w},\avec{v}}^{a}(\contp_{1},\contp_{2})>0$, for all types $\contp_1, \contp_2$ for $\varphi$;
	\item\label{eq:links}
$L_{\avec{w},\avec{v}}^{a}(\contp_{1},\contp_{2}) \geq |\runsp_{\avec{w},\contp_1,\avec{v},\contp_2}|$, for all types $\contp_1, \contp_2$ for $\varphi$,
\end{enumerate}
%
%
where $\runsp_{\avec{w},\contp_1,\avec{v},\contp_2}$ is the set of weak runs $\rho\in\runsp$ with $\rho(\avec{w})=\contp_1$ and $\rho(\avec{v})=\contp_2$ (again, this is a set simply because $\runsp$ does not contain duplicates). We say $\Qmf$ satisfies $\varphi$ if~\ref{b1} holds.

\begin{lemma}\label{lem:weakpreweak}
For  constant domains, 
	the following conditions are equivalent for any basic structure $(\Fmf,\funcand)$ based on a finite tree-shaped~$\Kn$ frame $\Fmf$\textup{:}
	\begin{enumerate}
		\item $(\Fmf,\funcand)$ can be expanded to a $\Kn$ weak quasimodel satisfying $\varphi$\textup{;}
		\item $(\Fmf,\funcand)$ can be expanded to a $\Kn$ weak pre-quasimodel satisfying $\varphi$
		with the number of weak runs bounded by $|W|\times 2^{|\varphi|}$.
	\end{enumerate}
\end{lemma}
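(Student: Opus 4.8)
The plan is to read the links in $\Lmf$ as edge-marginals of a flow on the finite tree $\Fmf$, and to prove the equivalence by passing between a multiset $\runs$ of full runs that realises \textbf{(card)} exactly and its compressed form $(\prset,\Lmf)$: here $\prset$ records a handful of saturated prototype runs while each $L^a_{\avec{w},\avec{v}}$ records how many runs make each type-to-type transition across the edge $\avec{w}\prec_a\avec{v}$. Throughout I will use that $\Fmf$ is finite with root $w_0$, that there are at most $2^{|\varphi|}$ types, and that multiplicities live in $\extN$, so $m-n$ is well-defined whenever $n$ is finite and $m\geq n$; in constant domains all runs are full, hence every run carries a single type at every world.

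For $(1)\Rightarrow(2)$ I would keep $\funcand$ and $\prset$ and set, for each edge $\avec{w}\prec_a\avec{v}$, $L^a_{\avec{w},\avec{v}}(\contp_1,\contp_2)=|\runs_{\avec{w},\contp_1,\avec{v},\contp_2}|$. Summing over $\contp_2$ collapses to $|\runs_{\avec{w},\contp_1}|=\funcand(\avec{w},\contp_1)$ by \textbf{(card)}, and symmetrically over $\contp_1$, so each $L^a_{\avec{w},\avec{v}}$ is a link between $\funcand(\avec{w})$ and $\funcand(\avec{v})$; \textbf{(link$_1$)} is just coherence \textbf{($a$-r-coh)} of the runs. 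Since $\runsp\leq\runs$ as multisets, both \textbf{(card$'$)} and \textbf{(link$_2$)} hold, and \textbf{(wq-sat)} for $\runsp$ is inherited from the quasimodel. Finally $\runsp$ carries at most one run per pair $(\avec{w},\contp)$, which gives the bound $|W|\times 2^{|\varphi|}$.

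The substantive direction is $(2)\Rightarrow(1)$, and my plan there is a top-down flow decomposition. First I would put each prototype run of $\runsp$ into $\runs$ with multiplicity one; these are full and coherent and, by \textbf{(wq-sat)}, already supply a saturated witness in $\runs_{\avec{w},\contp}$ for every $(\avec{w},\contp)$, so keeping $\prset$ unchanged will secure \textbf{(wq-sat)} for the output. Next I would form the residual node capacities $\funcand(\avec{w},\contp)-|\runsp_{\avec{w},\contp}|$ and residual edge capacities $L^a_{\avec{w},\avec{v}}(\contp_1,\contp_2)-|\runsp_{\avec{w},\contp_1,\avec{v},\contp_2}|$, which are non-negative by \textbf{(card$'$)} and \textbf{(link$_2$)}. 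Because each prototype run is full, the same collapse as above shows that the residual edge data are again links, now between the residual quasistates, and that they still satisfy \textbf{(link$_1$)}. It then remains to realise these residual links by full runs, which I would do by sweeping the tree from the root: maintaining the invariant that the number of residual runs carrying type $\contp$ at a node $\avec{w}$ equals the residual capacity there, I distribute these runs to the children of $\avec{w}$ so that, for each child $\avec{v}_i$, the induced transition counts match the residual marginal prescribed at that edge. Coherence of the new runs is exactly \textbf{(link$_1$)} for the residual links, and the node bookkeeping together with the re-added $\runsp$ yields \textbf{(card)}; condition \textbf{(b)} transfers directly since $\funcand$ is untouched.

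The hard part will be the joint assignment over a node's children in the presence of infinite multiplicities: among the (possibly $\aleph_0$-many) residual runs sharing type $\contp$ at $\avec{w}$, I must attach to each run a tuple of child-types whose projection to each child realises the prescribed marginal. This amounts to the existence of a coupling of finitely many $\extN$-valued marginals with equal total mass, which I would obtain by induction on the number of children from the two-marginal case; what makes the infinite case go through is that, since the type set is finite, total mass $\aleph_0$ forces an $\aleph_0$-valued cell in each marginal, and concentrating the surplus on a common cell produces a valid coupling. Realising a coupling is then a routine partition of a countable run-set into cells of prescribed cardinalities. I expect this coupling step, together with the careful $\extN$-arithmetic in the residual-link computation, to be the only genuinely delicate part; everything else is finite bookkeeping on the tree.
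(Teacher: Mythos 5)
Your proposal is correct, and while your direction $(1)\Rightarrow(2)$ coincides with the paper's (define $L^a_{\avec{w},\avec{v}}(\contp_1,\contp_2)=|\runs_{\avec{w},\contp_1,\avec{v},\contp_2}|$ and check the marginals via~\ref{run:exists}), your direction $(2)\Rightarrow(1)$ takes a genuinely different route. The paper builds the multiset $\runs$ \emph{one run at a time}: it enumerates triples $(\avec{w}_i,\contp_i,\ell_i)$ and, whenever an indexed run is missing, grows a single new weak run outward from $\avec{w}_i$ world by world, at each edge choosing a successor type whose current transition count is \emph{strictly} below the link value; the invariants maintained are the inequalities~\ref{run:exists3} and~\ref{eq:links} for each finite stage, and the exact condition~\ref{run:exists} is recovered only in the limit $\hat{\Rmf}=\bigcup_i\hat{\Rmf}^i$. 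You instead subtract the prototype runs of $\runsp$ up front, form residual node and edge capacities, and realise the entire residual multiset in one top-down sweep, reducing the crux to a coupling lemma for finitely many $\extN$-valued marginals of equal total mass over a node's children. Your delicate points are exactly the ones you flag: that the residual edge data remain links (which uses fullness of the prototype runs and the fact that a finite sum over finitely many types equals $\aleph_0$ only if some summand does), and the $\aleph_0$ case of the coupling (handled by routing all surplus through an infinite cell, which exists in each marginal since the type set is finite and which does not disturb coherence because positive coupling mass forces positive marginal mass, hence $L>0$, hence $\contp\rightarrow_a\contp'$ by~\ref{eq:links-coh}). What each approach buys: the paper's greedy construction avoids any coupling argument but must verify that the extension never gets stuck and must pass to a limit; yours is a single-pass decomposition that makes condition~\ref{run:exists} hold by construction at every node, at the price of the coupling lemma and slightly heavier $\extN$-arithmetic. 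Note also that your root-down sweep suffices only because the lemma is stated for constant domains (all runs full); the paper's bidirectional extension is the version that generalises more readily.
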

\begin{proof}
 (1) $\Rightarrow$ (2) Assume a weak quasimodel $(\Fmf,\funcand,\Rmf,\prset{})$ is given. To define a weak pre-quasimodel, we take $\Lmf$ that contains, for each $a\in A$ and $\avec{w},\avec{v}\in W$ with $\avec{w}\prec_{a}\avec{v}$, a link $L_{\avec{w},\avec{v}}^{a}$ defined by taking
 $$
 L_{\avec{w},\avec{v}}^{a}(t_1,t_2) = |\runs_{\avec{w},\contp_1,\avec{v},\contp_2}|, \text{ for all types } t_1, t_2 \text{ for } \varphi.
 $$
 It is readily checked that $(\Fmf,\funcand,\prset{},\Lmf)$ is a $\Kn$ weak pre-quasimodel satisfying $\varphi$ and
the set $\runsp$  contains at most $|W|\times 2^{|\varphi|}$ weak runs.
 
 \smallskip
 
 (2) $\Rightarrow$ (1) Let $(\Fmf,\funcand,\prset{},\Lmf)$ be a weak pre-quasimodel satistying $\varphi$.
 To define a weak quasimodel $(\Fmf,\funcand,\Rmf,\prset{})$ satisfying $\varphi$, we extend the set $\runsp$ to a multiset $\Rmf$ of weak runs so that~\ref{run:exists} holds.
 To this end, we take an enumeration $(\avec{w}_1,\contp_1,\ell_1), (\avec{w}_2,\contp_2,\ell_2),\dots$, where 
 $\avec{w}_i\in W$, $\contp_i\in\funcand(\avec{w}_i)$ and $\ell_i\in\mathbb{N}$ with $\ell_i< \funcand(\avec{w},\contp)$, for all $i > 0$, and ensure inductively that there is an indexed weak run $(\rho_i,\ell_i)\in\hat{\Rmf}_{\avec{w}_i,\contp_i}$  for any $\avec{w}_i\in W$ and type $\contp_i\in\funcand(\avec{w}_i)$. 
 
 Assume we have constructed $\hat{\Rmf}^0\subseteq \ldots \subseteq \hat{\Rmf}^{i-1}$ with \mbox{$\hat{\Rmf}^0=\{ (\rho,0) \mid \rho\in \runsp\}$} so that $\hat{\Rmf}^{i-1}$ contains
indexed witness weak runs for the first $(i-1)$ members of the enumeration and also satisfies~\ref{run:exists3} and~\ref{eq:links} for all $a\in A$ and $\avec{w},\avec{v}\in W$ with $\avec{w}\prec_a\avec{v}$. Consider the $i$th element $(\avec{w}_i,\contp_i,\ell_i)$ of the enumeration. If $\hat{\Rmf}^{i-1}$  already contains $(\rho,\ell_i)$ with $\rho(\avec{w}_i) = \contp_i$, then we set $\hat{\Rmf}^i=\hat{\Rmf}^{i-1}$, thus satisfying~\ref{run:exists3} and~\ref{eq:links}. Otherwise, we have
 $|\Rmf^{i-1}_{\avec{w}_i, \contp_i}| < \funcand(\avec{w}_i,\contp_i)$ and define a weak run $\rho_i$ such that $\rho_i(\avec{w}_i)=\contp_i$ and the result $\hat{\Rmf}^i$ of extending $\hat{\Rmf}^{i-1}$ with $(\rho_i,\ell_i)$ satisfies~\ref{run:exists3} and~\ref{eq:links}. 
 
 We define $\rho_i$ by induction on the structure of the $\prec_a$, for $a\in A$, starting from $W^0=\{\avec{w}_i\}$, ensuring~\ref{eq:links}, for $\Rmf^{i-1}$ and every $\avec{w},\avec{v}\in W^j$ with $\avec{w}\prec_a\avec{v}$, and
 \begin{equation}\label{run:exists3-strict}
 |\Rmf^{i-1}_{\avec{w}, \contp}| < \funcand(\avec{w},\contp), \text{ where }  \contp = \rho_i(\avec{w}),
 \end{equation}  
 for every $\avec{w}\in W^j$; cf.~\ref{run:exists3}.
 For $W^0=\{\avec{w}_i\}$ with $\rho_i(\avec{w}_i)=\contp_i$ these conditions hold.  Suppose $\rho_i$ is defined on $W^j$. We have the following cases.
\begin{itemize}
\item If $\avec{w}\in W^j$ but $\avec{v}\notin W^j$ with \mbox{$\avec{w}\prec_{a}\avec{v}$}, then we consider $\contp = \rho_i(\avec{w})$.
By~\eqref{run:exists3-strict}, as $\funcand(\avec{w},\contp) = \sum_{\contp'}L_{\avec{w},\avec{v}}^a(\contp,\contp')$, we can find a type $\contp'$ for~$\varphi$ with 
\mbox{$|\Rmf^{i-1}_{\avec{w},\contp, \avec{v},\contp'}|<L_{\avec{w},\avec{v}}^a(\contp,\contp')$}.
We extend $\rho_i$ from $W^j$ to $W^{j+1} = W^j \cup \{\avec{v}\}$ by setting $\rho_i(\avec{v})=\contp'$. Observe that~\eqref{run:exists3-strict} holds for $\avec{v}$ due to~\ref{eq:links} and $\sum_{\contp''}L_{\avec{w},\avec{v}}^a(\contp'',\contp')=\funcand(\avec{v},\contp')$.
 
\item If  $\avec{w}\in W^j$ but $\avec{v}\notin W^j$ with $\avec{v}\prec_{a}\avec{w}$, then the definition of $\rho_i$ on $\avec{v}$ is similar and left to the reader.
\end{itemize}
Observe that when $\rho_i$ is defined on $W$, it is a weak run due to~\ref{eq:links-coh}.
 
Finally, let $\Rmf$ be such that $\hat{\Rmf} = \bigcup_i \hat{\Rmf}^i$. It should be clear that the limit~$\Rmf$ satisfies~\ref{run:exists} rather than~\ref{run:exists3}, which holds for the individual stages~$\hat{\Rmf}^i$.  Thus,  $(\Fmf,\funcand,\Rmf,\prset{})$ is a weak quasimodel satisfying $\varphi$.
 \end{proof}
 
 The above construction can be easily adapted for an $\Sfiven$ tree-shaped frame $\Fmf = (W, \{R_a\}_{a\in A})$. 
An \emph{$\Sfiven$ weak pre-quasimodel for~$\varphi$} is a quadruple $\quasimod =
(\Fmf, \funcand, \prset{}, \Lmf)$ such that 
$\prset{}$ is a prototype function with full weak runs satisfying~\mbox{\ref{approx:R2:s5}} with $\runsp$ and~\ref{run:exists3}, and
$\Lmf$ is a set of links $L_{\avec{w},\avec{v}}^{a}$ between quasistates $\funcand(\avec{w})$ and~$\funcand(\avec{v})$, for $a\in A$ and worlds $\avec{w},\avec{v}\in W$ with $\avec{w}\prec_{a} \avec{v}$, satisfying~\ref{eq:links} and  
the following counterpart of~\ref{eq:links-coh}: 
\begin{enumerate}
	[label=\textbf{\bfseries (link$_{\theenumi}$-s5)},leftmargin=*,series=run]
	\item\label{eq:links-coh:s5} $\contp_1\leftrightarrow_a\contp_2$ whenever $L_{\avec{w},\avec{v}}^{a}(\contp_{1},\contp_{2})>0$, for all types $\contp_1, \contp_2$ for $\varphi$.
\end{enumerate}
We then have the following analogue of Lemma~\ref{lem:weakpreweak}:
\begin{lemma}\label{lem:weakpreweak:s5}
The following conditions are equivalent for any basic structure $(\Fmf,\funcand)$ based on a finite tree-shaped~$\Sfiven$ frame $\Fmf$\textup{:}
	\begin{enumerate}
		\item $(\Fmf,\funcand)$ can be expanded to an $\Sfiven$ weak quasimodel satisfying $\varphi$\textup{;}
		\item $(\Fmf,\funcand)$ can be expanded to an $\Sfiven$ weak pre-quasimodel satisfying $\varphi$
		with the number of weak runs bounded by $|W|\times 2^{|\varphi|}$.
	\end{enumerate}
\end{lemma}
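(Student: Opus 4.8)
The plan is to mirror the proof of Lemma~\ref{lem:weakpreweak}, substituting the $\Sfiven$-specific coherence and saturation conditions for their $\Kn$ counterparts and exploiting that each $\leftrightarrow_a$ is an \emph{equivalence} relation. Throughout, all weak runs are full, as is forced in any $\Sfiven$ interpretation. As before, I would establish the two implications separately, carrying the prototype function $\prset{}$ over unchanged in both directions so that~\ref{approx:R2:s5} and~\ref{run:exists3} are automatically preserved.

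For (1)$\Rightarrow$(2), given a weak quasimodel $(\Fmf,\funcand,\Rmf,\prset{})$, I would define, for each $a\in A$ and each pair $\avec{w}\prec_a\avec{v}$, a link by setting $L_{\avec{w},\avec{v}}^a(\contp_1,\contp_2)=|\runs_{\avec{w},\contp_1,\avec{v},\contp_2}|$. The marginal identities match $\funcand(\avec{w})$ and $\funcand(\avec{v})$ by~\ref{run:exists}, condition~\ref{eq:links} is immediate since $\runsp$ is a submultiset of $\runs$, and~\ref{eq:links-coh:s5} follows directly from the weak $\Sfiven$-run condition: any $\rho$ realising $\contp_1$ at $\avec{w}$ and $\contp_2$ at $\avec{v}$ with $\avec{w}R_a\avec{v}$ witnesses $\contp_1\leftrightarrow_a\contp_2$. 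The size bound $|W|\times 2^{|\varphi|}$ on $\runsp$ is as in Lemma~\ref{lem:weakpreweak}.

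For (2)$\Rightarrow$(1), I would run the same enumeration-and-completion argument: starting from $\hat{\Rmf}^0=\{(\rho,0)\mid\rho\in\runsp\}$, process triples $(\avec{w}_i,\contp_i,\ell_i)$ and, whenever a witness is missing, build a full weak run $\rho_i$ with $\rho_i(\avec{w}_i)=\contp_i$ by induction over the $\prec_a$-tree. At each step I extend $\rho_i$ across a single $\prec_a$-edge (forward $\avec{w}\prec_a\avec{v}$ or backward $\avec{v}\prec_a\avec{w}$) by choosing a type $\contp'$ with $|\Rmf^{i-1}_{\avec{w},\contp,\avec{v},\contp'}|<L_{\avec{w},\avec{v}}^a(\contp,\contp')$; such a $\contp'$ exists by the analogue of the strict-inequality invariant~\eqref{run:exists3-strict} together with the link marginals, and the choice maintains~\ref{eq:links}. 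Since the $\prec_a$-edges over all $a$ span the prefix-closed word set $W$, each $\rho_i$ is full, and the limit $\Rmf$ satisfies~\ref{run:exists}; conditions~\ref{b1} and~\ref{run:exists3} are inherited exactly as in Lemma~\ref{lem:weakpreweak}.

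The main obstacle I expect is verifying that each reconstructed $\rho_i$ is a genuine weak $\Sfiven$ run, i.e.\ that $\rho_i(\avec{w})\leftrightarrow_a\rho_i(\avec{v})$ for \emph{all} $\avec{w}R_a\avec{v}$, whereas~\ref{eq:links-coh:s5} only delivers $\leftrightarrow_a$ across single $\prec_a$-edges. To close the gap I would invoke the structural description~\eqref{eq:s5n:frame}: any $R_a$-related $\avec{w},\avec{v}$ are equal, $\prec_a$-adjacent, or share a common $\prec_a$-predecessor $\avec{u}$ with $\avec{u}\prec_a\avec{w}$ and $\avec{u}\prec_a\avec{v}$. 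In the last case $\rho_i(\avec{u})\leftrightarrow_a\rho_i(\avec{w})$ and $\rho_i(\avec{u})\leftrightarrow_a\rho_i(\avec{v})$, so symmetry and transitivity of the equivalence relation $\leftrightarrow_a$ yield $\rho_i(\avec{w})\leftrightarrow_a\rho_i(\avec{v})$. This is precisely the place where the equivalence-relation structure of $\Sfiven$, absent in the $\Kn$ setting, must be used explicitly.
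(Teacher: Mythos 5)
Your proposal is correct and follows essentially the route the paper intends: the paper gives no separate proof for this lemma, presenting it as the direct analogue of Lemma~\ref{lem:weakpreweak}, and your adaptation—reusing the link construction and the enumeration-and-completion argument with full runs—is exactly that adaptation. You also correctly pinpoint and resolve the one genuinely new step, namely lifting $\leftrightarrow_a$-coherence from single $\prec_a$-edges to all $R_a$-related pairs via~\eqref{eq:s5n:frame} and the fact that $\leftrightarrow_a$ is an equivalence relation.
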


We are in a position now to give rather quick decidability proofs for monodic fragments over $\Kn$ and $\Sfiven$ frames. We consider $\Kn$ with constant domains: the approach for $\Sfiven$ is similar, and decidability for expanding domains then follows by Theorem~\ref{thm:reductions}~(c). Fix a sentence $\varphi$. The idea is to decide the existence of weak pre-quasimodel for a sentence $\varphi$ by encoding into a decidable but powerful fragment of arithmetic, for instance, Presburger Arithmetic with infinity~\cite{DBLP:conf/casc/LasarukS09}. To implement this idea, we require some notation for numbers. By $\extN^{k}$ we denote the set of $k$-tuples over~$\extN$. 
We fix an ordering $\contp_{1},\ldots,\contp_{k_{\varphi}}$ of the types for $\varphi$. Then a quasistate candidate $\avec{n}$ for $\varphi$ can be identified with a $k_\varphi$-tuple $\avec{n}=(n_{1},\ldots,n_{k_{\varphi}})\in \extN^{\smash{k_{\varphi}}}$ by setting $n_{i}=\avec{n}(\contp_{i})$, for $1\leq i \leq k_{\varphi}$. In this case we also say the tuple \emph{represents} the quasistate candidate; the same applies to quasistates.
Now, by Lemmas~\ref{lemma:quasimodel}--\ref{lem:weakprequasi:k}, 
$\varphi$ is satisfiable in some $\Kn$ frame iff there is a weak
quasimodel $\Qmf = (\Fmf,\funcand,\runs,\prset{})$
satisfying~$\varphi$ based on a $\Tree{d(\varphi)}$ frame $\Fmf = (W,
\{R_a\}_{a\in A})$ with~$|W| \leq 2^{O(d(\varphi)\cdot
  |\varphi|)}$. By Lemma~\ref{lem:weakpreweak}, this is the case iff
there is a weak pre-quasimodel $\quasimod =
(\Fmf, \funcand, \prset{}, \Lmf)$ satisfying $\varphi$ such that in addition
$|\runsp| \leq |W|\times 2^{|\varphi|}$. It follows that the only objects in~$\quasimod$ that do not have 
an exponential upper bound are the quasistates $\funcand(\avec{w})$ and the links~$\Lmf$.

Assume now that we have a monodic fragment $\mathsf{L}$ such that we can construct, for any sentence $\varphi\in \mathsf{L}$, a formula $\textit{Pres}_{\varphi}(x_{1},\ldots,x_{k_{\varphi}})$ in Presburger Arithmetic with infinity (PAI) such that
\begin{enumerate}
	[label=\textbf{\bfseries (encode)},leftmargin=*,series=run]
	\item\label{eq:encodePres} a $k_{\varphi}$-tuple $\avec{n}\in\extN^{\smash{k_{\varphi}}}$ represents a quasistate for $\varphi$ iff PAI $\models \textit{Pres}_{\varphi}(\avec{n})$.  
\end{enumerate}
This is known, for instance, if $\mathsf{L}$ is $\CtwomonMLc$ (see below). 
To decide satisfiability of $\varphi$ by checking the existence of a weak pre-quasimodel $\quasimod$, guess finite $\Fmf$ and~$\prset{}$ (and so also $\runsp$), take for each $\avec{w}\in W$ individual variables
$x_{1}^{\avec{w}},\ldots,x_{k_{\varphi}}^{\avec{w}}$ and the formulas $\textit{Pres}_{\varphi}( x_{1}^{\avec{w}},\ldots,x_{k_{\varphi}}^{\avec{w}})$, and also take individual variables $x_{a,\avec{w},\avec{v},i,j}$ representing $L_{\avec{w},\avec{v}}^{a}(\contp_{i},\contp_{j})$ for $a\in A$, $\avec{w},\avec{v}\in W$ with $\avec{w}R_a\avec{v}$ and $1\leq i,j\leq k_{\varphi}$, and the conjunction $\textit{Link}_{\Fmf,\prset}$ of the following equalities and inequalities:
\begin{align*}
	\sum_{i=1}^{k_{\varphi}}x_{a,\avec{w},\avec{v},i,j} & =x_{j}^{\avec{v}}, &&\text{ for all } 1\leq j \leq k_{\varphi};\\
	\sum_{j=1}^{k_{\varphi}}x_{a,\avec{w},\avec{v},i,j} & =x_{i}^{\avec{w}}, &&\text{ for all } 1\leq i \leq k_{\varphi};\\
	x_{a,\avec{w},\avec{v},i,j} & =0, && \text{ for all } 1\leq i,j \leq k_{\varphi} \text{ with } \contp_{i} \not\rightarrow_{a} \contp_{j};\\
	x_{a,\avec{w},\avec{v},i,j} & \geq |\runsp_{\avec{w},\contp_i,\avec{v},\contp_j}|, && \text{ for all } 1\leq i,j \leq k_{\varphi}. 
\end{align*}
Then, by the definition of the above formulas, we have the following equivalence.

\begin{lemma}\label{lem:pres-based}
There exists a weak pre-quasimodel for $\varphi$ based on a given $\Fmf$ with a given prototype function $\prset{}$ iff $\bigwedge_{\avec{w}\in W}\textit{Pres}_{\varphi}( x_{1}^{\avec{w}},\ldots,x_{k_{\varphi}}^{\avec{w}}) \wedge \textit{Link}_{\Fmf,\prset}$ is satisfiable.
\end{lemma}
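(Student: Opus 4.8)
The plan is to prove both directions at once through the obvious dictionary between satisfying assignments of the arithmetic formula and the data $(\funcand,\Lmf)$ that one must attach to the fixed $\Fmf$ and $\prset{}$ to obtain a weak pre-quasimodel. Under the chosen ordering $\contp_{1},\dots,\contp_{k_\varphi}$ of the types, I read each world tuple as a quasistate candidate via $\funcand(\avec{w},\contp_{i})=x_{i}^{\avec{w}}$ and each edge tuple as a link via $L_{\avec{w},\avec{v}}^{a}(\contp_{i},\contp_{j})=x_{a,\avec{w},\avec{v},i,j}$ for $\avec{w}\prec_{a}\avec{v}$; conversely, any weak pre-quasimodel over $(\Fmf,\prset{})$ assigns exactly these values. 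Everything is arithmetic over $\extN$, so I would carry out the whole argument inside PAI, noting that the data $|\runsp_{\avec{w},\contp_{i},\avec{v},\contp_{j}}|$ are \emph{finite} constants (since $\runsp$ is finite), whereas the variables and the marginal sums may equal $\aleph_0$.

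With the dictionary in place the verification is a conjunct-by-conjunct match against the definition of a $\Kn$ weak pre-quasimodel. Each factor $\textit{Pres}_{\varphi}(x_{1}^{\avec{w}},\dots,x_{k_\varphi}^{\avec{w}})$ holds precisely when $\funcand(\avec{w})$ is a genuine quasistate, by the defining property~\ref{eq:encodePres} of $\textit{Pres}_{\varphi}$; this is the step that silently discharges \emph{realisability}. The two summation equalities in $\textit{Link}_{\Fmf,\prset}$ say exactly that $L_{\avec{w},\avec{v}}^{a}$ has marginals $\funcand(\avec{w})$ and $\funcand(\avec{v})$, i.e.\ that it is a link between these quasistates, and because the variables $x_{i}^{\avec{w}}$ are \emph{shared} between the realisability factors and the marginal equations, each link is automatically a link between the \emph{same} quasistates used everywhere else. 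The equalities $x_{a,\avec{w},\avec{v},i,j}=0$ for $\contp_{i}\not\rightarrow_{a}\contp_{j}$ transcribe the coherence condition~\ref{eq:links-coh}, and the inequalities $x_{a,\avec{w},\avec{v},i,j}\geq|\runsp_{\avec{w},\contp_{i},\avec{v},\contp_{j}}|$ transcribe~\ref{eq:links}.

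The point needing genuine care—and the main (if still routine) obstacle—is that a weak pre-quasimodel also carries the prototype-saturation requirement~\ref{approx:R2} for $\runsp$ and the satisfaction clause~\ref{b1}, neither of which appears among the conjuncts. Here I would use that $\Fmf$ and $\prset{}$ are \emph{fixed}: whether each $\prfun{\avec{w},\contp}$ is $a$-saturated at $\avec{w}$, and whether a root prototype carries $\varphi$, are properties of $\prset{}$ alone, hence settled by the guess and not re-encoded. The remaining obligation of~\ref{approx:R2} is that the support of each $\funcand(\avec{w})$ be \emph{exactly} the set of types prototyped at $\avec{w}$. Positivity on that set follows by summing~\ref{eq:links} over the second index to obtain $x_{i}^{\avec{w}}\geq|\runsp_{\avec{w},\contp_{i}}|$, i.e.\ condition~\ref{run:exists3}, using that in constant domains all weak runs are full, so every prototype through $\avec{w}$ also meets each neighbour (in a degenerate single-world frame this inequality is simply recorded on its own). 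Vanishing of $\funcand$ off the prototyped support is secured by the guess itself, since only prototyped types carry a nonzero variable (equivalently one imposes $x_{i}^{\avec{w}}=0$ for the rest). Granting this support-matching, a satisfying assignment yields $(\funcand,\Lmf)$ meeting every clause of a weak pre-quasimodel, and a weak pre-quasimodel yields a satisfying assignment, which establishes the biconditional.
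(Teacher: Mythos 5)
Your proof is correct and follows essentially the same route as the paper, which offers no argument for this lemma beyond ``by the definition of the above formulas'': your conjunct-by-conjunct dictionary between assignments over $\extN$ and the pairs $(\funcand,\Lmf)$ attached to the fixed $\Fmf$ and $\prset{}$ is exactly that reading made explicit. Your one substantive addition---that the listed conjuncts do not literally force $\funcand(\avec{w})$ to vanish on types with no prototype at $\avec{w}$, so the support clause of \textbf{(wq-sat)} could fail for a satisfying assignment unless one adds $x_{i}^{\avec{w}}=0$ for the non-prototyped types (or treats the guess of $\prset{}$ as fixing the support)---points at a genuine, if minor, gap in the paper's encoding, and your repair, together with deriving \textbf{(card$'$)} by summing the link inequalities over the second index using fullness of runs in constant domains, is the right one.
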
 
It follows that $\Kn$- and $\Sfiven$-validity in $\CtwomonMLc$ are decidable.
This straightforward general encoding, however, is not sufficient to obtain the tight complexity upper bound for the two-variable fragment with countring. To this end a more subtle combination with the upper bound proofs for the first-order fragments is needed. We introduce some notation for inequalities.
\emph{Linear extended-Diophantine equations and inequalities} take the form\nb{no subtraction?}
$$
a_{1}x_{1}+\cdots+a_{k}x_{k}=b, \quad a_{1}x_{1}+\cdots + a_{k}x_{k}\leq b, \quad
a_{1}x_{1}+\cdots + a_{k}x_{k}\geq b
$$
with coefficients $a_{1},\ldots,a_k,b\in\extN$; as usual, $m \leq \aleph_0$, for all $m\in\extN$, $\aleph_0 \cdot 0 = 0 \cdot \aleph_0  = 0$,  $\aleph_0 \cdot m = m \cdot \aleph_0  = \aleph_0$, for all $m\in\mathbb{N}\setminus\{0\}$, and $\aleph_0 \cdot \aleph_0 = \aleph_0$.
%
We use the following bounds from~\cite[Corollary 7.11]{pratt2023fragments}:
\begin{theorem}[\cite{pratt2023fragments}]\label{thm:sol}
%
Let $\mathcal{E}$ be a system of $m$ linear extended-Diophantine equations
in $k$ variables, and let $M$ be the maximum value of all the finite 
coefficients of $\mathcal{E}$. If $\mathcal{E}$ has a solution, then it has a solution in which the finite values are bounded by $(2(k +2m+1)M-1)^{2m}$. 
\end{theorem}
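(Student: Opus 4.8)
The plan is to reduce the extended (infinity-valued) system to an ordinary nonnegative-integer linear system and then invoke a classical small-solution bound for the latter; since the result is imported from \cite{pratt2023fragments}, one may of course simply cite \cite[Corollary~7.11]{pratt2023fragments}, but I sketch how a self-contained argument proceeds. Fix a solution $\avec{x}=(x_1,\dots,x_k)\in\extN^{k}$ of $\mathcal{E}$ and let $I=\{i : x_i=\aleph_0\}$ record which variables are infinite. Under the stated arithmetic on $\extN$, a term $a_{e,i}x_i$ in the $e$th equation $\sum_i a_{e,i}x_i=b_e$ is infinite exactly when either $a_{e,i}\geq 1$ and $x_i=\aleph_0$, or $a_{e,i}=\aleph_0$ and $x_i\geq 1$. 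I would keep the pattern $I$ fixed throughout, so that the infinite entries of the sought small solution stay $\aleph_0$ and only the finite entries must be bounded.

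The two types of equation then decouple cleanly. For an equation with finite right-hand side $b_e\in\mathbb{N}$, the left-hand side must be finite, which forces $a_{e,i}=0$ for every $i\in I$ and forces $x_i=0$ for every finite variable whose coefficient is $\aleph_0$; what remains is an ordinary equation $\sum a_{e,i}x_i=b_e$ in the finitely-valued variables, with all coefficients finite and hence bounded by $M$. For an equation with $b_e=\aleph_0$, satisfaction only requires at least one infinite term: if some $i\in I$ has $a_{e,i}\geq 1$ the equation holds automatically and independently of the finite values, and otherwise the original solution must supply a finite variable $x_j\geq 1$ with $a_{e,j}=\aleph_0$, which I record as a lower-bound constraint $x_j\geq 1$. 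Collecting these, the finite variables are governed by an ordinary system of at most $m$ linear equations with coefficients bounded by $M$, together with forced zeros (eliminated by deleting variables) and lower bounds of the form $x_j\geq 1$ (absorbed by the shift $x_j\mapsto x_j-1$, which leaves the coefficients, and hence $M$, unchanged). This reduced ordinary system is solvable, since the finite part of $\avec{x}$ solves it.

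Finally I would apply the classical bounded-solution theorem for ordinary nonnegative-integer linear systems (vertex/Cramer estimates with rounding, in the style of Papadimitriou's integer-programming bound) to the reduced system to obtain a solution whose finite entries are at most $(2(k+2m+1)M-1)^{2m}$; reinstating the unit shifts changes each entry by at most $1$ and reinserting the $\aleph_0$ entries for $i\in I$ yields a solution of $\mathcal{E}$ with the same infinity pattern and small finite values. The main obstacle is not the structural reduction, which is routine bookkeeping over $\extN$, but matching the \emph{precise} constant $(2(k+2m+1)M-1)^{2m}$: this requires the exact form of the underlying finite bound (the factor $k+2m+1$ and the exponent $2m$ strongly suggest a passage through inequalities with slack variables and a product-of-determinants estimate), and one must check that the shifts and eliminations introduced above do not enlarge $k$, $m$, or $M$ beyond what that bound tolerates.
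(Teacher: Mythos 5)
The paper offers no proof of this statement at all: it is imported verbatim as \cite[Corollary~7.11]{pratt2023fragments}, so there is nothing in the text to compare your argument against except the citation. Your reconstruction is nevertheless the right one, and it matches how the result is actually established in that reference: fix the infinity pattern $I$ of a given solution, observe that constraints with finite right-hand side force the coefficients over $I$ to vanish and force to zero any finite variable carrying an $\aleph_0$ coefficient, observe that constraints with right-hand side $\aleph_0$ either become vacuous or degenerate into a lower bound $x_j\geq 1$ on some finite variable, and then apply a Papadimitriou-style small-solution theorem to the residual ordinary system before reinstating the $\aleph_0$ entries. Two points to tighten. First, the paper's notion of ``linear extended-Diophantine equations'' explicitly includes $\leq$ and $\geq$ inequalities, which you do not treat; they are handled the same way, and the conversion of inequalities to equations via slack variables is precisely where the $2m$ in both the base $k+2m+1$ and the exponent of $(2(k+2m+1)M-1)^{2m}$ comes from, as you correctly guess. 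Second, your unit shifts $x_j\mapsto x_j-1$ alter right-hand sides by subtracting a coefficient, and you should note that this keeps the relevant magnitudes within $M$ (and that solvability of the shifted system is inherited from the original solution), so the quoted bound is not degraded. With those bookkeeping details and the exact finite-case bound from the reference filled in, your sketch is a correct self-contained proof of a statement the paper only cites.
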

Now, instead of assuming the computability of a PAI formula $\textit{Pres}_{\varphi}$ in 
a blackbox manner, we exploit directly the \NExpTime upper bound proof for the two-variable fragment with counting 
in~\cite[Chapter 8]{pratt2023fragments}. Namely, we use the following consequence of the results presented in~\cite[Chapter 8]{pratt2023fragments}.
\begin{lemma}\label{lem:iancard}
	Let $\varphi$ be a $\CtwomonMLc$-sentence.
	There is a set $\Cmc$ of sets $\Emc$ of linear extended-Diophantine equations with variables $x_{1},\ldots,x_{k_{\varphi}}$, $|\Emc|$ at most exponential in $|\varphi|$, and coefficients of at most double exponential size in $|\varphi|$ such that a vector $\avec{n}\in \extN^{k_{\varphi}}$ represents a quasistate for $\varphi$ iff $\avec{n}$ is a solution to some $\Emc\in \Cmc$. Moreover, $\Emc\in \Cmc$ can be decided in exponential time.
\end{lemma}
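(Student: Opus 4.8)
The plan is to import the analysis of the two-variable fragment with counting from~\cite[Chapter~8]{pratt2023fragments} and to read off from it a characterisation of realisable $\varphi$-type spectra. Recall that a quasistate candidate $\avec{n}$ for $\varphi$ prescribes, for each type $\contp_i$ for $\varphi$, how many domain elements realise $\overline{\contp_i}$, and that realisability means that some first-order structure interpreting the signature of $\varphi$ meets all these counts simultaneously. Since each type for $\varphi$ is a Boolean-saturated subset of $\sub[x]{\varphi}$, it fixes at an element the truth value of every one-free-variable subformula of $\varphi$, including the counting subformulas $\exists^{\geq k}y\,\psi(x,y)$; thus a $\varphi$-type is a coarsening of a full $1$-type in the sense of $\CT$, and a quasistate prescribes a distribution over these coarsened $1$-types. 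The task is therefore to recast the realisability of such a distribution as the solvability of linear extended-Diophantine systems.

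First I would put $\varphi$ into the standard Scott-style normal form for $\CT$ and recall Pratt-Hartmann's star-type analysis, which reduces the existence of a structure with a prescribed distribution of $1$-types to the solvability of linear extended-Diophantine (in)equations. The counting thresholds recorded in a type are enforced locally by the star-type formalism, while the only remaining genuinely global constraints are \emph{flow} (and symmetry) conditions stating that the incidences of each $2$-type counted from one side agree with those counted from the other; these are linear in the type counts. Infinite multiplicities are accommodated directly by allowing coefficients and values in $\extN$, with the arithmetic conventions fixed before Theorem~\ref{thm:sol}; the special handling of a type that contains some $x=c$ (which must then have multiplicity exactly $1$) is enforced by an equation $x_i=1$.

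The family $\Cmc$ arises because the passage from an arithmetic solution back to an actual structure needs a bounded amount of combinatorial data that a single linear system cannot carry: a ``frame'' recording which full $1$-types are populated, which $2$-types are declared available between them, and the chromatic allotment used to realise the prescribed incidences as an actual structure without clashes. Each admissible frame yields one system $\Emc$ of linear extended-Diophantine equations in the variables $x_1,\dots,x_{k_\varphi}$, and $\avec{n}$ represents a quasistate for $\varphi$ iff $\avec{n}$ solves some $\Emc\in\Cmc$; the combinatorial content is thus absorbed into the choice of $\Emc$ rather than into extra counting variables. The number of equations in each $\Emc$ is single-exponential in $|\varphi|$ (one flow/symmetry equation per type and per pair of types, plus the defining equations for the $x_i$), the coefficients are double-exponential because binary thresholds (of value up to $2^{|\varphi|}$) combine multiplicatively with aggregates over the exponentially many types, and deciding $\Emc\in\Cmc$ reduces to checking well-formedness and mutual consistency of the (exponentially many) $1$- and $2$-types of the underlying frame against the normal form, which is feasible in exponential time. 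A single guessed $\Emc$ is then an exponential-size object, produced and verified within the budget required downstream for the \NExpTime bound.

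The main obstacle is to carry out this transfer from~\cite{pratt2023fragments} faithfully: the cited development characterises plain \emph{satisfiability}, whereas here the \emph{exact} multiplicity profile over $\varphi$-types must be attainable, and one must track that (i) coarsening from full $1$-types to $\varphi$-types preserves the linear extended-Diophantine form of the flow conditions, (ii) the $\aleph_0$ values interact correctly with the flow and symmetry equations, so that a type of infinite multiplicity can absorb unboundedly many incidences while finite thresholds are still met exactly, and (iii) the stated bounds on $|\Emc|$, on the coefficients, and on the exponential-time membership test for $\Cmc$ are all met. Verifying (ii) in the simultaneous presence of finite and infinite multiplicities, in a manner compatible with the extended-Diophantine conventions underpinning Theorem~\ref{thm:sol}, is the most delicate point, since it is precisely where the counting semantics and the arithmetic encoding must be reconciled.
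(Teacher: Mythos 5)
Your overall strategy---importing the star-type and linear-arithmetic machinery of~\cite[Chapter~8]{pratt2023fragments} and reading off from it a characterisation of realisable type spectra---is the same as the paper's, and your identification of the coarsening from full $1$-types (star-types) to $\varphi$-types as the central issue is correct. However, there is a genuine gap at precisely that point. The Pratt-Hartmann systems have one counting variable per \emph{star-type}, of which there are exponentially many, whereas the lemma requires systems in only the $k_{\varphi}$ variables $x_{1},\ldots,x_{k_{\varphi}}$, one per $\varphi$-type; this is essential downstream, since Theorem~\ref{thm:sol} is applied to the union of these systems with $\textit{Link}_{\Fmf,\prset}$, whose variables are the per-world copies of the $x_i$. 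You assert that ``the combinatorial content is absorbed into the choice of $\Emc$'' via a frame, but a frame in Pratt-Hartmann's sense fixes only \emph{which} star-types are populated; the star-type \emph{counts} remain as variables, and eliminating them amounts to projecting a semilinear set onto the $\varphi$-type coordinates via $x_i=\sum_{s\in S_i}u_s$, where $S_i$ is the set of star-types refining the $i$th $\varphi$-type. The paper's proof does exactly this: it expresses each $\varphi$-type as a disjunction of star-types and uses the \emph{solutions} of the Pratt-Hartmann systems---which are double-exponentially bounded by a small-solution theorem of the shape of Theorem~\ref{thm:sol}---as coefficients in the resulting equations over $x_{1},\ldots,x_{k_{\varphi}}$. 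This elimination step is missing from your argument, and it is also the actual source of the double-exponential coefficients: your accounting (thresholds of value up to $2^{|\varphi|}$ combined multiplicatively with aggregates over exponentially many types) yields only a single-exponential quantity, so it neither justifies nor explains the stated bound.

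A second, smaller omission: the paper notes that the normal-form transformation used in~\cite{pratt2023fragments} does not preserve satisfiability in small models, which matters here because quasistate realisability is a statement about exact multiplicity profiles rather than mere satisfiability; this is handled by adding further sets of equations to $\Cmc$ dealing with the small-model case. You correctly flag the satisfiability-versus-exact-profile mismatch and the interaction of $\aleph_0$ multiplicities with the flow equations as ``the main obstacle,'' but you leave both unresolved, so your proposal stops exactly where the (admittedly terse) work of the paper's proof begins.
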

\begin{proof} We give a short sketch.
The proof of this lemma is based on~\cite[Sections~8.4 and 8.5]{pratt2023fragments} and  is rather cumbersome, but straightforward in principle. The following two observation should be sufficient. (1)
The normal form used in~\cite{pratt2023fragments} does not preserve satisfiability in small models. This can be treated by introducing new sets of equations in $\Cmc$ dealing with satisfiability in small models. (2) The equations in~\cite{pratt2023fragments} encode satisfiability in terms of the number of domain elements satisfying star-types and not the number of domain elements satisfying ``our'' types. This can be dealt with by expressing ``our'' types as disjunctions of star-types. The solutions to the equations 
in~\cite{pratt2023fragments}  are then used as coefficients in the equations of our equations. This leads to coefficients of double exponential size.
\end{proof}

\begin{theorem}\label{thm:c2:conexptime}
	For both constant- and expanding-domain models, $\Kn$-validity and $\Sfiven$-validity in $\CtwomonMLc$ are \textup{co\NExpTime}-complete. 
\end{theorem}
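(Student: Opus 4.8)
The plan is to prove the lower and upper bounds separately and then transfer from constant to expanding domains. The \coNExpTime lower bound comes for free: since $\QoneMLc$ is a sublanguage of $\CtwomonMLc$, the \coNExpTime-hardness of $\Kn$- and $\Sfiven$-validity established in Theorem~\ref{thm:complexity:onevar} transfers verbatim (equivalently, the modal-free fragment already contains $\textsf{C}^2$, whose validity is \coNExpTime-hard). For expanding domains, Theorem~\ref{thm:reductions}~(c) reduces validity to the constant-domain case in polynomial time, so it remains only to place constant-domain $\Kn$- and $\Sfiven$-satisfiability of $\CtwomonMLc$-sentences in \NExpTime.

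For the upper bound I would run the quasimodel machinery and then reduce solvability to a single system of linear extended-Diophantine (in)equalities whose witness can be certified by an object of exponential size. By Lemmas~\ref{lemma:quasimodel}--\ref{lem:weakprequasi:k}, a $\CtwomonMLc$-sentence $\varphi$ is $\Kn$-satisfiable iff it has a weak quasimodel based on a $\Tree{d(\varphi)}$ frame $\Fmf=(W,\{R_a\}_{a\in A})$ of size $2^{O(d(\varphi)\cdot|\varphi|)}$; by Lemma~\ref{lem:weakpreweak} this is equivalent to the existence of a weak pre-quasimodel $(\Fmf,\funcand,\prset,\Lmf)$ satisfying $\varphi$ in which the set $\runsp$ of prototype weak runs has size at most $|W|\times 2^{|\varphi|}$. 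The first step of the algorithm is to guess, nondeterministically, the frame $\Fmf$ and the prototype function $\prset$ (hence also $\runsp$); all of these objects are of exponential size, so they can be written down and condition~\ref{approx:R2} checked in exponential time. What remains unbounded is the realisability of each quasistate $\funcand(\avec{w})$ together with the link constraints~\ref{eq:links-coh} and~\ref{eq:links}.

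The crucial step is to turn both of these into arithmetic. Using Lemma~\ref{lem:iancard}, realisability of the quasistate at $\avec{w}$ is equivalent to its representing vector $(x_1^{\avec{w}},\dots,x_{k_\varphi}^{\avec{w}})$ being a solution of \emph{some} set $\Emc_{\avec{w}}\in\Cmc$ of linear extended-Diophantine equations, where $|\Emc_{\avec{w}}|$ is at most exponential, the coefficients at most double exponential, and membership $\Emc_{\avec{w}}\in\Cmc$ decidable in exponential time; I would guess one such $\Emc_{\avec{w}}$ per world and verify it. Conjoining the chosen systems over all $\avec{w}\in W$ with the link equalities and inequalities $\textit{Link}_{\Fmf,\prset}$ of Lemma~\ref{lem:pres-based} (treating inequalities as equalities with $\extN$-valued slack variables) yields a single system $\mathcal{E}$ of linear extended-Diophantine equations. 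A bookkeeping count gives exponentially many variables $k$ (one per world--type pair, namely $|W|\times k_\varphi$, plus one link variable per edge and pair of types), exponentially many equations $m$, and a maximum finite coefficient $M$ that is at most double exponential.

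The main obstacle, and the reason for routing through the explicit system $\mathcal{E}$ rather than the Presburger blackbox of Lemma~\ref{lem:pres-based}, is to bound a witnessing solution tightly enough for \NExpTime. Here I would apply Theorem~\ref{thm:sol}: after guessing which variables take the value $\aleph_0$ and discarding them, the remaining finite part of $\mathcal{E}$, if solvable, has a solution whose finite values are bounded by $(2(k+2m+1)M-1)^{2m}$. Since $k$ and $m$ are exponential and $M$ is double exponential, the logarithm of this bound is $2m\cdot\log\!\bigl(2(k+2m+1)M-1\bigr)=2^{O(|\varphi|)}$, so every value can be written with exponentially many bits and the whole solution has exponential size. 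The algorithm therefore guesses the $\aleph_0$-pattern together with the bounded finite solution and verifies in exponential time that it satisfies every $\Emc_{\avec{w}}$ and every link (in)equality, and that condition~\ref{b1} holds at the root. This is an \NExpTime procedure, giving the \coNExpTime upper bound for validity. The $\Sfiven$ case is identical, using Lemmas~\ref{lem:s5:tree-shaped} and~\ref{lem:weakpreweak:s5} with the $\leftrightarrow_a$-based coherence~\ref{eq:links-coh:s5} in place of~\ref{eq:links-coh}, and the expanding-domain cases follow via Theorem~\ref{thm:reductions}~(c).
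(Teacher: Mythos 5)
Your proposal is correct and follows essentially the same route as the paper's proof: reduce to the existence of a weak pre-quasimodel over an exponential-size tree frame (Lemmas~\ref{lemma:quasimodel}--\ref{lem:weakprequasi:k} and~\ref{lem:weakpreweak}), guess the frame, prototype function and one system $\Emc_{\avec{w}}$ from Lemma~\ref{lem:iancard} per world, conjoin with $\textit{Link}_{\Fmf,\prset}$, and invoke Theorem~\ref{thm:sol} to certify a solution of exponential bit-size, with hardness and the expanding-domain case coming from $\textsf{C}^2$ and Theorem~\ref{thm:reductions}~(c). Your explicit accounting of $k$, $m$, $M$ and the handling of $\aleph_0$-valued variables only spells out details the paper leaves implicit.
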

\begin{proof}
	We consider $\Kn$ with constant domains; the proof for $\Sfiven$ is similar. Hardness and the result for expanding domains (by Theorem~\ref{thm:reductions}~(c)) follow from the complexity of~$\textsf{C}^2$. So, it remains to show the upper complexity bound. 
	
	Assume $\varphi$ is given. 
	We follow the proof schema above but define and use $\textit{Pres}_{\varphi}$ differently.
	By Lemmas~\ref{lemma:quasimodel}--\ref{lem:weakprequasi:k}
	and~\ref{lem:weakpreweak},
	$\varphi$ is satisfiable in some $\Kn$ frame with constant domain iff 
	there is a weak pre-quasimodel $\quasimod =
	(\Fmf, \funcand, \prset{}, \Lmf)$ satisfying~$\varphi$ based on a $\Tree{d(\varphi)}$ frame $\Fmf = (W, \{R_a\}_{a\in A})$ such that~$|W| \leq 2^{O(d(\varphi)\cdot |\varphi|)}$ and  $|\runsp| \leq |W|\times 2^{|\varphi|}$. The algorithm checking the existence of such a weak pre-quasimodel $\quasimod$ now guesses $\Fmf$ and $\prset{}$ and, for each $\avec{w}\in W$, a set $\mathcal{E}_{\avec{w}}$ of linear extended-Diophantine equations with variables $x_{1}^{\avec{w}},\ldots,x_{k_{\varphi}}^{\avec{w}}$ satisfying the conditions of Lemma~\ref{lem:iancard} and finally guesses a solution of the set $\mathcal{E}$ of equations comprising $\mathcal{E}_{\avec{w}}$, for $\avec{w}\in W$, and $\textit{Link}_{\Fmf,\prset{}}$. By our definitions, a solution for $\mathcal{E}$ exists iff the weak pre-quasimodel $\quasimod$ introduced above exists.
	By Theorem~\ref{thm:sol}, a solution for $\mathcal{E}$ exists iff there exists one in which the finite values are bounded by a double exponential function in $|\varphi|$ (and so can be represented using exponential space in $|\varphi|$). It can also be checked in exponential time whether the guessed values actually represent a solution, as required. 
\end{proof}

\section{Decidability for Monodic Fragments in Expanding Domains}

In this section we study the complexity of reasoning in expanding domains. Recall that, by Theorem~\ref{thm:reductions}~(c), both our reasoning problems in expanding domains can be reduced to their counterparts in constant domains (by using relativisation of quantifiers). It turns out, however, that in some cases, the case of expanding domains is considerably simpler than the case of constant domains.

First, note that both our main constructions, quasimodels  (Lemma~\ref{lemma:quasimodel})  and weak quasimodels (Lemmas \ref{lem:k:tree-shaped}, \ref{lem:weakprequasi:k} and \ref{lemma:k:finite-approx}), work for expanding domains as well. 

The following additional assumption on (weak) quasimodels will simplify the constructions in this section: whenever the frame $\Fmf.= (W, \{R_a\}_{a\in A})$ is tree-shaped, we will assume that all (weak) runs are rooted in the following sense.
For any $w\in W$, we set $W_{\downarrow\avec{w}}=\{\avec{w}\}\cup \{\avec{v} \in W\mid \avec{w}R_{\ast}\avec{v}\}$ and denote  the restriction of a (weak) run $\rho\in\runs_{\avec{w}}$ to $W_{\downarrow\avec{w}}$ by $\rho_{\downarrow\avec{w}}$.
A run  through $(\Fmf,\funcand)$ is called \emph{rooted} if there is $\avec{w}\in W$, called the \emph{root} of $\rho$, such that $W_{\downarrow\avec{w}}=\dom \rho$.  

\begin{lemma}\label{lem:rooted:runs}
For any quasimodel $\Qmf = (\Fmf, \funcand, \runs)$ based on a tree-shaped frame $\Fmf$,  there is a quasimodel $\Qmf = (\Fmf, \funcand, \runs')$ with rooted runs $\runs'$ that satisfies the same formulas and such that $|\runs'| \leq |\runs| \cdot |W|$.

Similarly, for any weak quasimodel $\Qmf = (\Fmf, \funcand, \runs, \prset)$, there is a weak quasimodel $\Qmf = (\Fmf, \funcand, \runs', \prset')$ with rooted runs $\runs'$ that satisfies the same formulas and such that $|\runs'| \leq |\runs| \cdot |W|$.
\end{lemma}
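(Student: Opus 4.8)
The plan is to split each run of $\Qmf$ along the branching structure of the tree into rooted pieces. The key structural observation is that, since $\Fmf$ is tree-shaped and the domain $\dom\rho$ of every (weak) run is upward-closed under all the child relations $R_a$, the set $\dom\rho$ is exactly the union of the subtrees $W_{\downarrow\avec{w}}$ taken over the $R_{\ast}$-minimal elements $\avec{w}$ of $\dom\rho$; moreover these minimal elements form an antichain, so the subtrees $W_{\downarrow\avec{w}}$ are pairwise disjoint and partition $\dom\rho$. Consequently, every world in $\dom\rho$ lies in exactly one such subtree, which is precisely the feature that will let me preserve the slice counts.

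First I would define $\runs'$ by replacing, in the multiset $\runs$, each run $\rho$ by the family of its restrictions $\rho_{\downarrow\avec{w}}$ to the subtrees $W_{\downarrow\avec{w}}$ for the $R_{\ast}$-minimal $\avec{w}\in\dom\rho$, each restriction inheriting the multiplicity of $\rho$; formally I would work with the set $\hat{\runs}$ of indexed runs and map each indexed run to its indexed pieces, so that multiplicities add up automatically. Each $\rho_{\downarrow\avec{w}}$ is rooted at $\avec{w}$ by construction, and it is again a run: coherence~\ref{rn:modal} and saturation~\ref{rn:modal2} (respectively coherence~\ref{a-r-coh} in the weak case) refer only to $R_a$-successors of a node, and all such successors of a node in $W_{\downarrow\avec{w}}$ again lie in $W_{\downarrow\avec{w}}$, so these conditions survive the restriction verbatim.

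Next I would check that $(\Fmf,\funcand)$ and the satisfied formulas are untouched. Because the subtrees partition $\dom\rho$, each world $\avec{w}$ belonging to $\dom\rho$ lies in exactly one piece $\rho_{\downarrow\avec{v}}$, and that piece carries the same type $\rho(\avec{w})$ at $\avec{w}$; hence the slices satisfy $|\runs'_{\avec{w},\contp}|=|\runs_{\avec{w},\contp}|$ for all $\avec{w}$ and all types $\contp$, so~\ref{run:exists} holds with the \emph{unchanged} $\funcand$, and~\ref{b1} is preserved. For the size bound, each run has at most $|W|$ minimal elements, so $|\runs'|=\sum_{\rho}\runs(\rho)\cdot|\{R_{\ast}\text{-minimal elements of }\dom\rho\}|\le |W|\cdot|\runs|$, as required.

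For the weak-quasimodel case the construction is identical, and it only remains to supply a prototype function $\prset'$: for each $\avec{w}$ and each $\contp\in\funcand(\avec{w})$ I would take $\prset'(\avec{w},\contp)$ to be the unique piece of the old prototype $\prfun{\avec{w},\contp}$ whose subtree contains $\avec{w}$, i.e.\ the restriction of $\prfun{\avec{w},\contp}$ to $W_{\downarrow\avec{v}}$ for the minimal $\avec{v}$ with $\avec{v}R_{\ast}\avec{w}$. Since every $\Diamond_a$-witness required by $a$-saturation at $\avec{w}$ is an $R_a$-successor of $\avec{w}$ and therefore lies in that same subtree, this piece remains $a$-saturated at $\avec{w}$ for every $a$, and it belongs to $\runs'_{\avec{w},\contp}$; thus~\ref{approx:R2} holds. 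I expect the only genuinely delicate point to be the multiset bookkeeping when distinct original runs happen to produce coinciding pieces: handling everything at the level of indexed runs $\hat{\runs}$, and invoking the disjointness of the subtrees to argue that each world is ``claimed'' by exactly one piece, is what keeps the slice identity exact and makes the rest routine.
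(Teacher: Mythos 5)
Your proposal is correct and follows essentially the same route as the paper's proof: decompose each non-rooted run into its restrictions to the disjoint subtrees rooted at the $R_{\ast}$-minimal elements of its domain, add the resulting multiplicities to any coinciding existing runs, and replace each prototype by the unique piece whose subtree contains the relevant world. The extra care you take with indexed runs and with verifying that slice counts, coherence, saturation and the prototype conditions are preserved matches what the paper leaves implicit.
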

\begin{proof}
Let $\rho\in \runs$ have no root. Let $\avec{w}_{1},\ldots,\avec{w}_{k}\in W$ be a minimal set such that $\dom\rho=W_{\downarrow\avec{w}_1}\uplus \cdots \uplus W_{\downarrow\avec{w}_k}$; note that since the frame is tree-shaped, the union is disjoint. Then we replace, in $\runs$, the (weak) run $\rho$ by its restrictions $\rho_{\downarrow\avec{w}_{i}}$ to $W_{\avec{w}_i}$, for $1\leq i\leq k$. Formally,  $\runs'$ is defined as~$\runs$ except that 
\begin{equation*}	
\runs'(\rho)=0 \quad\text{ and }\quad \runs'(\rho_{\downarrow\avec{w}_i})=\runs(\rho_{\downarrow\avec{w}_i})+\runs(\rho),\ \text{ for } 1\leq i \leq k;
\end{equation*}
note that $\rho_{\downarrow\avec{w}_{i}}$ could already exist in $\runs$ as a separate run, which is taken into account by the first component in the sum. 

For a weak quasimodel, if $\prfun{\avec{w},\contp}=\rho$, for some $\avec{w}\in W$ and~$\contp\in\funcand(\avec{w})$, then $\prfun{\avec{w},\contp}$ is replaced with the~$\rho_{\downarrow\avec{w}_{i}}$ such that $\avec{w}\in W_{\downarrow\avec{w}_i}$ (which is uniquely determined as the frame is tree-shaped).
\end{proof}

It can be seen that the quasimodel constructed in the proof of Lemma~\ref{lemma:quasimodel} has rooted runs if the frame is tree-shaped; also, the constructions in Lemmas~\ref{lem:weakprequasi:k} and~\ref{lemma:k:finite-approx} preserve rootedness of (weak) runs in (weak) quasimodels.

\subsection{Logics with Transitive Closure}

We consider modal logics with sets of modalities of the form \mbox{$A=A_{0}\cup \{\ast\}$}, where $\ast$ denotes the modality interpreted by the transitive closure of the union of the remaining modalities in $A_{0}$. Formally, a $\K_{\ast n}$ frame is of the form $\Fmf=(W,\{R_{a}\}_{a\in A})$, where $|A_0| = n$ and $R_{\ast}$ is the transitive closure of $\bigcup_{a\in A_{0}}R_{a}$. If, in addition, $R_{\ast}$ contains no infinite ascending chains, that is,
no infinite sequence $w_{0},w_{1},\ldots \in W$ with $w_{i}R_{\ast}w_{i+1}$ for all $i\geq 0$, then $\Fmf$ is called a $\Kfn$ frame.  Observe that in contrast to $\Kn$ and $\Sfiven$, with $n \geq 2$,
\nb{$n > 1$? \\ M: added}
global $\K_{\ast n}$-consequence and global $\Kfn$-consequence are polytime-reducible to
$\K_{\ast n}$- and, respectively, $\Kfn$-validity since $\varphi$ is a global $\mathcal{C}$-consequence of $\Gamma$ iff $(\bigwedge\Gamma \wedge \Box_{\ast}\bigwedge \Gamma) \rightarrow \varphi$ is $\mathcal{C}$-valid for $\mathcal{C}\in \{\K_{\ast n},\Kfn\}$; see also Remark~\ref{rem:s5:global}. \nb{compare to S5\\ok now?}
\begin{lemma}
	\label{lem:redglobloc}
	For both constant and expanding domains, for $\Cmc\in \{\K_{\ast n},\Kfn\}$ and all
	fragments $\Lmc$ considered, 
	global $\Cmc$-consequence in $\mathcal{L}$ is polytime-reducible to $\Cmc$-validity in $\mathcal{L}$.
\end{lemma}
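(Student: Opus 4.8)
The plan is to realise the equivalence announced just above the statement: given a formula $\varphi$ and a theory $\Gamma$ in $\Lmc$, the reduction maps the pair $(\varphi,\Gamma)$ to the single formula
\[
\psi \ := \ \bigl(\, \bigwedge\Gamma \ \wedge\ \Box_{\ast}\bigwedge\Gamma \,\bigr)\rightarrow \varphi,
\]
and asks whether $\psi$ is $\Cmc$-valid. This transformation is clearly computable in polynomial time, as the conjunction $\bigwedge\Gamma$ is copied only twice. I would first check that $\psi$ belongs to $\Lmc$. Since every sentence in $\Gamma$ has no free variables, $\bigwedge\Gamma$ is a sentence, so $\Box_{\ast}\bigwedge\Gamma=\lnot\Diamond_{\ast}\lnot\bigwedge\Gamma$ also has no free variables; hence it cannot violate monodicity, it respects the variable bound of $\QoneML$ and $\CtwomonML$, and, since modal operators are unconstrained by the guardedness condition, it stays in $\QGuardmonMLb$. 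Thus $\psi$ lies in the same fragment as $\varphi$ and $\Gamma$.

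It then remains to prove, for both $\Cmc\in\{\K_{\ast n},\Kfn\}$ and both domain assumptions, that $\varphi$ is a global $\Cmc$-consequence of $\Gamma$ iff $\psi$ is $\Cmc$-valid. The direction from validity to global consequence is immediate: if $\Mmf\models\Gamma$ for some $\Mmf$ based on a frame in $\Cmc$, then $\bigwedge\Gamma$ is true at every world, so the antecedent of $\psi$ holds at every world under every assignment, and validity of $\psi$ forces $\Mmf\models\varphi$.

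For the converse I would use a generated-submodel argument. Assume $\varphi$ is a global $\Cmc$-consequence of $\Gamma$, and suppose $\Mmf, w_0\models^{\assign}\bigwedge\Gamma\wedge\Box_{\ast}\bigwedge\Gamma$ for some $\Mmf$ based on $\Fmf=(W,\{R_a\}_{a\in A})\in\Cmc$. Put $V=\{w_0\}\cup\{v\in W\mid w_0R_{\ast}v\}$ and let $\Mmf'$ be the restriction of $\Mmf$ to $V$. Because $R_{\ast}$ is transitive and contains each $R_a$ with $a\in A_0$, the set $V$ is closed under every accessibility relation, so $\Mmf'$ is a genuine generated submodel; by the standard induction on the structure of formulas, using that domains—and hence assignments—are inherited by $V$, truth at worlds of $V$ is preserved between $\Mmf$ and $\Mmf'$ in both directions. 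The two conjuncts of the antecedent say precisely that $\bigwedge\Gamma$ holds at $w_0$ and at every $R_{\ast}$-successor of $w_0$, that is, at every world of $V$; as $\Gamma$ consists of sentences, this gives $\Mmf'\models\Gamma$. I would then verify that $\Mmf'$ is based on a frame in $\Cmc$: the closure of $V$ under the relations $R_a$ with $a\in A_0$ ensures that every $R_{\ast}$-path between points of $V$ stays inside $V$, so the restriction of $R_{\ast}$ to $V$ is still the transitive closure of $\bigcup_{a\in A_0}R_a$ restricted to $V$; and for $\Kfn$ the absence of infinite ascending chains is inherited by any subframe. Applying the global-consequence assumption yields $\Mmf'\models\varphi$, whence $\Mmf, w_0\models^{\assign}\varphi$ and so $\Mmf, w_0\models^{\assign}\psi$. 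As $w_0$ and $\assign$ were arbitrary, $\psi$ is $\Cmc$-valid. Both domain settings are covered uniformly, since passing to $\Mmf'$ preserves the inclusions $\Delta_w\subseteq\Delta_v$ along edges as well as the equality of all domains.

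The only genuinely delicate point is this last verification—that the generated subframe again lies in $\Cmc$—and in particular that restricting $R_{\ast}$ does not destroy its being the transitive closure of the remaining accessibility relations. The closure of $V$ under the $R_a$ with $a\in A_0$ is exactly what secures this, after which the well-foundedness required for $\Kfn$ follows for free.
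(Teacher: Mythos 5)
Your proposal is correct and follows exactly the route the paper takes: the paper justifies the lemma by the single observation that $\varphi$ is a global $\Cmc$-consequence of $\Gamma$ iff $(\bigwedge\Gamma \wedge \Box_{\ast}\bigwedge\Gamma)\rightarrow\varphi$ is $\Cmc$-valid, which is precisely your reduction. The generated-submodel argument you supply for the harder direction (including the check that $V$ is closed under each $R_a$, so the restriction of $R_{\ast}$ remains the transitive closure and well-foundedness is inherited for $\Kfn$) is the detail the paper leaves implicit, and it is carried out correctly.
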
 

A $\Kfn$-frame is called \emph{tree-shaped} if the restriction $\Fmf_{|A_{0}}=(W,\{R_{a}\}_{a\in A_{0}})$ of $\Fmf$ to $A_{0}$ 
is a tree-shaped $\K_{n}$ frame. We use the notation from Section~\ref{sec:kn:weak:quasimodels}
for these frames, in particular worlds in tree-shaped frames take the form~\eqref{eq:world:k} with $a_{j}\in A_{0}$ for $0\leq j <m$. 
%
\begin{lemma}\label{lem:k:tree-shaped-exp}
For both constant and expanding domains, 
every $\Kfn$-satisfiable $\QMLplusc$-formula
is also satisfiable in tree-shaped $\Kfn$ frames.
\end{lemma}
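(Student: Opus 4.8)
The plan is to adapt the unfolding argument of Lemma~\ref{lem:k:tree-shaped} to the setting with transitive closure, the two genuinely new points being the treatment of the modality $\ast$ and the verification that the unfolded frame is still a $\Kfn$ frame. Suppose $\Mmf = (\Fmf, \Delta, \cdot)$ is based on a $\Kfn$ frame $\Fmf = (W, \{R_a\}_{a \in A})$, with $A = A_0 \cup \{\ast\}$, and $\Mmf, w_0 \models^{\mathfrak{a}} \varphi$. First I would unfold $\Fmf$ into $\Fmf^* = (W^*, \{R_a^*\}_{a \in A})$, where $W^*$ is the set of all finite words $\avec{w} = w_0 a_0 w_1 \cdots a_{m-1} w_m$ of the form~\eqref{eq:world:k} with $a_j \in A_0$ and $w_j R_{a_j} w_{j+1}$ for $0 \leq j < m$; for $a \in A_0$ I set $R_a^* = \{(\avec{w}, \avec{w} a w) \in W^* \times W^*\}$, and I let $R_\ast^*$ be the transitive closure of $\bigcup_{a \in A_0} R_a^*$. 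In contrast to Lemma~\ref{lem:k:tree-shaped}, no depth bound is imposed, since a single occurrence of $\Diamond_\ast$ can reach arbitrarily deep; the point is that this is harmless. The interpretation $\Mmf^* = (\Fmf^*, \Delta^*, \cdot^*)$ is defined exactly as before by pulling back along the last component: $\Delta^*_{\avec{w}} = \Delta_{\textit{tail}(\avec{w})}$, $c^{\Mmf^*(\avec{w})} = c^{\Mmf(\textit{tail}(\avec{w}))}$ and $P^{\Mmf^*(\avec{w})} = P^{\Mmf(\textit{tail}(\avec{w}))}$. Since each $A_0$-edge of $\Fmf^*$ projects to an $R_a$-edge of $\Fmf$, the domains of $\Mmf^*$ are expanding (and constant whenever those of $\Mmf$ are).

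Two things then need checking. The first is that $\Fmf^*$ is a tree-shaped $\Kfn$ frame. By construction $\Fmf^*_{|A_0}$ is a (not necessarily depth-bounded) tree-shaped $\Kn$ frame, so it suffices to show that $R_\ast^*$ has no infinite ascending chain. An infinite sequence $\avec{v}_0 R_\ast^* \avec{v}_1 R_\ast^* \cdots$ would, since $R_\ast^*$ is the proper-descendant relation on the tree, force each $\avec{v}_{i+1}$ to be a proper descendant of $\avec{v}_i$, i.e.\ an infinite branch; projecting through \textit{tail} and using $R_b \subseteq R_\ast$ for $b \in A_0$ together with transitivity of $R_\ast$, this yields an infinite ascending $R_\ast$-chain $\textit{tail}(\avec{v}_0) R_\ast \textit{tail}(\avec{v}_1) R_\ast \cdots$ in $\Fmf$, contradicting that $\Fmf$ is a $\Kfn$ frame. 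Hence $\Fmf^*$ has no infinite ascending $R_\ast^*$-chain, although it may have unbounded (finite) depth and infinite branching, both of which are permitted for $\Kfn$ frames.

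The second, and main, step is the truth-preservation claim: for every subformula $\psi$ of $\varphi$, every assignment $\mathfrak{b}$ and every $\avec{w} \in W^*$,
\[
\Mmf^*, \avec{w} \models^{\mathfrak{b}} \psi \quad\text{ iff }\quad \Mmf, \textit{tail}(\avec{w}) \models^{\mathfrak{b}} \psi .
\]
Because $W^*$ contains \emph{all} finite words (no depth truncation), this equivalence holds for all $\avec{w}$ with \emph{no} modal-depth side condition, so I would prove it by a plain induction on the structure of $\psi$. The atomic, Boolean and quantifier cases are identical to Lemma~\ref{lem:k:tree-shaped} (domains and interpretations agree via \textit{tail}), and the case $\Diamond_a \psi$ with $a \in A_0$ is exactly as there. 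The new case is $\Diamond_\ast \psi$. For the forward direction, if $\avec{w} R_\ast^* \avec{v}$ then projecting through \textit{tail} gives $\textit{tail}(\avec{w}) R_\ast \textit{tail}(\avec{v})$, so a witness in $\Fmf^*$ yields one in $\Fmf$ by the induction hypothesis. For the converse, if $\textit{tail}(\avec{w}) R_\ast v$ there is a finite $A_0$-path $\textit{tail}(\avec{w}) = u_0 R_{a_0} u_1 \cdots R_{a_{k-1}} u_k = v$ with $k \geq 1$; appending it to $\avec{w}$ produces $\avec{v} = \avec{w} a_0 u_1 \cdots a_{k-1} u_k \in W^*$ with $\avec{w} R_\ast^* \avec{v}$ and $\textit{tail}(\avec{v}) = v$, and the induction hypothesis transfers the witness back. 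The lemma then follows by taking $\avec{w}_0 = w_0$ and noting $\textit{tail}(\avec{w}_0) = w_0$, whence $\Mmf^*, \avec{w}_0 \models^{\mathfrak{a}} \varphi$.

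I expect the main obstacle to be the interaction between the transitive-closure modality and the unfolding: one must simultaneously guarantee that $\Diamond_\ast$ ``sees exactly'' the tree-descendants whose tails are $R_\ast$-reachable in $\Fmf$ (for truth preservation) and that the unfolding creates no new infinite ascending chains (to remain within $\Kfn$). Both hinge on the same projection-through-\textit{tail} observation, so the real care lies in checking that the unbounded depth of $\Fmf^*$ causes no difficulty — which it does not, precisely because the induction is on formula structure rather than modal depth and because $\Kfn$ frames are allowed to be infinite.
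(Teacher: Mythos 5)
Your proposal is correct and follows essentially the same route as the paper, whose proof simply says to repeat the unfolding of Lemma~\ref{lem:k:tree-shaped} using the modalities in $A_0$ only and without the depth bound $d(\varphi)$. You have merely spelled out the details the paper leaves implicit (well-foundedness of $R_\ast^*$ via projection through $\textit{tail}$, and the $\Diamond_\ast$ case of the induction, which works without a modal-depth side condition precisely because the unfolding is not truncated).
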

\begin{proof}
	The construction is exactly the same as in the proof of Lemma~\ref{lem:k:tree-shaped} except that we use the modalities in $A_{0}$ only and do not impose the bound $d(\varphi)$ on the depth $d(\avec{w})$  of worlds $\avec{w}$.
\end{proof}

When working with (weak) quasimodels based on $\Kfn$ frames, it is useful to extend the set of subformulas $\sub[x]{\varphi}$ that occur in types. Similar extensions have been introduced for many modal logics with operators for the transitive closure, for instance the Fischer-Ladner 
closure for PDL~\cite{DBLP:conf/stoc/FischerL77}. In detail, define $\subs[x]{\varphi}$ by extending $\sub[x]{\varphi}$ with all formulas of the form $\Diamond_{a}\psi$ and $\Diamond_{a}\Diamond_{\ast}\psi$, for $\Diamond_{\ast}\psi\in \sub[x]{\varphi}$  and $a\in A_{0}$.
In what follows we always work with $\subs[x]{\varphi}$ instead of $\sub[x]{\varphi}$. We may assume that $|\subs[x]{\varphi|}\leq s(\varphi)$, where $s$ is defined by taking $s(\varphi)=2|\varphi|^{2}$.
 
A \emph{$\Kfn$-type $\contp$ for $\varphi$} is a Boolean-saturated subset of $\subs[x]{\varphi}$ satisfying
\begin{enumerate}
	[label=\textbf{(kf$_*$-type)},leftmargin=*,series=run]
	\item $\Diamond_{\ast}\psi\in \contp$ iff there is $a\in A_{0}$ with either $\Diamond_{a}\psi\in \contp$ or
$\Diamond_{a}\Diamond_{\ast}\psi\in \contp$, for all $\Diamond_{\ast}\psi\in \subs[x]{\varphi}$.\label{kfs-type}
\end{enumerate}
Let $\Mmf = (\Fmf, \Delta, \cdot)$ be an interpretation with a $\Kfn$ frame $\Fmf$. Then it is straightforward
to show that $\contp^{\Mmf(w)}(d) = \{ \psi \in \subs[x]{\varphi} \mid \Mmf, w\models \psi[d] \}$ is a $\Kfn$-type for $\varphi$, for every $w \in W$ and $d \in \Delta_w$. 
A \emph{$\Kfn$ basic structure for~$\varphi$} is a basic structure for $\varphi$ based on  a $\Kfn$ frame in which types for $\varphi$ are in fact $\Kfn$-types for $\varphi$.
A (weak) run and  quasimodel are now defined as before and called a (weak) $\Kfn$ run and $\Kfn$ quasimodel, respectively. Then Lemma~\ref{lemma:quasimodel} still holds for $\Kfn$ quasimodels.

We now provide an equivalent description of (weak) $\Kfn$ runs 
in which the modality $\Diamond_{\ast}$ is considered only implicitly. This simplifies the constructions needed later. 
\begin{lemma}
Let $\rho$ be a function mapping each world~$\avec{w}$ in an upward-closed subset $W'$ of $W$ to a $\Kfn$-type $\rho(\avec{w})\in \funcand(\avec{w})$.  Then $\rho$ satisfies~\ref{a-r-coh} for all $a\in A$ iff $\rho$ satisfies~\ref{a-r-coh}  for all $a\in A_0$ and 
\begin{enumerate}
	[label={\bf (kf$_*$-r-coh)},leftmargin=*,series=run]
\item $\rho(\avec{w}) \Rightarrow_a \rho(\avec{v})$,\label{kf-r-coh} for every $\avec{w},\avec{v}\in \dom\rho$ with $\avec{w}R_a\avec{v}$ and $a\in A_0$, where $\contp\Rightarrow_a\contp'$ is defined as follows\textup{:}
\begin{equation*}
\Diamond_{a}\psi\in \contp' \quad\text{ implies }\quad \Diamond_{\ast}\psi\in \contp,
	\text{ for all }  \Diamond_{\ast}\psi\in \subs[x]{\varphi}.
\end{equation*}
\end{enumerate}
\end{lemma}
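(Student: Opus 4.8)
The plan is to reduce both implications to the interplay between $A_0$-coherence and the closure condition \ref{kfs-type} built into $\Kfn$-types, exploiting that $R_a\subseteq R_\ast$ for $a\in A_0$ and that $R_\ast$ is the transitive closure of $\bigcup_{a\in A_0}R_a$. The feature that makes everything work is that $\subs[x]{\varphi}$ contains, for every $\Diamond_\ast\psi\in\subs[x]{\varphi}$ and every $a\in A_0$, both $\Diamond_a\psi$ and $\Diamond_a\Diamond_\ast\psi$; this is exactly what allows us to apply $\rightarrow_a$ not only to $\psi$ but also to $\Diamond_\ast\psi$. I will repeatedly invoke two facts: (i) given $\avec{w}R_a\avec{v}$ with $a\in A_0$ and some $\chi\in\rho(\avec{v})$ with $\Diamond_a\chi\in\subs[x]{\varphi}$, condition \ref{a-r-coh} for $a$ yields $\Diamond_a\chi\in\rho(\avec{w})$; and (ii) by \ref{kfs-type}, any $\Kfn$-type $\contp$ with $\Diamond_a\psi\in\contp$ or $\Diamond_a\Diamond_\ast\psi\in\contp$ for some $a\in A_0$ also contains $\Diamond_\ast\psi$.

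For the forward direction, \ref{a-r-coh} for every $a\in A_0$ is immediate since $A_0\subseteq A$. To obtain \ref{kf-r-coh}, fix $\avec{w}R_a\avec{v}$ with $a\in A_0$ and suppose $\Diamond_a\psi\in\rho(\avec{v})$ for some $\Diamond_\ast\psi\in\subs[x]{\varphi}$. Fact (ii) at $\rho(\avec{v})$ gives $\Diamond_\ast\psi\in\rho(\avec{v})$; fact (i) with $\chi=\Diamond_\ast\psi$ (legitimate because $\Diamond_a\Diamond_\ast\psi\in\subs[x]{\varphi}$) gives $\Diamond_a\Diamond_\ast\psi\in\rho(\avec{w})$; and fact (ii) at $\rho(\avec{w})$ then yields $\Diamond_\ast\psi\in\rho(\avec{w})$, i.e.\ $\rho(\avec{w})\Rightarrow_a\rho(\avec{v})$.

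For the converse, the target is \ref{a-r-coh} for $a=\ast$. Take $\avec{w},\avec{v}\in\dom\rho$ with $\avec{w}R_\ast\avec{v}$ and $\psi\in\rho(\avec{v})$, where $\Diamond_\ast\psi\in\subs[x]{\varphi}$; the goal is $\Diamond_\ast\psi\in\rho(\avec{w})$. Since $R_\ast$ is the transitive closure of $\bigcup_{a\in A_0}R_a$, there is a chain $\avec{w}=\avec{u}_0R_{b_0}\avec{u}_1R_{b_1}\cdots R_{b_{k-1}}\avec{u}_k=\avec{v}$ with all $b_i\in A_0$ and $k\geq1$, and upward-closedness of $\dom\rho$ guarantees $\avec{u}_i\in\dom\rho$ for all $i$. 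I then induct on $k$. For $k=1$, fact (i) with $\chi=\psi$ gives $\Diamond_{b_0}\psi\in\rho(\avec{w})$, and fact (ii) yields $\Diamond_\ast\psi\in\rho(\avec{w})$. For the step, the induction hypothesis applied to $\avec{u}_1R_\ast\avec{v}$ gives $\Diamond_\ast\psi\in\rho(\avec{u}_1)$; fact (i) with $\chi=\Diamond_\ast\psi$ (using $\Diamond_{b_0}\Diamond_\ast\psi\in\subs[x]{\varphi}$) gives $\Diamond_{b_0}\Diamond_\ast\psi\in\rho(\avec{w})$, and fact (ii) delivers $\Diamond_\ast\psi\in\rho(\avec{w})$.

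The one genuine subtlety, and precisely what the extended closure $\subs[x]{\varphi}$ is engineered to handle, is that $\Diamond_\ast\psi$ cannot be transported across a single $A_0$-edge by a direct appeal to $\ast$-coherence: that would require $\Diamond_\ast\Diamond_\ast\psi\in\subs[x]{\varphi}$, which the closure deliberately omits. Instead $\Diamond_\ast$ must be re-expressed in the $\Diamond_{b_0}\Diamond_\ast$ form and then reconstituted via \ref{kfs-type} at each world along the chain. A byproduct of this analysis is that \ref{kf-r-coh} is itself derivable from \ref{a-r-coh} for $A_0$ together with \ref{kfs-type} (the forward argument above never used the $\ast$-hypothesis), so the real content of the lemma is the reduction of the global condition \ref{a-r-coh} for $\ast$ to the purely local $A_0$-conditions.
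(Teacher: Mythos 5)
Your proof is correct and takes essentially the same route as the paper: the forward direction combines $A_0$-coherence with the type-closure condition \textbf{(kf$_*$-type)}, and the converse decomposes $R_\ast$ into an $A_0$-chain and propagates $\Diamond_\ast\psi$ backwards along it via the formulas $\Diamond_{a}\Diamond_\ast\psi$ added to $\subs[x]{\varphi}$. Your side remark that \textbf{(kf$_*$-r-coh)} is itself derivable from $A_0$-coherence together with \textbf{(kf$_*$-type)} is accurate and consistent with the paper's argument.
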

\begin{proof}
Assume $\rho$ satisfies~\ref{a-r-coh} for all $a\in A$. To show that $\rho$ satisfies~\ref{kf-r-coh},
let $a\in A_{0}$ and $\avec{w},\avec{v}\in\dom\rho$ with $\avec{w} R_{a} \avec{v}$. Consider $\Diamond_{\ast}\psi \in \subs[x]{\varphi}$. By  definition,  $\Diamond_{a}\Diamond_{\ast}\psi \in \subs[x]{\varphi}$. Suppose now  $\Diamond_{a}\psi\in \rho(\avec{v})$. Then, by~\ref{kfs-type},
	$\Diamond_{\ast}\psi\in \rho(\avec{v})$. By~\ref{a-r-coh}, $\Diamond_a\Diamond_{\ast}\psi\in \rho(\avec{w})$, whence, by~\ref{kfs-type}, $\Diamond_*\psi\in\rho(\avec{w})$, as required. 

Conversely, assume $\rho$ satisfies~\ref{a-r-coh} for all $a\in A_0$ and~\ref{kf-r-coh}.
We show~\ref{a-r-coh} for $a = *$. 
Consider $\Diamond_{\ast}\psi\in \subs[x]{\varphi}$. By definition,  $\Diamond_{a}\Diamond_{\ast}\psi \in \subs[x]{\varphi}$, for all $a\in A_0$.
Assume $\avec{w},\avec{v}\in\dom\rho$ with $\avec{w}R_*\avec{v}$. Then there is a finite path $\avec{w} = \avec{w}_{0}R_{a_{1}}\cdots R_{a_{m}}\avec{w}_m =\avec{v}$ with $m > 0$ and $a_{1},\ldots,a_{m}\in A_{0}$. 
Suppose $\psi \in \rho(\avec{v}) = \rho(\avec{w}_m)$. Then, by~\ref{a-r-coh} for $a = a_{m}$, we get $\Diamond_{a_m} \psi\in \rho(\avec{w}_{m-1})$ and so, by~\ref{kf-r-coh}, $\Diamond_* \psi\in \rho(\avec{w}_{m-1})$. We show  that $\Diamond_{\ast}\psi\in \rho(\avec{w}_{i})$ for all $0\leq i<m$ by induction on $i$. The basis of induction, $i = m - 1$, is done. For the inductive step, let $\Diamond_*\psi\in\rho(\avec{w}_i)$. Since $\Diamond_{a_i}\Diamond_*\psi \in \subs[x]{\varphi}$, by~\ref{a-r-coh} for $a = a_i$, we have $\Diamond_{a_i}\Diamond_*\psi\in\rho(\avec{w}_{i-1})$, whence, by~\ref{kfs-type},   $\Diamond_*\psi\in\rho(\avec{w}_{i-1})$. Thus, $\Diamond_*\psi\in\rho(\avec{w}_0) = \rho(\avec{w})$, as required.
\end{proof}

%

\begin{lemma} 
A weak $\Kfn$ run $\rho$ satisfies~\ref{run:wsat} for all $a\in A$ and all $\avec{w}\in W$ iff $\rho$ satisfies~\ref{run:wsat}
for all $a\in A_{0}$ and all  $\avec{w}\in W$.
\end{lemma}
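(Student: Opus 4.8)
The plan is to prove the two directions separately, with only the right-to-left implication carrying any content. The left-to-right direction is immediate: since $A_{0}\subseteq A$, if $\rho$ satisfies \ref{run:wsat} for every $a\in A$, then in particular it satisfies it for every $a\in A_{0}$. For the converse, I would assume $\rho$ satisfies \ref{run:wsat} for all $a\in A_{0}$ and all worlds, and establish it for $a=\ast$. Fix $\psi$ with $\Diamond_{\ast}\psi\in\subs[x]{\varphi}$ and call a world $\avec{w}\in\dom\rho$ \emph{$\psi$-deficient} if $\Diamond_{\ast}\psi\in\rho(\avec{w})$ yet there is no $\avec{v}$ with $\avec{w}R_{\ast}\avec{v}$ and $\psi\in\rho(\avec{v})$. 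Showing that no world is $\psi$-deficient is exactly \ref{run:wsat} for $a=\ast$.

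I would argue by contradiction. Suppose $\avec{w}_{0}$ is $\psi$-deficient. Since $\rho(\avec{w}_0)$ is a $\Kfn$-type, \ref{kfs-type} yields some $a\in A_{0}$ with either $\Diamond_{a}\psi\in\rho(\avec{w}_{0})$ or $\Diamond_{a}\Diamond_{\ast}\psi\in\rho(\avec{w}_{0})$. In the first case, $A_{0}$-saturation gives a $\avec{v}$ with $\avec{w}_{0}R_{a}\avec{v}$ and $\psi\in\rho(\avec{v})$; as $R_{a}\subseteq R_{\ast}$, this $\avec{v}$ is an $R_{\ast}$-successor of $\avec{w}_{0}$ realising $\psi$, contradicting deficiency. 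Hence the second case must hold, and $A_{0}$-saturation produces $\avec{w}_{1}$ with $\avec{w}_{0}R_{a}\avec{w}_{1}$ and $\Diamond_{\ast}\psi\in\rho(\avec{w}_{1})$.

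The key observation is that $\avec{w}_{1}$ is again $\psi$-deficient: any $\avec{v}$ with $\avec{w}_{1}R_{\ast}\avec{v}$ and $\psi\in\rho(\avec{v})$ would, via $\avec{w}_{0}R_{a}\avec{w}_{1}R_{\ast}\avec{v}$ and transitivity of $R_{\ast}$, be an $R_{\ast}$-successor of $\avec{w}_{0}$ realising $\psi$, again contradicting the deficiency of $\avec{w}_{0}$. Iterating the construction therefore produces an infinite sequence $\avec{w}_{0},\avec{w}_{1},\avec{w}_{2},\dots$ of $\psi$-deficient worlds with $\avec{w}_{i}R_{a_{i+1}}\avec{w}_{i+1}$ for some $a_{i+1}\in A_{0}$, so that $\avec{w}_{i}R_{\ast}\avec{w}_{i+1}$ for all $i$. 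This is precisely an infinite $R_{\ast}$-ascending chain, which is forbidden in a $\Kfn$ frame, giving the desired contradiction. (Here I use that $\dom\rho$ is upward-closed, so every $\avec{w}_{i}$ lies in $\dom\rho$ and $\rho$ is defined on it.)

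I expect the only delicate point to be the inheritance step, namely verifying that $\psi$-deficiency propagates from $\avec{w}_{0}$ to $\avec{w}_{1}$: this is exactly what lets the forward chase run forever and collide with the well-foundedness (no infinite ascending chains) built into $\Kfn$ frames. Everything else is a routine unwinding of \ref{kfs-type} and the saturation condition, together with the inclusion $R_{a}\subseteq R_{\ast}$ for $a\in A_{0}$ and the transitivity of $R_{\ast}$.
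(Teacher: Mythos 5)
Your proof is correct and follows essentially the same route as the paper's: the nontrivial direction uses \textbf{(kf$_*$-type)} to split into the $\Diamond_a\psi$ case (immediate witness via $R_a\subseteq R_\ast$) and the $\Diamond_a\Diamond_\ast\psi$ case, then chases forward and invokes the absence of infinite ascending $R_\ast$-chains in $\Kfn$ frames to terminate. The paper states this as a direct inductive construction of a finite witnessing path, whereas you run it as a contradiction via the ``$\psi$-deficient'' chain, but the underlying argument is identical.
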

\begin{proof}
We show the direction from right to left. Let $\Diamond_{\ast}\psi\in \rho(\avec{w})$. By~\ref{kfs-type}, $\Diamond_{a}\psi\in \rho(\avec{w})$ or $\Diamond_{a}\Diamond_{\ast}\psi\in \rho(\avec{w})$, for some $a \in A_{0}$. In the former case, by~\ref{run:wsat}, there is $\avec{v}\in W$ with $\avec{w}R_{a}\avec{v}$ and $\psi\in \rho(\avec{v})$, as required.  In the latter case, we can argue inductively and construct a finite path $\avec{w} = \avec{w}_{0}R_{a_1}\cdots R_{a_{m}}\avec{w}_m=\avec{v}$ with $a_{1},\ldots,a_{m}\in A_{0}$ such that $\psi\in \rho(\avec{v})$. This is possible since the frame has no infinite paths.
\end{proof}

It follows that coherence and saturation conditions for the modality $\ast$ in $\Kfn$ quasimodels are fully covered by using relations $\rightarrow_{a}$ and $\Rightarrow_a$ for $a\in A_{0}$. 
We can thus define weak $\Kfn$ quasimodels $\Qmf = (\Fmf,\funcand,\runs,\prset{})$ accordingly by replacing condition~\ref{approx:R2} by the following weakening to $A_{0}$:
\begin{enumerate}
	[label=\textbf{(wq-sat$^\ast$)},leftmargin=*,series=run]
	\item for every $\avec{w}\in W$ and $\contp\in\funcand(\avec{w})$, there is a \emph{prototype weak $\Kfn$ run}  $\prfun{\avec{w}, \contp}\in \runs_{\avec{w},\contp}$, which is $a$-saturated at $\avec{w}$, for each $a\in A_0$.
\end{enumerate}
\begin{lemma}\label{lem:equivweaknotweak}
	Let  $\varphi$ be a $\QmonMLplusc$-sentence. For every $\Kfn$ quasimodel satisfying~$\varphi$ based on a tree-shaped frame, there exists a weak $\Kfn$ quasimodel  satisfying~$\varphi$ based on a \textup{(}finite\textup{)} frame of outdegree bounded by $s(\varphi)2^{s(\varphi)}$. 
	
Conversely, for every weak $\Kfn$ quasimodel satisfying $\varphi$ based on a finite frame, there exists a $\Kfn$ quasimodel satisfying $\varphi$.
\end{lemma}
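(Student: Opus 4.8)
The plan is to mirror the two $\Kn$ constructions, Lemmas~\ref{lem:weakprequasi:k} and~\ref{lemma:k:finite-approx}, adapting them to the unbounded-depth setting of tree-shaped $\Kfn$ frames. The crucial simplification, granted by the two lemmas immediately preceding this statement, is that throughout one may work with coherence and saturation for $a\in A_{0}$ only: $A_{0}$-coherence together with~\ref{kf-r-coh} yields full coherence (including $\ast$), and $A_{0}$-saturation yields full saturation (including $\ast$), the latter relying on the absence of infinite ascending $R_{\ast}$-chains in $\Kfn$ frames. Hence in both directions it suffices to control the modalities in $A_{0}$, and the $\ast$-conditions come for free. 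Both the restriction and the copy-and-repair constructions preserve whether runs are full, so, as for the $\Kn$ analogues, the argument covers both constant and expanding domains.

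For the forward direction, I would start from a $\Kfn$ quasimodel $\Qmf=(\Fmf,\funcand,\runs)$ satisfying $\varphi$ on a tree-shaped frame $\Fmf=(W,\{R_a\}_{a\in A})$ and proceed as in Lemma~\ref{lem:weakprequasi:k}, but pruning for \emph{outdegree} rather than depth. For each $\avec{w}\in W$ and each $\Kfn$-type $\contp\in\funcand(\avec{w})$, condition~\ref{run:exists} supplies a run $\prfun{\avec{w},\contp}\in\runs_{\avec{w},\contp}$, which by~\ref{rn:modal2} is $a$-saturated at $\avec{w}$ for every $a\in A_{0}$; I keep, for each such $\contp$ and each $\Diamond_a\psi\in\contp$ with $a\in A_{0}$, one witness successor, discarding all other subtrees. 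Since there are at most $2^{s(\varphi)}$ types and each requires at most $s(\varphi)$ witnesses, the retained subframe $W'$ is prefix-closed with outdegree bounded by $s(\varphi)2^{s(\varphi)}$. Every branch of $W'$ is a branch of $\Fmf$, so $W'$ still has no infinite ascending $R_{\ast}$-chain; being finitely branching with no infinite branch, $W'$ is finite by K\"onig's Lemma. Restricting $\funcand$, the runs, and the chosen prototypes to $W'$ then yields a weak $\Kfn$ quasimodel: \ref{run:exists} is preserved because the quasistates are unchanged at worlds of $W'$, condition~(wq-sat$^{\ast}$) holds by the deliberate retention of witnesses, and \ref{b1} transfers verbatim.

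For the converse, I would run the copy-and-repair construction of Lemma~\ref{lemma:k:finite-approx} over the finite frame, swapping each unsaturated indexed weak run with a saturated prototype via the repair bijections, but applying the procedure only to the modalities $a\in A_{0}$. As the starting frame is finite, its depth $d$ is finite and the construction produces a finite tree-shaped frame of size $O(|\Fmf|\cdot|\runs|^{d})$, whose $A_{0}$-reduct is a tree-shaped $\Kn$ frame and whose $R_{\ast}$ therefore has no infinite ascending chain. The verification that the resulting runs satisfy~\ref{a-r-coh} and~\ref{run:wsat} for all $a\in A_{0}$ is identical to the $\Kn$ case; one additionally checks that~\ref{kf-r-coh} is preserved, which is immediate since it is a purely local condition on $R_{a}$-edges of the same shape as~\ref{a-r-coh}. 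By the two preceding lemmas, the $A_{0}$-coherence-plus-\ref{kf-r-coh} and the $A_{0}$-saturation of these runs upgrade to full coherence and saturation for all of $A$, so the output is a genuine $\Kfn$ quasimodel satisfying $\varphi$.

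The main obstacle, and the only genuinely new ingredient relative to the $\Kn$ development, is the finiteness argument in the forward direction: one must verify carefully that pruning to witnesses preserves both tree-shapedness and the absence of infinite ascending $R_{\ast}$-chains, so that K\"onig's Lemma applies and the bounded-outdegree subframe is actually finite. Everything else reduces, via the reduction of $\ast$-conditions to $A_{0}$-conditions supplied by the two preceding lemmas, to bookkeeping already carried out in Lemmas~\ref{lem:weakprequasi:k} and~\ref{lemma:k:finite-approx}.
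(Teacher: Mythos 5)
Your proposal is correct and follows exactly the route the paper intends: its proof of this lemma is literally ``by straightforward adaptation of the proofs of Lemmas~\ref{lem:weakprequasi:k} and~\ref{lemma:k:finite-approx}'', and your write-up is that adaptation, with the $\ast$-conditions reduced to $A_{0}$-conditions via the two preceding lemmas. The one genuinely new ingredient you identify --- pruning by outdegree rather than depth and invoking K\"onig's Lemma together with the absence of infinite ascending $R_{\ast}$-chains to get finiteness --- is precisely what the paper leaves implicit, and you handle it correctly.
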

\begin{proof}
By straightforward adaptation of the proofs of Lemmas~\ref{lem:weakprequasi:k} and~\ref{lemma:k:finite-approx}.
%
\end{proof}

By Lemma~\ref{lem:equivweaknotweak}, to show decidability of a fragment of $\QmonMLplusc$ over $\Kfn$ frames, it suffices to show that satisfiable sentences 
$\varphi$ are satisfied in weak quasimodels based on Kripke frames and quasistates of recursive size in $|\varphi|$. To this end, we aim to show that it suffices to consider weak quasimodels with a recursive bound (depending on $\varphi$) on the length of sequences $\avec{n}_{1}R_{a_{1}}\avec{n}_{2}R_{a_{2}}\ldots$ of quasistates. 

To obtain such a bound, we apply Dickson's Lemma to the quasistates with the product ordering. Call a pair $\avec{n},\avec{n}'\in\extN^k$ with $\avec{n} \leq \avec{n}'$ an \emph{increasing pair}. A (finite or infinite) sequence $\avec{n}_{1},\avec{n}_{2},\ldots\in \extN^{k}$ is \emph{bad} if is contains no increasing pair $\avec{n}_i,\avec{n}_j$ with $i<j$. By Dickson's Lemma, no infinite sequence $\avec{n}_{1},\avec{n}_{2},\ldots\in \extN^{k}$ is bad. It is easy to see that there are arbitrarily long finite bad sequences, however. To bound also the length of bad sequences we introduce \emph{control functions}~\cite{DBLP:conf/icalp/SchmitzS11} providing a bound on the size of the $\avec{n}_{i}$. 
%
Define a \emph{norm} $|\avec{n}|_{f}\in \mathbb{N}$ on $\extN^{k}$ by setting $|\avec{n}|_{f}=\sum_{n_{i}<\aleph_{0}}x_{i}$, for $\avec{n}\in \extN^{k}$.
This norm is \emph{proper} since $\{ \avec{n}\in\extN^{k} \mid |\avec{n}|_{f}<N\}$ is finite for all $N\in \mathbb{N}$. The following can be proved along the lines of 
proofs of Dickson's~Lemma~\cite{DBLP:conf/lics/FigueiraFSS11,DBLP:conf/icalp/SchmitzS11}.
\begin{theorem}\label{thm:bound}
	Let $g\colon\mathbb{N} \rightarrow \mathbb{N}$ be a recursive function, and let 
	$\length(m)$ be the maximal length of a bad sequence $\avec{n}_{1},\avec{n}_{2},\ldots\in \extN^{k}$ with $|\avec{n}_{i}|_{f} \leq g^{i}(m)$ for all $i$, where $g^i$ denoted the $i$ iterated applications of $g$. Then $\length(m)$ is bounded by a recursive function that can be obtained from~$g$. 
\end{theorem}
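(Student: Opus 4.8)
The plan is to recognise the statement as an instance of the \emph{length function theorem} for controlled bad sequences over a proper normed well-quasi-order, here $(\extN^{k},\le)$ with the product ordering and the norm $|\cdot|_{f}$. I would proceed in two stages: first show that $\length(m)$ is finite, and then that it is bounded by a function recursive in $g$. Throughout I may assume that $g$ is monotone and inflationary (replacing it by $x\mapsto\max(x+1,g(x))$ if necessary), since enlarging $g$ only enlarges the class of controlled bad sequences and hence only increases $\length(m)$.

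For finiteness I would first record that $(\extN,\le)$ is a well-quasi-order (it is the chain $\omega+1$) and hence, by Dickson's Lemma, so is the finite power $(\extN^{k},\le)$; in particular there is no infinite bad sequence. Then I would consider the tree whose nodes are the finite $g$-controlled bad sequences, ordered by the prefix relation. A node of depth $i$ has at most $|\{\avec{n}\in\extN^{k} : |\avec{n}|_{f}\le g^{i+1}(m)\}|$ children, which is finite because $|\cdot|_{f}$ is a \emph{proper} norm, and the tree has no infinite branch, since such a branch would be an infinite controlled bad sequence. By König's Lemma the tree is finite, so $\length(m)<\infty$.

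For the effective bound I would invoke the controlled-bad-sequence machinery of~\cite{DBLP:conf/lics/FigueiraFSS11,DBLP:conf/icalp/SchmitzS11}. The point is that $(\extN^{k},\le,|\cdot|_{f})$ has exactly the shape handled there: it is the $k$-fold product of the elementary proper normed well-quasi-order $\extN$ (with $|\aleph_{0}|_{f}=0$ and $|n|_{f}=n$), and $|\cdot|_{f}$ is the induced sum norm. The length function theorem then yields a function, recursive in $g$ (and $k$), bounding the length of every $g$-controlled bad sequence in $\extN^{k}$; taking this function as the bound on $\length(m)$ gives the claim.

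The main obstacle is the correct treatment of the $\aleph_{0}$-components. A tempting shortcut—replacing each $\aleph_{0}$ by a large finite constant so as to reduce to bad sequences in $\mathbb{N}^{k}$—does not work: to preserve badness one must replace an $\aleph_{0}$ occurring at position $i$ by a value exceeding every finite entry appearing later in the same coordinate, but those entries are only bounded by $g^{L}(m)$ with $L$ the \emph{unknown} length, so the substitution is circular and destroys the iterated control $|\avec{n}_{i}|_{f}\le g^{i}(m)$. The resolution is to argue natively in $\extN^{k}$, exploiting that $\aleph_{0}$ is merely a top element in each coordinate that contributes nothing to $|\cdot|_{f}$; this is precisely the feature that lets $\extN$ count as an elementary proper normed well-quasi-order, so that the $\mathbb{N}^{k}$ proofs of the length function theorem transfer essentially verbatim, as the ``along the lines of'' formulation anticipates.
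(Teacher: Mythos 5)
Your proposal is correct and takes essentially the same route as the paper, which gives no detailed argument but simply appeals to the length-function theorems for controlled bad sequences of Figueira et al.\ and Schmitz--Schnoebelen — exactly the machinery you invoke. Your additional material (the K\"onig-lemma finiteness argument via properness of $|\cdot|_{f}$, and the observation that $\extN\cong\omega+1$ is itself a proper normed well-quasi-order so the $\mathbb{N}^{k}$ bounds transfer without the circular ``replace $\aleph_0$ by a large constant'' shortcut) correctly fills in the details the paper leaves implicit.
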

Theorem~\ref{thm:bound} can be refined in various way by providing bounds on $\length(m)$
depending on $f$ if $f$ is primitive recursive. We note, however, that $\length(m)$  grows very fast. In our application of Theorem~\ref{thm:bound}, $m$ is given by the size $|\varphi|$ of the input formula $\varphi$.
\begin{lemma}\label{lem:squeeze}
There is a primitive recursive function $g$ such that, for each $\QGuardmonMLplusc$-sentence $\varphi$,  each quasistate $\avec{m}$ and each \emph{finite} quasistate candidate $\avec{n}\leq \avec{m}$  for $\varphi$, there exists a \emph{finite} quasistate $\avec{m}'$ for $\varphi$ such that $\avec{n} \leq \avec{m}' \leq \avec{m}$ and $|\avec{m}'|\leq g(|\avec{n}| + |\varphi|)$. 
	
	The statement above also holds for $\CtwomonMLplusc$-sentences  without the condition that $\avec{n}$ and $\avec{m}'$ are finite but with $|\cdot|$ replaced by $|\cdot|_{f}$. 
\end{lemma}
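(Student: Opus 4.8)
The plan is to treat both fragments by a common template. Given the realisable $\avec{m}$ and the candidate $\avec{n}\le\avec{m}$, I would exhibit a realisable $\avec{m}'$ lying in the coordinatewise ``box'' $[\avec{n},\avec{m}]$ whose (finite) norm is controlled by $|\avec{n}|$ (resp.\ $|\avec{n}|_{f}$) and $|\varphi|$ only. In both cases $\avec{m}'$ keeps the support $S$ of $\avec{m}$ and the infinity pattern of $\avec{m}$, so that $\avec{n}\le\avec{m}'\le\avec{m}$ holds coordinatewise and constant-carrying types retain multiplicity $1$. The whole difficulty is to make $\avec{m}'$ inherit realisability from $\avec{m}$ while bounding its finite part \emph{independently} of the possibly huge or infinite multiplicities of $\avec{m}$.

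For $\QGuardmonMLplusc$ I would use the closure properties of Lemma~\ref{lem:guardedfragment}. First realise $\avec{m}$ in a first-order structure $\Bmf$, and apply the guarded-bisimulation-based double-exponential model construction underlying Lemma~\ref{lem:guardedfragment}(2) (from~\cite{DBLP:journals/corr/BaranyGO13}) to $\Bmf$, obtaining a model $\Bmf'$ of size at most $2^{2^{|\varphi|^{O(1)}}}$ that has the \emph{same} support $S$ and realises a vector $\avec{v}$ with $\avec{v}\le\avec{m}$ and $\avec{v}(\contp)\le 2^{2^{|\varphi|^{O(1)}}}$ for all $\contp$. Then $\avec{m}':=\max(\avec{n},\avec{v})$ (coordinatewise) dominates $\avec{v}$, shares its support $S$, and keeps constant-types at $1$, so it is again a quasistate by Lemma~\ref{lem:guardedfragment}(1); moreover $\avec{n}\le\avec{m}'\le\avec{m}$ (as $\avec{n}\le\avec{m}$ and $\avec{v}\le\avec{m}$) and $|\avec{m}'|\le|\avec{n}|+|S|\cdot 2^{2^{|\varphi|^{O(1)}}}$, which is bounded by a suitable elementary $g(|\avec{n}|+|\varphi|)$ since $|S|\le 2^{|\varphi|}$.

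For $\CtwomonMLplusc$ the realisable set is no longer upward closed, so I would instead use the Diophantine description of Lemma~\ref{lem:iancard} together with the solution bounds of Theorem~\ref{thm:sol}. Fix an $\Emc\in\Cmc$ solved by $\avec{m}$, freeze the infinity pattern of $\avec{m}$ by substituting $\aleph_0$ for every coordinate on which $\avec{m}$ is infinite, and let $\Emc_0$ be the resulting finite system on the remaining coordinates (its finite coefficients are still at most double-exponential), which is solved by the finite part of $\avec{m}$. I would then find a small solution of $\Emc_0$ in $[\avec{n},\avec{m}]$ by the following loop: solve $\Emc_0\wedge(x\ge\avec{n})$ together with any upper bounds accumulated so far (using slack variables for the inequalities) via Theorem~\ref{thm:sol}; if the returned solution already satisfies $x\le\avec{m}$ we stop, and otherwise every violated coordinate $\contp$ has $\avec{m}(\contp)$ strictly below the current solution bound, so we add $x_{\contp}\le\avec{m}(\contp)$ (a constraint whose coefficient is therefore itself bounded) and repeat. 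Feasibility is preserved at each stage by $\avec{m}$ itself, and each non-terminating round bounds at least one new coordinate, so the loop halts after at most $k_{\varphi}$ rounds; the final $\avec{m}'$ solves $\Emc$ (hence is a quasistate by Lemma~\ref{lem:iancard}), lies in $[\avec{n},\avec{m}]$, and its finite part is bounded by the $k_{\varphi}$-fold iteration of the elementary bound of Theorem~\ref{thm:sol}, i.e.\ by a primitive recursive function of $|\avec{n}|_{f}+|\varphi|$.

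The main obstacle is, in both cases, the same: keeping $|\avec{m}'|$ independent of the large or infinite multiplicities of $\avec{m}$ without losing realisability. For \textsf{GF} this reduces to the claim that a realisation of $\avec{m}$ can be shrunk to double-exponential size \emph{without increasing any multiplicity and while preserving the support}; I expect checking that the finite-model construction of~\cite{DBLP:journals/corr/BaranyGO13} can be taken to be count-non-increasing (a quotient/selection of the given model, retaining one representative of each realised type) to be the delicate point. For $\textsf{C}^{2}$ the obstacle is the coefficient blow-up that a naive imposition of all upper bounds $x\le\avec{m}$ would cause in Theorem~\ref{thm:sol}; the iteration above is precisely the device that avoids it, exploiting that any coordinate ever needing an upper bound is automatically small relative to the current solution bound, so that the accumulated coefficients — and hence the final bound — never depend on the genuinely large entries of $\avec{m}$.
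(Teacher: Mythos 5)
Your proposal departs from the paper's proof, which handles both fragments by one device: it expresses the target multiplicities \emph{inside the fragment} as a sentence $\psi_N$ built from $\textsf{only}_{\avec{m}}$, $\textsf{atleast}_{\avec{n}}^{\contp}$ (with fresh constants) and $\textsf{atmost}_{\avec{m}}^{\contp}$ for every type with $\avec{m}(\contp)\leq N$, picks the least threshold $N_0$ (by iterating $N\mapsto f(|\psi_N|)$) so that every type has either $\avec{m}(\contp)\leq N_0$ or $\avec{m}(\contp)>f(|\psi_{N_0}|)$, and reads $\avec{m}'$ off a bounded model of $\psi_{N_0}$; the types escaping the $\textsf{atmost}$ constraints are then dominated by $\avec{m}$ simply because the whole model has at most $f(|\psi_{N_0}|)$ elements. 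Both halves of your argument have a gap precisely where this device does its work, namely in enforcing $\avec{m}'\leq\avec{m}$. For $\QGuardmonMLplusc$, your construction stands or falls with the claim that the double-exponential model of~\cite{DBLP:journals/corr/BaranyGO13} realises each type at most as often as the original structure; you flag this as ``the delicate point'', but it is in fact the whole content of the lemma and does not hold as stated: ``$\contp$ is realised at most $k$ times'' is not expressible in $\textsf{GF}$ without constants, the finite-model constructions amalgamate pieces rather than select a substructure, and your fallback of taking a ``quotient/selection'' of $\Bmf$ does not work because the set of types realised by an element changes under passage to substructures. Without $\avec{v}\leq\avec{m}$ your $\avec{m}'=\max(\avec{n},\avec{v})$ need not lie below $\avec{m}$, and truncating it by $\min(\cdot,\avec{m})$ destroys the appeal to the upward closure of Lemma~\ref{lem:guardedfragment}(1). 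The paper manufactures the count-non-increasing property syntactically, via the $\textsf{atmost}$ sentences.

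For $\CtwomonMLplusc$, the invariant driving your loop --- ``every violated coordinate $\contp$ has $\avec{m}(\contp)$ strictly below the current solution bound'' --- is justified only when the small solution is \emph{finite} at $\contp$. Theorem~\ref{thm:sol} bounds the finite values of \emph{some} solution but gives no control over which coordinates that solution sends to $\aleph_0$; nothing prevents it from being infinite at a coordinate where $\avec{m}(\contp)$ is finite but astronomically large. Such a coordinate is violated, yet the constraint $x_{\contp}\leq\avec{m}(\contp)$ you then add has a coefficient that is not bounded in terms of $|\avec{n}|_f+|\varphi|$, so the bound of the next round --- and hence your final $g$ --- is contaminated by the magnitude of $\avec{m}$. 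To repair this you would need to impose the finiteness pattern of $\avec{m}$ on the small solutions of $\Emc_0$, which is exactly what the paper's combination of $\textsf{infinite}^{\contp}$, $\textsf{atmost}_{\avec{m}}^{\contp}$ and the $N_0$-threshold accomplishes at the model-theoretic level. (If these two points were repaired, your $\textsf{GF}$ route would actually yield an elementary bound, improving on the paper's non-elementary $g$; as written, however, neither half goes through.)
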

\begin{proof}
	Assume $\avec{m}$ and $\avec{n}$ are given and assume first that $\varphi$ is a $\QGuardmonMLplusc$-sentence. It is known that there is a double exponential function $f$ such that every satisfiable sentence in the guarded fragment is satisfiable in a model of size at most $f(|\psi|)$~\cite{DBLP:journals/corr/BaranyGO13}. For any number $N \geq  |\avec{n}| + |\varphi|$, we define $\psi_{N}$
	as a conjunction of $\textsf{only}_{\avec{m}}$
%
%
together with 
\begin{itemize}
\item $\textsf{atleast}_{\avec{n}}^{\contp}$, for each $\Kfn$-type  $\contp$ for $\varphi$ with $\avec{n}(\contp)>0$,
and 
\item $\textsf{atmost}_{\avec{m}}^{\contp}$, for each $\Kfn$-type  $\contp$ for $\varphi$ with $\avec{m}(\contp)\leq N$;
\end{itemize}
see~\eqref{eq:gf:only}--\eqref{eq:gf:atmost}.
%
%
%
It should be clear that every first-order structure satisfying $\psi_N$  realises only the types $\contp\in\avec{m}$ with the minimum of $\avec{n}(\contp)$ times (by assumption, $\avec{n}(t) \leq N$) and the maximum of $\avec{m}(t)$ times provided that it does not exceed $N$ (in particular, there is no upper bound when $\avec{m}(\contp) = \aleph_0$).
    Now let $N_{0} \geq  |\avec{n}| + |\varphi|$ be minimal such that,  for all $\contp$, 
\begin{equation}\label{eq:gf:pr:N0}    
\text{ either } \ \ \avec{m}(\contp) \leq N_{0} \ \ \text{ or } \ \ \avec{m}(\contp) > f(|\psi_{N_{0}}|).
\end{equation}
Such an $N_0$ partitions the types $\contp$ for $\varphi$ into those with the multiplicity we need to take account of in~$\textsf{atmost}_{\avec{m}}^{\contp}$
and those with the required multiplicity exceeding the size of `small' finite models for $\psi_{N_0}$.  We can obtain $N_0$ by iterating: we start with $N_0 = |\avec{n}| + |\varphi|$ and check whether it satisfies~\eqref{eq:gf:pr:N0}. If it does, then we are done. Otherwise, we take the new $N_0$ to be $f(|\psi_{N_0}|)$ and repeat the check. Since at each step at least one more type will satisfy $\avec{m}(\contp) \leq N_{0}$, the iteration stops after at most $2^{|\sub[x]{\varphi}|}$ times. It follows that $N_0$ is bounded by a primitive recursive function in $|\avec{n}| + |\varphi|$ (but  in general not elementary as the height of the exponent depends on the number of types).
Finally, as $\avec{m}$ is a quasistate,  the sentence $\psi_{N_{0}}$ is satisfiable, and the quasistate~$\avec{m}'$ realised by a first-order structure satisfying $\psi_{N_{0}}$ of size bounded by $f(|\psi_{N_{0}}|)$ is as required.

\smallskip
    
    Now assume that $\varphi$ is a $\CtwomonMLc$ sentence. There is a double exponential function $f$ such that if a $\textsf{C}^{2}$-sentence $\psi$ is satisfiable, then it is satisfiable in a first-order structure where the number of domain elements realising a type~$\contp$ for $\psi$ that is realised by finitely many domain elements only is bounded by $f(|\psi|)$~\cite{pratt2023fragments}.
    For any number $N \geq  |\avec{n}|_{f} + |\varphi|$, define $\psi'_N$ as 
a conjunction  of $\textsf{only}_{\avec{m}}$ together with
\begin{itemize}
\item $\textsf{atleast}_{\avec{n}}^{\contp}$,
for each $\Kfn$-type~$\contp$ with $0 < \avec{n}(\contp)<\aleph_0$,
 \item $\textsf{infinite}^{\contp}$, for each $\Kfn$-type $\contp$ for $\varphi$ with $\avec{n}(\contp)=\aleph_0$, which are $\textsf{C}^{2}$-sentences expressing (using an additional binary relation, see Example~\ref{ex:running-example:1}) that $\contp$ is realised infinitely many times, 
\item $\textsf{atmost}_{\avec{m}}^{\contp}$, for each $\Kfn$-type~$\contp$ with $\avec{m}(\contp)\leq N$.
 \end{itemize}
 (Note that
 $\textsf{atleast}_{\avec{n}}^{\contp}$ and $\textsf{atmost}_{\avec{m}}^{\contp}$ can now also be expressed using counting quantifiers instead of constants.)
%
%
    Now let $N_{0} \geq  |\avec{n}|_{f} + |\varphi|$ be minimal such that either $\avec{m}(\contp) \leq N_{0}$ or $\avec{m}(\contp) > f(|\psi_{N_{0}}'|)$, for all $\contp$ with $\avec{m}(\contp)<\aleph_0$. Then $\psi_{N_{0}}'$ is satisfiable
    and we can take a first-order structure with the number of domain elements realising a type for $\psi_{N_{0}}'$ realised by finitely many domain elements is bounded by $f(|\psi_{N_{0}}'|)$. The quasistate $\avec{m}'$ realised by this structure is as required, and $|\avec{m}'|_f$ is bounded by a primitive recursive function in $|\avec{n}|_{f} + |\varphi|$.
\end{proof}

\begin{lemma}\label{lem:recboundnew}
For expanding domains,
	there is a primitive recursive function $g$ such that, for each  $\QGuardmonMLplusc$-sentence $\varphi$,  there exists a $\Kfn$ quasimodel satisfying~$\varphi$ based on a tree-shaped frame iff there is 
	a weak quasimodel $\Qmf = (\Fmf,\funcand,\runs,\prset{})$ satisfying~$\varphi$ with a \emph{(}finite\emph{)} tree-shaped $\Kfn$ frame $\Fmf$  such that, for all worlds $\avec{w}\in W$,
	\begin{enumerate}
		\item $|\funcand(\avec{w})|\leq g^{d(\avec{w})+1}(|\varphi|)$\textup{;}
		\item $\funcand(\avec{w})\not\leq \funcand(\avec{v})$, for all $\avec{v}\in W$ with $\avec{w}R_*\avec{v}$\textup{;}
		\item the outdegree of $\avec{w}$  is bounded by $ s(\varphi)2^{s(\varphi)}$.
	\end{enumerate} 
The statement above holds for $\CtwomonMLplusc$-sentences with $|\cdot|$ replaced by $|\cdot|_{f}$.
\end{lemma}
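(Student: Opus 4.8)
The plan is to prove the two directions separately, with the forward direction carrying essentially all of the work. For the easy direction $(\Leftarrow)$, suppose we are given a weak quasimodel satisfying $\varphi$ based on a finite tree-shaped $\Kfn$ frame (conditions (1)--(3) are irrelevant here). By the converse part of Lemma~\ref{lem:equivweaknotweak}, such a weak quasimodel yields a $\Kfn$ quasimodel satisfying $\varphi$, and by Lemma~\ref{lem:k:tree-shaped-exp} together with Lemma~\ref{lemma:quasimodel} we may take it to be based on a tree-shaped frame, as required.

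For $(\Rightarrow)$, I would start from a $\Kfn$ quasimodel satisfying $\varphi$ on a tree-shaped frame and first apply Lemma~\ref{lem:equivweaknotweak} to obtain a weak $\Kfn$ quasimodel $\Qmf = (\Fmf,\funcand,\runs,\prset{})$ whose frame is finite and tree-shaped with outdegree bounded by $s(\varphi)2^{s(\varphi)}$; this already gives condition (3), and finiteness of the frame guarantees finite depth. Using Lemma~\ref{lem:rooted:runs}, I would also assume all weak runs are rooted, so that the demand imposed on a world is localised to its own subtree. The two remaining conditions are then engineered by two interacting transformations: a top-down \emph{squeezing} that shrinks quasistates to obtain (1), and a \emph{surgery} that removes increasing pairs to obtain (2).

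For condition (1), I would process worlds in order of increasing depth and, at each world $\avec{w}$, apply Lemma~\ref{lem:squeeze} to replace $\funcand(\avec{w})$ by a minimal quasistate $\avec{m}'$ with $\avec{n}_{\avec{w}}\le\avec{m}'\le\funcand(\avec{w})$, where $\avec{n}_{\avec{w}}$ is the (finite) multiset of types forced at $\avec{w}$: the type containing $\varphi$ at the root, together with one witness type for each diamond of each prototype run reaching $\avec{w}$ from its parent. Since the outdegree and the number of relevant diamonds are bounded by a fixed polynomial in $s(\varphi)$, the norm of $\avec{n}_{\avec{w}}$ is bounded by a fixed polynomial in the norm of the (already squeezed) parent quasistate; choosing the primitive recursive $g$ of Lemma~\ref{lem:squeeze} large enough to absorb this polynomial factor, an induction on depth yields $|\funcand(\avec{w})|\le g^{d(\avec{w})+1}(|\varphi|)$. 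The multiset $\runs$ and the prototype function $\prset{}$ are rebuilt after each squeeze, exactly as in the reconstruction underlying Lemma~\ref{lem:weakpreweak} (adapted to $\Kfn$): because $\avec{n}_{\avec{w}}$ retains all forced types, the prototype runs and their witnesses survive, and coherence and cardinality can be re-established. For condition (2), I would then run a minimisation/surgery argument. Whenever an increasing pair $\avec{w}R_{\ast}\avec{v}$ with $\funcand(\avec{w})\le\funcand(\avec{v})$ survives, the saturation demand at the shallower $\avec{w}$ is dominated by that at the deeper $\avec{v}$, so the subtree below $\avec{v}$ already supplies every witness required at $\avec{w}$; I would detach the subtree below $\avec{w}$ and reattach a re-squeezed copy of the subtree below $\avec{v}$ in its place. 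Since $d(\avec{w})<d(\avec{v})$, this strictly decreases the total depth, and re-squeezing the transplanted worlds at their new, shallower depths only tightens the bound in (1), so (1) is preserved. As the frame is finite, depth is a well-founded measure and the process terminates in a weak quasimodel with no increasing pairs, i.e.\ satisfying (2). The $\CtwomonMLplusc$ case is identical except that one works over $\extN^{k_{\varphi}}$, replaces $|\cdot|$ by the proper norm $|\cdot|_{f}$, allows multiplicity $\aleph_0$, and invokes the $\textsf{C}^{2}$ branch of Lemma~\ref{lem:squeeze}; Dickson's Lemma in the form of Theorem~\ref{thm:bound} then applies verbatim to the resulting norm-controlled bad sequences.

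The main obstacle I anticipate is the bookkeeping where squeezing and surgery interact: one must verify that after each transplant-and-re-squeeze the resulting structure is still a genuine weak $\Kfn$ quasimodel---that coherence~\ref{a-r-coh}, the prototype saturation condition, and the cardinality condition~\ref{run:exists} all continue to hold once the multiset of runs is rebuilt---and that the demand $\avec{n}_{\avec{w}}$ is chosen tightly enough to keep the iterated bound $g^{d(\avec{w})+1}$ but loosely enough to retain all witnesses. Getting the order of operations right (squeeze top-down, then surgery, rebuilding $\runs$ each time) so that termination and all invariants hold simultaneously is the delicate core of the argument.
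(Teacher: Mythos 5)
Your overall strategy matches the paper's: reduce to a weak quasimodel with bounded outdegree and rooted runs via Lemmas~\ref{lem:equivweaknotweak} and~\ref{lem:rooted:runs}, then shrink quasistates top-down with Lemma~\ref{lem:squeeze} to get Item~1, and eliminate increasing pairs by cutting out the interval between $\avec{w}$ and $\avec{v}$ and reattaching $W_{\downarrow\avec{v}}$ to the parent of $\avec{w}$ to get Item~2. However, there is a genuine gap in your definition of the ``demand'' $\avec{n}_{\avec{w}}$. You take it to be only the forced witness types (one per diamond per prototype run arriving from the parent), but since run domains are upward-closed, \emph{every} run through the parent $\avec{w}'$ must continue into $\avec{w}$, and by~\ref{run:exists} the shrunk quasistate at $\avec{w}$ must still accommodate all of them with coherent types. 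The correct demand is therefore $\avec{n}(\contp)=\sum\bigl\{\runs(\rho)\mid\rho\in\runs_{\avec{w}'}\cap\runs_{\avec{w},\contp}\bigr\}$ for types $\contp$ of finite multiplicity at $\avec{w}$ (and $\aleph_0$ otherwise); with your smaller demand, the rebuilt $\runs$ cannot satisfy the cardinality condition without redirecting inherited runs to types that may no longer be available or coherent.

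Once the demand is corrected, a second missing idea surfaces in the $\CtwomonMLplusc$ case: to bound $|\avec{n}|_f$ by $|\funcand(\avec{w}')|_f$ plus an exponential term (so that the induction on depth closes), you must first normalise the weak quasimodel so that a non-prototype run whose type has infinite multiplicity at $\avec{w}'$ keeps infinite type-multiplicity at every world below --- the paper's condition $(\dagger_\infty)$, established by a separate induction that reroutes offending runs along runs that stay infinite. Without this, arbitrarily many runs can sit on an $\aleph_0$-multiplicity type at the parent (hence invisible to $|\funcand(\avec{w}')|_f$) yet land on finite-multiplicity types at $\avec{w}$, so the finite part of the demand is not controlled by the parent's norm and the bound $g^{d(\avec{w})+1}(|\varphi|)$ fails. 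Your closing claim that the $\CtwomonMLplusc$ case is ``identical except for replacing the norm'' is precisely where this bites; in the guarded case all multiplicities are finite and no such normalisation is needed, which is why the two cases are not symmetric in the paper.
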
  
\begin{proof}
	We consider $\CtwomonMLplusc$ first. Let $g\colon \mathbb{N}\to\mathbb{N}$ be defined by taking $g(n) = g_0(2^{5n})$, for $n\in\mathbb{N}$, where $g_0$ is given by Lemma~\ref{lem:squeeze}; we assume $g_0(n) \geq n$, for all $n\in\mathbb{N}$. Let $\varphi$ be a $\CtwomonMLplusc$-sentence. The $(\Leftarrow)$ direction is by Lemma~\ref{lemma:k:finite-approx}.
	
\smallskip	
	
$(\Rightarrow)$	By Lemma~\ref{lem:equivweaknotweak}, we obtain a weak quasimodel $\Qmf = (\Fmf,\funcand,\runs,\prset{})$ satisfying~$\varphi$ with $\Fmf=(W,\{R_{a}\}_{a\in A_{0}})$ rooted in $\avec{w}_{0}$ such that the outdegree of every world~$\avec{w}$  is bounded by $ s(\varphi)2^{s(\varphi)}$, thus satisfying Item~3. 
By Lemma~\ref{lem:rooted:runs}, we assume that the runs in $\runs$ are rooted.

\medskip


 Next, we manipulate $\runs$ so that, unless $\rho$ is a prototype weak run at $\avec{w}$, infinite multiplicity of $\rho(\avec{w})$ implies infinite multiplicity of $\rho(\avec{v})$, for all $\avec{v}\in W_{\downarrow\avec{w}}$. Formally, we define the following two sets of weak runs for each $\avec{w}\in W$: 
\begin{equation*}    
 \prfun{\avec{w}} = \{\, \prfun{\avec{w}, \contp} \mid \contp \in \funcand(\avec{w})  \,\} \ \ \text{ and  } \ \ \finrun{\avec{w}} = \{\, \rho\in \runs_{\avec{w}} \mid \funcand(\avec{w},\rho(\avec{w}))<\aleph_{0}\,\}.
\end{equation*}
We show that we can assume that
\begin{itemize} 
\item[$(\dagger_{\infty})$] for all $\rho\in \runs_{\avec{w}}$
    with $\rho\notin \prfun{\avec{w}}$, if $\rho\notin \finrun{\avec{w}}$,
    then
    $\rho\notin\finrun{\avec{v}}$, 
    for all $\avec{v}\in W_{\downarrow \avec{w}}$.
\end{itemize} 
	We prove $(\dagger_{\infty})$ by induction on the depth of $\avec{w}\in W$, updating $\runs$ and $\prset{}$ at each step. Let $\avec{w}\in W$ and assume $(\dagger_{\infty})$ holds for $\runs$ and $\prset{}$ at all 
	$\avec{w}'\in W$ with $\avec{w}'R_{\ast}\avec{w}$. We modify~$\runs$ and $\prset{}$ so that~$(\dagger_{\infty})$ also holds for $\avec{w}$. Suppose $\rho_0\in\runs_{\avec{w}}$ with $\rho_{0}\notin\prfun{\avec{w}}$ violates the claim at~$\avec{w}$:
	we have $\rho_0\notin\finrun{\avec{w}}$ 
	but  
	$\rho_0\in\finrun{\avec{v}_0}$ for some $\avec{v}_0\in W_{\downarrow \avec{w}}$. It follows, in particular, that the multiset $\runs_{\avec{w},\rho_0(\avec{w})}$ is infinite. 
	Observe that, since $W$ is finite, we have
	%
	$\sum_{\avec{v}\in W}\bigl\{\runs(\rho) \mid  \rho\in\finrun{\avec{v}}
	\bigr\}<\aleph_{0}$.
	%
	So, the multiset of all $\rho\in\runs$ such that $\rho\in\finrun{\avec{v}}$, for some $\avec{v}\in\dom\rho$, is finite.
	It follows that there is a weak run $\rho'\in \runs_{\avec{w},\rho_0(\avec{w})}$ such that 
	$\rho'\notin\finrun{\avec{v}}$
	for all $\avec{v}\in \dom \rho'$ (in fact, the total multiplicity of such weak runs is $\aleph_0$). Now we obtain $\runs'$ from $\runs$ as follows. Define a weak run $\rho_{0}'$ by setting, for all $\avec{v}\in\dom \rho_0$, 
	\begin{equation*}
		\rho_{0}'(\avec{v}) =
		\begin{cases}
			\rho'(\avec{v}), & \text{if } \avec{v} \in W_{\downarrow\avec{w}}, \\
			\rho_{0}(\avec{v}),   & \text{otherwise.}
		\end{cases}
	\end{equation*}
\begin{figure}[t]%
\centering%
\begin{tikzpicture}[>=latex, xscale=0.9,
nd/.style={draw,circle,thick,inner sep=0pt,minimum size=2.2mm},
wn/.style={rectangle,rounded corners=2mm,draw,minimum width=11mm,minimum height=15mm,thin}
]
\begin{scope}
\node[wn, minimum width=6mm, fill=gray!10] (w0) at (-1,-0) {};	
\node[wn, minimum width=6mm, fill=gray!10,label=above:{$\avec{w}$}] (w0) at (-0,-0) {};	
\node[wn, minimum width=6mm, fill=gray!10] (w0) at (1,-0) {};	
\node[wn, minimum width=6mm, fill=gray!10] (w0) at (2,-0) {};	
\node[wn, minimum width=6mm, fill=gray!10,label=above:{$\avec{v}_0$}] (w0) at (3,-0) {};	
\node[nd,fill=white] (w) at (0,0) {};
\node[nd,fill=gray] (v) at (3,0) {};
\node at (-2,0) {$\rho_0$};
\node at (-2,-0.5) {$\rho'$};
\draw[ultra thick] (-1.5,0) -- (w) -- (v) -- +(0.5,0);
\draw[thick] (-1.5,-0.5) -- ++(1,0) -- (w) -- ++(0.5, 0.5) -- +(3,0);
\node[nd,fill=white] at (-1,0) {};
\node[nd,fill=white] at (-1,-0.5) {};
\node[nd,fill=white] at (1,0.5) {};
\node[nd,fill=white] at (2,0.5) {};
\node[nd,fill=white] at (3,0.5) {};
\node[nd,fill=white] at (1,0) {};
\node[nd,fill=white] at (2,0) {};
\end{scope}
\begin{scope}[xshift=66mm]
\node[wn, minimum width=6mm, fill=gray!10] (w0) at (-1,-0) {};	
\node[wn, minimum width=6mm, fill=gray!10,label=above:{$\avec{w}$}] (w0) at (-0,-0) {};	
\node[wn, minimum width=6mm, fill=gray!10,label=above:{$\avec{u}$}] (w0) at (1,-0) {};	
\node[wn, minimum width=6mm, fill=gray!10] (w0) at (2,-0) {};	
\node[wn, minimum width=6mm, fill=gray!10,label=above:{$\avec{v}_0$}] (w0) at (3,-0) {};	
\draw[ultra thick] (-1.5,0) -- (-0.08,0) -- ++(0.55,0.55) -- ++(3.03,0);
\draw[ultra thick] (1,0) -- ++(2.5,0);
\draw[thick] (-1.5,-0.5) -- ++(1.05,0) -- ++(0.95, 0.95) -- +(3,0);
\node[nd,fill=white] (w) at (0,0) {};
\node[nd,fill=gray] (v) at (3,0) {};
\node at (4,0) {$\rho_{0\downarrow\avec{u}}$};
\node at (-2,0) {$\rho_0'$};
\node at (-2,-0.5) {$\rho'$};
\draw[dashed] (-1.5,0) -- (w) -- (v) -- +(0.5,0);
\node[nd,fill=white] at (-1,0) {};
\node[nd,fill=white] at (-1,-0.5) {};
\node[nd,fill=white] at (1,0.5) {};
\node[nd,fill=white] at (2,0.5) {};
\node[nd,fill=white] at (3,0.5) {};
\node[nd,fill=white] at (1,0) {};
\node[nd,fill=white] at (2,0) {};
\end{scope}
\end{tikzpicture}
\caption{Replacing $\rho_0$ with $\rho_0'$ and the $\rho_{0\downarrow\avec{u}}$ to satisfy $(\dagger_{\infty})$: types with infinite multiplicity are white circles, types with finite multiplicity are grey circles.}\label{fig:dagger:infty}
\end{figure}
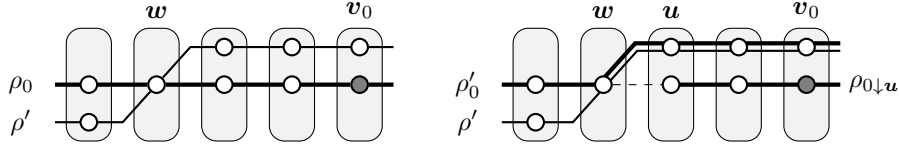%
We replace $\rho_0$ with $\rho_0'$ and the $\rho_{0\downarrow\avec{u}}$, for $R_a$-successors $\avec{u}$ of $\avec{w}$: formally, $\runs'$ coincides with $\runs$ except that
\begin{align*}
\runs'(\rho_{0})& =0,\\ 
\runs'(\rho_{0}') & =\runs(\rho_{0}')+\runs(\rho_{0}), \\
\runs'(\rho_{0\downarrow\avec{u}}) &=\runs(\rho_{0\downarrow\avec{u}})+\runs(\rho_{0}), \text{ for all $\avec{u}$ with $\avec{w} R_{a} \avec{u}$, $a\in A_{0}$};
\end{align*}
see Fig.~\ref{fig:dagger:infty}. Note that we have to include $\runs(\rho_{0}')$ and $\runs(\rho_{0\downarrow\avec{u}})$ as summands because $\runs$ can already contain $\rho_{0}'$ and $\rho_{0\downarrow\avec{u}}$. Also, the types in $\rho'_{\downarrow\avec{w}}$ are effectively counted twice, in $\rho'$ as well as in $\rho'_0$, but that has no effect on~\ref{run:exists} as they have infinite multiplicity.
%
Finally, we have to update $\prset{}$ if $\rho_{0}\in \prfun{\avec{v}}$, for some $\avec{v}$. Observe that, by our assumption, $\avec{v}\ne\avec{w}$. So, we can replace $\rho_{0}$ in  $\prfun{\avec{v}}$
by $\rho_{0\downarrow\avec{u}}$ if $\avec{v}\in W_{\downarrow\avec{u}}$ 
for some $\avec{u}$ 
with $\avec{w} R_{a} \avec{u}$, $a\in A_{0}$, and
by $\rho_{0}'$, otherwise (if 
$\avec{v}\notin W_{\downarrow\avec{w}}$).

We repeat this construction until no weak run violating $(\dagger_{\infty})$ at $\avec{w}$ is left. This concludes the induction step and the proof of the claim. 
 
\smallskip

We now introduce four rules that manipulate weak quasimodels. By starting from any $\Qmf_0$ satisfying~$(\dagger_{\infty})$ and applying the rules exhaustively, we construct a sequence $\Qmf_0,\dots,\Qmf_N$ of weak quasimodels  satisfying~$(\dagger_{\infty})$ and
obtain $\Qmf_N$ that satisfies the conditions of the lemma. To avoid notational clutter, in the description of the rules we assume that a given $\Qmf = (\Fmf,\funcand,\runs,\prset{})$ with a tree-shaped $\Fmf = (W,\{R_a\}_{a\in A})$ rooted in $\avec{w}_0$  is transformed into  $\Qmf'$.

	\medskip
	\noindent
	\emph{Small Root $\avec{w}_0$}: Define a quasistate candidate $\avec{n}$ for $\varphi$ by setting
\begin{equation*}	
\avec{n}(t) = \begin{cases}
0, & \text{if } \funcand(\avec{w}_{0},\contp)<\aleph_{0},\\
\aleph_{0}, & \text{otherwise.}
\end{cases}
\end{equation*}
By Lemma~\ref{lem:squeeze}, there exists a quasistate $\avec{y}$
	for $\varphi$ with $\avec{n} \leq \avec{y} \leq \funcand(\avec{w}_{0})$ and $|\avec{y}|_{f}\leq g_0(|\varphi|)$.
	Consider $\Qmf'=(\Fmf,\funcand',\runs',\prset{}')$,  where $\funcand'$ is defined as 
	$\funcand$ except that $\funcand'(\avec{w}_{0})= \avec{y}$, function $\prset{}'$ is obtained from $\prset{}$ by dropping all $\prfun{\avec{w}_0,\contp}$ such that~$\contp\notin\funcand'(\avec{w}_{0})$, and $\runs'(\rho)$ is defined by a case distinction as follows.
{\renewcommand{\theenumi}{\roman{enumi}}%
\begin{enumerate}	
\item\label{runs:root:i} For all $\rho\in\runs_{\avec{w}_0}$ with $\rho\notin \finrun{\avec{w}_0}$, set $\runs'(\rho)=\runs(\rho)$;
	
\item\label{runs:root:ii}  For all $\rho\in\runs_{\avec{w}_0}$ with $\rho\in \finrun{\avec{w}_0}$, using $\avec{n} \leq \funcand'(\avec{w}_0)\leq \funcand(\avec{w}_{0})$, one can find $\runs'(\rho)\leq \runs(\rho)$ such that 
	\begin{enumerate}
		\item $\runs'(\rho)>0$ if $\rho\in \prfun{\avec{w}_{0}}$ and $\rho(\avec{w}_0)\in\funcand'(\avec{w}_0)$;
		\item $\sum\bigl\{\runs'(\rho)\mid \rho\in\runs_{\avec{w}_{0},\contp} \bigr\}= \funcand'(\avec{w}_0,\contp)$, for all $\contp$ with $\funcand(\avec{w}_{0},\contp)<\aleph_{0}$.
	\end{enumerate}
	
\item\label{runs:root:iii} If $\rho\notin\runs_{\avec{w}_0}$ but $\rho = \rho'_{\downarrow\avec{v}}$ for some $\rho'\in\runs_{\avec{w}_{0}}$ and some $\avec{v}$ with $\avec{w}_{0}R_*\avec{v}$, then set
	$\runs'(\rho)= \runs(\rho)-(\runs'(\rho')-\runs(\rho'))$; note that the bracket is non-negative by~\ref{runs:root:i}) and~\ref{runs:root:ii}), and either both $\runs'(\rho')$ and $\runs(\rho')$ are finite or both are $\aleph_0$ (due to~\ref{runs:root:i}), in which case the bracket is assumed to be 0.
	
\item If $\rho\notin\runs_{\avec{w}_0}$ and $\rho$ is not a restriction of $\rho'\in\runs_{\avec{w}_{0}}$ (considered in~\ref{runs:root:iii}), then set $\runs'(\rho)=\runs(\rho)$.
\end{enumerate}	
}
	It can be seen that $\Qmf'$ is a weak quasimodel for $\varphi$ with $|\funcand'(\avec{w}_0)|_f \leq g(|\varphi|)$, thus satisfying Item~1 for $\avec{w}_0$ and the same frame (so, Item~3 still holds).

	\medskip
	\noindent
	\emph{Small Non-Root}: We generalise the construction used for the root. Consider $\avec{w}',\avec{w}\in W$ with $\avec{w}' R_a \avec{w}$ and assume $|\funcand(\avec{w}')|_{f}\leq g^{d(\avec{w}')+1}(|\varphi|)$. 
%
We aim to retain the multiplicity of all weak runs through $\avec{w}'$ 
when we make $\funcand(\avec{w})$ small. Hence define a quasistate candidate $\avec{n}$ for $\varphi$ by taking
\begin{equation*}
\avec{n}(\contp) = \begin{cases}
\sum\bigl\{\runs(\rho) \mid \rho \in \runs_{\avec{w}'}\cap\runs_{\avec{w},\contp}\bigr\}, & \text{if } \funcand(\avec{w},\contp)<\aleph_{0},\\
\aleph_0, & \text{otherwise};
\end{cases}	
\end{equation*}
informally, $\avec{n}(\contp)$ keeps the number of runs from the predecessor $\avec{w}'$ of $\avec{w}$ if the type $\contp$ has finite multiplicity in $\funcand(\avec{w})$, but requires infinite multiplicity otherwise.
%
     Note that, by $(\dagger_{\infty})$, every $\rho\in \runs_{\avec{w}'}$ with $\rho\in\finrun{\avec{w}'}$ 	
     but $\rho\notin\finrun{\avec{w}}$
     belongs to~$\prfun{\avec{w}'}$.
     Observe that $|\prfun{\avec{w}'}|\leq 2^{3|\varphi|}$.
 Hence, by definition,
\begin{equation*}
|\avec{n}|_{f} \ \ \leq \ \ |\funcand(\avec{w}')|_{f}+|\prfun{\avec{w}'}| \ \ \leq \ \ g^{d(\avec{w}')+1}(|\varphi|)+2^{3|\varphi|} \ \ = \ \ g^{d(\avec{w})}(|\varphi|)+2^{3|\varphi|}.
\end{equation*}
By Lemma~\ref{lem:squeeze}, there exists a quasistate $\avec{y}$ for $\varphi$ with $\avec{n} \leq \avec{y} \leq \funcand(\avec{w})$ such that $|\avec{y}|_{f}\leq g_0(|\avec{n}|_{f} + |\varphi|)$.
	 Consider $\Qmf'=(\Fmf,\funcand',\runs',\prset{}')$, where $\funcand'$ is defined as 
	 $\funcand$ except that $\funcand'(\avec{w})= \avec{y}$, function $\prset{}'$ is obtained from $\prset{}$ by dropping all $\prfun{\avec{w},\contp}$ with $\contp\notin\funcand'(\avec{w})$, and $\runs'(\rho)$ is defined by a case distinction similar to the previous operation.
{\renewcommand{\theenumi}{\roman{enumi}}%
\begin{enumerate}	
\item\label{runs:non-root:i} For all $\rho\in\runs_{\avec{w}}$ with $\rho\notin \finrun{\avec{w}}$, set  $\runs'(\rho)=\runs(\rho)$.
	 
\item\label{runs:non-root:ii} For all $\rho\in\runs_{\avec{w}}$ with $\rho\in \finrun{\avec{w}}$,
 using the definition of $\avec{n}$ and the condition that $\avec{n} \leq \funcand'(\avec{w})\leq \funcand(\avec{w})$, one can find $\runs'(\rho)\leq \runs(\rho)$ such that 
	 \begin{enumerate}
	 	\item $\runs'(\rho)>0$ if $\rho\in \prfun{\avec{w}}$ and $\rho(\avec{w})\in\funcand(\avec{w})$; 
	 	\item $\sum\bigl\{\runs'(\rho)\mid \rho\in\runs_{\avec{w},\contp} \bigr\}= \funcand'(\avec{w},\contp)$, for all $\contp$ with $\funcand(\avec{w},\contp)<\aleph_{0}$;
	 	\item $\runs'(\rho)=\runs(\rho)$, for all $\rho\in \runs_{\avec{w}'}$. 
	 \end{enumerate}
	 
\item\label{runs:non-root:iii} If $\rho\notin\runs_{\avec{w}}$ but $\rho = \rho'_{\downarrow\avec{v}}$ for some $\rho'\in\runs_{\avec{w}}$ and some $\avec{v}$ with $\avec{w}R_*\avec{v}$, then set
	$\runs'(\rho)= \runs(\rho)-(\runs'(\rho')-\runs(\rho'))$; note that the bracket is again non-negative by~\ref{runs:non-root:i}) and~\ref{runs:non-root:ii}) and is assumed 0 if both multiplicities of $\rho'$ are~$\aleph_0$.
	
\item If $\rho\notin\runs_{\avec{w}}$ and $\rho$ is not a restriction of some $\rho'\in\runs_{\avec{w}}$ (considered in~\ref{runs:root:iii}), then set $\runs'(\rho)=\runs(\rho)$.
\end{enumerate}
}	 
%
	It can be seen that $\Qmf'$ is a weak quasimodel for $\varphi$ satisfying $|\funcand'(\avec{w})|_{f}\leq g_0(g^{d(\avec{w})}(|\varphi|) + 2^{3|\varphi|} + |\varphi|) \leq 
        g_0(2^{5g^{d(\avec{w})}(|\varphi|)}) =
	g^{d(\avec{w})+1}(|\varphi|)$. Thus, Item~1 holds for~$\avec{w}$. Since the frame is the same, Item~3 is also satisfied.
	
	\medskip
	\noindent
	\emph{Drop Interval}: Consider $\avec{w}', \avec{w},\avec{v}\in W$ with $\avec{w}'R_a\avec{w}R_{\ast}\avec{v}$ and $\funcand(\avec{w})\leq \funcand(\avec{v})$.
	Construct $\Qmf'= (\Fmf',\funcand',\runs',\prset{}')$ as follows.
	Frame~$\Fmf'$ is obtained from $\Fmf$ by replacing $W_{\downarrow\avec{w}}$ by $W_{\downarrow\avec{v}}$: formally, $\Fmf'$ is the restriction of $\Fmf$ to $W'= \overline{W}\!_{\downarrow\avec{w}}\cup W_{\downarrow\avec{v}}$, where $\overline{W}_{\downarrow\avec{w}} = W\setminus W_{\downarrow\avec{w}}$, and additionally, $\avec{w}'R_a\avec{v}$; see Fig.~\ref{fig:drop:interval}. This clearly preserves the outdegree, and so Item~3 is still satisfied. 
We set $\funcand'(\avec{u}) = \funcand(\avec{u})$ for all $\avec{u}\in W'$, and so the bound on the norm of quasistates in Item~1 is preserved.
	It remains to define $\runs'$ and~$\prset{}'$. 

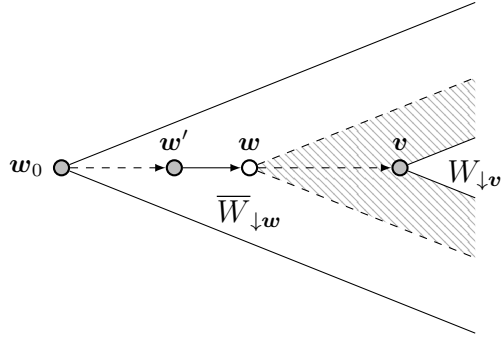
\begin{figure}[t]
\centering%
\begin{tikzpicture}[>=latex,
nd/.style={draw,circle,thick,inner sep=0pt,minimum size=2mm}
]
\fill[pattern=north west lines, pattern color=black!30] (1,0) -- ++(3,1.2) -- ++(0,-0.8) -- (3,0) -- ++(1,-0.4) -- ++(0,-0.8) -- (1,0);
\node[nd,fill=gray!50,label=left:{$\avec{w}_0$}] (w0) at (-1.5,0) {}; 
\node[nd,fill=gray!50,label=above:{$\avec{w}'$}] (w') at (0,0) {}; 
\node[nd,fill=white,label=above:{$\avec{w}$}] (w) at (1,0) {};
\node[nd,fill=gray!50,label=above:{$\avec{v}$}] (v) at (3,0) {};
%
%
\draw (w0) -- ++(5.5, 2.2);
\draw (w0) -- ++(5.5, -2.2);
\draw[dashed] (w) -- ++(3,1.2);
\draw[dashed] (w) -- ++(3,-1.2);
\draw (v) -- ++(1,0.4);
\draw (v) -- ++(1,-0.4);
\draw[->] (w') -- (w);
\draw[dashed, ->] (w) -- (v);
\draw[dashed, ->] (w0) -- (w');
\node at (1,-0.6) {\large $\overline{W}_{\downarrow\avec{w}}$};
\node at (4,-0.1) {\large $W_{\downarrow\avec{v}}$};
\end{tikzpicture}
\caption{Drop interval operation on quasimodels in the proof of Lemma~\ref{lem:recboundnew}.}\label{fig:drop:interval}
\end{figure}
	
	The new weak runs in $\runs'$ are obtained by taking, for any $\rho_0\in \runs_{\avec{w}'}\cap \runs_{\avec{w},\contp}$ and $\rho_1\in \runs_{\avec{v},\contp}$, with $\contp\in\funcand(\avec{w})$,
	the weak run $\rho_0\oplus_{\avec{w},\avec{v}}\rho_1$ defined by setting
		\begin{equation*}
			\rho_0\oplus_{\avec{w},\avec{v}}\rho_1\colon \avec{u}\mapsto
			\begin{cases}
				\rho_0(\avec{u}), & \text{if } \avec{u}\in \overline{W}\!_{\downarrow\avec{w}}, \\
				\rho_1(\avec{u}),   & \text{if }\avec{u}\in W_{\downarrow\avec{v}}.
			\end{cases}
		\end{equation*}
First, we define $\runs'$ on $\rho\in \runs$ with $\rho\notin \runs_{\avec{w}}$ by setting $\runs'(\rho)=\runs(\rho)$.
Then, as follows from $\funcand(\avec{w}) \leq \funcand(\avec{v})$, we can define  
$\runs'$ on weak runs of the form $\rho_0\oplus_{\avec{w},\avec{v}}\rho_1$ and on runs with domain $W_{\downarrow\avec{v}}$ in such a way that, for all $\rho_{0}\in\runs_{\avec{w}'}\cap\runs_{\avec{w},\contp}$ and all $\rho_{1} \in\runs_{\avec{v},\contp}$ with $\contp\in\funcand(\avec{w})$, we have
\begin{align*}
	\sum \bigl\{\runs(\rho) \mid \rho\in \runs\text{ with } \rho\stackrel{\overline{W}\!_{\downarrow\avec{w}} \cup \{\avec{w}\}}{=} \rho_0\bigr\} &= \sum \bigl\{\runs'(\rho_0\oplus_{\avec{w},\avec{v}}\rho_1')\mid \rho_1' \in \runs_{\avec{v},\contp} \bigr\}, \\
	 \sum\bigl\{\runs(\rho) \mid \rho\in \runs\text{ with } \rho \stackrel{W_{\downarrow\avec{v}}}{=} \rho_1 \bigr\} & = \runs'({\rho_{1}}_{\downarrow\avec{v}})\\ & +\sum \bigl\{\runs'(\rho_0'\oplus_{\avec{w},\avec{v}}\rho_1)\mid \rho_0'\in\runs_{\avec{w}'}\cap\runs_{\avec{w},\contp}\bigr\},   
\end{align*}
where $\rho\stackrel{U}{=}\rho'$ means that the two weak runs coincide on $U$. Define $\prset{}'$ from $\prset{}$ and $\runs'$ in the obvious way so that $\Qmf'$ is a weak quasimodel for $\varphi$. 

After applying this rule exhaustively, Item~2 holds for all $\avec{w}R_{\ast} \avec{v}$ with $\avec{w}\not=\avec{w}_{0}$ and Item~3 still holds. (Observe, however, that we might have to apply the construction of small non-roots again.) 

 
	\medskip
	\noindent
	\emph{Drop Initial Interval}: Consider the root $\avec{w}_{0}$ and some $\avec{v}\in W\setminus\{\avec{w}_{0}\}$ such that $\funcand(\avec{w}_{0})\leq \funcand(\avec{v})$.
	Then construct an updated $\Qmf'= (\Fmf',\funcand',\runs',\prset{}')$
	by restricting $\Qmf$ to $W_{\downarrow\avec{v}}$ in the obvious way.

\medskip	

	By applying the four rules above exhaustively, we obtain a weak quasimodel satisfying Items~1--3 of the lemma.

\smallskip	
	
	The proof for $\QGuardmonMLplusc$ is similar except that we work with finite quasistates only. In particular, we do not require condition~$(\dagger_\infty)$ and the definition of $\avec{n}$ in the \textit{Small Root} and \textit{Small Non-Root} operations becomes simpler as the second option (with $\aleph_0$) is not applicable; also, item i.\ in both operations becomes irrelevant.
\end{proof}

\begin{theorem}\label{thm:exp}
    For expanding-domain models, $\Kfn$-validity in both $\CtwomonMLplusc$ and $\QGuardmonMLplusc$ are decidable.
\end{theorem}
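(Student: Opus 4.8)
The plan is to read off decidability directly from the structural bounds already assembled, principally Lemma~\ref{lem:recboundnew} together with the control-function form of Dickson's Lemma (Theorem~\ref{thm:bound}). First I would note that, by Lemmas~\ref{lem:k:tree-shaped-exp}, \ref{lemma:quasimodel} and~\ref{lem:equivweaknotweak}, a sentence $\varphi$ in either fragment is $\Kfn$-satisfiable (with expanding domains) iff it is satisfied by a weak $\Kfn$ quasimodel based on a finite tree-shaped frame. Applying Lemma~\ref{lem:recboundnew} (with the norm $|\cdot|$ for $\QGuardmonMLplusc$ and $|\cdot|_{f}$ for $\CtwomonMLplusc$), I may in addition assume such a weak quasimodel $\quasimod = (\Fmf,\funcand,\runs,\prset{})$ satisfies the three bounding conditions stated there: every quasistate has norm at most $g^{d(\avec{w})+1}(|\varphi|)$, no quasistate $\funcand(\avec{w})$ is $\leq$ the quasistate of any strict $R_{\ast}$-descendant, and the outdegree is bounded by $s(\varphi)2^{s(\varphi)}$.

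The crucial step is to turn these conditions into a bound on the size of $\Fmf$. Condition~3 already bounds the branching. For the depth, I would read condition~2 as asserting that along any $R_{\ast}$-path $\avec{w}_{0}R_{\ast}\avec{w}_{1}R_{\ast}\cdots$ from the root the sequence of quasistates $\funcand(\avec{w}_{0}),\funcand(\avec{w}_{1}),\dots$, viewed as tuples in $\extN^{k_{\varphi}}$, is \emph{bad} in the sense of Theorem~\ref{thm:bound}: no earlier entry is $\leq$ a later one under the product ordering. Condition~1 controls this bad sequence, since its $i$-th member sits at depth $i-1$ and hence has norm at most $g^{i}(|\varphi|)$, i.e.\ it is dominated by the iterates of the recursive control function $g$ with $m = |\varphi|$. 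Theorem~\ref{thm:bound} then provides a recursive bound $\length(|\varphi|)$ on the length of any such branch, and hence on the depth of $\Fmf$. Together with the outdegree bound this yields a recursive bound on $|W|$, while condition~1 bounds the finite part of every quasistate.

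With the frame and the finite components of all quasistates recursively bounded, I would reduce satisfiability to a finite, effectively enumerable search. Because the norm sees only finite multiplicities, I would guess, for each world and each type, whether its multiplicity is one of the boundedly-many finite values or equals $\aleph_{0}$, yielding finitely many candidate functions $\funcand$. To avoid carrying the possibly infinite multiset $\runs$, I would pass to the weak-pre-quasimodel representation of Lemma~\ref{lem:weakpreweak} (adapted to $\Kfn$ frames and the rooted runs of Lemma~\ref{lem:rooted:runs}), so that a weak quasimodel is witnessed by the bounded set $\runsp$ of prototype weak runs together with links whose finite parts are again bounded. For each candidate $(\Fmf,\funcand,\prset{},\Lmf)$ one then checks effectively: realisability of each quasistate---via Lemma~\ref{lem:guardedfragment}(3) in the guarded case, and via the extended-Diophantine description of Lemma~\ref{lem:iancard} with Theorem~\ref{thm:sol} in the counting case---together with the coherence, link and $a$-saturation conditions and condition~\ref{b1}. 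Since there are recursively-many candidates and every test terminates, accepting iff some candidate passes all tests is a decision procedure.

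The main obstacle is conceptual rather than computational, and it is already discharged by the assumed Lemma~\ref{lem:recboundnew}: the delicate point is that the branch-antichain property (condition~2) is compatible with bounded depth only because the quasistates are simultaneously kept norm-small by condition~1, so that the well-quasi-ordering content of Dickson's Lemma can be converted into an \emph{effective} length bound through the control function $g$. In the present proof the only remaining care concerns the $\aleph_{0}$-multiplicities, which the norm does not measure: one must guess the infinite/finite pattern separately and rely on the realisability tests being able to enforce genuinely infinite realisations---via the guarded-bisimulation argument of Lemma~\ref{lem:guardedfragment}(1), respectively the $\textsf{infinite}^{\contp}$ sentences used in Lemma~\ref{lem:squeeze}.
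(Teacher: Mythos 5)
Your proposal is correct and follows essentially the same route as the paper's proof: Lemma~\ref{lem:recboundnew} supplies the three bounding conditions, Theorem~\ref{thm:bound} converts the antichain condition along $R_\ast$-paths (controlled by the norm bound) into a recursive depth bound, and the outdegree bound then yields a recursively bounded frame, after which one enumerates candidates. The paper states the final enumeration-and-checking step only implicitly, whereas you spell out how realisability of quasistates and the infinite-multiplicity pattern are verified (via Lemma~\ref{lem:guardedfragment}(3) and Lemma~\ref{lem:iancard}); this is a faithful elaboration rather than a different argument.
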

\begin{proof}
	We give the proof for $\CtwomonMLplusc$, the proof for $\QGuardmonMLplusc$ is similar.  
	Let $g$ be the recursive function from Lemma~\ref{lem:recboundnew} and assume a $\CtwomonMLplusc$-sentence $\varphi$ is given.
	Then $\varphi$ is satisfiable in a $\Kfn$ frame iff there is weak quasimodel satisfying~$\varphi$ and Items~1--3 of Lemma~\ref{lem:recboundnew}. The existence of such a weak quasimodel $\Qmf = (\Fmf,\funcand,\runs,\prset{})$ is decidable: by Theorem~\ref{thm:bound} and Items~1 and 2 of Lemma~\ref{lem:recboundnew}, the length $n$ of paths $\avec{w}_{0}R_{a_{0}}\cdots R_{a_{n}}\avec{w}_{n}$ in $\Fmf$ is recursively bounded in the size~$|\varphi|$ of $\varphi$. Hence, by Item~3 of Lemma~\ref{lem:recboundnew}, the number of worlds of $\Fmf$ is recursively bounded in $|\varphi|$. Then also $|\funcand(\avec{w})|_{f}$ and the number of distinct $\rho\in \runs$ are recursively bounded.
\end{proof}

It is\nb{ok?} worth pointing out that the size of weak quasimodels constructed in Lemma~\ref{lem:recboundnew} is bounded only by a primitive recursive function in $|\varphi|$, but as we shall see below (Theorem~\ref{thm:temp} and Lemma~\ref{lem:restemp}), one cannot do better: the problem is Ackermann-hard even for the one-variable fragment.


\subsection{One-Variable Fragment}\label{sec:pspace}
We consider the one-variable fragment $\QoneMLc$ and show that, for expanding domains, $\Kn$-validity in this fragment is \PSpace-complete; recall that 
the problem for constant domains is co\NExpTime-complete (see Theorem~\ref{thm:complexity:onevar}).
We begin by proving a variant of Lemma~\ref{lem:squeeze} 
for $\QoneMLc$.
 \begin{lemma}\label{lem:squeeze:poly:fmp}
	For each $\QoneMLc$-sentence $\varphi$, each quasistate $\avec{m}$ and each finite quasistate candidate $\avec{n}\leq \avec{m}$ for $\varphi$, there exists a finite quasistate $\avec{m}'$ for $\varphi$ such that $\avec{n}\leq \avec{m}'\leq \avec{m}$ and
	$|\avec{m}'| \leq |\avec{n}|+|\varphi|$.
\end{lemma}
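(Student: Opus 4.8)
The plan is to exploit the special nature of the one-variable fragment: the $\overline{\cdot}$-translation of a type $\contp$ is an $\FOO$ formula whose truth at a domain element depends only on the unary (surrogate and original) predicates holding of that element, on its equality to the constants, and on the truth of the \emph{sentences} occurring in $\contp$; the latter are global, i.e.\ shared by all elements of a realising structure. This will give $\FOO$ a linear-size model property that I can convert into the required linear bound, rather than merely a primitive-recursive one as in Lemma~\ref{lem:squeeze}.

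First I would fix a first-order structure $\Bmf$ realising the quasistate $\avec{m}$, and for each type $\contp$ for $\varphi$ put $D_\contp = \{b\in \Bmf \mid \Bmf\models\overline{\contp}[b]\}$, so that $|D_\contp| = \avec{m}(\contp)$ and the $D_\contp$ partition the domain $D$ of $\Bmf$. I would then carve out a small subdomain $D'\subseteq D$ by taking, for every type $\contp$, some $\avec{n}(\contp)$ distinct elements of $D_\contp$ (possible since $\avec{n}\le\avec{m}$), and by adding, for every subformula of the form $\exists x\,\theta$ in $\sub[x]{\varphi}$ that holds in $\Bmf$, a single witness $b$ with $\Bmf\models\overline{\theta}[b]$. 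Since $\exists x\,(x=c)$ is among these requirements for every constant $c\in\Ind(\varphi)$ (by the standing assumption of this subsection), the chosen witnesses automatically include all interpretations $c^{\Bmf}$, so every constant remains interpreted in the restriction $\Bmf'$ of $\Bmf$ to $D'$.

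The key step is to show, by induction on the structure of $\overline{\chi}$ for $\chi\in\sub[x]{\varphi}$, that passing to $\Bmf'$ preserves types: $\Bmf'\models\overline{\chi}[b]$ iff $\Bmf\models\overline{\chi}[b]$, for every $b\in D'$. The atomic and Boolean cases are immediate, because $\Bmf'$ agrees with $\Bmf$ on the unary predicates over $D'$ and keeps every $c^{\Bmf}$, so the atoms $x=c$ behave identically. The existential case is exactly where the witnesses are used: a true $\exists x\,\overline{\theta}$ retains a witness in $D'$ by construction, while restriction can never turn a false existential into a true one; the universal case follows dually. Consequently, setting $\avec{m}'(\contp) := |D'\cap D_\contp|$ yields a multiset realised by $\Bmf'$, hence a quasistate, and by construction $\avec{n}\le\avec{m}'\le\avec{m}$ with $\avec{m}'$ finite.

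It then remains only to read off the bound. The elements chosen in the first group contribute exactly $|\avec{n}|$, and the witnesses contribute at most one element for each subformula of $\varphi$ of the form $\exists x\,\theta$ (counting each $\forall x\,\theta$ through its dual $\exists x\,\neg\theta$), hence at most $|\varphi|$ further elements, any overlap with the first group only helping. Therefore $|\avec{m}'| = |D'| \le |\avec{n}| + |\varphi|$, as required. The one place that demands care is the inductive preservation claim together with the bookkeeping that a single witness per true existential in $\sub[x]{\varphi}$ is enough and that the number of such witnesses stays within $|\varphi|$; everything else is routine.
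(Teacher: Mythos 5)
Your proof is correct, but it establishes realisability of $\avec{m}'$ by a different route than the paper. The paper never touches a realising structure for $\avec{m}$: it defines the target multiset purely combinatorially, setting $\avec{m}'(\contp)=\min(\avec{m}(\contp),\avec{n}(\contp)+1)$ for $\contp$ in a distinguished set $B$ (one type per constant, one witness type per existential subformula true in $\avec{m}$) and $\avec{m}'(\contp)=\avec{n}(\contp)$ otherwise, and then proves realisability by building a canonical Hintikka-style model whose domain \emph{is} the multiset of types, with $P^{\Bmf}=\{\contp\mid P(x)\in\overline{\contp}\}$ and a truth lemma $\Bmf\models\overline{\psi}[\contp]$ iff $\psi\in\contp$. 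You instead fix a realising structure for $\avec{m}$ and carve out a small substructure containing the $\avec{n}$-mandated elements, the constants, and one witness per true existential, verifying type preservation under restriction. The combinatorial core is identical --- your witness set is exactly the paper's $B$, and both counts come out to $|\avec{n}|+|\varphi|$ --- but the verification differs: your substructure-selection argument is the same pattern the paper uses for the guarded and two-variable fragments in Lemma~\ref{lem:squeeze} (there mediated by small-model theorems) and so generalises more readily, while the paper's canonical-model construction is specific to $\FOO$ and has the advantage of producing $\avec{m}'$ by an explicit formula without reference to any particular realising structure. One small point of care in your version: you correctly flag that the standing assumption (each $x=c$ occurs as a subformula via a conjunct $\exists x\,(x=c)$) is what guarantees every $c^{\Bmf}$ survives the restriction, so the restricted structure still interprets all constants of $\varphi$; and since quasistate candidates are non-empty by definition, your $D'$ is non-empty, so the restriction is a legitimate first-order structure.
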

\begin{proof}
	%
Let $B$ the set of types for $\varphi$ that contains $\contp_c$ with $x = c \in \contp_c$, for each constant $c$ in $\varphi$, and $\contp_\psi$ with $\psi\in\contp_{\psi}$, for $\exists y\,\psi\in\sub[x]{\varphi}$  such that every (some) type $\contp\in\avec{m}$ contains $\exists y\,\psi$.
It is clear that $|B|\leq |\varphi|$.
	We define a quasistate candidate $\avec{m}'$ for~$\varphi$ by taking
	\begin{equation*}
		\avec{m}'(\contp) = \begin{cases} \min(\avec{m}(\contp),\avec{n}(\contp)+1), &\mbox{if } \contp\in B,\\
			\avec{n}(\contp), & \mbox{otherwise}.
			\end{cases}
	\end{equation*}
	It follows that $\avec{n}\leq \avec{m}'\leq \avec{m}$ and $|\avec{m}'|\leq |\avec{n}| + |B| \leq |\avec{n}|+|\varphi|$, and so it remains to show that $\avec{m}'$ is realisable.

	To this end, let $\Bmf$ be the first-order interpretation with domain $\avec{m'}$ and such that $c^\Bmf=\contp_c$, where $\contp_c\in B$ is the (unique) type with $(x=c)\in \contp_c$
	and $P^\Bmf = \{\contp\mid P(x) \in \overline{\contp}\}$, for predicate names~$P$. For each  $\contp\in \avec{m'}$, it then follows from a routine induction on $\psi$ that $\Bmf\models \overline{\psi}[\contp]$ iff $\psi\in \contp$, for all $\psi\in \sub[x]{\varphi}$. 
	In particular, for each $\contp\in \avec{m}'$, we have $\Bmf\models \overline{\contp}[\contp]$, whence $\avec{m}'(\contp) = |\{\contp\in \Bmf \mid \Bmf \models \overline{\contp}[\contp]\}|$, which is to say that $\avec{m}'$ is realisable.
\end{proof}

We next show a variant of Lemma~\ref{lem:recboundnew} for $\QoneMLc$.
\begin{lemma}\label{lem:expanding:poly:fmp}
	For expanding domains, for each  $\QoneMLc$-sentence $\varphi$, there exists a quasimodel satisfying~$\varphi$ based on a $\Tree{d}$  frame iff there is a weak quasimodel $\Qmf=(\Fmf,\funcand,\Rmf,\prset)$ satisfying $\varphi$ based on a \textup{(}finite\textup{)} $\Tree{d}$ frame $\Fmf$ such that, for all worlds $\avec{w}\in W$,
	\begin{equation*}
		|\funcand(\avec{w})|\leq 1+ (\md(\avec{w}) + 1)\cdot|\varphi|.
	\end{equation*}
\end{lemma}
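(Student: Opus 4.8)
The proof follows the pattern of Lemma~\ref{lem:recboundnew}, but the one-variable fragment lets us work exclusively with \emph{finite} quasistates, replace the primitive-recursive squeeze by the linear bound of Lemma~\ref{lem:squeeze:poly:fmp}, and dispense with any analogue of the \emph{Drop Interval} operations, since the frame is already a $\Tree{d}$ frame of bounded depth. The $(\Leftarrow)$ direction is immediate from Lemma~\ref{lemma:k:finite-approx}: any weak quasimodel satisfying $\varphi$ based on a $\Tree{d}$ frame---in particular one obeying the stated size bound---can be saturated into a quasimodel satisfying $\varphi$ based on a $\Tree{d}$ frame, and the bound plays no role here.

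For $(\Rightarrow)$, assume a quasimodel satisfying $\varphi$ based on a $\Tree{d}$ frame. By Lemma~\ref{lem:weakprequasi:k} we obtain a weak quasimodel $\Qmf=(\Fmf,\funcand,\runs,\prset{})$ satisfying $\varphi$ on a $\Tree{d}$ frame $\Fmf=(W,\{R_a\}_{a\in A})$, and by Lemma~\ref{lem:rooted:runs} we may assume all weak runs are rooted. Since $\varphi\in\QoneMLc$ has no counting, Lemma~\ref{lem:qs:closure:1} allows us to take every quasistate to be finite: only the types containing a constant carry a forced multiplicity of $1$, and every other realised type may be given multiplicity $1$. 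We then reduce the cardinalities of the quasistates by processing the worlds of $\Fmf$ top-down, adapting the \emph{Small Root} and \emph{Small Non-Root} operations of Lemma~\ref{lem:recboundnew}.

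For the root $\avec{w}_0$ we apply Lemma~\ref{lem:squeeze:poly:fmp} with the candidate $\avec{n}$ assigning multiplicity $1$ to a single type $\contp^{\ast}\ni\varphi$ witnessing~\ref{b1} and $0$ elsewhere, obtaining a finite quasistate $\funcand'(\avec{w}_0)\le\funcand(\avec{w}_0)$ with $|\funcand'(\avec{w}_0)|\le 1+|\varphi|$; realisability of $\funcand'(\avec{w}_0)$ forces into the extra $|\varphi|$-budget the at most $|\varphi|$ constant types and the intra-world existential witnesses. For a non-root $\avec{w}$ with (already processed) predecessor $\avec{w}'$, every run through $\avec{w}'$ continues to $\avec{w}$ because domains are upward-closed; hence the runs \emph{incoming} to $\avec{w}$ are exactly the continuations of the $|\funcand'(\avec{w}')|$ runs through $\avec{w}'$. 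Taking $\avec{n}(\contp)$ to be the number of incoming runs of type $\contp$ at $\avec{w}$ gives $|\avec{n}|\le|\funcand'(\avec{w}')|$, and Lemma~\ref{lem:squeeze:poly:fmp} yields a finite $\funcand'(\avec{w})$ with $\avec{n}\le\funcand'(\avec{w})\le\funcand(\avec{w})$ and
\[
|\funcand'(\avec{w})|\ \le\ |\avec{n}|+|\varphi|\ \le\ |\funcand'(\avec{w}')|+|\varphi|.
\]
Induction on the depth $d(\avec{w})$ then gives $|\funcand'(\avec{w})|\le 1+(d(\avec{w})+1)\cdot|\varphi|$, which is the claimed bound.

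It remains to reorganise $\runs$ and $\prset{}$ so that $(\Fmf,\funcand',\runs',\prset{}')$ is again a weak quasimodel, and this is the only delicate part. We keep all incoming runs (so the link with $\avec{w}'$ and condition~\ref{run:exists} at $\avec{w}'$ are preserved), we discard the runs rooted at $\avec{w}$ whose type drops out of $\funcand'(\avec{w})$ together with all their descendants, and for each type retained in $\funcand'(\avec{w})$ we designate one run of that type already present at $\avec{w}$---an incoming one whenever the type is realised by such---as its new prototype $\prfun{\avec{w},\contp}$, transplanting the $\Diamond_a$-witness structure of the original prototype onto it at freshly built successors of $\avec{w}$ (coherence is inherited from the original since the designated run shares the type $\contp$ at $\avec{w}$). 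The key observation is that this adds no run at $\avec{w}$: the witnesses of a prototype lie at the \emph{successors}, where each is carried by the prototype run itself, which, being rooted at $\avec{w}$ or above, is incoming to the witnessing successor and so is retained there automatically by the definition of $\avec{n}$ at that successor. Thus all modal witnessing obligations are discharged downstream and absorbed into the per-level $|\varphi|$ increment, the only genuinely new runs at $\avec{w}$ being those realising the at most $|\varphi|$ first-order-realisability types already counted above. The construction therefore propagates consistently to the leaves of $\Fmf$, producing the required weak quasimodel.
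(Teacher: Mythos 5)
Your proof is correct and follows essentially the same route as the paper's: the $(\Leftarrow)$ direction via Lemma~\ref{lemma:k:finite-approx}, and for $(\Rightarrow)$ a weak quasimodel from Lemma~\ref{lem:weakprequasi:k} processed top-down by the \emph{Small Root} and \emph{Small Non-Root} rules of Lemma~\ref{lem:recboundnew} with Lemma~\ref{lem:squeeze:poly:fmp} in place of Lemma~\ref{lem:squeeze}, yielding the bound $|\funcand'(\avec{w})|\leq|\funcand'(\avec{w}')|+|\varphi|$ and the claimed estimate by induction on depth. The only (harmless) deviation is in the run bookkeeping: the paper keeps the frame fixed and simply retains one copy of each \emph{original} prototype while restricting the other runs away from $\avec{w}$, whereas you re-designate incoming runs as prototypes and transplant their witnesses to freshly built successors, which works but is an unnecessary complication.
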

\begin{proof} %
	The ($\Leftarrow$) direction is by Lemma~\ref{lemma:k:finite-approx}. 
	For the ($\Rightarrow$) direction, by Lemma~\ref{lem:weakprequasi:k}, we obtain a weak quasimodel $\Qmf_0=(\Fmf,\funcand_0,\Rmf_0,\prset_0)$ satisfying $\varphi$ based on a finite $\Tree{d}$ frame $\Fmf$ with root $\avec{w}_0$. 
	We now follow a similar approach to that taken in the proof of Lemma~\ref{lem:recboundnew}, by exhaustively applying transformation rules to $\Qmf_0$ to reduce the size of each quasistate: we construct a sequence $\Qmf_0,\dots,\Qmf_N$ of weak quasimodels sastifying $\varphi$ such that $\Qmf_N$ fully meets the size restrictions in the claim. 
	The difference here is that we only require rules for \emph{Small Root} and \emph{Small Non-Root} because the frame remains the same and is in fact of depth bounded by $d$.  In both rules, we utilise Lemma~\ref{lem:squeeze:poly:fmp} in place of Lemma~\ref{lem:squeeze}, which allows us to bound the quasistate size by a polynomial in $|\varphi|$.
	
\smallskip
	
\noindent\textit{Small Root $\avec{w}_0$}: We start from the quasistate candidate $\avec{n}$ defined by taking $\avec{n}(\contp_0) = 1$ and $\avec{n}(\contp) =0$ for all $\contp\ne\contp_0$, where $\contp_0\in \funcand(\avec{w}_0)$ contains~$\varphi$, to obtain a new quasistate at $\avec{w}_0$ so that $|\funcand'(\avec{w_0})| \leq 1 + |\varphi|$.
		We can then construct the runs of $\Rmf'$ as in the proof of Lemma~\ref{lem:recboundnew}, by taking one copy $\prfun{\avec{w}_0, \contp}$ for all $\contp \in \funcand'(\avec{w}_0)$ and restricting all other runs of $\Rmf$ to the domain $W\setminus\{\avec{w}_0\}$. 
	
\smallskip
	
\noindent\textit{Small Non-Root}: Given $\avec{w}$ and $\avec{w'}$ with $\avec{w}' R_a \avec{w}$, we start from the quasistate candidate $\avec{n}$ defined as follows: 
		\begin{equation*}
			\avec{n}(\contp) = \begin{cases}
				\sum\bigl\{\runs(\rho) \mid \rho \in \runs_{\avec{w}'}\cap\runs_{\avec{w},\contp}\bigr\}, & \text{if } \contp \in \funcand(\avec{w}'),\\
				0, & \text{otherwise},
			\end{cases}	
		\end{equation*}
to obtain a new quasistate at $\avec{w}$ so that $|\funcand'(\avec{w})| \leq |\funcand(\avec{w}')| + |\varphi|$. 
		We can again  construct
		the runs of $\Rmf'$ as in the proof of Lemma~\ref{lem:recboundnew}, by taking one copy $\prfun{\avec{w}, \contp}$ for all $\contp \in \funcand'(\avec{w})$ and restricting all other runs of~$\Rmf$ to the domain $W\setminus\{\avec{w}\}$.
		We thereby obtain a new weak quasimodel $\Qmf'$ satisfying $\varphi$ and such that $|\funcand'(\avec{w})|\leq |\funcand(\avec{w}')| + |\varphi|  \leq 1 + (\md(\avec{w}) + 1) \cdot |\varphi| 
		$, provided that $\funcand(\avec{w}')$ has already been reduced.

\smallskip

By repeated application of the two rules from the root to the leaves, we construct the required weak quasimodel.
\end{proof}

In fact,\nb{ok? couldn't quite fit the invariant in L7} the construction in Lemma~\ref{lemma:k:finite-approx} (which preserves the bound on the size of quasistates at a given depth) implies the following stronger result:
\begin{lemma}\label{lem:expanding:poly:fmp:2}
	For expanding domains, for each  $\QoneMLc$-sentence $\varphi$, if there is a quasimodel satisfying~$\varphi$ based on a $\Tree{d}$  frame, then there is a  quasimodel $\Qmf=(\Fmf,\funcand,\Rmf)$ satisfying $\varphi$ based on a \textup{(}finite\textup{)} $\Tree{d}$ frame $\Fmf$ such that, for all $\avec{w}\in W$,
	\begin{equation*}
		|\funcand(\avec{w})|\leq 1+ (\md(\avec{w}) + 1)\cdot|\varphi|.
	\end{equation*}
\end{lemma}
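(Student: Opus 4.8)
The plan is to compose the two results already available for the one-variable fragment with expanding domains: Lemma~\ref{lem:expanding:poly:fmp}, which produces a \emph{weak} quasimodel obeying the stated quasistate bound, and Lemma~\ref{lemma:k:finite-approx}, which turns any weak quasimodel into a genuine one. The only thing that needs checking is that the saturation construction of Lemma~\ref{lemma:k:finite-approx} never enlarges a quasistate, so that the depth-indexed bound is inherited unchanged in the passage from weak to genuine quasimodels.

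First I would invoke the hypothesis. Assuming a quasimodel satisfying $\varphi$ based on a $\Tree{d}$ frame exists, Lemma~\ref{lem:expanding:poly:fmp} yields a weak quasimodel $\Qmf = (\Fmf, \funcand, \Rmf, \prset)$ satisfying $\varphi$ based on a finite $\Tree{d}$ frame $\Fmf = (W, \{R_a\}_{a\in A})$ with $|\funcand(\avec{w})| \leq 1 + (\md(\avec{w}) + 1)\cdot|\varphi|$ for every $\avec{w}\in W$. Next I would apply Lemma~\ref{lemma:k:finite-approx} to $\Qmf$ to obtain a genuine quasimodel $\Qmf' = (\Fmf', \funcand', \Rmf')$ satisfying $\varphi$ and based on a $\Tree{d}$ frame $\Fmf'$; relabelling $\Qmf'$ then gives a quasimodel of the form required in the statement.

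The crucial observation is the definition of $\funcand'$ in that construction. Each world of $\Fmf'$ is a word $\avec{u}$ of the form~\eqref{eqveru:k}, obtained by annotating a word $\gen{\avec{u}}\in W$ with a sequence of repair bijections, and $\funcand'$ is set by $\funcand'(\avec{u}) = \funcand(\gen{\avec{u}})$. Since annotating the components of $\gen{\avec{u}}$ with bijections leaves the length of the word unchanged, the depth of $\avec{u}$ in $\Fmf'$ equals the depth $\md(\gen{\avec{u}})$ of $\gen{\avec{u}}$ in $\Fmf$. Hence, for every world $\avec{u}$ of $\Fmf'$,
\[
|\funcand'(\avec{u})| = |\funcand(\gen{\avec{u}})| \leq 1 + (\md(\gen{\avec{u}}) + 1)\cdot|\varphi| = 1 + (\md(\avec{u}) + 1)\cdot|\varphi|,
\]
which is exactly the desired bound. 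As $\Qmf'$ is already a genuine quasimodel (no prototype function, its runs being saturated by construction), this establishes the claim.

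I do not expect any real obstacle here beyond bookkeeping: the whole argument rests on the identity $\funcand'(\avec{u}) = \funcand(\gen{\avec{u}})$ together with the depth-preservation of the map $\avec{u}\mapsto\gen{\avec{u}}$, both immediate from the construction in Lemma~\ref{lemma:k:finite-approx}. The single point worth stating explicitly is that Lemma~\ref{lem:expanding:poly:fmp} already delivers the bound \emph{per depth}, so that no further shrinking of quasistates is needed after saturation and the genuine quasimodel inherits the estimate verbatim.
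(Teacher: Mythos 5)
Your proposal is correct and is exactly the argument the paper intends: it derives the lemma by composing Lemma~\ref{lem:expanding:poly:fmp} with the saturation construction of Lemma~\ref{lemma:k:finite-approx}, observing that $\funcand'(\avec{u})=\funcand(\gen{\avec{u}})$ and that the map $\avec{u}\mapsto\gen{\avec{u}}$ preserves depth, so the per-depth bound on quasistates is inherited verbatim. No gaps.
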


We are now in a position to show the \PSpace complexity bound.
We generalise the standard $\PSpace$ upper bound proof 
for propositional $\Kn$~(see, e.g.,~\cite{Spaan93,DBLP:books/cu/BlackburnRV01})\nb{Spaan's thesis added} to $\QoneMLc$ using the lemma above.

\medskip

In what follows it will be convenient to associate each tuple $\avec{\contp}=(\contp_0,\dots, \contp_\ell)$ of types  with a multiset $\avec{n}_{\avec{t}}(\contp)=|\{0 \leq i\leq \ell : \contp_i=\contp\}|$, by discarding the inherent ordering.

\begin{theorem}\label{thm:kn:exp}
	For expanding-domain models, $\Kn$-validity in $\QoneMLc$ is \PSpace-complete. 
\end{theorem}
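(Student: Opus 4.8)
The plan is to establish the two matching bounds separately. The lower bound is immediate: propositional multimodal $\Kn$ is already \PSpace-hard~\cite{Spaan93,DBLP:books/cu/BlackburnRV01}, and it embeds into $\QoneMLc$ by reading propositional variables as $0$-ary predicates (which the syntax of $\QoneMLc$ permits, since predicates of arity at most one are allowed). As this embedding involves no quantifiers it is insensitive to the choice of domains, so \PSpace-hardness of $\Kn$-validity in $\QoneMLc$ follows for expanding domains. Since \PSpace is closed under complement, it suffices to place the \emph{satisfiability} problem in \PSpace.

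For the upper bound I would first invoke Theorem~\ref{thm:reductions}~(a) to eliminate definite descriptions, and then Lemmas~\ref{lem:k:tree-shaped}, \ref{lemma:quasimodel} and~\ref{lem:expanding:poly:fmp:2} to reduce satisfiability of $\varphi$ to the existence of a quasimodel on a tree-shaped $\Tree{\md(\varphi)}$ frame whose quasistates obey the polynomial bound $|\funcand(\avec{w})|\leq 1+(\md(\avec{w})+1)\cdot|\varphi|$. The crucial feature is that such a quasimodel, although of possibly exponential total size, is \emph{shallow} (depth $\md(\varphi)$) and \emph{locally small} (each quasistate fits in polynomial space). I would therefore generalise the classical depth-first \PSpace procedure for propositional $\Kn$~\cite{Spaan93,DBLP:books/cu/BlackburnRV01} so that its states are quasistates rather than Hintikka sets, exploring the tree one branch at a time.

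Concretely, I would describe an alternating polynomial-time procedure (hence \PSpace, as $\mathrm{APTIME}=\PSpace$) that maintains a single branch. It existentially guesses the root quasistate, requiring $\varphi\in\contp$ for some type $\contp$ in it, and checks realisability locally; by Lemma~\ref{lem:qs:closure:1} and the one-variable characterisation of quasistates, realisability of a polynomial-size quasistate is decidable in polynomial time. To verify a world carrying $\funcand(\avec{w})$ it then universally branches over the obligations $(\contp,\Diamond_a\psi)$ with $\contp\in\funcand(\avec{w})$ and $\Diamond_a\psi\in\contp$; for each obligation it existentially guesses an $a$-successor quasistate $\funcand(\avec{v})$ and checks that (i) $\funcand(\avec{v})$ is realisable; (ii) it is $a$-coherent with $\funcand(\avec{w})$, i.e.\ every $\contp''\in\funcand(\avec{w})$ admits some $\contp'''\in\funcand(\avec{v})$ with $\contp''\rightarrow_a\contp'''$ (needed because in expanding domains every element of $\avec{w}$ persists to $\avec{v}$); and (iii) $\funcand(\avec{v})$ contains a witness $\contp'$ with $\psi\in\contp'$ and $\contp\rightarrow_a\contp'$. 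It then recurses on $\funcand(\avec{v})$ with the modal depth decremented. Since each call strictly decreases the modal depth, the recursion bottoms out after $\md(\varphi)$ steps, and along any branch only one polynomial-size quasistate is held at a time, so the whole computation runs in polynomial space.

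The main obstacle — the step I would spend the most care on — is proving that this purely \emph{local} depth-first test is sound and complete for the existence of a genuine weak quasimodel, that is, that conditions (i)–(iii) can always be reconciled globally with the multiplicity constraint~\ref{run:exists} and the run coherence and saturation conditions. Completeness is routine: an accepting computation can be read off any quasimodel meeting the bound of Lemma~\ref{lem:expanding:poly:fmp:2}. For soundness I would assemble the tree frame from the accepting computation tree, label it with the guessed quasistates, and then \emph{thread the runs}: coherence (ii) guarantees that every element of $\funcand(\avec{w})$ has a coherent continuation in each successor, while the obligations witnessed in (iii) supply the saturated prototype runs demanded by~\ref{approx:R2}. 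Here the one-variable closure property (Lemma~\ref{lem:qs:closure:1}), which allows the multiplicities of non-constant types to be freely inflated, is exactly what lets me satisfy~\ref{run:exists} and handle constant-carrying types (forced to multiplicity one) consistently; rootedness of runs (Lemma~\ref{lem:rooted:runs}) together with the independence of distinct elements' modal requirements in the one-variable fragment ensures the threading never conflicts across branches. With this verified, Lemmas~\ref{lemma:k:finite-approx} and~\ref{lemma:quasimodel} convert the resulting weak quasimodel into a satisfying interpretation, completing the \PSpace upper bound.
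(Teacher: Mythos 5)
Your overall architecture (lower bound from propositional $\Kn$; upper bound via a Savitch/Ladner-style depth-first exploration of a shallow, locally small tree quasimodel, using Lemma~\ref{lem:expanding:poly:fmp:2} for the polynomial bound on quasistates) matches the paper's. However, the step you yourself flag as the main obstacle --- reconciling the purely local successor checks with a global threading of runs --- is where the argument breaks, and the closure property of Lemma~\ref{lem:qs:closure:1} does not rescue it. Your coherence test (ii) only asks that \emph{each type} in $\funcand(\avec{w})$ admit \emph{some} $\rightarrow_a$-coherent type in $\funcand(\avec{v})$; it does not ensure that the individual elements can be routed to successor types with consistent multiplicities. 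The closure property lets you inflate multiplicities of constant-free types, but constant-carrying types are pinned to multiplicity exactly $1$ in every realisable quasistate, and two distinct elements of $\avec{w}$ can both be \emph{forced} into such a type. Concretely, take
$\varphi = \exists x\,\bigl(P(x)\wedge\Box_a(x=c)\bigr)\wedge\exists x\,\bigl(\neg P(x)\wedge\Box_a(x=c)\bigr)\wedge\Diamond_a\exists x\,(x=x)$.
This is unsatisfiable (the two witnesses are distinct at $\avec{w}$ but must both coincide with $c$ at the $a$-successor), yet your procedure accepts: the root quasistate with the two types $\contp_1\ni P(x),\Box_a(x=c)$ and $\contp_2\ni\neg P(x),\Box_a(x=c)$ is realisable (the surrogate of $\Box_a(x=c)$ hides the clash), each of $\contp_1,\contp_2$ has a coherent successor type (namely the unique type containing $x=c$), and the $\Diamond_a$-obligation has a witness. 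Checking each quasistate for realisability in isolation cannot detect that two runs are being squeezed into a multiplicity-$1$ slot.

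The paper's algorithm avoids this by never collapsing the elements into a multiset during the recursion: $\QKsat$ carries a \emph{tuple} $(\contp_0,\dots,\contp_\ell)$ of types, one entry per run, and the successor tuple $(\contp'_0,\dots,\contp'_m)$ with $m\geq\ell$ must satisfy the \emph{positional} coherence condition $\contp_j\rightarrow_a\contp'_j$ for every $j\leq\ell$ (the entries beyond $\ell$ are the new domain elements), with the witness $\psi\in\contp'_i$ placed at the position $i$ of the obligated run. Realisability is then checked on the multiset induced by the successor tuple, so in the example above both positions are forced to carry $x=c$, the induced multiset is not realisable, and the call correctly fails. Since Lemma~\ref{lem:expanding:poly:fmp:2} bounds the tuple length by $N=O(|\varphi|^2)$, this explicit threading costs nothing in space. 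To repair your proof you would need to replace condition (ii) by such a positional (or link/flow-based) condition and verify realisability of the successor quasistate \emph{after} the routing is fixed; as written, the procedure is unsound.
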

    \begin{proof} 
    	The \PSpace lower bound follows from \PSpace-hardness of the underlying (propositional) modal logic $\Kn$. 
    	For the upper bound, we define a recursive function $\QKsat(k,\avec{\contp})$ that takes as input an integer $k\in \mathbb{N}$ and a tuple $\avec{\contp}=(\contp_0,\dots, \contp_\ell)$ of types for $\varphi$ of size $\ell<N$ and returns \true iff the following conditions are met:
    	\begin{enumerate}[label={\bf (tab\arabic*)}]\itemindent=1em
    		\item \label{tab:quasistate} the multiset $\avec{n}_{\avec{\contp}}$ associated with $\avec{\contp}$ is a quasistate for $\varphi$;
  
    		\item \label{tab:successor} 
		for each type $\contp_i$ in $\avec{\contp}$ and $\D_a\psi\in \contp_i$, there exists a tuple $\avec{\contp}'=(\contp_0',\dots, \contp_m')$ of types for $\varphi$, with $\ell\leq m<N$, 
    		such that 
    		\begin{itemize}
    			\item[(i)] $\psi\in \contp_i'$,
    			\item[(ii)] $\contp_j \to_a \contp_j'$ for all $0 \leq j\leq \ell$,
    			\item[(iii)] $k > 0$ and $\QKsat(k-1,\avec{\contp}')$ returns \true.
    		\end{itemize}
    	\end{enumerate}	
    	
    	Note that $\QKsat(k,\avec{\contp})$ is inherently non-deterministic as it must explore all possible choices for $\avec{\contp}'$ in condition \ref{tab:successor}. However, since the size of~$\avec{\contp}'$ is bounded by $N$, by appealing to Savitch's theorem we require only that $\QKsat$ uses at most a polynomial amount of space.  
    	We note that \ref{tab:quasistate} can be checked in polynomial space since the one-variable fragment $\FOO$ is \NPclass-complete. 
	Checking \ref{tab:successor}(i)--(ii) can be done `in-place,' while the depth of the recursion required for \ref{tab:successor}(iii) is bounded by $k$, and so it too can be performed in polynomial space. 
    	The fixed parameter $N$ places a bound on the maximum length of the tuple of types that can ever appear as arguments to $\QKsat$, which, by Lemma~\ref{lem:expanding:poly:fmp:2}, will be chosen to be polynomial in~$|\varphi|$.
    	
    	\medskip

    	For soundness and correctness, we claim that, for all $k\in \mathbb{N}$ and tuples $\avec{\contp}=(\contp_0,\dots, \contp_\ell)$, the call $\QKsat(k,\avec{\contp})$ returns \true if and only if there is quasimodel $\Qmf=(\Fmf,\funcand,\Rmf)$ for $\varphi$ (not necessarily satisfying~$\varphi$) such that
    	\begin{enumerate}[label={(QK\arabic*)}]\itemindent=1em
    		\item \label{IH:1} $\Fmf=(W,\{R_a\}_{a\in A})$ is a $\Tree{k}$ frame rooted in $\avec{w}_0$,
    		\item \label{IH:2} $\funcand(\avec{w}_0)=\avec{n}_{\avec{t}}$,
    		\item \label{IH:3} $|\funcand(\avec{w})|\leq N$ for all $\avec{w}\in W$.
    	\end{enumerate}
    	We prove this by induction on $k$, so let $k\geq 0$ be fixed and suppose that the claim holds for all $m<k$.
    		
    		$(\Rightarrow)$ 
    		Suppose that $\QKsat(k,\avec{\contp})$ returns \true, for $\avec{\contp}=(\contp_0,\dots, \contp_\ell)$. 
    		%
		Let $S$ be the set of all pairs $(i,\D_a\psi)$ with $\D_a\psi\in \contp_i$. 
	For each \mbox{$\sigma=(i,\D_a\psi)\in S$},  by \ref{tab:successor}, there is some $\avec{\contp}^\sigma=(\contp_0^\sigma,\dots, \contp_{m}^\sigma)$, for $\ell\leq m<N$, such that (i)~\mbox{$\psi\in \contp_i^\sigma$}, (ii)~$\contp_j\to_a \contp_j^\sigma$, for all $j\leq \ell$, and (iii)~$k > 0$ and $\QKsat(k-1,\avec{\contp}^\sigma)$ returns \true.
    		By the induction hypothesis, there is some quasimodel $\Qmf^\sigma=(\Fmf^\sigma,\funcand^\sigma,\Rmf^\sigma)$ for~$\varphi$ based on a $\Tree{k-1}$ frame $\Fmf^\sigma$ rooted in $w^\sigma_0$ with $\funcand^\sigma(w^\sigma_0)=\avec{n}_{\avec{t}^\sigma}$ and $|\funcand^\sigma(\avec{w})|\leq N$ for all $\avec{w}\in W^\sigma$. 
    		From this (possibly empty) collection of quasimodels, we define a new quasimodel $\Qmf=(\Fmf,\funcand,\Rmf)$ by taking $\Fmf$ to be the disjoint union of the frames $\Fmf^\sigma$, for $\sigma\in S$, conjoined by introducing fresh root node $w_0$:
    		\begin{equation*}
    			W = \{w_0\} \cup \{w_0a\avec{w} \mid \avec{w} \in W^\sigma,\ \sigma=(i,\D_a\psi)\in S\}.
    		\end{equation*}
    		(Note that when $k=0$, it follows from \ref{tab:successor}(iii) that $S$ must be empty, and so $\Fmf$ is a $\Tree{0}$ frame comprising just the root node $w_0$.)
    		Over this new frame, we define $\funcand$ such that $\funcand(w_0) = \avec{n}_{\avec{\contp}}$, which is guaranteed to be a quasistate by \ref{tab:quasistate}, and $\funcand(w_0a\avec{w}) = \funcand^\sigma(\avec{w})$ for all~$\avec{w}\in W^\sigma$.
    		
    		By \textbf{(card)}, for each $0 \leq j\leq \ell$ and $\sigma\in S$, there is some unique $\rho_j^\sigma\in \Rmf^\sigma$ such that $\rho_j^\sigma(w_0^\sigma)=\contp_j^\sigma$. 
    		Hence we may define a new run $\rho_j$ by taking $\rho_j(w_0)=\contp_j$ and $\rho_j(w_0a\avec{w})=\rho_j^\sigma(\avec{w})$ for all $\avec{w}\in W^\sigma$.
    		We then take $\Rmf$ to be the set of all such $\rho_j$ for $0 \leq j\leq \ell$, together with all $\rho\in \Rmf^\sigma$ that are not the restriction any $\rho_j$ to $W^\sigma$.
         		
    		It is clear from the construction that $\Qmf$ satisfies~\ref{IH:1}--\ref{IH:3}, and so it remains to show that $\Qmf$ satisfies conditions~\ref{rn:modal} and~\ref{rn:modal2}. 
    		For~\ref{rn:modal}, we note that each run is coherent across each of the subframes $\Fmf^\sigma$, since the restriction of each run to $W^\sigma$ is an original run of $\Qmf^\sigma$, which are coherent. 
    		For the root, suppose that $\psi\in \rho_j(w_0aw_0^\sigma)$ for some $0 \leq j\leq \ell$
		and $\sigma\in S$. By construction, $\rho_j(w_0aw_0^\sigma)=\rho_j^\sigma(w_0^\sigma)=\contp_j^\sigma$. Hence, by (ii),  $\D_a \psi\in \contp_j=\rho_j(w_0)$, as required. 
    		Similarly, for~\ref{rn:modal2}, we note that each run is saturated across each~$\Fmf^\sigma$. For the root, if $\D_a\psi\in \rho_i(w_0)=\contp_i$ then $(i,\D_a\psi)\in S$, and so, by~(i), \mbox{$\psi\in \contp_i^\sigma$}. Hence, by construction, $\psi\in \rho_j^\sigma(w_0^\sigma)$, which is to say that $\psi\in \rho_j(w_0aw_0^\sigma)$, as required.
    		
    		$(\Leftarrow)$ 
    		Conversely, suppose that $\Qmf=(\Fmf,\funcand,\Rmf)$ is a quasimodel for $\varphi$ satisfying \ref{IH:1}--\ref{IH:3}.
    		By \ref{IH:2}, $\avec{q}(\avec{w}_0)$ is a quasistate for $\varphi$, as required by \ref{tab:quasistate}. 
            By \textbf{(card)}, $\funcand(\avec{w},\contp)=|\Rmf_{\avec{w},\contp}|$, for every $\avec{w}\in W$ and type $\contp$ for $\varphi$. Hence, without loss of generality we may enumerate the runs of $\Rmf$ as $\rho_0,\rho_1,\dots$, such that $\rho_j(\avec{w}_0)=\contp_j$, for all $0 \leq j\leq \ell$.
    		For \ref{tab:successor},
    		let $i\leq \ell$ and $\D_a\psi\in \contp_i$, which is to say that $\D_a\psi\in\rho_i(\avec{w}_0)$.  
            Hence, by~\ref{rn:modal2}, there is some $\avec{v}\in W$ such that $\avec{w}_0R_a \avec{v}$ and $\psi\in \rho_i(\avec{v})$, 
    		which also implies that $k>0$. 
    		So let $\avec{\contp}'=(\rho_j(\avec{v}) : \rho_j\in \Rmf_{\avec{v}})$ be a tuple of types for~$\varphi$ of size $|\Rmf_{\avec{v}}|$, so that \ref{tab:successor}(i) holds trivially.
            For \ref{tab:successor}(ii), suppose that $\psi\in \rho_j(\avec{v})$ for $\contp_j\in \avec{\contp}$, which is to say $\rho_j\in \Rmf_{\avec{w}_0}$ and $\rho_j(\avec{w}_0)=\contp_j$. Hence, by~\ref{rn:modal}, $\D_a\psi\in \contp_j$, as required. 
    		For \ref{tab:successor}(iii), consider the restriction $\Qmf^v$ of $\Qmf$ to $W_{\downarrow v}$,
which clearly satisfies \ref{IH:1}--\ref{IH:3}: 
    		by the induction hypothesis, $\QKsat(k-1,\avec{\contp}')$ returns \true, as required.

    	\medskip

    	It then follows from Lemmas~\ref{lem:k:tree-shaped} and~\ref{lem:expanding:poly:fmp:2} that $\varphi$ is $\Kn$-satisfiable iff there is a tuple $\avec{\contp}$ of types with $\varphi\in \contp_0$, such that $\QKsat(\md(\varphi),\avec{\contp})$ returns \true, where $N=1 + (\md(\varphi)+1)\cdot |\varphi|=O(|\varphi|^2)$, which can be decided in \PSpace. 
    	\end{proof}




\section{First-Order Temporal Logic}
We consider standard temporal logics with the temporal operators for `eventually' and `next' defined over the natural numbers and finite strict linear orders. The decidability and complexity of monodic fragments of temporal logics without \NRDC{} features has been investigated extensively~\cite{HodEtAl03,DBLP:journals/tocl/DegtyarevFK06,DBLP:journals/apal/Hodkinson06,KOURTIS2025115319,AMO:TOCL24}.
Here we discuss what  happens with \NRDC{} feature. We exploit known negative results for the one-variable fragment by giving a polytime reduction to modal logics over the frame classes $\K_{\ast n}$ and $\Kfn$. This allows us to prove matching lower bounds for $\K_{\ast n}$ and $\Kfn$. Conversely, our decidability results for $\Kfn$ with expanding domains translate to decidability results for temporal logics over finite strict linear orders with expanding domains.

We introduce the relevant frame classes and notation for languages. By $\LTLd$ we denote 
the frame class containing single frame $(\mathbb{N},<)$ with $\mathbb{N}$ the natural numbers and $<$ its standard strict ordering. The class of finite strict orders can then be defined as $\LTLfd = \{ (\{0,\ldots,n\},<_{|\{0,\ldots,n\}}) \mid n\geq0\}$. In these frames we interpret the modal language with a single temporal operator, $\Diamond$ (`eventually'). For convenience, we denote the languages $\QoneML$, $\CtwomonML$, and $\QGuardmonMLb$ with a single modality interpreted over these frames by $\QoneLTLd$, $\CtwomonLTLd$, and $\QGuardmonLTLdb$, respectively. 

We also consider the classes of frames extended with the successor relation $S=\{\, (i,i+1) \mid i\in\mathbb{N}\,\}$ interpreting operator $\Next$ (`next'): we denote $\LTL = \{ (\mathbb{N},<,S)\}$  
and $\LTLf =\{ (\{0,\ldots,n\},<_{|\{0,\ldots,n\}},S_{|\{0,\ldots,n\}}) \mid n\geq 0\}$. The bimodal languages with $\Diamond$ and $\Next$ interpreted over these frame classes are denoted $\QoneLTL$, $\CtwomonLTL$, and $\QGuardmonLTLb$, respectively. 

The proof of Lemma~\ref{lem:redglobloc} shows that, for all temporal frame classes $\Cmc$ introduced above and temporal fragments $\Lmc$, global $\Cmc$-consequence in $\Lmc$ is polytime-reducible to $\Cmc$-validity in $\Lmc$. In what follows we therefore consider validity only.

The computational behaviour of $\QoneLTLd$ and $\QoneLTL$ is well understood. We use the facts that, by Theorem~\ref{thm:reductions}, we can always eliminate definite descriptions and partial designators and, by Theorem~\ref{th:diff}, we can replace constants by the ``elsewhere'' quantifier.
Then the following result follows from~\cite[Table~1]{HamKur15}, which is partly based on ideas first developed in~\cite{DBLP:conf/cade/KonevWZ05,DBLP:journals/apal/GabelaiaKWZ06}.
\begin{theorem}\label{thm:temp}
In $\QoneLTL$ and its fragment $\QoneLTLd$ with the $\Diamond$-operator only\textup{:}

\textup{(1)} for constant domains, $\LTL$-validity  is $\Sigma^1_1$-complete, while $\LTLf$-validity  is undecidable and co-r.e.\textup{;}

\textup{(2)} for expanding domains, $\LTL$-validity is undecidable and r.e., while $\LTLf$-validity is decidable but Ackermann-hard.
%
%
 %
\end{theorem}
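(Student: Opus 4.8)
The plan is to transfer the classification wholesale from the one-variable difference-quantifier logic $\MLDiff$, for which the relevant bounds over the temporal frame classes are recorded in~\cite[Table~1]{HamKur15}, using only the reductions already established earlier in this section. Concretely, writing $\Cmc$ for any of the four frame classes $\LTL$, $\LTLf$, $\LTLd$, $\LTLfd$, I would show that, for both constant and expanding domains, $\Cmc$-validity in $\QoneLTL$ (and in its $\Diamond$-only fragment $\QoneLTLd$) is polytime-equivalent to $\Cmc$-validity in $\MLDiff$ over the same frames; the theorem then follows by reading off the corresponding entry of~\cite[Table~1]{HamKur15}.

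First I would assemble the reduction chain. By Theorem~\ref{thm:reductions}~(a) and~(b), $\Cmc$-validity in $\QoneLTL$ is polytime-reducible, in both directions, to $\Cmc$-validity in the total-interpretation, definite-description-free fragment, i.e.\ to $\Cmc$-validity in $\QoneMLc$ interpreted over the temporal frames; the same holds verbatim for $\QoneLTLd$, since eliminating descriptions and partial designators never touches the temporal operators. Next, Theorem~\ref{th:diff} provides, for \emph{any} class of frames and for both domain assumptions, a mutual polytime reduction between $\Cmc$-validity in $\QoneMLc$ and $\Cmc$-validity in $\MLDiff$. Both reduction theorems are generic in the modal signature, so they apply uniformly whether $A$ consists of the single operator $\Diamond$ or of $\Diamond$ together with $\Next$; this uniformity is exactly what yields identical bounds for $\QoneLTL$ and $\QoneLTLd$, matching the superscript-$(\Diamond)$ convention of Table~\ref{table:complexity}.

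It then remains to match each case against~\cite[Table~1]{HamKur15}: constant domains over $(\mathbb{N},<)$ and $(\mathbb{N},<,S)$ give $\Sigma^1_1$-completeness; constant domains over finite strict orders give an undecidable, co-r.e.\ problem; expanding domains over $\mathbb{N}$ give an undecidable, r.e.\ problem; and expanding domains over finite strict orders give a decidable but Ackermann-hard problem. Since the reductions of Theorems~\ref{thm:reductions} and~\ref{th:diff} are computable and available in both directions, they preserve both membership and hardness at each of these levels --- $\Sigma^1_1$-completeness, the r.e.\ and co-r.e.\ bounds, decidability, and the Ackermann lower bound alike. I expect the only genuine obstacle to be bookkeeping rather than mathematics: one must confirm that~\cite[Table~1]{HamKur15} is stated precisely for $\MLDiff$ over exactly these four frame classes and already records both the $\Diamond$-only and the $\Diamond$-with-$\Next$ signatures (so that no further collapse of $\Next$ is required), and that the fine-grained classifications --- particularly $\Sigma^1_1$-completeness and Ackermann-hardness, which are sensitive to the direction of reduction --- are genuinely transported by the \emph{mutual} polytime reductions. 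All the truly hard constructions (the recurrence/tiling and counter-machine encodings, and the fast-growing lower bound) are inherited from~\cite{HamKur15}, itself building on~\cite{DBLP:conf/cade/KonevWZ05,DBLP:journals/apal/GabelaiaKWZ06}.
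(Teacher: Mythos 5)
Your proposal matches the paper's own argument exactly: the paper proves Theorem~\ref{thm:temp} by invoking Theorem~\ref{thm:reductions} to eliminate definite descriptions and partial designators, Theorem~\ref{th:diff} to pass between $\QoneMLc$ and $\MLDiff$, and then reading the four classifications off \cite[Table~1]{HamKur15}. Your additional bookkeeping caveats (directionality of the mutual reductions, coverage of both the $\Diamond$-only and $\Diamond$-with-$\Next$ signatures) are sensible but the paper treats them as implicit.
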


We note that monodic fragments beyond the one-variable fragment have not yet been considered in the temporal context.
The following result allows us to transfer decidability and complexity results between the temporal and modal domain. 
\begin{lemma}\label{lem:restemp}
For both constant and expanding domains, $\LTL$-validity in $\QoneLTL$, $\CtwomonLTL$ and $\QGuardmonLTLb$ are polytime-reducible to $\K_{\ast n}$-validity in $\QoneML$, $\CtwomonML$ and $\QGuardmonMLb$, respectively. 
This also holds
if
$\LTL$ and $\K_{\ast n}$ are replaced by $\LTLf$ and $\Kfn$, respectively.
\end{lemma}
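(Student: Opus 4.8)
The plan is to read the strict order $<$ on $(\mathbb{N},<,S)$ as the transitive-closure modality $\ast$ and the successor $S$ as one ordinary modality, so I work with a single ordinary modality $a$ (i.e.\ $n=1$) together with $\ast$. Define a translation $(\cdot)^{\sharp}$ that leaves the first-order part untouched and sets $(\Next\theta)^{\sharp}=\Diamond_a\theta^{\sharp}$ and $(\Diamond\theta)^{\sharp}=\Diamond_\ast\theta^{\sharp}$ (for the $\Diamond$-only languages $\QoneLTLd$, $\CtwomonLTLd$, $\QGuardmonLTLdb$ only the second clause is ever used). By Theorem~\ref{thm:reductions} I may assume there are no definite descriptions and that interpretations are total, and by Lemma~\ref{lem:redglobloc} I may freely add globally-asserted axioms, since over $\K_{\ast n}$ and $\Kfn$ global consequence reduces to validity; this lets me append, as conjuncts under $\Box_\ast$, a fixed set of one-variable modal axioms pinning down the intended frame shape. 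The resulting transformation is plainly polynomial, and since the added axioms are one-variable modal sentences (guarded quantifiers being available via $x=x$), it maps each of $\QoneLTL$, $\CtwomonLTL$, $\QGuardmonLTLb$ into $\QoneML$, $\CtwomonML$, $\QGuardmonMLb$, respectively.

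The axioms should force $R_a$ to be serial and, as far as the formula can see, deterministic, so that $\Diamond_a$ mimics the functional successor $S$: seriality $\Box_\ast\Diamond_a\top$ (dropped for $\Kfn$, where finite orders have a last point), and the determinism schema $\Box_\ast\forall x\,(\Diamond_a\mu\to\Box_a\mu)$ ranging over the translations $\mu$ of the members of the Fischer--Ladner-style closure $\subs[x]{\varphi}$ already introduced for $\Kfn$. With these in place the forward direction is immediate: an LTL interpretation over $(\mathbb{N},<,S)$ is literally a $\K_{\ast 1}$ interpretation with $R_a=S$ and $R_\ast\,=\,<$; every axiom holds (on a functional serial relation the determinism schema holds for \emph{all} valuations), and an induction on formulas gives $\theta$ satisfied iff $\theta^{\sharp}$ satisfied, because $\Next$ matches $\Diamond_a$ over a function and $\Diamond$ matches $\Diamond_\ast$ since $<$ is exactly the transitive closure of $S$. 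The $\Kfn$/$\LTLf$ case is identical, using that a finite strict order is such a frame and that $\Kfn$ forbids infinite ascending $R_\ast$-chains.

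The backward direction is the crux and the main obstacle. Given a $\K_{\ast n}$ interpretation satisfying $\varphi^{\sharp}$ together with the axioms at a world $w_0$, I would follow an $R_a$-orbit $w_0,w_1,w_2,\dots$ (infinite by seriality; for $\Kfn$ a finite maximal orbit by well-foundedness, matching $\LTLf$), unravel it into an $\omega$-sequence, and copy the first-order structure of each $w_i$ onto position $i$, obtaining a temporal interpretation; one then proves $\theta^{\sharp}$ holds at $w_i$ iff $\theta$ holds at $i$. The delicate cases are exactly $\Next$ and $\Diamond$: determinism lets the chosen successor carry the full content of $\Diamond_a$, and must also guarantee that every $R_\ast$-reachable witness of a $\Diamond_\ast$-formula already appears \emph{along the orbit}, so that $\Diamond_\ast$ collapses to ``eventually'' along the line. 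The real difficulty is that $\K_{\ast n}$ frames, unlike $(\mathbb{N},<,S)$, may branch and cycle: cycles are harmless, since unravelling a serial deterministic orbit reproduces a faithful sequence and $R_\ast$ being the genuine transitive closure keeps $\Diamond_\ast$ aligned with $<$; but side branches could carry $\Diamond_\ast$-witnesses off the orbit, and the determinism schema alone only controls successors of a \emph{common} world. This is the point that needs the most care, and I expect to close it either by strengthening the schema so that the orbit's reachable set is captured up to type, or by importing the nominal/``elsewhere''-quantifier machinery of Theorem~\ref{th:diff}, which is precisely what lets the one-variable hard instances of Theorem~\ref{thm:temp} pin a single time line.

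Finally, composing the (verified) reduction with Theorem~\ref{thm:temp} transfers the $\Sigma^1_1$-hardness (constant domains, $\K_{\ast n}$) and the undecidability results (expanding domains for $\K_{\ast n}$; constant domains for $\Kfn$) to the modal side, and together with the decidability of $\Kfn$ with expanding domains from Theorem~\ref{thm:exp} it yields the Ackermann-hardness entry for $\Kfn$. The corresponding global-consequence statements need no separate argument, as they follow from the same translation through Lemma~\ref{lem:redglobloc}.
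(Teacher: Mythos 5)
Your overall plan---translate $\Next$ to $\Diamond_a$, $\Diamond$ to $\Diamond_\ast$, and add $\Box_\ast$-prefixed one-variable axioms forcing $R_a$ to behave like a (serial, for $\LTL$) function as far as the formula can see---is the right shape, and it is consistent with what the paper actually does: the paper's proof is a one-paragraph citation, saying the reduction is obtained by adapting the product-modal-logic reduction of \cite[Theorem 6.24]{GabEtAl03}. So you are attempting to reconstruct explicitly what the paper delegates to prior work. The problem is that the part you leave open is the entire content of that prior work. You say of the backward direction that you ``expect to close it either by strengthening the schema \dots or by importing the nominal machinery''; that is an acknowledgement of a gap, not a proof. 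And the backward direction is not optional here: the lemma is an equivalence-preserving reduction used in both directions---to push lower bounds from Theorem~\ref{thm:temp} onto $\K_{\ast n}$ and $\Kfn$, and to pull the decidability of $\Kfn$ with expanding domains (Theorem~\ref{thm:exp}) back to $\LTLf$ in Theorem~\ref{thm:monodictemp}. Without a proof that every $\K_{\ast n}$ (resp.\ $\Kfn$) model of $\varphi^\sharp$ plus the axioms yields a linear temporal model of $\varphi$, the decidability transfer in particular does not go through.

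To see that the gap is substantive and not bookkeeping: your determinism schema $\Box_\ast\forall x\,(\Diamond_a\mu\to\Box_a\mu)$ ranges only over the finitely many formulas $\mu$ in the closure, so the induction you need---that all worlds at $R_a$-distance $k$ from $w_0$ agree, elementwise, on all closure formulas with the $k$-th world of the chosen orbit, whence every off-orbit $R_\ast$-witness of a $\Diamond_\ast\psi$ has an on-orbit counterpart---only goes through if the closure is closed under the relevant $\Diamond_a$-prefixes and negations, and if the schema is also asserted at the root (note $R_\ast$ is irreflexive transitive closure, so $\Box_\ast$ does not cover the current world). Moreover, in the $\Kfn$ case one sibling $R_a$-successor may exist while another does not, and with expanding domains sibling successors may have different domains; neither phenomenon is constrained by a schema of the form $\forall x(\Diamond_a\mu\to\Box_a\mu)$. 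Finally, for $n>1$ the other relations $R_b$ feed into $R_\ast$ and must also be tamed. These are precisely the issues the cited reduction of~\cite{GabEtAl03} is engineered to handle, and until you carry out that linearisation argument (or an equivalent one via the elsewhere-quantifier machinery of Theorem~\ref{th:diff}), the proposal does not establish the lemma.
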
 
\begin{proof}
The proof of this reduction from logics of linear frames with transitive closure to logics of branching frames with transitive closure is not trivial but can be done by adapting in a straightforward way the reduction given in the proof of \cite[Theorem 6.24]{GabEtAl03} for product modal logics. 
\end{proof}

Using Lemma~\ref{lem:restemp} and Theorem~\ref{thm:temp}, we obtain all lower bounds for the frame classes $\K_{\ast n}$ and $\Kfn$ stated in Table~\ref{table:complexity}. Conversely, from Theorem~\ref{thm:exp} and Lemma~\ref{lem:restemp}, we obtain the following decidability result for temporal monodic fragments over strict finite orders and expanding domains.
\begin{theorem}\label{thm:monodictemp}
 For expanding-domain models, $\LTLf$-validity in $\CtwomonLTL$ and $\QGuardmonLTLb$ are decidable.
\end{theorem}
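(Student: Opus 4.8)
The plan is to obtain this as a corollary of the temporal-to-modal reduction in Lemma~\ref{lem:restemp} together with the decidability of the transitive-closure modal logic over $\Kfn$ frames from Theorem~\ref{thm:exp}. The overall strategy is to exhibit a polytime, validity-preserving chain of reductions from $\LTLf$-validity in $\CtwomonLTL$ (respectively $\QGuardmonLTLb$) to $\Kfn$-validity in $\CtwomonMLplusc$ (respectively $\QGuardmonMLplusc$); since the latter is decidable and decidability transfers along polytime reductions, the claim follows.

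First I would apply the $\LTLf$/$\Kfn$ part of Lemma~\ref{lem:restemp}: for expanding domains, $\LTLf$-validity in $\CtwomonLTL$ and $\QGuardmonLTLb$ is polytime-reducible to $\Kfn$-validity in $\CtwomonML$ and $\QGuardmonMLb$, respectively. Intuitively, the single temporal frame is replaced by a branching $\Kfn$ frame in which the `eventually' operator (together with `next', where present) is simulated by the transitive-closure modality $\ast$ and the base modalities; the reduction is stated to preserve expanding domains, so no further work is needed on this side.

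It then remains to match the target fragments of Lemma~\ref{lem:restemp}, which admit definite descriptions, against the constant-only fragments $\CtwomonMLplusc$ and $\QGuardmonMLplusc$ to which Theorem~\ref{thm:exp} applies. This is exactly the content of Theorem~\ref{thm:reductions}~(a): for any frame class $\Cmc$, and in particular for $\Cmc=\Kfn$, $\Cmc$-validity is polytime-reducible to $\Cmc$-validity without definite descriptions. Composing this with the previous step lands us at $\Kfn$-validity over the constant-only fragments, where Theorem~\ref{thm:exp} gives decidability.

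The one place requiring care --- and the main obstacle --- is verifying that the definite-description elimination of Theorem~\ref{thm:reductions}~(a) remains sound once the transitive-closure modality $\ast$ is present. The validity-case argument there normalizes each $\defdes x.\psi$ by attaching copies $\Box^{\pi}\textsf{norm}_{\psi}$ of a normalization sentence along the finitely many relevant modal prefixes $\pi\in\rpath(\cdot)$, and this enumeration relies on a bound on modal depth that $\ast$ destroys. I would avoid this by performing the elimination in the global-consequence formulation instead: collect the surrogate predicates and the associated normalization sentences $\textsf{norm}_{\psi}$ into a theory $\Gamma$ asserted globally, and then use Lemma~\ref{lem:redglobloc} to fold global $\Kfn$-consequence back into $\Kfn$-validity (via the conjunction $\Box_{\ast}\bigwedge\Gamma$). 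Since each of these steps is validity-preserving and polynomial, the entire chain stays within $\Kfn$-validity, and decidability of the temporal problems follows.
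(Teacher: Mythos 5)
Your proposal is correct and follows essentially the same route as the paper, which obtains Theorem~\ref{thm:monodictemp} by composing Lemma~\ref{lem:restemp} with Theorem~\ref{thm:exp}, leaving the intermediate elimination of definite descriptions via Theorem~\ref{thm:reductions}~(a) implicit (you are right that this step is needed, since Lemma~\ref{lem:restemp} lands in $\CtwomonML$ and $\QGuardmonMLb$ while Theorem~\ref{thm:exp} is stated for $\CtwomonMLplusc$ and $\QGuardmonMLplusc$). Your worry about the $\rpath$-based normalisation in the presence of $\Diamond_\ast$ is not actually an obstacle---the relevant paths are finite syntactic prefixes of the formula and $\Box_\ast$ is a normal modality, so attaching $\Box^{\pi}\,\textsf{norm}_{\psi}$ along those prefixes remains sound---but your alternative of performing the elimination at the level of global consequence and folding it back into validity via Lemma~\ref{lem:redglobloc} is an equally valid, and in this setting arguably cleaner, way to carry out that step.
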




\section{Discussion}
We have shown that monodic fragments of important first-order modal logics are decidable even if non-rigid constants, definite descriptions, or counting (i.e., \NRDC{} features) are present in the language. The proof method is illustrated for the guarded fragment with constants and equality, and for the two-variable fragment with counting. We conjecture that our technique\nb{changes in the sentence \\ M: some changes as well} can be extended to other decidable fragments of first-order logic,
such as variants of the guarded negation
fragment~\cite{DBLP:journals/jsyml/BaranyBC18},  fluted fragments~\cite{DBLP:journals/apal/Pratt-HartmannT22,DBLP:conf/csl/Pratt-HartmannT23}, or extensions of the two-variable fragment with semantically-constrained relations, e.g., transitive or equivalence relations~\cite{pratt2023fragments}. 

Our results can also be applied to many modal and temporal descriptions logics (DLs).
These are monodic fragments of modal and temporal first-order logics, where the explicit\nb{changed \\  M: small changes as well} quantification of variables is replaced by the implicit quantification in the DL constructs, and which have been investigated and applied extensively~\cite{LutEtAl08,BaaEtAl12,ArtEtAl14,baader2020metric}.
Very powerful positive results directly follow from what is shown in this article: for instance, decidability and upper complexity bounds for
monodic two-variable fragments with counting transfer  
to modal and temporal DLs based on $\mathcal{ALCQHIO}^{u}$ simply because the latter\nb{changes again} can be regarded as a fragment of $\mathsf{C}^2$. It would be of interest to explore whether our techniques can be applied to analyse temporal ontology-mediated query answering~\cite{DBLP:conf/time/ArtaleKKRWZ17} with \NRDC{} features or very expressive DLs with \NRDC{} features not yet considered in a modal or temporal context~\cite{DBLP:conf/dlog/GrauHKS06}.

An alternative approach to defining decidable fragments of first-order modal logic is to restrict the relative position of modal operators and quantifiers to certain patterns, called \emph{bundles} (such as 
$\exists x \Diamond$ and $\forall x \Diamond$).\nb{small changes in the para, ok? \\ M: I also made some changes} \emph{Bundled fragments} of first-order modal logic have been investigated extensively~\cite{DBLP:journals/iandc/LiuPRW23,DBLP:conf/mfcs/LiuPRW22}. However, to the best of our knowledge, so far this approach has only been applied to fragments without \NRDC{} features. Thus, exploring how bundled fragments behave with \NRDC{} features remains a topic for future research.\nb{changed}

Finally,
it would be of interest to systematically consider the role of monodicity and \NRDC{} features within \emph{term modal logics}, a large family of languages in which modal operators are indexed by typically non-rigid names for agents~\cite{FitEtAl01, Koo08, CorOrl13}.
While decidability has been extensively studied  already~\cite{Lib20, WanEtAl22, PadRam23}, we conjecture that our techniques can be applied
to variations of term modal logics as well.


  \bibliographystyle{elsarticle-num} 
  \bibliography{bibliography}








\end{document}

\endinput